\PassOptionsToPackage{unicode}{hyperref}
\PassOptionsToPackage{hyphens}{url}
\PassOptionsToPackage{dvipsnames,svgnames,x11names}{xcolor}

\documentclass[12pt]{article}

\usepackage{amsmath,amssymb}
\usepackage{iftex}
\ifPDFTeX
  \usepackage[T1]{fontenc}
  \usepackage[utf8]{inputenc}
  \usepackage{textcomp}
\else
  \usepackage{unicode-math}
  \defaultfontfeatures{Scale=MatchLowercase}
  \defaultfontfeatures[\rmfamily]{Ligatures=TeX,Scale=1}
\fi
\usepackage{lmodern}
\IfFileExists{upquote.sty}{\usepackage{upquote}}{}
\IfFileExists{microtype.sty}{%
  \usepackage[]{microtype}
  \UseMicrotypeSet[protrusion]{basicmath}
}{}
\makeatletter
\@ifundefined{KOMAClassName}{%
  \IfFileExists{parskip.sty}{%
    \usepackage{parskip}
  }{%
    \setlength{\parindent}{0pt}
    \setlength{\parskip}{6pt plus 2pt minus 1pt}}
}{%
  \KOMAoptions{parskip=half}}
\makeatother
\usepackage{xcolor}
\makeatletter
\ifx\paragraph\undefined\else
  \let\oldparagraph\paragraph
  \renewcommand{\paragraph}{
    \@ifstar
      \xxxParagraphStar
      \xxxParagraphNoStar
  }
  \newcommand{\xxxParagraphStar}[1]{\oldparagraph*{#1}\mbox{}}
  \newcommand{\xxxParagraphNoStar}[1]{\oldparagraph{#1}\mbox{}}
\fi
\ifx\subparagraph\undefined\else
  \let\oldsubparagraph\subparagraph
  \renewcommand{\subparagraph}{
    \@ifstar
      \xxxSubParagraphStar
      \xxxSubParagraphNoStar
  }
  \newcommand{\xxxSubParagraphStar}[1]{\oldsubparagraph*{#1}\mbox{}}
  \newcommand{\xxxSubParagraphNoStar}[1]{\oldsubparagraph{#1}\mbox{}}
\fi
\makeatother

\usepackage{longtable,booktabs,array}
\usepackage{calc}
\usepackage{etoolbox}
\makeatletter
\patchcmd\longtable{\par}{\if@noskipsec\mbox{}\fi\par}{}{}
\makeatother
\IfFileExists{footnotehyper.sty}{\usepackage{footnotehyper}}{\usepackage{footnote}}
\makesavenoteenv{longtable}
\usepackage{graphicx,psfrag,epsf}
\makeatletter
\def\maxwidth{\ifdim\Gin@nat@width>\linewidth\linewidth\else\Gin@nat@width\fi}
\def\maxheight{\ifdim\Gin@nat@height>\textheight\textheight\else\Gin@nat@height\fi}
\makeatother
\setkeys{Gin}{width=\maxwidth,height=\maxheight,keepaspectratio}
\makeatletter
\def\fps@figure{htbp}
\makeatother

\makeatletter
\@ifpackageloaded{caption}{}{\usepackage{caption}}
\AtBeginDocument{%
\ifdefined\contentsname
  \renewcommand*\contentsname{Table of contents}
\else
  \newcommand\contentsname{Table of contents}
\fi
\ifdefined\listfigurename
  \renewcommand*\listfigurename{List of Figures}
\else
  \newcommand\listfigurename{List of Figures}
\fi
\ifdefined\listtablename
  \renewcommand*\listtablename{List of Tables}
\else
  \newcommand\listtablename{List of Tables}
\fi
\ifdefined\figurename
  \renewcommand*\figurename{Figure}
\else
  \newcommand\figurename{Figure}
\fi
\ifdefined\tablename
  \renewcommand*\tablename{Table}
\else
  \newcommand\tablename{Table}
\fi
}
\@ifpackageloaded{float}{}{\usepackage{float}}
\floatstyle{ruled}
\@ifundefined{c@chapter}{\newfloat{codelisting}{h}{lop}}{\newfloat{codelisting}{h}{lop}[chapter]}
\floatname{codelisting}{Listing}

\makeatother
\makeatletter
\@ifpackageloaded{caption}{}{\usepackage{caption}}
\makeatother

\ifLuaTeX
  \usepackage{selnolig}
\fi

\usepackage{natbib}
\usepackage{enumerate}
\usepackage{setspace}
\usepackage{url}
\usepackage{float}
\usepackage{amsthm}
\usepackage{amsfonts}
\usepackage{amsbsy}
\usepackage{mathtools}
\usepackage{mathrsfs}
\usepackage{bbm}
\usepackage{array}
\usepackage{tabularx}
\usepackage{multirow}
\usepackage{makecell}
\usepackage{threeparttable}
\usepackage{rotating}
\usepackage{placeins}
\usepackage{subfig}
\usepackage{comment}
\usepackage{wrapfig}
\usepackage{appendix}
\usepackage{changepage}
\usepackage{centernot}

\usepackage{bookmark}
\IfFileExists{xurl.sty}{\usepackage{xurl}}{}
\usepackage{hyperref}
\hypersetup{
  pdftitle={Title},
  pdfauthor={Author 1; Author 2},
  pdfkeywords={3 to 6 keywords, that do not appear in the title},
  colorlinks=true,
  linkcolor={blue},
  filecolor={Maroon},
  citecolor={Blue},
  urlcolor={Blue},
  pdfcreator={LaTeX via pandoc}}

\theoremstyle{plain}
\newtheorem{theorem}{Theorem}

\theoremstyle{definition}

\newtheorem{assumption}{Assumption}

\theoremstyle{remark}
\newtheorem{remark}{Remark}

\newcommand{\E}{\mathbbm{E}}

\newcommand{\Var}{\text{Var}}

\newcommand{\B}{\boldsymbol{\beta}}

\newcommand{\A}{\mathbf{A}}
\newcommand{\Aone}{\mathbf{A}^{(1)}_{jt}}
\newcommand{\Atwo}{\mathbf{A}^{(2,n^{(1)})}_{jt}}
\newcommand{\Atwolago}{\mathbf{A}^{(2, n^{(1)})}_{jt}}
\newcommand{\Atwodifflago}{\mathbf{A}^{(2, n^{(1)})}_{\Tilde{j}\Tilde{t}}}

\newcommand{\Yone}{Y^{(1)}_{ijt}}

\newcommand{\Ytwo}{Y^{(2,n^{(1)})}_{ijt}}
\newcommand{\Ytwolago}{Y^{(2,n^{(1)})}_{ijt}}
\newcommand{\Ytwodifflago}{Y^{(2,n^{(1)})}_{\Tilde{i}\Tilde{j}\Tilde{t}}}

\newcommand{\X}{\mathbf{X}}
\newcommand{\BoldY}{\mathbf{Y}}
\newcommand{\BarA}{\overline{\A}}
\newcommand{\Bolda}{\mathbf{a}}
\newcommand{\Bara}{\overline{\A}}
\newcommand*\BoldL{\ensuremath{\boldsymbol\ell}}
\newcommand{\BoldLone}{\boldsymbol{\ell}^{(1)}_{j}}

\newcommand{\BoldLtwo}{\boldsymbol{\ell}^{(2)}_{j}}
\newcommand{\BoldLtwolago}{\boldsymbol{\ell}^{(2)}_{j}}

\newcommand{\BoldLtwodifflago}{\boldsymbol{\ell}^{(2)}_{\Tilde{j}}}
\newcommand{\BoldZ}{\boldsymbol{Z}}

\newcommand{\PastN}{\overline{n}_{k-}}
\newcommand{\CurB}{\mathscr{B}}
\newcommand{\ConvP}{\overset{P}{\to}}
\newcommand{\ConvD}{\overset{\mathcal{D}}{\to}}
\newcommand\numberthis{\addtocounter{equation}{1}\tag{\theequation}}
\newcommand{\independent}{\perp\mkern-9.5mu\perp}

\newcommand{\anon}{1}

\begin{document}

\def\spacingset#1{\renewcommand{\baselinestretch}%
{#1}\small\normalsize} \spacingset{1}

\if1\anon
{
  \title{\bf Addressing Confounding by Indication 
  Through (Un)Measured Center Characteristics in 
  Learn-As-you-GO (LAGO)~Trials}
  \author{Minh Thu Bui\thanks{This research was supported by NIH Grant HL167936 (Drs.~Lok~and~Spiegelman). The EXTRA-CVD study was supported by NIH grant HL142099 (Drs.~Longenecker, Webel,~and~Bosworth).},
    Department of Mathematics and Statistics, 
    Boston University\\
    Chris T. Longenecker,
    Department of Global Health, University of 
    Washington\\
    Ante Bing,
    Department of Mathematics and Statistics, 
    Boston University\\
    Donna Spiegelman,
    Department of Biostatistics, Yale University\\
    Allison R. Webel,
    School of Nursing, University of Washington\\
    Hayden B. Bosworth,
    Department of Medicine, Duke University \\
    Judith J. Lok,
    Department of Mathematics and Statistics, 
    Boston University}
  \maketitle
} \fi
\if0\anon
{
  \bigskip
  \bigskip
  \bigskip
  \begin{center}
    {\large\bf Addressing Confounding by Indication 
    Through (Un)Measured Center~Characteristics in 
    Learn-As-you-GO (LAGO) Trials}
  \end{center}
  \medskip
} \fi

\bigskip

\begin{abstract}
The Learn-As-you-Go (LAGO) design is an adaptive clinical trial design allowing modifications to multicomponent intervention packages across stages. Centers participate in more than one stage, as is common in large-scale implementation trials. In LAGO trials, center characteristics may act as confounders, predicting both the intervention package and the outcomes. We extend LAGO theory by introducing fixed center effects to control for confounding by indication through measured and unmeasured center characteristics. Conditioning on center characteristics by including fixed center effects ensures asymptotic results hold without requiring explicit characterization of unmeasured confounders. Our methods apply even with small numbers of centers. LAGO theory is established for continuous outcomes following a generalized linear model and binary outcomes following a logistic regression model, unifying theory across outcome types. Point- and interval estimators are derived, and consistency and asymptotic normality are established. Valid hypothesis tests for the overall intervention effect are provided, and the optimal intervention package minimizing cost subject to a target outcome mean is obtained via constrained optimization.\\\\
\noindent {\it Keywords:}
unmeasured confounding, implementation science, adaptive designs.
\par
\end{abstract}

\def\thefigure{\arabic{figure}}
\def\thetable{\arabic{table}}

\renewcommand{\theequation}{\thesection.\arabic{equation}}
\spacingset{1.5}
\section{Introduction}
Adaptive clinical trials offer greater flexibility than traditional designs. The FDA defines adaptive design as pre-planned modifications based on accumulating data while maintaining study integrity \citep{FDA}. \citet{pallmann2018adaptive} highlight key advantages: more efficient use of resources, fewer required participants, and flexibility across trial stages. Though existing adaptive designs address modifications such as sample size re-estimation, participant population, treatment arm selection, and others, none allow changes to the intervention packages. This gap motivated the development of Learn-As-you-GO (LAGO), an adaptive trial design that allows modifications to multicomponent intervention packages over different stages.

LAGO trials have $K > 1$ stages. The stage 1 recommended intervention package is determined prior to the trial. After stage 1, the data collected so far are used to determine the recommended intervention package for stage~2. This process repeats, and the recommended intervention package for the final stage is determined using the data from all previous stages. One of the goals of LAGO is to estimate the optimal intervention package, for example, minimizing the cost of the package while attaining a pre-specified mean outcome. Another goal of LAGO is to estimate the effect of the final estimated optimal intervention package and its components.

\citet{nevo2021analysis} established the first theoretical LAGO framework, modeling binary outcomes with logistic regression. \citet{bing2023learn} extended this theory to continuous outcomes using a generalized linear model (GLM), and \citet{bing2025learn} proposed unconditional and conditional power approaches for LAGO optimizations. In LAGO, recommended interventions converge in probability to a fixed package as stage participation increases. Although these previous works established consistency and asymptotic normality, they did not account for confounding by indication or allow centers to participate in multiple stages.

This motivates extending LAGO to accommodate confounding by indication due to measured and unmeasured center characteristics. Specifically, we introduce three additional extensions to the LAGO framework. First, time-invariant measured and unmeasured center characteristics may confound the relationship between intervention package implemented and outcomes in the centers. Second, centers may participate in multiple stages, reflecting the design of many large-scale implementation trials. Third, the participant-level error terms in the model described in Section~\ref{notations} are assumed to be independent, though not necessarily identically distributed. These differences pose statistical challenges. 

We establish a LAGO framework using fixed center effects to account for confounding by measured and unmeasured time-invariant center characteristics. Fixed effects modeling is well-established for controlling unmeasured time-invariant confounding \citep{hausman1981panel, allison2009fixed}, and has been shown to perform comparably to or better than modern causal inference methods \citep{spiegelman2018evaluating}. Consistency and asymptotic normality of the intervention effect estimators hold without requiring explicit characterization of unmeasured confounders, as the limiting distribution depends only on the fixed center effect parameters. Unlike existing work on LAGO, which assumes no unmeasured confounding, and in contrast to random effects approaches that require the number of centers $J \rightarrow \infty$, our fixed effects approach addresses unmeasured confounding by indication and applies also with small numbers of centers.

Two trial designs often compared to LAGO are Sequential Multiple Assignment Randomized Trials (SMART) \citep{almirall2014introduction} and Multiphase Optimization Strategy Trials (MOST) \citep{collins2007multiphase}. SMART aims to optimize treatment sequences within participants through sequential randomization \citep{almirall2014introduction}, whereas LAGO optimizes intervention packages across stages with new participants at each stage. LAGO shares MOST's goal of identifying the optimal intervention package, but combines data from all stages in the final analysis rather than relying solely on a separate evaluation phase.  \citet{beidas2022promises} describes LAGO as a practical alternative in implementation science, allowing optimization of implementation strategies based on demonstrated needs in ongoing trials.

This article addresses confounding by indication arising from measured and unmeasured center characteristics in LAGO trials. Section~\ref{notations} introduces the notation, assumptions, proofs for continuous outcomes, and methods for hypothesis testing, confidence sets, and confidence bands, with extensions to binary outcomes and general GLMs provided in Appendices~\ref{proofs_binary} and~\ref{proofs_glm_general_link}. Section~\ref{simulations} presents simulation results. Section~\ref{real_data} applies the proposed methods to the A Nurse-Led Intervention to Extend the HIV Treatment Cascade for Cardiovascular Disease Intervention (EXTRA-CVD) study \citep{longenecker2024nurse}. Section~\ref{conclusions} concludes with a discussion and future research directions.

\section{LAGO Notation and Assumptions}\label{notations}

Let $K > 1$ be the number of stages, $J$ the number of centers, and $J_k$ the number of centers at stage $k$. An intervention package $\Bolda \in \mathbb{R}^P$ has $P$ components, with known cost $C(\Bolda)$, say in dollars. The optimal intervention minimizes $C(\Bolda)$ subject to the mean outcome reaching a prespecified goal $\theta$. Let $\BoldZ_j$ denote measured and unmeasured characteristics of center $j$ (e.g.\ yearly participant volume, staff-to-participant ratio), with $\overline{\BoldZ} = (\BoldZ_1^\top, \ldots, \BoldZ_J^\top)^\top$. These center characteristics may be predictive of both the actual intervention and the outcome. Let $T^{(1)}$ and $T^{(2)}$ denote the number of visit times in stages 1 and 2, respectively, and $t = 1,\ldots, T^{(1)}$ visits in stage 1 and $t = T^{(1)}+1, \ldots, T^{(2)}$ visits in stage 2.  Let $Y_{ijt}^{(k)}(\Bolda)$ be the counterfactual outcome for participant $i$ in center $j$ during visit $t$ at stage $k$ under intervention $\Bolda$, $\boldsymbol{Y}_j^{(k)}(\Bolda)$ the outcomes of all participants and all visits in center $j$ in stage $k$, and $\overline{\boldsymbol{Y}}^{(k)}(\Bolda)$ the outcomes of in stage $k$ under intervention $\Bolda$. Let $\mathbbm{1}[\cdot]$ denote the indicator function. The main text focuses on $K = 2$; Appendix~\ref{extensionK2} covers $K > 2$.

Let $n_{jt}^{(k)}$ the number of participants in center $j$ during visit $t$ in stage $k$, $n_{j}^{(k)}$ be the~number of participants in center~$j$ in stage~$k$, and $n^{(k)} = \sum_{j=1}^{J^{(k)}} n_{j}^{(k)}$ the total number of participants in stage~$k$. 

\begin{assumption}\label{ratioConvergence}
The ratio of the number of participants during visit $t$ in each center $j$ in stage $k$, $n_{jt}^{(k)}$, to the total sample size $n$, $n_{jt}^{(k)} / n$, converges to a constant $\alpha_{jt}^{(k)} = \lim_{n\rightarrow \infty} n_{jt}^{(k)} / n$ as $n \rightarrow \infty$.
\end{assumption}

Consider the following model for the counterfactual outcomes of participant $i$ in center $j$ during visit $t$ with center characteristics $\BoldZ_j$ under intervention package $\Bolda$:
\begin{align}\label{mainmodelAssp1}
    Y_{ijt}^{(k)}\left(\Bolda \right) = \beta_0 + \boldsymbol{\beta}_A^{T}\Bolda +  \boldsymbol{\beta}_z^{\top} \BoldZ_j + \eta \cdot \mathbbm{1}[k = 2] + b_j + \epsilon_{ijt}^{(k)}(\Bolda),
\end{align}
with $i = 1,..., n_{jt}^{(k)}$, $j = 1,..., J^{(k)}$, and $t = 1,.\ldots, T^{(1)}$ for a visit in stage 1 and $t = T^{(1)}+1, \ldots, T^{(2)}$ for a visit in stage 2. Here, $b_j$ is a random center effect of center $j$, independent and identically distributed (i.i.d.) with $\E(b_j) = 0$ and $\Var(b_j) = \sigma^2_{b}$; $\epsilon_{ijt}^{(k)}(\Bolda)$ is the participant-level error term, which has mean 0 for each intervention $\Bolda$ given the center characteristics~$\BoldZ_j$. 

A more general version of the model from equation (\ref{mainmodelAssp1}) is
\begin{equation}\label{eqn:main}
    Y_{ijt}^{(k)}\left(\Bolda \right) = \boldsymbol{\beta}_{A}^{\top} \Bolda + \gamma_j + \eta \cdot \mathbbm{1}[k = 2]+ \epsilon_{ijt}^{(k)}(\Bolda) ,
\end{equation} 
with $i = 1,..., n_{jt}^{(k)}$, $j = 1,..., J^{(k)}$, and $t = 1,.\ldots, T^{(1)}$ for a visit in stage 1 and $t = T^{(1)}+1, \ldots, T^{(2)}$ for a visit in stage 2. Here, $\gamma_j =  \beta_0 + b_j + \boldsymbol{\beta}_z^{T} \cdot \BoldZ_j$, which is henceforth considered as a fixed center effect. Including $\gamma_j$ in the outcome model effectively conditions on the random center effects $b_j$. Therefore, throughout the remainder of this work, we treat the $\gamma_j$ as fixed and, when model (\ref{mainmodelAssp1}) holds, condition on the realized values of $b_j$. Our main interest lies in the vector $\boldsymbol{\beta}_A$ and the null hypothesis of no intervention effect, $H_0: \boldsymbol{\beta}_A = \mathbf{0}$. 

The model of equation \eqref{eqn:main} leads to the following assumptions:
\begin{assumption}[Main Model]\label{mainmodel}
Define  
\begin{align*}
    \BoldL_{j}^{(k)} = \begin{pmatrix} \mathbbm{1}[j=1] \\\mathbbm{1}[j=2] \\ \vdots \\ \mathbbm{1}[j=J] \\ \mathbbm{1}[k=2] \end{pmatrix} \; \; \text{and} \; \; \boldsymbol{\beta}^{T} =  \left(\boldsymbol{\beta}^{T}_A, \boldsymbol{\gamma}^{T}, \eta  \right) \; \; \text{with} \; \; \boldsymbol{\gamma}^{T} = \left(\gamma_1, \ldots, \gamma_J\right).
\end{align*}

Assume that for all $\Bolda$,

\begin{equation}\label{eqn:main_matrixform}
    Y_{ijt}^{(k)}\left(\Bolda \right) = \B^{T} \begin{pmatrix} \Bolda \\  \BoldL_{j}^{(k)} \end{pmatrix} + \epsilon_{ijt}^{(k)}(\Bolda),
\end{equation}
with $\E [ \epsilon_{ijt}^{(k)}(\Bolda) \mid \overline{\BoldZ} ] = 0$, is a correctly specified model with $\B^{*T} = \left(\B^{*T}_A, \boldsymbol{\gamma}^{*T},\eta^{*} \right)$ the true parameters.
\end{assumption}

\begin{assumption}\label{asspContributeMoreThanOnce}
Every center $j$ contributes to more than one stage $k$.
\end{assumption}

\noindent Assumption \ref{asspContributeMoreThanOnce} is needed for the fixed effects $\gamma_j$ to be identified because $\B_A^{\top}\Bolda$ can be absorbed into $\gamma_j$ unless the center implements different intervention packages across stages, analogously to how a fixed individual effect in panel regression cannot be identified without repeated observations \citep{hausman1981panel}. 

 \begin{assumption}[Consistency] \label{asspConsistency} If center $j$ during visit $t$ in stage $k$ does not receive any intervention, $\A_{jt}^{(k)} = \boldsymbol{0}$, $Y_{ijt}^{(k)} = Y_{ijt}^{(k)}(\boldsymbol{0})$. If center $j$ during visit $t$ in stage $k$ receives intervention $\A_{jt}^{(k)} = \Bolda$, $Y_{ijt}^{(k)} = Y_{ijt}^{(k)}(\Bolda)$.
\end{assumption}

One goal of LAGO is to estimate the optimal intervention package. Assume $\theta$ is the prespecified outcome goal. For example, the optimal intervention package solves the optimization problem so that, on average, the mean outcome under a constant intervention $\boldsymbol{x}$ is at least $\theta$, while minimizing cost

\begin{align}\label{optimizationCost}
\underset{\boldsymbol{x}}{\text{min}} \; C(\boldsymbol{x}) \quad \text{subject to} \quad \frac{1}{n}\sum_{j=1}^{J} \sum_{t=1}^{T} n_{jt} \, \mathbb{E}[Y_{ijt}(\boldsymbol{x})| \boldsymbol{\beta}] \geq \theta,
\end{align}

with each component $p = 1,\ldots, P$ of $\boldsymbol{x}$ in the interval $[L_p, U_p]$.

This article considers a linear and a cubic cost function, $C(\boldsymbol{x})$. We assume that there is a unique solution to the optimization problem of equation (\ref{optimizationCost}), depending continuously on $\B$ in an open neighborhood of $\B^{*}$. Details on the formulation and computational implementation of this optimization problem under linear and cubic cost functions are provided in Nevo et al. (\citeyear{nevo2021analysis}) and Bing et al. (\citeyear{bing2023learn}).

At stage 1, investigators predetermine a recommended intervention $\boldsymbol{x}^{(1)}$ 
(or $\boldsymbol{x}_j^{(1)}$ per center). The actual intervention at center 
$j$ during visit $t$ is given by $\A_{jt}^{(1)} = h_{jt}^{(1)}(\boldsymbol{x}^{(1)})$, 
where $h_{jt}^{(1)}$ is a deterministic continuous function capturing center- and 
time-specific deviations from the recommended intervention. Let $Y_{ijt}^{(1)}$ denote 
the outcome of participant $i$ in center $j$ during visit $t$, and let 
$\boldsymbol{Y}_{jt}^{(1)} = (Y_{1jt}^{(1)}, \ldots, Y_{n_{jt}^{(1)}jt}^{(1)})$ 
collect the outcomes across all participants in center $j$ during visit $t$. 
The stage-1 interventions, outcomes, and errors aggregated across centers 
at each visit $t$ are $\overline{\A}_t^{(1)} = (\A_{1t}^{(1)}, \ldots, \A_{J^{(1)}t}^{(1)})$, $\overline{\boldsymbol{Y}}_t^{(1)} = (\boldsymbol{Y}_{1t}^{(1)}, \ldots, \boldsymbol{Y}_{J^{(1)}t}^{(1)})$, and $\overline{\boldsymbol{\epsilon}}_t^{(1)} =(\boldsymbol{\epsilon}_{1t}^{(1)}, \ldots, \boldsymbol{\epsilon}_{J^{(1)}t}^{(1)})$. We write
$\overline{\A}^{(1)} = \bigl(\overline{\A}_1^{(1)}, \ldots, \overline{\A}_{T^{(1)}}^{(1)}\bigr)$, $\overline{\boldsymbol{Y}}^{(1)} = \bigl(\overline{\boldsymbol{Y}}_1^{(1)}, \ldots, \overline{\boldsymbol{Y}}_{T^{(1)}}^{(1)}\bigr)$, and $\overline{\boldsymbol{\epsilon}}^{(1)} = \bigl(\overline{\boldsymbol{\epsilon}}_1^{(1)}, \ldots, \overline{\boldsymbol{\epsilon}}_{T^{(1)}}^{(1)}\bigr)$. Center characteristics $\overline{\BoldZ} = (\BoldZ_1, \ldots, \BoldZ_J)$ are assumed time-invariant and thus require no further temporal indexing. We use subscript $(-j)$ to denote all centers except $j$; for example, $\overline{\A}_{(-j)}^{(1)}$ denotes the interventions implemented across all stage~1 centers except center $j$. Although $\boldsymbol{x}^{(1)}$ may depend on $j$, for ease of exposition, we henceforth assume it does not.

After stage 1, the stage 1 data are analyzed to determine the recommended intervention for stage 2. The recommended intervention for stage 2, denoted $\widehat{\boldsymbol{x}}^{\text{opt},(2,n^{(1)})}$, is determined by a function $g$ that takes stage 1 outcomes and intervention packages as inputs: $\widehat{\boldsymbol{x}}^{\text{opt},(2,n^{(1)})} = g \bigl(\BarA^{(1)}, \overline{\boldsymbol{Y}}^{(1)}\bigr)$. For example, the recommended intervention $\widehat{\boldsymbol{x}}^{\text{opt},(2,n^{(1)})}$ solves the optimization problem from equation (\ref{optimizationCost}) using the estimated $\boldsymbol{\widehat{\beta}}^{(1)}$ based on the stage 1 data instead of the true $\B$, that is, it solves
\begin{align}
\underset{\boldsymbol{x}}{\text{min}} \; C(\boldsymbol{x}) \quad \text{subject to} \quad \frac{1}{n}\sum_{j=1}^{J} \sum_{t}^{T} n_{jt} \, \mathbb{E}[Y_{ijt}(\boldsymbol{x})|\boldsymbol{\widehat{\beta}}^{(1)}] \geq \theta,
\end{align}
where $\widehat{\B}^{(1)T} = (\widehat{\B}^{(1)T}_A, \widehat{\boldsymbol{\gamma}}^{(1)T}, 0)$ is the stage 1 estimate of $\B^{*T}$, with $\widehat{\eta} = 0$ since no stage 2 data are available yet.

As in \citet{nevo2021analysis} and \citet{bing2023learn}, $\X^{(2,n^{(1)})}$ may correspond to $\widehat{\boldsymbol{x}}^{\text{opt},(2,n^{(1)})}$, but our framework holds for any recommendation function $\X^{(2,n^{(1)})} = g \bigl(\BarA^{(1)}, \overline{\boldsymbol{Y}}^{(1)}\bigr)$ for which $\X^{(2,n^{(1)})}$ converges in probability to a constant.

At stage 2, the actual intervention at center $j$ during visit $t$ is given 
by $\A_{jt}^{(2,n^{(1)})} = h_{jt}^{(2)}(\widehat{\boldsymbol{x}}^{\text{opt},(2,n^{(1)})})$, 
where $h_{jt}^{(2)}$ is a deterministic continuous function capturing center- and 
time-specific deviations from the estimated optimal intervention. Let $Y_{ijt}^{(2,n^{(1)})}$ 
denote the outcome of participant $i$ in center $j$ during visit $t$, and let 
$\boldsymbol{Y}_{jt}^{(2,n^{(1)})} = (Y_{1jt}^{(2,n^{(1)})}, \ldots, Y_{n_{jt}^{(2)}jt}^{(2,n^{(1)})})$ 
collect the outcomes across all participants in center $j$ during visit $t$. 
The stage~2 interventions, outcomes, and errors aggregated across centers 
at each visit $t$ are $\overline{\A}_t^{(2,n^{(1)})} = (\A_{1t}^{(2,n^{(1)})}, \ldots, \A_{J^{(2)}t}^{(2,n^{(1)})})$, 
$\overline{\boldsymbol{Y}}_t^{(2,n^{(1)})} = (\boldsymbol{Y}_{1t}^{(2,n^{(1)})}, \ldots, \boldsymbol{Y}_{J^{(2)}t}^{(2,n^{(1)})})$, 
and $\overline{\boldsymbol{\epsilon}}_t^{(2,n^{(1)})} = (\boldsymbol{\epsilon}_{1t}^{(2,n^{(1)})}, \ldots, 
\boldsymbol{\epsilon}_{J^{(2)}t}^{(2,n^{(1)})})$. We write
$\overline{\A}^{(2,n^{(1)})} = \bigl(\overline{\A}_1^{(2,n^{(1)})}, \ldots, 
\overline{\A}_{T^{(1)}+1,\ldots, T^{(2)}}^{(2,n^{(1)})}\bigr)$, 
$\overline{\boldsymbol{Y}}^{(2,n^{(1)})} = \bigl(\overline{\boldsymbol{Y}}_1^{(2,n^{(1)})}, \ldots, 
\overline{\boldsymbol{Y}}_{T^{(1)}+1,\ldots, T^{(2)}}^{(2,n^{(1)})}\bigr)$, and 
$\overline{\boldsymbol{\epsilon}}^{(2,n^{(1)})} = \bigl(\overline{\boldsymbol{\epsilon}}_1^{(2,n^{(1)})}, 
\ldots, \overline{\boldsymbol{\epsilon}}_{T^{(1)}+1,\ldots, T^{(2)}}^{(2,n^{(1)})}\bigr)$.

Our estimand is $\B_A$, the causal effect of the actual intervention $\A$ on the outcome, rather than the effect of the recommended intervention $\boldsymbol{x}$, because centers may deviate from the recommendations through $h_{jt}^{(k)}$. We focus on the effect of the actual intervention because it directly determines participant exposure and thus the outcome. Within each stage $k$, identification of $\B_A$ requires variation in the actual interventions $\A_{jt}^{(k)}$ across visits $t$ within center $j$,  as captured by the function $h_{jt}^{(k)}$.

\begin{assumption}[Conditional Independence of Outcome Errors]\label{asspMinh}
Conditional on $\overline{\A}^{(k)}$ and $\overline{\BoldZ}$, the outcome errors $\epsilon_{ijt}^{(k)}$ are independent across centers and, within each stage $k$ in each center $j$, independent across participants. Also, conditional on $\overline{\A}^{(k)}$ and $\overline{\BoldZ}$, for participants present in multiple stages, their outcome errors are independent across stages.
\end{assumption}
\noindent That is, dependencies within the center are captured by the $\gamma_j$.
\begin{assumption}[LAGO Assumption]\label{asspAnte}
Conditional on $\widehat{\boldsymbol{x}}^{\text{opt},(2,n^{(1)})}$ and $\overline{\BoldZ}$, the stage 2 $\bigl(\overline{\A}^{(2, n^{(1)})}, \overline{\boldsymbol{\epsilon}}^{(2, n^{(1)})}\bigr)$ are independent of the stage 1 $\bigl(\overline{\A}^{(1)}, \overline{\boldsymbol{\epsilon}}^{(1)} \bigr)$. 
\end{assumption}

\noindent Assumption \ref{asspAnte} states that learning takes place only through the determination of the recommended intervention, given the center characteristics.

\begin{remark}
Under Assumption~\ref{mainmodel}, Assumption~\ref{asspMinh} implies that conditional on $\overline{\A}^{(k)}$ and $\overline{\BoldZ}$, the outcomes $Y_{ijt}^{(k)}$ are independent across centers and individuals, with within-center similarity captured by the center effects $\gamma_j$. Assumption~\ref{asspAnte} implies that conditional on $\widehat{\boldsymbol{x}}^{\text{opt},(2,n^{(1)})}$ and $\overline{\BoldZ}$, $\bigl(\overline{\A}^{(2,n^{(1)})}, \overline{\boldsymbol{Y}}^{(2,n^{(1)})}\bigr)$ is independent of $\bigl(\overline{\A}^{(1)}, \overline{\boldsymbol{Y}}^{(1)}\bigr)$.
\end{remark}

\begin{assumption}[Limiting Recommendations]\label{asspRecommendedIntervention} The stage 2 recommended intervention $\widehat{\boldsymbol{x}}^{\text{opt}, (2,n^{(1)})}$ converges in probability to a limit $\boldsymbol{x}^{(2)}$ as $n^{(1)} \rightarrow \infty$.
\end{assumption}

\noindent The LAGO framework incorporates a shrinkage algorithm. Shrinkage is needed when the desired outcome goal cannot be achieved within the feasible intervention space, preventing the recommended intervention from bouncing between the upper and lower limits under the null intervention effect. The shrinkage ensures that the stage 2 recommended intervention converges to a well-defined limit, so Assumption \ref{asspRecommendedIntervention} holds also under the null hypothesis where there are no intervention effects. See Section 5 of the Supplementary Materials from \citet{nevo2021analysis} or Appendix B in \citet{bing2025learn} for a discussion of the shrinkage algorithm. 

\begin{remark}\label{convergenceofA}
Under Assumption \ref{asspRecommendedIntervention} and the assumed continuity of the function $h_{jt}$ that maps recommended interventions onto actual intervention in center $j$, the Continuous Mapping Theorem implies that $\A_{jt}^{(2,n^{(1)})} = h_{jt}^{(2)} \bigl( \widehat{\boldsymbol{x}}^{\text{opt}, (2,n^{(1)})} \bigr)$ converges in probability to $\Bolda_{jt}^{(2)} = h_{jt}^{(2)}\bigl( \boldsymbol{x}^{(2)}\bigr)$.
\end{remark}

We illustrate the 2-stage LAGO design as follows:

\begin{align*}
\boldsymbol{x}^{(1)} \xrightarrow[]{h_{jt}^{(1)}} \underset{\substack{j=1,\dots,J^{(1)} \\ t= 1,\ldots, T^{(1)}}}{\A_{jt}^{(1)}} \xrightarrow[]{\BoldZ_j}\underset{\substack{ i=1,\dots,n_{jt}^{(1)} \\ j=1,\dots,J^{(1)} \\ t = 1,\ldots, T^{(1)}}}{Y_{ijt}^{(1)}} \xrightarrow[\text{analyze data}]{\overline{\BoldY}^{(1)}, \BarA^{(1)}} \boldsymbol{\hat{x}}^{\text{opt}, (2, n^{(1)})} \xrightarrow[]{h_{jt}^{(2)}} \underset{\substack{j=1,\dots,J^{(2)} \\ t=T^{(1)}+1,\ldots, T^{(2)}}}{\A_{jt}^{(2,n^{(1)})}} \xrightarrow[]{\BoldZ_j} \underset{\substack{i=1,\dots,n_{jt}^{(2)} \\ j=1,\dots,J^{(2)} \\ t=T^{(1)}+1,\ldots, T^{(2)}}}{Y_{ijt}^{(2, n^{(1)})}}.    
\end{align*}

\begin{assumption}[Outcome Errors]\label{epsilonassumption}
For a fixed intervention package $\Bolda$, the distribution of $\epsilon_{ijt}^{(k)}(\Bolda)$ depends only on $\BoldZ_j$ and is independent of $\Bolda$, with mean $\E \bigl( \epsilon_{ijt}^{(k)}(\Bolda) \mid \overline{\BoldZ} \bigr) = 0$. Assume that for some $\delta > 0$, $\max_{j} \E \bigl( |\epsilon_{ijt}^{(k)}(\Bolda)|^{2+\delta} \mid \BoldZ_j \bigr) \leq \mathcal{C}_\epsilon$. With $\sigma^2(\BoldZ_j) = \Var \bigl( \epsilon_{ijt}^{(k)}(\Bolda) \mid \BoldZ_j \bigr)$, define $\sigma^2 = \max_{j} \sigma^2(\BoldZ_j)$.
\end{assumption}

\begin{assumption}[Conditional Exchangeability]\label{asspCondExcha}
For fixed $\Bolda$ and all $i = 1,\ldots,n_{jt}^{(k)}$, $j = 1,\ldots,J^{(k)}$, $k = 1,\ldots,K$, and $t$ a visit in stage $k$, the counterfactual outcome is conditionally independent of the actual and recommended interventions given center characteristics: $Y_{ijt}^{(k)}(\Bolda) \independent (\BarA^{(k)}, \X^{(k)}) \mid~\overline{\BoldZ}$.
\end{assumption}

\begin{assumption}[Conditional Independence Between Centers]\label{asspcenterIndep}
For fixed $\Bolda$ and all $i = 1,\ldots,n_{jt}^{(k)}$, $j = 1,\ldots,J^{(k)}$, $k = 1,\ldots,K$, and $t$ a visit in stage $k$, $Y_{ijt}^{(k)}(\Bolda) \independent \overline{\BoldZ}_{(-j)} \mid  \BoldZ_j$.
That is, the counterfactual outcomes do not depend on center characteristics of other centers.
\end{assumption}

\noindent Assumption \ref{asspCondExcha} states that the counterfactual outcome $Y_{ijt}^{(k)}(\Bolda)$ of participant $i$ in center $j$ in stage $k$ is not predicted by actual and recommended interventions of any center in stage $k$ conditional on $\BoldZ$. Combining Assumptions \ref{asspCondExcha} and \ref{asspcenterIndep}, $\E \bigl(Y_{ijt}^{(k)}(\Bolda) \mid  \BarA^{(k)}, \X^{(k)}, \overline{\BoldZ} \bigl) = \E \bigl(Y_{ijt}^{(k)}(\Bolda) \mid \overline{\BoldZ} \bigr) = \E \bigl(Y_{ijt}^{(k)}(\Bolda) \mid \BoldZ_j \bigr)$.

\begin{assumption}[Compact Space]\label{asspCompact}
The intervention $\Bolda$ and vector $\boldsymbol{\beta}$ take values in compact spaces, with components bounded by constants $\mathcal{C}_A$ and $\mathcal{C}_\beta$, respectively.
\end{assumption}

\subsection[Unbiased estimating equations for B]{Unbiased estimating equations for $\B$:}
The estimating equations used to estimate $\B$ are

\begin{align}\label{eqn:OLS_ignore_ee}
    &0 = U(\boldsymbol{\beta}) \nonumber\\
    &= \frac{1}{n} \left\{ 
        \sum_{t=1}^{T^{(1)}} \sum_{j=1}^{J^{(1)}} \sum_{i=1}^{n_{jt}^{(1)}} 
        \begin{pmatrix} \Aone\\ \BoldLone \end{pmatrix} 
        \left[Y_{ijt}^{(1)} - 
        \begin{pmatrix} \boldsymbol{\beta}_A \\\B_{\ell} 
        \end{pmatrix}^{\top} 
        \begin{pmatrix} \Aone\\ \BoldLone \end{pmatrix} 
        \right] \right.\nonumber \\
    & \left.
        +\sum_{t=T^{(1)}+1}^{T^{(2)}} \sum_{j=1}^{J^{(2)}} \sum_{i=1}^{n_{jt}^{(2)}} 
        \begin{pmatrix} \Atwolago\\ \BoldLtwolago \end{pmatrix} 
        \left[Y_{ijt}^{(2,n^{(1)})} - 
        \begin{pmatrix} \boldsymbol{\beta}_A \\ \B_{\ell} 
        \end{pmatrix}^{\top} 
        \begin{pmatrix} \Atwolago\\ \BoldLtwolago \end{pmatrix} 
        \right]
    \right\}.
\end{align}

The asymptotic analysis of $\widehat{\B}$ has challenges due to the dependence between $\overline{\boldsymbol{Y}}^{(1)}$ and $\bigl(\BarA^{(2,n^{(1))}}, \overline{\boldsymbol{Y}}^{(2,n^{(1)})}\bigr)$, which means that, in contrast to most of the causal inference literature, the estimating function $U(\B)$ from equation (\ref{eqn:OLS_ignore_ee}) cannot be expressed as a sum of independent terms.

\begin{theorem}\label{unbiasedEE} (Unbiased estimating equations). Under Assumptions \ref{mainmodel}, \ref{epsilonassumption}, \ref{asspCondExcha}, \ref{asspcenterIndep}, and \ref{asspConsistency}, conditional on the fixed center characteristics $\overline{\BoldZ}$, $U(\B)$ are unbiased estimating equations~for~$\B^{*}$.
\end{theorem}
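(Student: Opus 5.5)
We will show that $\E\bigl[U(\B^*)\mid\overline{\BoldZ}\bigr]=\mathbf{0}$. We handle each summand of equation (\ref{eqn:OLS_ignore_ee}) separately by iterated expectations, conditioning first on the actual interventions of the relevant stage.

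\textbf{Step 1: residuals at $\B^*$.} By Assumption~\ref{asspConsistency}, if center $j$ at visit $t$ in stage $k$ receives $\A_{jt}^{(k)}=\Bolda$, then $Y_{ijt}^{(k)}=Y_{ijt}^{(k)}(\Bolda)$. By Assumption~\ref{mainmodel}, $Y_{ijt}^{(k)}(\Bolda)-\B^{*\top}(\Bolda^\top,\BoldL_j^{(k)\top})^\top=\epsilon_{ijt}^{(k)}(\Bolda)$. Hence the residual at $\B^*$ equals $\epsilon_{ijt}^{(k)}(\A_{jt}^{(k)})$. Because $\BoldL_j^{(k)}$ is deterministic, each summand reduces to
\[
\begin{pmatrix}\A_{jt}^{(k)}\\ \BoldL_j^{(k)}\end{pmatrix}\epsilon_{ijt}^{(k)}(\A_{jt}^{(k)}).
\]

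\textbf{Step 2: stage~1 terms.} $\A_{jt}^{(1)}=h_{jt}^{(1)}(\boldsymbol{x}^{(1)})$ is deterministic, so only $\E\bigl[\epsilon_{ijt}^{(1)}(\Bolda)\mid\overline{\BoldZ}\bigr]$ is needed. This expectation is zero by Assumptions~\ref{mainmodel} and~\ref{epsilonassumption}, and Assumption~\ref{asspcenterIndep} lets us replace conditioning on $\overline{\BoldZ}$ by conditioning on $\BoldZ_j$ where convenient.

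\textbf{Step 3: stage~2 terms (the main obstacle).} Here $\A_{jt}^{(2,n^{(1)})}=h_{jt}^{(2)}(\widehat{\boldsymbol{x}}^{\text{opt},(2,n^{(1)})})$ is random and depends on the stage~1 outcomes. We condition on $(\BarA^{(2,n^{(1)})},\X^{(2,n^{(1)})},\overline{\BoldZ})$. On the event $\{\A_{jt}^{(2,n^{(1)})}=\Bolda\}$, we have $\epsilon_{ijt}^{(2)}(\A_{jt})=Y_{ijt}^{(2)}(\Bolda)-\B^{*\top}(\Bolda^\top,\BoldL_j^{(2)\top})^\top$. Combining Assumptions~\ref{asspCondExcha} and~\ref{asspcenterIndep} as in the text following them gives
\[
\E\bigl(Y_{ijt}^{(2)}(\Bolda)\mid\BarA^{(2)},\X^{(2)},\overline{\BoldZ}\bigr)=\E\bigl(Y_{ijt}^{(2)}(\Bolda)\mid\BoldZ_j\bigr)=\B^{*\top}(\Bolda^\top,\BoldL_j^{(2)\top})^\top,
\]
so the conditional mean of the error is zero.

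Two technical points remain in this step. First, the step of evaluating the counterfactual at the random $\A$ must be justified for continuous $\A$. We do this by noting that the conditional independence holds for every fixed $\Bolda$, and that Assumption~\ref{epsilonassumption} says the error law does not depend on $\Bolda$. Then $\E\bigl[\epsilon(\A)\mid\A,\overline{\BoldZ}\bigr]=\E\bigl[\epsilon(\Bolda)\mid\overline{\BoldZ}\bigr]\big|_{\Bolda=\A}=0$, read as a substitution statement; for discrete $\A$ one can argue on events instead. Second, we need integrability of $\A\,\epsilon$ to pull $\A$ outside the conditional expectation. This follows from Assumption~\ref{asspCompact} (bounded $\A$) together with the moment bound in Assumption~\ref{epsilonassumption}.

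\textbf{Step 4: conclusion.} Taking the outer expectation given $\overline{\BoldZ}$ and summing the finitely many terms gives $\E\bigl[U(\B^*)\mid\overline{\BoldZ}\bigr]=\mathbf{0}$. No independence across stages is needed for unbiasedness; that is only relevant for the variance, and it is why the dependence through $\widehat{\boldsymbol{x}}^{\text{opt}}$ is harmless at this step.
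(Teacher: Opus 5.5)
Your main line matches the paper's proof. You condition on $(\BarA^{(k)},\X^{(k)},\overline{\BoldZ})$ and use Consistency to turn the residual at $\B^{*}$ into $\epsilon_{ijt}^{(k)}(\Bolda)$. You then use Conditional Exchangeability, with Assumption~\ref{asspcenterIndep}, to drop the conditioning on the interventions, and conclude from Assumptions~\ref{mainmodel} and~\ref{epsilonassumption}. Like the paper, you correctly need no cross-stage independence.

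The one structural difference is in Step~2. You rely on $\A_{jt}^{(1)}$ being deterministic, whereas the paper's proof treats $\BarA^{(1)}$ as random given $\overline{\BoldZ}$: confounding by indication already at stage~1, as in the simulation design with random $\xi_{jt,p}^{(1)}$. The paper then runs the same exchangeability argument at stage~1. Your Step~3 applies verbatim there, so use it for both stages; that is also where Assumption~\ref{asspCondExcha} enters for stage~1.

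The justification in your first technical point does not work. Pointwise exchangeability plus an $\Bolda$-invariant error law does not imply $\E\bigl[\epsilon(\A)\mid\A,\overline{\BoldZ}\bigr]=0$ when $\A$ is continuous. Counterexample:
\begin{itemize}
\item Let $\epsilon_{ijt}^{(2)}(\Bolda)=\xi_{ijt}+c\,\mathbbm{1}\bigl[\Bolda=\A_{jt}^{(2,n^{(1)})}\bigr]$, with $\xi_{ijt}$ mean-zero noise independent of everything and $\A_{jt}^{(2,n^{(1)})}$ without atoms.
\item For each fixed $\Bolda$, $\epsilon_{ijt}^{(2)}(\Bolda)=\xi_{ijt}$ almost surely. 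So all five hypotheses of the theorem hold, with Consistency holding by construction.
\item Yet the observed residual is $\xi_{ijt}+c$, and the stage-2 part of $\E[U(\B^{*})\mid\overline{\BoldZ}]$ equals $c$ times a nonzero vector (its last entry is $c\,n^{(2)}/n$).
\end{itemize}

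What is actually needed is exchangeability of the whole counterfactual process, $\{\epsilon_{ijt}^{(k)}(\Bolda)\}_{\Bolda}\independent(\BarA^{(k)},\X^{(k)})\mid\overline{\BoldZ}$, for a jointly measurable version. The standard substitution (freezing) lemma then gives your display. Alternatively, you can read Consistency plus Conditional Exchangeability as directly licensing the substitution.

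The paper makes exactly this move without comment. It replaces $Y_{ijt}^{(k)}$ by $Y_{ijt}^{(k)}(\Bolda_{jt})$ inside $\E[\cdot\mid\BarA^{(k)}=\overline{\Bolda},\overline{\BoldZ}]$ and then drops the conditioning on $\BarA^{(k)}=\overline{\Bolda}$. So this is not a defect relative to the paper's proof, but you should not present your stated reason as sufficient.

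Relatedly, Assumption~\ref{asspCompact}, which you invoke for integrability, is not among the theorem's hypotheses; the paper also leaves integrability implicit.
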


To prove Theorem \ref{unbiasedEE}, we show that conditional on the fixed center characteristics $\overline{\BoldZ}$, the expectation of each term in equation (\ref{eqn:OLS_ignore_ee}) equals 0. Although $\BoldZ_j$ may be unmeasured, conditioning on $\BoldZ_j$ is appropriate as it captures the center-level confounding absorbed by $\gamma_j$.

Considering $\overline{\BoldZ}$ as fixed, we show that the solution $\widehat{\B}$ to the unbiased estimating equations~(\ref{eqn:OLS_ignore_ee}) converges in probability to $\B^*$ and is asymptotically normal.

\subsection[Consistency of B-hat]{Consistency of $\widehat{\boldsymbol{\B}}$:}

\begin{theorem}\label{theoremConsistencyWithLearning} (Consistency). Under Assumptions \ref{ratioConvergence}-\ref{asspCompact} and conditional on $\overline{\BoldZ}$, in the absence of multicollinearity, $\widehat{\B} \overset{P}{\rightarrow} \B^{*}$. 
\end{theorem}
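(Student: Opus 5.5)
Since $U(\B)$ in equation~(\ref{eqn:OLS_ignore_ee}) is linear in $\B$, I will work with the closed form of $\widehat{\B}$. Write $\mathbf{W}_{jt}^{(k)} = (\A_{jt}^{(k)\top}, \BoldL_j^{(k)\top})^\top$, with $\A_{jt}^{(2)}$ shorthand for $\A_{jt}^{(2,n^{(1)})}$. Define
\begin{align*}
M_n = \frac{1}{n}\sum_{k=1}^{2}\sum_{t}\sum_{j} n_{jt}^{(k)} \mathbf{W}_{jt}^{(k)}\mathbf{W}_{jt}^{(k)\top},
\qquad
S_n = \frac{1}{n}\sum_{k=1}^{2}\sum_{t}\sum_{j}\sum_{i} \mathbf{W}_{jt}^{(k)}\epsilon_{ijt}^{(k)}.
\end{align*}
By Assumptions~\ref{mainmodel} and \ref{asspConsistency}, $Y_{ijt}^{(k)} = \B^{*\top}\mathbf{W}_{jt}^{(k)} + \epsilon_{ijt}^{(k)}$. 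Hence, whenever $M_n$ is invertible, $\widehat{\B} - \B^* = M_n^{-1} S_n$. It therefore suffices to show two things: (i) $M_n \ConvP M$ for a nonsingular deterministic matrix $M$, and (ii) $S_n \ConvP \mathbf{0}$. The continuous mapping theorem then gives the claim.

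\textbf{Step (i).} Stage~1 interventions $\A_{jt}^{(1)} = h_{jt}^{(1)}(\boldsymbol{x}^{(1)})$ are deterministic. By Remark~\ref{convergenceofA}, $\A_{jt}^{(2)} \ConvP \Bolda_{jt}^{(2)}$. By Assumption~\ref{ratioConvergence}, $n_{jt}^{(k)}/n \to \alpha_{jt}^{(k)}$. Because $J$ and $T^{(2)}$ are fixed, $M_n$ is a finite sum of products of convergent quantities, so
\begin{align*}
M_n \ConvP M = \sum_{k,t,j} \alpha_{jt}^{(k)} \mathbf{w}_{jt}^{(k)} \mathbf{w}_{jt}^{(k)\top},
\end{align*}
with the $\mathbf{w}_{jt}^{(k)}$ being the limits of the $\mathbf{W}_{jt}^{(k)}$. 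Nonsingularity of $M$ is exactly the ``absence of multicollinearity'' hypothesis. This is where Assumption~\ref{asspContributeMoreThanOnce} and the within-center variation of $h_{jt}^{(k)}$ across visits enter: they separate $\B_A$ from $\gamma_j$ and $\eta$. It follows that $M_n^{-1} \ConvP M^{-1}$, and $M_n$ is invertible with probability tending to one.

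\textbf{Step (ii).} The stage-1 part of $S_n$ has deterministic regressors and independent mean-zero errors (Assumptions~\ref{asspMinh}, \ref{epsilonassumption}). Its variance is bounded by $\sigma^2 \mathcal{C}/n \to 0$, where $\mathcal{C}$ bounds the regressors by Assumption~\ref{asspCompact}, so Chebyshev's inequality shows it is $o_P(1)$.

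The stage-2 part is the main obstacle, because $\A^{(2)}$ depends on the stage-1 data, so its terms are not independent of the rest of $S_n$. I will condition on $\widehat{\boldsymbol{x}}^{\text{opt},(2,n^{(1)})}$ and $\overline{\BoldZ}$. By Assumption~\ref{asspAnte}, together with Assumptions~\ref{asspCondExcha}, \ref{asspcenterIndep} and \ref{epsilonassumption}, the stage-2 errors given $\A^{(2)}$ are then independent, have mean zero and have variance at most $\sigma^2$. This is the same argument as in Theorem~\ref{unbiasedEE}. Conditional on $\widehat{\boldsymbol{x}}^{\text{opt},(2,n^{(1)})}$, each component of the stage-2 sum therefore has mean zero and conditional variance at most $\mathcal{C}_A'^2\sigma^2 n^{(2)}/n^2$, where $\mathcal{C}_A'$ is the bound on the regressors (the indicators in $\BoldL_j$ are bounded by $1$ and the $\A$ components by $\mathcal{C}_A$). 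The conditional Chebyshev inequality gives $P(|\cdot|>\varepsilon \mid \widehat{\boldsymbol{x}}) \le C/(n\varepsilon^2)$ uniformly in $\widehat{\boldsymbol{x}}$. Taking expectations over $\widehat{\boldsymbol{x}}$ then yields unconditional convergence to $0$ in probability.

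Combining the two steps, $\widehat{\B} - \B^* = M_n^{-1}S_n \ConvP M^{-1}\cdot\mathbf{0} = \mathbf{0}$.
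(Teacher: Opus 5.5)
Your argument is correct, but it follows a different route from the paper. You use the linearity of $U(\B)$ to write $\widehat{\B}-\B^{*}=M_n^{-1}S_n$, which reduces consistency to $M_n\ConvP M$ with $M$ nonsingular and $S_n\ConvP\mathbf{0}$.

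The paper instead applies Theorem~5.9 of \citet{van2000asymptotic}. It decomposes $U(\B)-u(\B)=U(\B^{*})+G_1+G_2+G_3+G_4$ and shows that the supremum of each piece over the compact parameter set vanishes in probability. It then gets the well-separated zero of $u$ from negative definiteness of its derivative.

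The ingredients coincide. In your notation, $U(\B^{*})=S_n$, $u(\B)=M(\B^{*}-\B)$ and $G_1+G_2+G_3+G_4=(M_n-M)(\B^{*}-\B)$. In both proofs, ``absence of multicollinearity'' means nonsingularity of $M$. Your route, however, needs neither compactness of the $\B$-space nor any uniform-in-$\B$ convergence.

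It also sidesteps a step the paper spends effort on. You apply Chebyshev's inequality to the stage-1 and stage-2 parts of $S_n$ separately and then add two $o_P(1)$ terms. So you never need the two stages to be uncorrelated, which is where the paper invokes the LAGO Assumption~\ref{asspAnte}. Your Step (ii) in fact goes through without that assumption, so citing it there is superfluous.

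What the paper's template buys is portability. The same argument is reused for the logistic and general-link estimating equations in Appendices~\ref{proofs_binary} and~\ref{proofs_glm_general_link}, where $U$ is nonlinear in $\B$ and there is no closed form.

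One small repair is needed in Step (ii). Condition on $(\BarA^{(2,n^{(1)})},\overline{\BoldZ})$ rather than on $\widehat{\boldsymbol{x}}^{\text{opt},(2,n^{(1)})}$. Assumption~\ref{asspMinh} gives independence of the errors given the actual interventions, not given the recommendation. The regressors are measurable with respect to this conditioning, and your variance bound is uniform in its value, so nothing else changes. The same device also covers stage~1 if, as the paper sometimes allows, $\BarA^{(1)}$ is random given $\overline{\BoldZ}$.
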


Define 
\begin{align}\label{smallu}
u(\boldsymbol{\beta}) &= \sum_{t=1}^{T^{(1)}} \sum_{j=1}^{J^{(1)}} \alpha_{jt}^{(1)}  \begin{pmatrix} \A_{jt}^{(1)}\\ \BoldLone \end{pmatrix} \left [ \begin{pmatrix} \boldsymbol{\beta}^{*}_A \\   \B^{*}_{\ell} \end{pmatrix}^{\top} - \begin{pmatrix} \boldsymbol{\beta}_A \\   \B_{\ell} \end{pmatrix}^{\top} \right ] \begin{pmatrix} \A_{jt}^{(1)} \\  \BoldLone \end{pmatrix} \nonumber \\
&+  \sum_{t=T^{(1)}+1}^{T^{(2)}} \sum_{j=1}^{J^{(2)}} \alpha_{jt}^{(2)}  \begin{pmatrix} \Bolda_{jt}^{(2)} \\  \BoldLtwo \end{pmatrix} \left [ \begin{pmatrix} \boldsymbol{\beta}^{*}_A \\   \B^{*}_{\ell} \end{pmatrix}^{\top} - \begin{pmatrix} \boldsymbol{\beta}_A \\   \B_{\ell} \end{pmatrix}^{\top} \right ] \begin{pmatrix} \Bolda_{jt}^{(2)} \\  \BoldLtwo \end{pmatrix}.
\end{align}
To prove Theorem \ref{theoremConsistencyWithLearning}, we use Theorem 5.9 of \citet{van2000asymptotic} and show that the two conditions of this theorem are satisfied:  
\begin{align}\label{VaartFirst}
\underset{\boldsymbol{\beta} \in \CurB}{\sup} ||U(\boldsymbol{\beta}) - u(\boldsymbol{\beta})||\overset{p}{\to} 0,  
\end{align} 
and, for every $\epsilon > 0$,  
\begin{align}\label{VaartSecond}
\underset{\B \in \CurB:||\B - \B^{*}||>\epsilon}{\inf} ||u(\boldsymbol{\beta})|| > 0 = ||u(\boldsymbol{\beta}^{*})||. 
\end{align} 
To prove equation~(\ref{VaartFirst}), Appendix~\ref{ConsistencyProofAppendix} decomposes $U(\B) - u(\B)$ into five components, each converging in probability to 0.

The first component is $U(\B^{*})$. We use Chebyshev's inequality on each entry of $U(\B^{*})$ to prove that it converges to 0 in probability. We first establish that the two summands in $U(\B^{*})$ are uncorrelated conditional on the center characteristics $\overline{\BoldZ}$. Then, we prove that the variance of each entry of each stage converges to 0. Thus, the expectation equals 0 (Theorem \ref{unbiasedEE}), the variance converges to 0, and with Chebyshev's inequality, $U(\B^{*}) \ConvP 0$. 

The supremum over $\B$ of the other components is proven to converge to 0 in probability because of Compact Space Assumption \ref{asspCompact}, Remark \ref{convergenceofA}, and the Continuous Mapping Theorem. Equation (\ref{VaartFirst}) follows.

The condition of equation (\ref{VaartSecond}) holds because  $||u(\boldsymbol{\beta}^{*})|| = 0$ from equation (\ref{smallu}), while the strict concavity of $u(\B)$ is ensured by the negative definiteness of its derivative at $\B$ in the absence of multicollinearity in the limit. Hence, with the two conditions in Theorem 5.9 of \citet{van2000asymptotic} satisfied, Theorem \ref{theoremConsistencyWithLearning} holds.

\subsection[Asymptotic Normality]{Asymptotic Normality of $\widehat{\boldsymbol{\beta}}$:}

\begin{theorem}\label{AsymptoticTheoremWithLearning} (Asymptotic Normality). Under Assumptions \ref{ratioConvergence}-\ref{asspCompact}, conditional on $\overline{\BoldZ}$, 
    $\sqrt{n}(\widehat{\B} - \B^{*}) \overset{\mathcal{D}}{\rightarrow} \mathcal{N}\left(0, J^{*-1} \cdot V \cdot J^{*-T}\right)$,
    with
    \begin{align*}
        &J^{*} = \sum_{t=1}^{T^{(1)}} \sum_{j=1}^{J^{(1)}} \alpha_{jt}^{(1)}  \begin{pmatrix} \A_{jt}^{(1)}\\   \BoldLone \end{pmatrix} \begin{pmatrix} \A_{jt}^{(1)}\\   \BoldLone \end{pmatrix}^{\top} + \sum_{t=T^{(1)}+1}^{T^{(2)}} \sum_{j=1}^{J^{(2)}} \alpha_{jt}^{(2)}  \begin{pmatrix} \Bolda_{jt}^{(2)}\\   \BoldLtwo \end{pmatrix} \begin{pmatrix} \Bolda_{jt}^{(2)}\\   \BoldLtwo \end{pmatrix}^{\top}, \\
        &V = \sum_{t=1}^{T^{(1)}} \sum_{j=1}^{J^{(1)}} \alpha_{jt}^{(1)} \begin{pmatrix} \A_{jt}^{(1)}\\  \BoldLone \end{pmatrix}  \begin{pmatrix} \A_{jt}^{(1)}\\   \BoldLone \end{pmatrix}^{\top} \sigma^2 \left(\BoldZ_j \right) + \sum_{t=T^{(1)}+1}^{T^{(2)}} \sum_{j=1}^{J^{(2)}} \alpha_{jt}^{(2)} \begin{pmatrix} \Bolda_{jt}^{(2)}\\   \BoldLtwo \end{pmatrix}  \begin{pmatrix} \Bolda_{jt}^{(2)}\\  \BoldLtwo \end{pmatrix}^{\top} \sigma^2 \left(\BoldZ_j \right).
    \end{align*}
\end{theorem}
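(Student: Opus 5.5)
Since $U(\B)$ in equation~(\ref{eqn:OLS_ignore_ee}) is linear in $\B$, the argument reduces to a direct expansion with no Taylor remainder. Writing $\mathbf{W}$ for the stacked covariate vector $(\A^{\top},\BoldL^{\top})^{\top}$, we have $U(\widehat{\B}) = 0 = U(\B^*) - \widehat{J}_n(\widehat{\B}-\B^*)$, where $\widehat{J}_n = n^{-1}\sum_{k,t,j} n_{jt}^{(k)}\mathbf{W}_{jt}^{(k)}\mathbf{W}_{jt}^{(k)\top}$. Hence
\[
\sqrt{n}(\widehat{\B}-\B^*) = \widehat{J}_n^{-1}\sqrt{n}\,U(\B^*).
\]
By Assumption~\ref{ratioConvergence}, Remark~\ref{convergenceofA} (that is, $\A_{jt}^{(2,n^{(1)})}\ConvP \Bolda_{jt}^{(2)}$), Assumption~\ref{asspCompact}, and the continuous mapping theorem, $\widehat{J}_n \ConvP J^*$. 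The matrix $J^*$ is invertible by the no-multicollinearity condition used in Theorem~\ref{theoremConsistencyWithLearning}, and Assumption~\ref{asspContributeMoreThanOnce} guarantees the center indicators are identified. Slutsky's lemma then reduces the theorem to showing $\sqrt{n}\,U(\B^*)\ConvD \mathcal{N}(0,V)$.

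Split $\sqrt{n}U(\B^*) = S_1 + S_2$, where
\[
S_1 = n^{-1/2}\sum_{t\le T^{(1)}}\sum_j\sum_i \mathbf{W}_{jt}^{(1)}\epsilon_{ijt}^{(1)}
\]
and $S_2$ is the analogous stage-2 sum with $\mathbf{W}_{jt}^{(2,n^{(1)})}$. By Assumption~\ref{asspConsistency} and Assumption~\ref{mainmodel}, each residual equals the counterfactual error. For $S_1$, the $\A_{jt}^{(1)}$ are deterministic given $\boldsymbol{x}^{(1)}$. Conditional on $\overline{\BoldZ}$, the errors are independent, mean zero, with variance $\sigma^2(\BoldZ_j)$ (Assumptions~\ref{asspMinh}, \ref{epsilonassumption}, \ref{asspCondExcha}, \ref{asspcenterIndep}). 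The Lindeberg--Feller CLT applies, with the Lyapunov condition supplied by the uniform $(2+\delta)$-moment bound and bounded covariates. This gives $S_1\ConvD\mathcal{N}(0,V_1)$, where $V_1$ is the stage-1 part of $V$.

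The main obstacle is $S_2$ and its joint behavior with $S_1$, because $\A^{(2,n^{(1)})}$ depends on stage-1 data. I would handle this by conditioning on $\widehat{\boldsymbol{x}}^{\text{opt},(2,n^{(1)})}$ and $\overline{\BoldZ}$. By Assumption~\ref{asspAnte}, stage-2 actions and errors are then independent of stage-1 data. By Assumption~\ref{asspCondExcha}, the stage-2 errors remain independent, mean zero, with variance $\sigma^2(\BoldZ_j)$, and $\mathbf{W}^{(2,n^{(1)})}$ is fixed. The proof would go through characteristic functions:
\[
\E\bigl[e^{i s^{\top}S_1 + i r^{\top}S_2}\mid\overline{\BoldZ}\bigr] = \E\Bigl[e^{i s^{\top}S_1}\,\E\bigl[e^{i r^{\top}S_2}\mid \widehat{\boldsymbol{x}}^{\text{opt},(2,n^{(1)})},\overline{\BoldZ}\bigr]\Bigr].
\]
I would then show that the inner conditional characteristic function is uniformly close to $\exp\{-\tfrac12 r^{\top}V_2(\widehat{\boldsymbol{x}})r\}$. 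This follows from a Berry--Esseen- or Lindeberg-type bound that is uniform over the compact action set (Assumption~\ref{asspCompact}), combined with the uniform moment bound. Here $V_2(\cdot)$ is continuous, so $V_2(\widehat{\boldsymbol{x}})\ConvP V_2$ by Assumption~\ref{asspRecommendedIntervention} and the continuity of $h_{jt}^{(2)}$. Bounded convergence then lets the inner factor be replaced by the constant $\exp\{-\tfrac12 r^{\top}V_2 r\}$, and the stage-1 CLT finishes the argument. The result is joint convergence to independent normals, so $S_1+S_2\ConvD\mathcal{N}(0,V_1+V_2)=\mathcal{N}(0,V)$.

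Combining this with $\widehat{J}_n^{-1}\ConvP J^{*-1}$ via Slutsky yields the limit $\mathcal{N}(0,J^{*-1}VJ^{*-T})$. Since $J^*$ is symmetric, $J^{*-T}=J^{*-1}$.
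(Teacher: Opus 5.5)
Your proof is correct. The reduction to $\sqrt{n}\,U(\B^*)\ConvD\mathcal{N}(0,V)$, via the exact linear expansion, $\widehat{J}_n\ConvP J^*$ and Slutsky, is what the paper does. The difference lies in how you neutralize the dependence of the stage-2 term on stage-1 data.

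The paper does not split by stage. It adds and subtracts the limit $\Bolda^{(2)}_{jt}$, writing $\sqrt{n}\,U(\B^*)=U_{1\_2,n}+U_{2\_2,n}$. The remainder $U_{2\_2,n}$, with weights $\A^{(2,n^{(1)})}_{jt}-\Bolda^{(2)}_{jt}$, is shown to have mean zero and vanishing variance (dominated convergence plus Chebyshev). In the main term $U_{1\_2,n}$, whose weights are now deterministic, the errors $\epsilon^{(2,n^{(1)})}_{ijt}$ are replaced by errors $\epsilon^{(2)}_{ijt}$ independent of stage 1 ``without changing the distribution''. That replacement uses the fact that the error law does not depend on $\Bolda$ (Assumption~\ref{epsilonassumption}). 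After it, Lyapunov and Cram\'er--Wold apply to a sum of independent terms.

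You instead keep the random weights, obtain a conditional Gaussian approximation given $\widehat{\boldsymbol{x}}^{\text{opt},(2,n^{(1)})}$, and pass to the limit through $V_2(\widehat{\boldsymbol{x}})\ConvP V_2$ and bounded convergence.

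\emph{Trade-offs.} The paper's route is more elementary, needing only second moments and a fixed-weight CLT, but it rests on that one-line distributional replacement. Your route gives joint convergence and asymptotic independence of the two stage sums as a conclusion rather than a premise. It uses only conditional first and second moments plus the uniform $(2+\delta)$ bound, so it would tolerate an error variance depending continuously on $\Bolda$, where the exact replacement no longer holds. It also extends to $K>2$ by iterated conditioning.

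\emph{Step to state carefully.} The uniform approximation of the inner characteristic function needs care. A normalized Berry--Esseen bound degenerates where $r^{\top}V_2(\boldsymbol{x})r$ is near $0$. Use instead the unnormalized Lindeberg estimate, taking $\delta\le 1$ and letting $v_n(\boldsymbol{x})$ denote the conditional variance of $r^{\top}S_2$:
\[
\Bigl|\E\bigl[e^{ir^{\top}S_2}\bigm|\widehat{\boldsymbol{x}},\overline{\BoldZ}\bigr]-e^{-v_n(\widehat{\boldsymbol{x}})/2}\Bigr|
\le\sum_{t,j,i}\E\Bigl[\bigl|n^{-1/2}r^{\top}\mathbf{W}^{(2,n^{(1)})}_{jt}\epsilon_{ijt}\bigr|^{2+\delta}\Bigm|\widehat{\boldsymbol{x}},\overline{\BoldZ}\Bigr]+O(n^{-1})=O(n^{-\delta/2}).
\]
Its constants depend only on $\mathcal{C}_A$, $\mathcal{C}_\epsilon$, $\sigma^2$ and $r$, so the bound is uniform in $\boldsymbol{x}$, which is exactly what your bounded-convergence step requires.
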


To prove Theorem \ref{AsymptoticTheoremWithLearning}, first, we use the Mean Value Theorem to show that the limiting distribution of $ \sqrt{n}(\widehat{\B} - \B^{*})$ is the same as the limiting distribution of $-\sqrt{n}\left[\left( \left .\frac{\partial}{\partial \boldsymbol{\beta}} \right |_{\Tilde{\boldsymbol{\beta}}}U(\boldsymbol{\beta}) \right)\right]^{-1} U(\boldsymbol{\beta}^{*})$, where for each row of $\left .\frac{\partial}{\partial \boldsymbol{\beta}} \right |_{\Tilde{\boldsymbol{\beta}}}U(\boldsymbol{\beta})$, $\widetilde{\B}$ takes a possibly row-dependent value between $\widehat{\B}$ and $\B^{*}$. Second, we show that $\left(-\left .\frac{\partial}{\partial \boldsymbol{\beta}} \right |_{\Tilde{\boldsymbol{\beta}}}U(\boldsymbol{\beta}) \right)^{-1}$ converges in probability to $J^{*-1}$. 

To show that $\sqrt{n}\,U(\B^{*})$ converges in distribution to a normal random variable, we decompose it into two terms,
\begin{align*}
    U_{1\_2,n} &= \frac{1}{\sqrt{n}} \left [ \sum_{t=1}^{T^{(1)}} \sum_{j=1}^{J^{(1)}} \sum_{i=1}^{n_{jt}^{(1)}} \begin{pmatrix} \Aone\\   \BoldLone \end{pmatrix} \epsilon_{ijt}^{(1)} +\sum_{t=T^{(1)}+1}^{T^{(2)}} \sum_{j=1}^{J^{(2)}} \sum_{i=1}^{n_{jt}^{(2)}} \begin{pmatrix} \Bolda_{jt}^{(2)}\\   \BoldLtwo \end{pmatrix} \epsilon_{ijt}^{(2,n^{(1)})} \right ],\\
    U_{2\_2,n} &=\frac{1}{\sqrt{n}}\sum_{t=T^{(1)}+1}^{T^{(2)}} \sum_{j=1}^{J^{(2)}} \sum_{i=1}^{n_{jt}^{(2)}} \left [\begin{pmatrix} \Atwolago\\   \BoldLtwolago \end{pmatrix}  -  \begin{pmatrix} \Bolda_{jt}^{(2)}\\   \BoldLtwo \end{pmatrix} \right ]\epsilon_{ijt}^{(2,n^{(1)})}. 
\end{align*}
Appendix \ref{AsymptoticProofAppendix} establishes that $U_{1\_2,n}$ converges to a normal distribution while $U_{2\_2,n}$ converges to 0 in probability. Regarding $U_{1\_2,n}$, by Outcome Errors Assumption \ref{epsilonassumption}, the $\epsilon_{ijt}^{(2,n^{(1)})}$ can be replaced by the errors $\epsilon_{ijt}^{(2)}$ under the limit $\Bolda_{jt}^{(2)}$ without changing the distribution of  $U_{1\_2,n}$. To prove that $U_{1\_2,n}$ converges to a normal distribution, since $\epsilon_{ijt}^{(1)}$ and $\epsilon_{ijt}^{(2)}$ are independent but not i.i.d.\ with distribution possibly different in each center $j$, the Crámer-Wold device and the Lyapunov Central Limit Theorem (CLT) are used \citep{billingsley1995}. Next, to establish that $U_{2\_2,n} \ConvP 0$, Appendix \ref{AsymptoticProofAppendix} shows that $\E(U_{2\_2,n}) = 0$ and $\E(U_{2\_2,n}^2) \rightarrow 0$, which imply convergence in probability to 0 through Chebyshev's inequality. Theorem \ref{AsymptoticTheoremWithLearning} follows.

A consistent estimate of the asymptotic variance is obtained by substituting $\Bolda_{jt}^{(2)}$, $\B^*$, $\alpha_{jt}^{(1)}$, and $\alpha_{jt}^{(2)}$ with $\A_{jt}^{(2,n^{(1)})}$, $\widehat{\B}$, $n_{jt}^{(1)}/n$, and $n_{jt}^{(2)}/n$, following Consistency Theorem~\ref{theoremConsistencyWithLearning}. Both the asymptotic variance and its consistent estimate coincide with those derived under fixed interventions and independence of $\overline{\mathbf{Y}}^{(1)}$ and $\overline{\mathbf{Y}}^{(2,n^{(1)})}$.

\subsection{Hypothesis Testing}\label{hypothesis_testing}
In a LAGO study, a primary objective is to test the null hypothesis of no overall intervention effect. The null hypothesis is $H_0: \B_A = 0$, which can be tested using Wald-type tests, and these methods are applicable due to the asymptotic normality of $\widehat{\B}$ proven in Section 3.2. Rejection of the null hypothesis $H_0$ indicates evidence of a non-zero intervention effect in the presence of confounding by indication.

Alternatively, in a randomized controlled LAGO trial, let R denote the intervention assignment indicator (R = 1 for intervention, R = 0 for control). The primary hypothesis test is $H_0: \B_A = 0$. Under this null hypothesis, the intervention components have no effect on outcomes, eliminating confounding by indication since the choice of intervention $\A^{(k)}$ becomes irrelevant for the distributions of the outcomes. The observed outcome model under the null hypothesis reduces to $Y_{ijt}^{(k)} = \gamma_j + \eta \cdot \mathbbm{1}[k = 2] + \epsilon_{ijt}^{(k)}$. Thus, we can test whether $\delta = 0$ in the full model $Y_{ijt}^{(k)} = \delta \cdot \text{R} + \gamma_j + \eta \cdot \mathbbm{1}[k = 2] + \epsilon_{ijt}^{(k)}$, for which $\delta$ represents the intervention effect comparing the intervention and the control groups. Under $H_0$ in the full model, we have $\delta = 0$ since the intervention components have no effect. The inclusion of fixed center effects $\gamma_j$ is important as they account for baseline heterogeneity across centers that could otherwise bias the test for an intervention effect. Furthermore, under the null hypothesis, stages 1 and 2 outcomes are conditionally independent given center characteristics $\overline{\BoldZ}$ as the stage 2 intervention cannot influence outcomes when intervention effects are absent. $\widetilde{H}_0: \delta = 0$ can be tested with standard methods for testing any difference between the intervention and the control groups, with rejection indicating that at least some implementation of the intervention package has a statistically significant effect on outcomes.

\subsection{Confidence Sets and Confidence Bands}\label{confidence_tests}
We calculate the confidence interval for the population mean $\bar{\mu}(\boldsymbol{x}) = \sum_{j=1}^{J} w_j \mathbb{E}[Y_{ijt}(\boldsymbol{x})| \mathbf{\beta}^*]$ for each possible value of $\boldsymbol{x}$, where $w_j = \frac{n_j^{(1)} + n_j^{(2)}}{n}$. The 95\% confidence interval for $\bar{\mu}(\boldsymbol{x})$ is
\vspace{-0.3cm}
\begin{align*}
    \text{CI}_{\bar{\mu}}(\boldsymbol{x}) = \left( \boldsymbol{x}^{\top}, w_1, w_2, \ldots, w_J, 1 \right)\widehat{\mathbf{\beta}} \pm 1.96 \, \sqrt{\sigma^2_{\bar{\mu}} \left(\widehat{\mathbf{\beta}}; \boldsymbol{x}\right)},
\end{align*}
where $\sigma^2_{\bar{\mu}} \left(\widehat{\mathbf{\beta}}; \boldsymbol{x}\right) = \left( \boldsymbol{x}^{\top}, w_1, w_2, \ldots, w_J, 1 \right)\, n^{-1}\,\hat{J}^{*-1}\, \hat{V} \,\hat{J}^{*-\top}\left( \boldsymbol{x}^{\top}, w_1, w_2, \ldots, w_J, 1 \right)^{\top} $ is the estimated variance of $\left( \boldsymbol{x}^{\top}, w_1, w_2, \ldots, w_J, 1 \right)\widehat{\mathbf{\beta}}$.

A 95\% confidence set for the optimal intervention is $\text{CS}\left(\boldsymbol{x}^{\text{opt}}\right) = \{\boldsymbol{x}: \text{CI}_{\bar{\mu}}(\boldsymbol{x}) \ni \theta\}$,
that is, those intervention packages $\boldsymbol{x}$ whose confidence interval for the mean outcome $\bar{\mu}(\boldsymbol{x})$ contains $\theta$. From Theorem \ref{AsymptoticTheoremWithLearning}, the probability that $\boldsymbol{x}^{\text{opt}}$ lies in $\text{CS}\left(\boldsymbol{x}^{\text{opt}}\right)$ converges to at least 0.95 as $n \rightarrow \infty$.

LAGO also leads to confidence bands for the mean outcome under a range of intervention packages of interest, $\boldsymbol{x}$, that simultaneously have 95\% coverage asymptotically. Similar to \cite{scheffe1999analysis}, \cite{nevo2021analysis}, and \cite{bing2023learn}, the confidence bands are
\vspace{-0.3cm}
\begin{align*}
    \text{CB}_{\bar{\mu}} = \left( \boldsymbol{x}^{\top}, w_1, w_2, \ldots, w_J, 1 \right) \widehat{\mathbf{\beta}} \pm \sqrt{\chi^2_{0.95,P+J+1} \, \sigma^2_{\bar{\mu}}(\widehat{\mathbf{\beta}};\boldsymbol{x})}, 
\end{align*}
where $\chi^2_{0.95,P+J+1}$ is the upper 95th percentile of the $\chi^2_{P+J+1}$ distribution. $P+J+1$ is the dimension of $\mathbf{\beta}$, with $P$ the number of components in $\A$ and $J+1$ the dimension of $w_1, w_2, \ldots, w_J, 1$. These confidence bands have asymptotic coverage of 95\% for all possible values $\boldsymbol{x}$ of interest simultaneously.

\subsection{Extensions to Binary Outcomes and GLMs}
For binary outcomes modeled with a logistic regression, proofs of unbiased estimating equations and consistency follow the same steps as continuous outcomes modeled with a linear regression. The proof of asymptotic normality additionally uses coupling \citep{lindvall2002lectures}, as in \citet{nevo2021analysis}, to replace the stage 2 outcomes with outcomes coupled with those under the limiting distributions without changing the distribution. The difference between the resulting estimating equations and those under the limiting design is shown to converge to 0 in probability. Full technical details are provided in Appendix~\ref{proofs_binary}.

The framework extends to continuous outcomes modeled with any GLM with a known twice continuously differentiable link function $g(\cdot)$:
\begin{align*}
    Y_{ijt}^{(k)}{(\Bolda)} = g^{-1}\left (\begin{pmatrix} \B^{*}_A \\ 
    \B^{*}_{\ell} \end{pmatrix}^{\top} 
    \begin{pmatrix} \Bolda \\ \boldsymbol{\ell}_j^{(k)} 
    \end{pmatrix} \right) + \epsilon_{ijt}^{(k)}(\Bolda),
\end{align*}
where the distribution of the $\epsilon_{ijt}^{(k)}(\Bolda)$ may depend on $\BoldZ_j$ as in Assumption~\ref{epsilonassumption}. 
Key differences from the linear case are a Donsker class argument for consistency \citep{bing2023learn} 
and the Cram\'{e}r-Wold device with the Lyapunov CLT for non-identically distributed errors; 
see Appendix~\ref{proofs_glm_general_link}.

\section{Simulation}\label{simulations}
The finite-sample properties of the LAGO design were evaluated under three scenarios, each with 2000 simulated datasets with outcomes modeled using linear regression: a null scenario with no intervention effect (Scenario 0), a linear cost function (Scenario 1), and a cubic cost function (Scenario 2), with outcomes modeled using linear regression.

All scenarios examined $J = 6, 10, 20$ centers with $n_{j}^{(k)} = 50, 100$ participants per center in stage 1 and $n_{j}^{(k)} = 100, 200$ in stage 2, with participants randomly assigned to intervention or control in a 1:1 ratio. The intervention comprised two components $\boldsymbol{x} = (x_1, x_2)$ with bounds $[0, 4]$ and $[0, 3]$, and true coefficients $\boldsymbol{\beta}_A = (-1.70, -0.70)$. Center confounders were simulated as $\BoldZ_j \sim N(0,1)$, with results reported for both varying and fixed $\BoldZ_j$ across datasets. The linear and cubic cost functions were $C_1(\boldsymbol{x}) = x_{1} + 0.5x_{2}$ and $C_2(\boldsymbol{x}) = 1.25x_{1} - 0.04x_{1}^{3} + 0.0055x_{1}^{4} + 0.63x_{2} - 0.09x_{2}^{3} + 0.026x_{2}^{4}$, respectively (visualized in Appendix~\ref{sim_visualizations}).

The stage 1 recommended intervention was set to $\boldsymbol{x}^{(1)} = (2, 1.5)$, the midpoint of the feasible region and the stage 2 recommended intervention $\X^{(2,n^{(1)})}$ was obtained via LAGO optimization on the stage 1 data. At each stage $k$, the actual intervention components for participant $i$ in center $j$ were simulated as
\begin{align}
    \A_{jt,1}^{(k)} &= \X_1^{(k)} + \eta_{1} \cdot \BoldZ_j + \xi_{jt,1}^{(k)}, \label{int1_sim}\\
    \A_{jt,2}^{(k)} &= \X_2^{(k)} + \eta_{2} \cdot \BoldZ_j + \xi_{jt,2}^{(k)}, \label{int2_sim}
\end{align}
where $\xi_{jt,1}^{(k)}$ and $\xi_{jt,2}^{(k)}$ are independently sampled from $N(0,1)$, and $\eta_1$, $\eta_2$ control the degree of center-level confounding. Two confounding scenarios were considered: (i) low confounding with target correlations $\rho_1 = 0.05$ and $\rho_2 = 0.07$, achieved by setting $\eta_1 = 0.0501$ and $\eta_2 = 0.0702$; and (ii) high confounding with $\rho_1 = 0.10$ and $\rho_2 = 0.20$, achieved by setting $\eta_1 = 0.1005$ and $\eta_2 = 0.2041$ (calculations in Appendix~\ref{eta_calculations}). These values were held fixed across all scenarios. Simulated outcomes at each stage followed
\begin{align}\label{outcome_sim}
    Y_{ijt}^{(k)} = \beta_1 \cdot \A_{jt,1}^{(k)} + \beta_2 \cdot \A_{jt,2}^{(k)} + \beta_z \cdot \BoldZ_j + \epsilon_{ijt}^{(k)},
\end{align}
where $\beta_z = 2.42$ corresponds to the standard deviation of the month-12 center effects from the EXTRA-CVD study (Table~\ref{tab:regression_results}), and $\epsilon_{ijt}^{(k)} \sim N(0,1)$. At the end of each stage, LAGO optimization was performed targeting a 5-unit reduction in the mean outcome under constraints $0 \leq x_1 \leq 4$ and $0 \leq x_2 \leq 3$, with the combined stages 1 and 2 data used at the end of stage 2. Table~\ref{simulationresults_linear_varying_Zj_lowconf} presents selected results for Scenario 1 with varying $\BoldZ_j$ and low confounding ($\rho_1 = 0.05$, $\rho_2 = 0.07$); full results for Scenario 0 are in Appendix~\ref{sim_null}.

\begin{table}[htbp]
\singlespacing
\caption{Selected simulation study results for Scenario 1 ($\rho_1 = 0.05$ and $\rho_2 = 0.07$), with a linear cost function}
\addtocounter{table}{-1}
\hrule
    \fontsize{8pt}{10pt}\selectfont
    \subfloat[Individual package component effects] {
    \centering
    \begin{tabular}{llllllllll}
    \makecell{$\boldsymbol{\beta}^*=(\beta_{1}^*, \beta_{2}^*)$ \\ (-1.70, -0.70)} & $n_j^{(1)}$ & $n_j^{(2)}$ & $J$ &  & $\widehat{\boldsymbol{\beta}}_{1}$ &  &  & $\widehat{\boldsymbol{\beta}}_{2}$ &  \\
      &  &  &  & $\%$RelBias & \begin{tabular}[c]{@{}l@{}}$\frac{SE}{EMP.SD}$ \\ ($\times 100$)\end{tabular} & CP95 & $\%$RelBias & \begin{tabular}[c]{@{}l@{}}$\frac{SE}{EMP.SD}$\\ ($\times 100$)\end{tabular} & CP95 \\
    \multicolumn{4}{l}{Scenario 1 ($J^{(1)} = J^{(2)} = J$)} &  &  &  &  &  &  \\
     & 50 & 100 & 6 & 0.018 & 95.00 & 94.80 & 0.233 & 101.34 & 95.20 \\
     &  &  & 10 & -0.014 & 96.72 & 94.85 & 0.238 & 100.58 & 94.80 \\
     &  &  & 20 & 0.012 & 98.45 & 95.35 & 0.024 & 99.18 & 94.55 \\
     &  & 200 & 6 & 0.031 & 94.83 & 94.50 & 0.175 & 105.40 & 95.50 \\
     &  &  & 10 & 0.005 & 93.87 & 94.25 & -0.089 & 100.41 & 95.15 \\
     &  &  & 20 & -0.008 & 98.31 & 94.90 & 0.090 & 98.68 & 94.95 \\
     & 100 & 100 & 6 & -0.006 & 96.26 & 94.65 & -0.063 & 98.08 & 94.45 \\
     &  &  & 10 & -0.014 & 101.75 & 95.40 & 0.007 & 97.58 & 94.30 \\
     &  &  & 20 & -0.023 & 96.92 & 94.00 & 0.082 & 100.60 & 94.95 \\
     &  & 200 & 6 & 0.027 & 99.90 & 95.65 & 0.036 & 100.68 & 95.35 \\
     &  &  & 10 & 0.005 & 98.43 & 94.60 & -0.083 & 99.62 & 94.60 \\
     &  &  & 20 & 0.001 & 95.26 & 93.75 & 0.022 & 97.98 & 94.65 \\
    \end{tabular}
    }
    \hrule
    \fontsize{8pt}{10pt}\selectfont
    \subfloat[{Estimated optimal intervention}\label{bias_table_lowconf}]{
    \centering
    \begin{tabular}{llllllllllll}
    \makecell{$\boldsymbol{\beta}^*=(\beta_{1}^*,\beta_{2}^*)$ \\ (-1.70, -0.70)} & \makecell{$\boldsymbol{x}^{opt}$ \\ (2.94, 0)} & $n_j^{(1)}$ & $n_j^{(2)}$ &  & \multicolumn{3}{c}{Stage 1} &  & \multicolumn{3}{c}{Stage 2/LAGO optimized} \\
      &   &  &  &  & \begin{tabular}[c]{@{}l@{}}Bias of\\ $\hat{x}_{1}^{\text{opt},(1)}$\\ \end{tabular} & \begin{tabular}[c]{@{}l@{}}Bias of\\ $\hat{x}_{2}^{\text{opt},(1)}$\\ \end{tabular} & \begin{tabular}[c]{@{}l@{}}rMSE\\ \end{tabular} &  & \begin{tabular}[c]{@{}l@{}}Bias of\\ $\hat{x}_{1}^{\text{opt},(2,n^{(1)})}$\\ \end{tabular} & \begin{tabular}[c]{@{}l@{}}Bias of\\ $\hat{x}_{2}^{\text{opt},(2,n^{(1)})}$\\ \end{tabular} & \begin{tabular}[c]{@{}l@{}}rMSE\\\end{tabular} \\
    \multicolumn{4}{l}{Scenario 1 ($J^{(1)} = J^{(2)} = 6$)} &  &  &  &  &  &  &  &  \\
     & & 50 & 100 &  & -0.074 & 0.207 & 0.603 & & -0.005 & 0.076 & 0.484 \\ 
     &  &  & 200 & & -0.070 & 0.194 & 0.587 & & -0.003 & 0.077 & 0.477 \\
     &  & 100 & 100 &  & -0.018 & 0.100 & 0.505 & & -0.001 & 0.072 & 0.477 \\
      &  & & 200 &  & -0.002 & 0.103 & 0.515 & & 0.011 & 0.078 & 0.490\\
    \end{tabular}
    }
    \hrule
    \fontsize{8pt}{10pt}\selectfont
    \subfloat[{Estimated optimal intervention, confidence set and confidence band}\label{cp_table_lowconf}]{
    \centering
    \begin{tabular}{lllllllll}
    \makecell{$\boldsymbol{\beta}^*=( \beta_{1}^*,\beta_{2}^*)$\\(-1.70, -0.70)} & \makecell{$\boldsymbol{x}^{opt}$\\(2.94, 0)} & $n_j^{(1)}$ & $n_j^{(2)}$ & \begin{tabular}[c]{@{}l@{}}TrueOpt1\\ (Q2.5,Q97.5)\end{tabular} & \begin{tabular}[c]{@{}l@{}}TrueOpt2\\ (Q2.5,Q97.5)\end{tabular} & \begin{tabular}[c]{@{}l@{}}SetCP95\\ $\%$\end{tabular} & \begin{tabular}[c]{@{}l@{}}SetPerc\\ $\%$\end{tabular} & \begin{tabular}[c]{@{}l@{}}BandsCP95\\ $\%$\end{tabular} \\
    \multicolumn{4}{l}{Scenario 1 ($J^{(1)} = J^{(2)} = 6$)} &  &  &  &  &  \\
     &  & 50 & 100 & (-5.331, -4.657) & (-5.189, -4.915) & 95.30 & 4.07 & 96.25 \\
     &  &  & 200 & (-5.323, -4.673) & (-5.152, -4.940) & 94.70 & 3.49 & 96.40 \\
     &  & 100 & 100 & (-5.266, -4.786) & (-5.167, -4.924) & 95.05 & 3.28 & 96.35 \\
     &  &  & 200 & (-5.272, -4.810) & (-5.142, -4.952) & 95.55 & 2.89 & 96.40 \\
    \end{tabular}
    }
\footnotesize{
    \raggedright{
    \smallskip \\
    \fontsize{8pt}{10pt}\selectfont
    $n_j^{(1)}$, $n_j^{(2)}$: number of participants per center $j$ at stages 1 and 2, respectively. \\
    $J$: number of centers for each stage. \\
    \%RelBias: percent relative bias $100(\hat{\beta}-\beta^\star)/\beta^\star$.\\
    SE: mean estimated standard error, 
    EMP.SD: empirical standard deviation.\\
    CP95: empirical coverage rate of 95\% confidence intervals.\\
    Bias of $\hat{x}_{1}^{\text{opt}}$, $\hat{x}_{2}^{\text{opt}}$: bias of the first and second components of the estimated optimal intervention.\\
    rMSE: root mean squared error, $\{\operatorname{mean}(\|\hat{\boldsymbol{x}}^{o p t}-\boldsymbol{x}^{o p t}\|^{2})\}^{1 / 2}$, estimated by average over simulation iterations.\\
    TrueOpt1: mean treatment effect under the stage 2 recommended intervention.\\
    TrueOpt2: mean treatment effect under the final estimated optimal intervention based on all data. \\
    $Q$2.5 and $Q$97.5: 2.5$\%$ and 97.5$\%$ quantiles.\\ 
    SetCP95$\%$: empirical coverage percentage of confidence set for the optimal intervention. \\
    SetPerc$\%$: mean percentage of the size of the confidence set as a percent of the total intervention space.\\
    BandsCP95$\%$: empirical coverage of 95$\%$ confidence band. 
    }
}
\stepcounter{table}
\label{simulationresults_linear_varying_Zj_lowconf}
\end{table}

Table~\ref{simulationresults_linear_varying_Zj_lowconf}a shows minimal relative bias in $\beta_1$ and $\beta_2$ across all settings, with 95\% confidence interval coverage close to the nominal level (94\%--96\%), and standard errors based on the sandwich estimator for the variance (Theorem~\ref{AsymptoticTheoremWithLearning}) decreasing as $J$, $n_j^{(1)}$, and $n_j^{(2)}$ increase.

Table~\ref{simulationresults_linear_varying_Zj_lowconf}b shows that when $J = 6$, bias and rMSE of the estimated optimal intervention were higher using stage 1 data alone than the combined stages 1 and 2 data, reflecting improved precision from incorporating stage 2 data. Table~\ref{simulationresults_linear_varying_Zj_lowconf}c shows that confidence sets for the optimal intervention achieved close to nominal 95\% empirical coverage (94\%--95\%), confidence bands for the mean outcome achieved 95\%--96\% coverage, and confidence set size ranged from 2\%--4\% of the intervention space across sample sizes. Similar patterns were observed for $J = 10, 20$, with confidence set coverage of 94\%--95\%, confidence set size of 1\%--4\%, and confidence band coverage of 95\%--96\% (Appendices~\ref{bias_table_linear_varying_lowconf} and~\ref{cp_table_linear_varying_lowconf}).

For Scenario 1 (linear cost function), results are organized by confounding level and center effect specification (varying or fixed $\BoldZ_j$). Under low confounding ($\rho_1 = 0.05$, $\rho_2 = 0.07$), results with varying $\BoldZ_j$ are in Appendices~\ref{bias_table_linear_varying_lowconf} and~\ref{cp_table_linear_varying_lowconf}, and with fixed $\BoldZ_j$ in Appendix~\ref{simulationresults_linear_fixed_Zj_lowconf}. Under high confounding ($\rho_1 = 0.10$, $\rho_2 = 0.20$), results with varying and fixed $\BoldZ_j$ are in Appendices~\ref{simulationresults_linear_varying_Zj_highconf} and~\ref{simulationresults_linear_fixed_Zj_highconf}, respectively. Scenario 2 (cubic cost function) follows the same structure, with results in Appendices~\ref{simulationresults_cubic_varying_Zj_lowconf}-\ref{simulationresults_cubic_fixed_Zj_highconf}.
\FloatBarrier

\section{Illustrative Example: The EXTRA-CVD Study}\label{real_data}
The LAGO design is illustrated through the A Nurse-Led Intervention to Extend the HIV Treatment Cascade for Cardiovascular Disease Prevention (EXTRA-CVD) study, a randomized clinical trial that recruited people with HIV (PWH) with elevated systolic blood pressure (SBP) at three HIV clinics in the US (University Hospitals Cleveland Medical Center, MetroHealth Medical Center, and Duke Health) from September 2019 to January 2022, with 12 months of follow-up per participant \citep{longenecker2024nurse}. EXTRA-CVD aimed to demonstrate that a multicomponent nurse-led strategy reduces SBP in PWH receiving antiretroviral therapy (ART).

EXTRA-CVD included $n=297$ PWH assigned to intervention ($n=149$) or control ($n=148$). The intervention group received home BP monitoring, counseling, and care coordination from a prevention nurse at four in-person visits and by telephone as needed, while the control group received general health education at each visit \citep{longenecker2024nurse}. The primary outcome was study-measured SBP. To avoid multicollinearity, the LAGO optimization considered two center-level intervention components: average nurse counseling minutes and average home BP measurements per 10 days, calculated from baseline to months 4, 8, and 12.

The outcome of interest was the change in SBP from baseline, $\Delta$SBP, for each participant $i$. Due to confounding by indication and positivity violations at the participant level, the LAGO optimizations considered interventions at the center level rather than at the participant level. The EXTRA-CVD data exhibited confounding by indication at the participant level because nurses typically spent more time with participants who needed care the most. In addition, positivity violations were likely due to an evidence-based algorithm guiding medication changes, which recommended specific medication adjustments for particular groups of PWH.

Participants enrolled from September 2019 to January 2022 (29 months), divided into four intervals ($m = 1,\ldots, 4$): three 7-month and one 8-month intervals, with each month as a 30-day period. Sensitivity analysis showed enrollment interval effects $\beta_m$ had negligible impact, so they were omitted from the main analysis.

First, consider the month-4 outcome as the outcome of interest. For each month, the intervention was defined as the average of each of the two intervention components within the center of participant \(i\) among participants who were in their first 4 months of follow-up. The rationale was that the intervention intensity depended on how long a participant had been enrolled in the study: in the last few months of a participant's follow-up, the intervention intensity was typically lower. For example, if participant $i$ enrolled in February, we calculated the two average intervention components in each of the months from February to May in the center among participants in their first 4 months of follow-up. Specifically, for February, we calculated the average of each of the two intervention components for all participants at the same center as participant $i$ who were in their first 4 months of follow-up. From the monthly averages for February, March, April, and May, we calculated the overall average for each intervention component across these 4 months. 

Let \(\Delta \text{SBP}_{4ij}\) be the change in SBP of participant \(i\) in center $j$ from baseline to month 4 after enrollment. The month-4 counterfactual outcome under treatment $\Bolda$ for participant $i$ in center $j$ enrolled during enrollment interval $k$ was modeled as $\Delta \text{SBP}_{4ij}(\Bolda) = \beta_{A}^{\top} \Bolda + \boldsymbol{\gamma}_j + \epsilon_{ijt}(\Bolda)$, where $i = 1,\ldots, n_{j}$, $j = 1,\ldots,3$, $\Bolda$ is the average of the 2 intervention components from baseline to month 4 and $\boldsymbol{\gamma}_j$ represents the fixed effect of center $j$. For the month-12 outcome $\Delta\text{SBP}_{12ij}$, we modeled the 
counterfactual as $\Delta \text{SBP}_{12ij}(\Bolda) = \beta_{A}^{\top} 
\Bolda + \boldsymbol{\gamma}_j + \epsilon_{ijt}(\Bolda)$, where $\Bolda$ 
represents the average intervention components across three 4-month intervals 
(baseline--month 4, months 5--8, and months 9--12).

\FloatBarrier
Figure~\ref{fig:sbp_combined} in Appendix~\ref{variability_in_outcomes} shows substantial variability in $\Delta\text{SBP}$ from baseline to month 12, with standard deviations of 20.2 and 19.6~mmHg in the intervention and control groups, consistent with other blood pressure intervention trials \citep{kuhmmer2016effectiveness}.
\FloatBarrier
The LAGO optimization outcome goal was set at a reduction of 5 mmHg based on discussions with clinicians. Let $x_{1t}$ denote the average counseling time per 10 days, and $x_{2t}$ the average number of home BP measurements per 10 days, evaluated at months $t = 4, 8, 12$. Two cost functions were considered, where cost per unit is measured in minutes of nursing time: a linear cost function $C_1(\boldsymbol{x}_t) = 1x_{1t} + 0.5x_{2t}$, and a cubic cost function $C_2(\boldsymbol{x}_t) = 1.25x_{1t} - 0.043x_{1t}^{3} + 0.0055x_{1t}^4 + 0.63x_{2t} - 0.09x_{2t}^3 + 0.026x_{2t}^4$, with $t = 4, 8, 12$ visits. Visualizations of the cost functions are presented in Appendix~\ref{fig:cvd_cubic_cost}. The cubic cost function was implemented to reflect economies of scale at lower intervention intensities, while imposing higher costs as the intervention components approached their maximum capacity limits. Constraints were established in close collaboration with the clinicians based on observed data ranges, with bounds decreasing over time that reflect observed changes in intervention intensity: for month 4, $0 \leq x_{1,4} \leq 10$ and $0 \leq x_{2,4} \leq 4$; for month 8, $0 \leq x_{1,8} \leq 8$ and $0 \leq x_{2,8} \leq 3$; and for month 12, $0 \leq x_{1,12} \leq 6.5$ and $0 \leq x_{2,12} \leq 3$. Table~\ref{interventioncompspecifications} provides the LAGO table on the intervention component specifications for the LAGO optimization at month-4, month-8, and month-12 visits. Information on the average $\Delta\text{SBP}$ by center and study arm at months 4, 8, and 12 is reported in Appendix~\ref{AdditionalInfoEXTRA-CVD}.
\begin{table}[H]
\centering
\fontsize{8pt}{10pt}\selectfont
\caption{LAGO table: Intervention component specifications at the center level for the LAGO optimization in the EXTRA-CVD study, per participant per 10 days for outcomes through months 4, 8, and 12}
\label{interventioncompspecifications}
\begin{tabular}{lccccc}
\makecell{Intervention \\ Component} & Unit & Visit Month & Lower bound & Upper bound & \makecell{Cost per unit\\in minutes spent} \\
\hline
Nurse counseling & Minutes & 4 & 0 & 10 & 1 \\
 & & 8 & 0 & 8 & 1\\
 & & 12 & 0 & 6.5 & 1 \\
\hline
Home BP & Number& 4 & 0 & 4 & 0.5 \\
measurements & & 8 & 0 & 3 & 0.5 \\
& & 12 & 0 & 3 & 0.5 
\end{tabular}
\end{table}
Table \ref{tab:regression_results} presents the results of the LAGO optimizations at the end of the EXTRA-CVD study. Table \ref{tab:regression_results} suggests that the first intervention component, average nurse counseling minutes per 10 days, has a potentially meaningful effect on reducing $\text{SBP}$ for all periods: -0.53, -1.43, and -1.59 mmHg per additional minute per 10 days, with the most substantial effect observed later. The second intervention component, average home BP measurements per 10 days, shows a substantial effect in the initial period: -2.20 mmHg per additional SBP measurement per 10 days. Center effects are not significant, indicating that there are no significant reductions in SBP under control in the centers. The confidence intervals in Table \ref{tab:regression_results} for all predictors are calculated based on the sandwich estimator for the variance. 

The LAGO optimization results indicate that resources should have been allocated primarily to counseling minutes, aligning with the shifting pattern of the effectiveness of the observed components. The total cost of the optimal intervention package in month 12 using the linear cost function is 3.15 minutes per 10 days per participant, and the total cost using the cubic cost function is 3.25 minutes per 10 days per participant. At the month-12 recommended interventions with the cubic cost function (3.10 and 0.10 minutes per 10 days, respectively), on average this corresponds to approximately one 10-minute nurse counseling session per month and one home BP measurement every 3 months. The temporal pattern suggests that home BP monitoring can be an essential intervention for immediate improvements in blood pressure, while nurse counseling might be more beneficial for long-term BP control. 

\FloatBarrier
\begin{table}[htbp]
\centering
\fontsize{8pt}{10pt}\selectfont
\caption{The EXTRA-CVD study: package component effect estimates, 95$\%$ confidence intervals, and estimated optimal intervention package for month-4 outcomes, month-8 outcomes, and month-12 outcomes}
\begin{tabular}{llll}
 & \begin{tabular}[c]{@{}l@{}}Month 4\\ $n =231$\\ $\hat{\boldsymbol{\beta}}$ ($95\%$ CI)\end{tabular} & \begin{tabular}[c]{@{}l@{}}Month 8\\ $n = 239$\\ $\hat{\boldsymbol{\beta}}$   ($95\%$ CI)\end{tabular} & \begin{tabular}[c]{@{}l@{}}Month 12\\ $n = 238$\\ $\hat{\boldsymbol{\beta}}$   ($95\%$ CI)\end{tabular} \\
 \hline
Nurse counseling minutes (per 10 days)& -0.53 (-2.30, 1.25) & -1.43 (-3.90, 1.05) & -1.59 (-4.27, 1.10) \\
Home BP measurements (per 10 days) & -2.20 (-6.60, 2.20) & 0.02 (-4.60, 4.63) & -0.59 (-5.79, 4.62) \\
Center: A & 0.94 (-5.74, 7.61) & -3.18 (-8.16, 1.79) & -2.63 (-8.85, 3.59) \\
Center: B & 1.99 (-3.27, 7.25) & 0.57 (-4.23, 5.38) & 0.58 (-4.37, 5.26) \\
Center: C & 3.98 (-0.29, 8.25) & -1.13 (-5.29, 3.03) & 2.11 (-2.59, 6.80) \\
$\hat{\boldsymbol{x}}^{\text{opt}}$ (linear cost) & (0.00, 3.30) & (2.70, 0.00) & (3.10, 0.00) \\
$\hat{\boldsymbol{x}}^{\text{opt}}$ (cubic cost) & (0.00, 3.30) & (2.70, 0.00) & (3.10, 0.10) \\
\end{tabular}
\\
\raggedright{
95\% CI: 95\% confidence interval based on sandwich estimator for VAR($\hat{\boldsymbol{\beta}}$) (see Theorem \ref{AsymptoticTheoremWithLearning}). \\
For (optimal) interventions, the first component is average nurse counseling telephone check-in minutes per 10 days and the second component is average number of home BP measurements per 10 days.\\
Month-4 outcome is $\Delta\text{SBP}_i$ of each participant $i$ from baseline to month 4 (similarly for months 8 and 12).
}
\label{tab:regression_results}
\end{table}
To demonstrate the benefits of LAGO optimizations, a hypothetical stage 2 was simulated under two scenarios: with LAGO optimization (Scenario 1) and without LAGO optimization (Scenario 2), each with $n^{(2)} = 300$ and an outcome goal of a 5.0 mmHg reduction in SBP. This simulation study comprised 2000 simulated datasets.

To design both scenarios, we first fit a regression model for each intervention component in stage 1 on the centers, and the average of the three center effects was taken as the hypothetical stage 1 recommended intervention $\boldsymbol{x}^{(1)}$. All month-12 outcomes were simulated using the coefficients in the third column of Table~\ref{tab:regression_results} as true values, with $\boldsymbol{\beta}^* = (\beta_1, \beta_2) = (-1.59, -0.59)$ and center effects $\gamma_j$ for $j = 1, 2, 3$.
\FloatBarrier
Stage 1 followed the 12-month EXTRA-CVD study with 
observed interventions and re-simulated outcomes 
using equation~\eqref{outcome_sim}, with 
$\epsilon_{ijt}^{(1)}$ sampled from the month-12 
model residuals (Table~\ref{tab:regression_results}). At the end of stage 1, LAGO optimization was performed with an outcome goal of $-5.0$ mmHg under constraints $0 \leq x_1 
\leq 6.5$ and $0 \leq x_2 \leq 3$.

For Scenario 1 (LAGO), the stage 2 recommended intervention $\boldsymbol{x}^{(2,n^{(1)})}$ was obtained from LAGO optimization; for Scenario 2 (non-LAGO), the stage 1 recommended intervention $\boldsymbol{x}^{(1)}$ was used in its place. For stage 2, actual interventions were simulated using equations~\eqref{int1_sim} and~\eqref{int2_sim}, where $\eta_1$ and $\eta_2$ were obtained from the average difference between the observed stage 1 interventions and $\boldsymbol{x}^{(1)}$ in the EXTRA-CVD data to retain center-level confounding by indication, and $\xi_{ij,p}^{(k)}$ were sampled from residuals of models regressing each intervention component on center characteristics. Control group interventions were set to $\mathbf{0}$ in both scenarios. LAGO optimization was performed at the end of each stage for comparison.

Table~\ref{combined_simulations_goal5} in Appendix~\ref{sim_outcome_goal_5} presents results for the $-5$~mmHg outcome goal without a lower bound adjustment, highlighting the vulnerability of LAGO to high outcome variability and limited stage 1 sample size: noisy stage 1 coefficient estimates led LAGO to recommend suboptimal stage 2 packages, substantially worsening expected outcomes ($-3.37$~mmHg for LAGO versus $-5.71$~mmHg for non-LAGO). To address this, Table~\ref{combined_simulations_goal9_adjust} presents results under two modifications: a more ambitious $-9$~mmHg outcome goal and a lower bound adjustment setting intervention lower bounds equal to the stage 1 recommended interventions, which is appropriate when intervention components carry no risk of harm. Appendices~\ref{sim_outcome_5_lowerbound} and~\ref{sim_outcome_goal_9} present additional results for a $-5$~mmHg goal with lower bound adjustment and a $-9$~mmHg goal without adjustment.

\begin{table}[htbp]
\fontsize{8pt}{10pt}\selectfont
\caption{Simulation study results comparing scenarios with and without LAGO optimization for intervention components and optimal intervention package with a linear cost function and an outcome goal of -9 mmHg, with lower bound equal to stage 1 recommended interventions.}
\label{combined_simulations_goal9_adjust}
\begin{tabular}{llllllllll}
\multicolumn{10}{l}{\textbf{(a) Individual intervention components}} \\
 & & & & \multicolumn{3}{c}{Scenario 1 (With LAGO)} & \multicolumn{3}{c}{Scenario 2 (Without LAGO)} \\
$\boldsymbol{\beta}^{*} = (\beta^{*}_1, \beta^{*}_2)$ & $n^{(1)}$ & $n^{(2)}$ & $\hat{\boldsymbol{\beta}}$ & \%RelBias & SE/EMP.SD & CP95 & \%RelBias & SE/EMP.SD & CP95 \\ \hline
\multirow{2}{*}{$(-1.59, -0.59)$} & \multirow{2}{*}{238} & \multirow{2}{*}{300} & $\hat{\beta}_1$ & -1.15 & 96.8 & 94.7 & 1.86 & 100.0 & 95.2 \\
 & &  & $\hat{\beta}_2$ & 33.3 & 99.0 & 94.2 & -8.04 & 101.0 & 95.3 \\
\end{tabular}
\fontsize{8pt}{10pt}\selectfont
\begin{tabular}{lllll}
\multicolumn{5}{l}{\textbf{(b) Estimated optimal intervention}} \\[0.5ex]
$\boldsymbol{x}^{\text{opt}}$ & Metrics &
\multicolumn{1}{c}{\shortstack[c]{Stage 1\\($n^{(1)} = 238$)}} &
\multicolumn{2}{c}{\shortstack[c]{Stage 2\\($n^{(1)} + n^{(2)} = 528$)}} \\
\cmidrule(lr){3-5}
 &  &  & \shortstack[c]{Scenario 1\\(With LAGO)} & \shortstack[c]{Scenario 2\\(Without LAGO)} \\
\midrule
\multirow{3}{*}{(5.67, 0.00)} 
 & Bias of $\hat{\boldsymbol{x}}^{\text{opt}}_1$ & -1.62 & -1.13 & -1.32 \\
 & Bias of $\hat{\boldsymbol{x}}^{\text{opt}}_2$ & 2.05 & 2.12 & 2.14 \\
 & rMSE & 2.86 & 2.66 & 2.77 \\
\end{tabular}
\\
\fontsize{8pt}{10pt}\selectfont
\begin{tabular}{llll}
\multicolumn{4}{l}{\textbf{(c) Estimated optimal intervention, confidence set, and confidence band}} \\[1ex]
Metrics &
\multicolumn{1}{c}{\shortstack[c]{Stage 1}} &
\multicolumn{2}{c}{\shortstack[c]{Stage 2}} \\
\cline{2-4}\\[-0.8ex]
 &  & \shortstack[c]{Scenario 1 \\ (With LAGO)} & \shortstack[c]{Scenario 2\\(Without LAGO)} \\[0.5ex]
\hline\\[-0.8ex]
ExpectedOutActInt & -5.69 & -7.62 & -5.71 \\
ExpectedOutRecInt & -5.69 & -7.61 & -5.69 \\
ExpectedOutEstOptInt & --$^1$ & -8.44 & -8.14 \\
MeanOpt (Q2.5, Q97.5) & (-11.3,\,-5.69) & (-11.4,\,-5.99)$^*$ & (-11.3,\,-5.93)$^*$ \\
AvgObsOut & -5.83 & -7.59 & -5.68\\
MeanCostAct & --$^2$ & 5.07 & 3.81\\
MeanCostRec & 5.06 & 5.59 & 5.06 \\
SetCP95 (\%) & --$^3$ & 94.7 & 95.4 \\
SetPerc (\%) & --$^3$ & 54.0 & 58.1 \\
BandsCP95 (\%) & --$^3$ & 98.9 & 98.8 \\
\hline
\end{tabular}
\begin{tablenotes}[flushleft]
\fontsize{8pt}{10pt}\selectfont
\item See Table \ref{simulationresults_linear_varying_Zj_lowconf} footnotes for definitions.
\item ExpectedOutActInt: Expected $\Delta\text{SBP}$ under actual interventions in stage 1 and stage 2 for the intervention group, using equal center weighting, calculated using true coefficient values.
\item ExpectedOutRecInt: Expected $\Delta\text{SBP}$ under stage 1 and stage 2 recommended interventions for the intervention group, using equal center weighting, calculated using true coefficient values.
\item ExpectedOutEstOptInt: Expected $\Delta\text{SBP}$ under final estimated optimal interventions at the end of stage 2 based on combined stage 1 and stage 2 data, using equal center weighting, calculated using true coefficient values.
\item MeanOpt: Stage 1 reports mean outcome under the stage 2 recommended intervention and stage 2 reports mean outcome under the final estimated optimal intervention, both calculated using true coefficient values.
\item AvgObsOut: Average observed outcome ($\Delta\text{SBP}$) in the stage in the intervention group across all centers. 
\item MeanCostAct/MeanCostRec: Mean cost of actual/recommended interventions in the stage using~$C(\boldsymbol{x}) = 1x_1$~$+ 0.5x_2$.
\item --$^1$: Not applicable for stage 1.
\item --$^2$: Omitted for stage 1 as it is equivalent to the stage 2 non-LAGO value.
\item --$^3$: Omitted for stage 1 as they are computed from combined stage 1 and stage 2 data and are not expected to be meaningful given the limited stage 1 sample size.
\item $^*$: Calculated based on the final estimated optimal intervention at the end of stage 2 using the combined stage 1 and stage 2 data.
\end{tablenotes}
\end{table}

Table~\ref{combined_simulations_goal9_adjust} combines the $-9$ mmHg outcome goal with the lower bound adjustment. The expected outcome under the actual stage 2 interventions improved substantially to $-7.62$ mmHg for LAGO versus $-5.71$ mmHg for non-LAGO, with LAGO recommending higher-intensity interventions at greater cost: mean cost of recommended interventions: 5.59 and 5.06 minutes for LAGO versus non-LAGO. The true expected outcome under the final estimated optimal interventions was $-8.44$ mmHg for LAGO versus $-8.14$ mmHg for non-LAGO. The stage 2 rMSE remained comparable (2.66 for LAGO; 2.77 for non-LAGO), an improvement over stage 1 (2.86) for both methods, with the true optimal intervention being $(5.67, 0.00)$. Confidence set coverage remained appropriate (94.7\% for LAGO; 95.4\% for non-LAGO), with LAGO achieving a smaller confidence set size (54.0\% versus 58.1\%) and similar confidence band coverage (98.9\% and 98.8\%).

To assess robustness to different cost functions, we repeated the analyses and the simulations using a cubic cost function reflecting economies of scale at lower intervention intensities while imposing higher marginal costs as interventions approach capacity limits (Appendix~\ref{extracvdsimulations_cubic}). Results were consistent with the linear cost function across all outcome goals with and without lower bound adjustments, suggesting that the performance of the LAGO design is robust to alternative cost functions.
\FloatBarrier

\section{Discussion}\label{conclusions}
This article addresses confounding by indication in LAGO trials through fixed center effects, yielding consistent and asymptotically normal estimators of intervention effects under both measured and unmeasured confounding. These properties are established across a unified framework covering continuous outcomes modeled with linear regression and general GLMs, binary outcomes modeled with logistic regression, broadening the applicability of LAGO to a wide class of implementation and public health trials.

The goals of the LAGO design are to estimate the cost-effective intervention package that achieves target outcomes, assess component effects, and test the overall intervention effect. Simulations show good finite-sample performance across reasonable sample sizes and numbers of centers. This work demonstrates that LAGO can be applied to multi-center implementation and public health trials by incorporating fixed center effects.

This article employs fixed rather than random center effects to address confounding by indication, which involves important tradeoffs. When $J$ is large and the $\BoldZ_j$ are measured, a random effects model estimating only $\sigma_b^2$, $\B_A$, $\B_z$, and $\beta_0$ may be more efficient than estimating $J$ center-specific parameters $\gamma_j$, and power decreases with $J$ under fixed effects. However, fixed effects offer key advantages: they require neither $J \rightarrow \infty$ nor distributional assumptions on $b_j$, and they adjust for confounding through both measured \emph{and unmeasured} center characteristics not requiring all confounders to be observed. Our simulations (Section~\ref{simulations}) and EXTRA-CVD analysis (Section~\ref{real_data}) demonstrate good performance for small $J$, making this approach well-suited for trials where enrolling many centers is challenging, particularly when per-center sample sizes are large.

The simulation results (Section~\ref{simulations}) demonstrate the challenges of LAGO optimizations under limited sample size and high outcome variability. Additional analyses (Table~\ref{combined_simulations_goal9_adjust}, Appendices~\ref{sim_outcome_5_lowerbound} and~\ref{sim_outcome_goal_9}) show that setting intervention lower bounds equal to the stage 1 recommended interventions improves LAGO performance at the cost of higher resource utilization, and is appropriate when intervention components carry no risk of harm, the noise-to-signal ratio is high, and stage 1 sample sizes are small. Whether to apply lower bound adjustment pre-planned remains important and should be guided by expected precision considerations. Additionally, lower bound choice can be guided by stage 1 performance: when stage 1 falls short of the outcome goal, lower bounds at stage 1 recommendations prevent LAGO optimizations from recommending weaker interventions under noisy estimates; when stage 1 substantially exceeds the goal, clinician-specified minima help identify more cost-effective packages. This strategy extends naturally to multi-stage LAGO designs, where lower bounds can be relaxed as estimation precision improves with accumulated data.

Further investigation across varying noise levels, sample sizes, and outcome goals would establish practical guidelines for balancing intervention effectiveness and cost. Other extensions include incorporating time-varying confounders and developing methods for determining optimal values of $K$, $J^{(k)}$, and $n_j^{(k)}$ for LAGO trial design.

The LAGO design is currently applied in several ongoing trials, including PULESA-Uganda \citep{longenecker2025implementation}, a multicomponent BP intervention in HIV clinical settings in Kampala and Wakiso districts, as well as TASKPEN-Zambia \citep{herce2024evaluating}, MAP-IT-Nigeria \citep{aifah2023study}, and HPTN~096 \citep{nelson2022hptn}. The LAGO design is particularly suited for trials with multicomponent interventions, as is often the case for implementation trials.

\vspace{-0.5cm}
\section*{Acknowledgement}
We thank all participants in the EXTRA-CVD study for their participation. 
\par
\vspace{-0.5cm}
\section*{Funding}
This research was supported by NIH Grant HL167936 to Drs.~Lok~and~Spiegelman. The EXTRA-CVD study was supported by NIH grant HL142099 to Drs.~Longenecker, Webel,~and~Bosworth.
\vspace{-0.5cm}
\section*{Disclosure Statement}
The authors report there are no competing interests to declare.
\vspace{-0.5cm}
\section*{Data Availability Statement}
The data that support the findings of this study are available from the corresponding author, Chris T. Longenecker (ctlongen@uw.edu), upon reasonable request.
\bibliography{reference}   

@book{lindvall2002lectures,
  title={Lectures on the coupling method},
  author={Lindvall, Torgny},
  year={2002},
  publisher={Courier Corporation}
}

@book{allison2009fixed,
  title={Fixed effects regression models},
  author={Allison, Paul D},
  year={2009},
  publisher={SAGE publications}
}

@misc{hausman1981panel,
  title={Panel data and unobservable individual effects},
  author={Hausman, Jerry A and Taylor, William E},
  journal={Econometrica: Journal of the Econometric society},
  pages={1377--1398},
  year={1981},
  publisher={JSTOR}
}

@misc{FDA,  title={{A}daptive {D}esigns for {M}edical {D}evice {C}linical {S}tudies: {G}uidance for {I}ndustry and {F}ood and {D}rug {A}dministration {S}taff},  author = {FDA}, journal={FDA},  year={2016},  month={Jan} }

@article{almirall2014introduction,
  title={Introduction to {SMART} designs for the development of adaptive interventions: with application to weight loss research},
  author={Almirall, Daniel and Nahum-Shani, Inbal and Sherwood, Nancy E and Murphy, Susan A},
  journal={Translational behavioral medicine},
  volume={4},
  number={3},
  pages={260--274},
  year={2014},
  publisher={Oxford University Press}
}

@article{collins2007multiphase,
  title={The multiphase optimization strategy {(MOST)} and the sequential multiple assignment randomized trial {(SMART)}: new methods for more potent e{H}ealth interventions},
  author={Collins, Linda M and Murphy, Susan A and Strecher, Victor},
  journal={American journal of preventive medicine},
  volume={32},
  number={5},
  pages={S112--S118},
  year={2007},
  publisher={Elsevier}
}

@article{bing2023learn,
  title={{L}earn-{A}s-you-{GO} ({LAGO}) {T}rials: {O}ptimizing {T}reatments and {P}reventing {T}rial {F}ailure {T}hrough {O}ngoing {L}earning
},
  author={Bing, Ante and Spiegelman, Donna and Nevo, Daniel and Lok, Judith J},
  journal={Biometrics},
  volume={81},
  number={2},
  pages={ujaf061},
  year={2025},
  publisher={Oxford University Press}
}

@article{nevo2021analysis,
  title={Analysis of “{L}earn-as-{Y}ou-{GO}”{(LAGO)} studies},
  author={Nevo, Daniel and Lok, Judith J and Spiegelman, Donna},
  journal={Annals of statistics},
  volume={49},
  number={2},
  pages={793},
  year={2021},
  publisher={NIH Public Access}
}

@book{van2000asymptotic,
  title={{A}symptotic statistics},
  author={Van der Vaart, Aad W},
  volume={3},
  year={2000},
  publisher={Cambridge university press}
}

@book{billingsley1995,
  author    = {Patrick Billingsley},
  title     = {{P}robability and {M}easure},
  edition   = {3rd},
  year      = {1995},
  publisher = {John Wiley \& Sons},
  address   = {New York},
  pages     = {383}
}

@article{longenecker2024nurse,
  title={{N}urse-{L}ed {S}trategy to {I}mprove {B}lood {P}ressure and {C}holesterol {L}evel {A}mong {P}eople {W}ith {HIV}: {A} {R}andomized {C}linical {T}rial},
  author={Longenecker, Christopher T and Jones, Kelley A and Hileman, Corrilynn O and Okeke, Nwora Lance and Gripshover, Barbara M and Aifah, Angela and Bloomfield, Gerald S and Muiruri, Charles and Smith, Valerie A and Vedanthan, Rajesh and others},
  journal={JAMA Network Open},
  volume={7},
  number={3},
  pages={e2356445--e2356445},
  year={2024},
  publisher={American Medical Association}
}

@article{beidas2022promises,
  title={Promises and pitfalls in implementation science from the perspective of {US}-based researchers: learning from a pre-mortem},
  author={Beidas, Rinad S and Dorsey, Shannon and Lewis, Cara C and Lyon, Aaron R and Powell, Byron J and Purtle, Jonathan and Saldana, Lisa and Shelton, Rachel C and Stirman, Shannon Wiltsey and Lane-Fall, Meghan B},
  journal={Implementation Science},
  volume={17},
  number={1},
  pages={55},
  year={2022},
  publisher={Springer}
}

@article{spiegelman2018evaluating,
  title={{E}valuating public health interventions: 8. {C}ausal inference for time-invariant interventions},
  author={Spiegelman, Donna and Zhou, Xin},
  journal={American journal of public health},
  volume={108},
  number={9},
  pages={1187--1190},
  year={2018},
  publisher={American Public Health Association}
}

@article{pallmann2018adaptive,
  title={{A}daptive designs in clinical trials: why use them, and how to run and report them},
  author={Pallmann, Philip and Bedding, Alun W and Choodari-Oskooei, Babak and Dimairo, Munyaradzi and Flight, Laura and Hampson, Lisa V and Holmes, Jane and Mander, Adrian P and Odondi, Lang’o and Sydes, Matthew R and others},
  journal={BMC medicine},
  volume={16},
  pages={1--15},
  year={2018},
  publisher={Springer}
}

@book{scheffe1999analysis,
  title={The analysis of variance},
  author={Scheffe, Henry},
  volume={72},
  year={1999},
  publisher={John Wiley \& Sons}
}

@article{herce2024evaluating,
  title={Evaluating a multifaceted implementation strategy and package of evidence-based interventions based on {WHO} {PEN} for people living with {HIV} and cardiometabolic conditions in {L}usaka, {Z}ambia: protocol for the {TASKPEN} hybrid effectiveness-implementation stepped wedge cluster randomized trial},
  author={Herce, Michael E and Bosomprah, Samuel and Masiye, Felix and Mweemba, Oliver and Edwards, Jessie K and Mandyata, Chomba and Siame, Mmamulatelo and Mwila, Chilambwe and Matenga, Tulani and Frimpong, Christiana and others},
  journal={Implementation Science Communications},
  volume={5},
  number={1},
  pages={61},
  year={2024},
  publisher={Springer}
}

@article{aifah2023study,
  title={{S}tudy design and protocol of a stepped wedge cluster randomized trial using a practical implementation strategy as a model for hypertension-{HIV} integration—the {MAP-IT} trial},
  author={Aifah, Angela A and Hade, Erinn M and Colvin, Calvin and Henry, Daniel and Mishra, Shivani and Rakhra, Ashlin and Onakomaiya, Deborah and Ekanem, Anyiekere and Shedul, Gabriel and Bansal, Geetha P and others},
  journal={Implementation Science},
  volume={18},
  number={1},
  pages={14},
  year={2023},
  publisher={Springer}
}

@misc{nelson2022hptn,
  author={Nelson, L. and Remien, R. and Beyrer, C.},
  title={Hptn 096: Building equity through advocacy},
  howpublished={\url{https://www.hivcenternyc.org/hptn096}},
  year={2022}
}

@misc{bing2025learn,
  title={{L}earn-{A}s-you-{GO} ({LAGO}) {T}rials: {O}ptimizing {T}rials for {E}ffectiveness and {P}ower to {P}revent {F}ailed {T}rials},
  author={Bing, Ante and Spiegelman, Donna and Lok, Judith J},
  journal={arXiv preprint arXiv:2509.11479},
  year={2025}
}

@article{kuhmmer2016effectiveness,
  title={{E}ffectiveness of multidisciplinary intervention on blood pressure control in primary health care: a randomized clinical trial},
  author={Kuhmmer, Regina and Lazzaretti, Rosmeri Kuhmmer and Guterres, C{\'a}tia Moreira and Raimundo, Fabiana Viegas and Leite, Leni Everson Ara{\'u}jo and Delabary, T{\'a}ssia Scholante and Caon, Suhelen and Bastos, Gisele Alsina Nader and Polanczyk, Carisi Anne},
  journal={BMC health services research},
  volume={16},
  number={1},
  pages={456},
  year={2016},
  publisher={Springer}
}

@article{longenecker2025implementation,
  title={Implementation strategies to integrate HIV and hypertension care in Kampala and Wakiso districts, Uganda: study protocol for a stepped wedge cluster randomized trial (PULESA-Uganda)},
  author={Longenecker, Chris T and Kiggundu, John Baptist and Ayebare, Florence and Muddu, Martin and Kayima, James and Mutungi, Gerald and Ssinabulya, Isaac and Schwartz, Jeremy I and Spiegelman, Donna and Tong, Guangyu and others},
  journal={BMC Health Services Research},
  volume={25},
  number={1},
  pages={1060},
  year={2025},
  publisher={Springer}
}

\centerline{\large\bf Supplementary Materials for ``Addressing Confounding}
\vspace{2pt}
\centerline{\large\bf by Indication Through (Un)Measured Center Characteristics}
\vspace{2pt}
\centerline{\large\bf in Learn-As-you-GO (LAGO) Trials''}
\vspace{.4cm}
\centerline{Minh Thu Bui, 
Chris T. Longenecker, 
Ante Bing, 
Donna Spiegelman,} 
\centerline{Allison R. Webel, 
Hayden B. Bosworth, 
Judith J. Lok}

\textbf{Summary}: Online  Appendix~A provides proofs of Theorems~1--3 (Unbiased Estimating Equations, Consistency, and Asymptotic Normality) for continuous outcomes modeled with linear regression. Online Appendix B provides details on hypothesis testing, and Online Appendix C covers confidence sets and bands. Online Appendices~D and~E cover the theoretical extensions to binary outcomes modeled with logistic regression and to GLM setting with general link functions, respectively. Online  Appendix~F extends the theoretical framework to $K > 2$ stages. Online Appendices~G and~H present additional simulation results under the cubic cost function and additional EXTRA-CVD analyses.

\setcounter{section}{0}
\setcounter{equation}{0}
\renewcommand{\thesection}{\Alph{section}}

\section{Proofs of Theorems \ref{unbiasedEE}, \ref{theoremConsistencyWithLearning}, and \ref{AsymptoticTheoremWithLearning}}\label{proofs}
\subsection{Proof of Unbiased Estimating Equations Theorem \ref{unbiasedEE}.}\label{UnbiasedEEProofAppendix}
\begin{proof}
We show that the expectation of each term in equation (\ref{eqn:OLS_ignore_ee}) evaluated at $\B^{*}$ is 0 conditional on $\overline{\BoldZ}$.
\begin{align*}
    \mathbbm{E}[\,U(\boldsymbol{\beta}^{*})\mid \overline{\BoldZ}] &=  \frac{1}{n} \left \{ \sum_{t=1}^{T^{(1)}} \sum_{j=1}^{J^{(1)}} \sum_{i=1}^{n_{jt}^{(1)}}  \E \left[ \begin{pmatrix} \Aone\\ \BoldLone \end{pmatrix}\left(\Yone - \begin{pmatrix} \boldsymbol{\beta}^{*}_A \\ \B_{\ell}^{*} \end{pmatrix}^T \begin{pmatrix} \Aone\\ \BoldLone \end{pmatrix} \right)\middle| \overline{\BoldZ} \right] \right .\\
    &+\sum_{t=T^{(1)}+1}^{T^{(2)}} \sum_{j=1}^{J^{(2)}} \sum_{i=1}^{n_{jt}^{(2)}} \left . \E \left [\begin{pmatrix} \Atwolago\\ \BoldLtwolago \end{pmatrix} \left(\Ytwolago - \begin{pmatrix} \boldsymbol{\beta}^{*}_A \\ \B^{*}_{\ell} \end{pmatrix}^T \begin{pmatrix} \Atwolago\\ \BoldLtwolago \end{pmatrix} \right)\middle| \overline{\BoldZ}\right ] \right \} \numberthis \label{expectationofEE}.
\end{align*}
For the first term in equation (\ref{expectationofEE}) for center $j$, observe that
\begin{align*}
    &\E \left[\begin{pmatrix} \Aone\\ \BoldLone \end{pmatrix} \left(\Yone - \begin{pmatrix} \boldsymbol{\beta}^{*}_A \\ \B^{*}_{\ell} \end{pmatrix}^T \begin{pmatrix} \Aone\\ \BoldLone \end{pmatrix}\right) \middle| \overline{\BoldZ}\right] \\
    &= \int_{\overline{\Bolda}} \E \left[\begin{pmatrix} \Bolda_{jt} \\ \BoldLone \end{pmatrix} \left(Y_{ijt}^{(1)}\left(\Bolda_{jt} \right) - \begin{pmatrix} \boldsymbol{\beta}^{*}_A \\ \B^{*}_{\ell} \end{pmatrix}^T \begin{pmatrix} \Bolda_{jt} \\ \BoldLone \end{pmatrix} \right) \middle | \BarA^{(1)} = \overline{\Bolda}, \overline{\BoldZ} \right] \cdot f_{\BarA^{(1)}|\overline{\BoldZ}} \left(\overline{\Bolda}\right) \, d\overline{\Bolda}\\
    &=  \int_{\overline{\Bolda}} \begin{pmatrix} \Bolda_{jt} \\ \BoldLone \end{pmatrix} \E \left[ \left(Y_{ijt}^{(1)}\left(\Bolda_{jt} \right) - \begin{pmatrix} \boldsymbol{\beta}^{*}_A \\ \B^{*}_{\ell} \end{pmatrix}^T \begin{pmatrix} \Bolda_{jt} \\ \BoldLone \end{pmatrix} \right) \middle |\overline{\BoldZ} \right] \cdot f_{\BarA^{(1)}|\overline{\BoldZ}} \left(\overline{\Bolda}\right) \, d\overline{\Bolda}\\
    &= \int_{\overline{\Bolda}} \begin{pmatrix} \Bolda_{jt} \\ \BoldLone \end{pmatrix} \cdot (0) \, \cdot f_{\BarA^{(1)}|\overline{\BoldZ}} \left(\overline{\Bolda}\right) \, d\overline{\Bolda} = 0,
\end{align*}
where the first equality follows from Consistency Assumption \ref{asspConsistency}, the second equality follows from Conditional Exchangeability Assumption \ref{asspCondExcha}, and the third equality follows from Main Model Assumption \ref{mainmodel}, Outcome Errors Assumption \ref{epsilonassumption}, and Conditional Exchangeability Between Centers Assumption \ref{asspcenterIndep}. 
\\

Similarly, for the second term of equation (\ref{expectationofEE}), 
\begin{align*}
    &\E \left [\begin{pmatrix} \Atwolago\\ \BoldLtwolago \end{pmatrix} \left(\Ytwolago - \begin{pmatrix} \boldsymbol{\beta}^{*}_A \\ \B^{*}_{\ell} \end{pmatrix}^T \begin{pmatrix} \Atwolago\\ \BoldLtwolago \end{pmatrix} \right) \middle | \overline{\BoldZ} \right ] \\
    &= \int_{\boldsymbol{x}} \int_{\overline{\Bolda}} \E \left [\begin{pmatrix} \Bolda_{jt} \\ \BoldLtwolago \end{pmatrix} \left(Y_{ijt}^{(2)}(\Bolda_{jt}) - \begin{pmatrix} \boldsymbol{\beta}^{*}_A \\ \B^{*}_{\ell} \end{pmatrix}^T \begin{pmatrix} \Bolda_{jt}\\ \BoldLtwolago \end{pmatrix} \right)\middle | \BarA^{(2,n^{(1)})} = \overline{\Bolda}, \X^{(2,n^{(1)})} = \boldsymbol{x}, \overline{\BoldZ}\right ] \\
    &\phantom{=============================}\cdot f_{\BarA^{(2,n^{(1)})}, \X^{(2,n^{(1)})} | \overline{\BoldZ}} (\overline{\Bolda}, \boldsymbol{x}|\overline{\BoldZ}) \,d\overline{\Bolda} \,d\boldsymbol{x}\\
     &= \int_{\boldsymbol{x}} \int_{\overline{\Bolda}} \begin{pmatrix} \Bolda_{jt} \\ \BoldLtwolago \end{pmatrix} \E \left [\left(Y_{ijt}^{(2)}(\Bolda_{jt}) - \begin{pmatrix} \boldsymbol{\beta}^{*}_A \\ \B^{*}_{\ell} \end{pmatrix}^T \begin{pmatrix} \Bolda_{jt}\\ \BoldLtwolago \end{pmatrix} \right)\middle |  \overline{\BoldZ}\right ] \cdot f_{\BarA^{(2,n^{(1)})}, \X^{(2,n^{(1)})} | \overline{\BoldZ}} (\overline{\Bolda}, \boldsymbol{x}|\overline{\BoldZ}) \,d\overline{\Bolda} \,d\boldsymbol{x}\\
    &= \int_{\boldsymbol{x}} \int_{\overline{\Bolda}} \begin{pmatrix} \Bolda_{jt} \\ \BoldLtwolago \end{pmatrix} \cdot (0) \cdot f_{\BarA^{(2,n^{(1)})}, \X^{(2,n^{(1)})} | \overline{\BoldZ}} (\overline{\Bolda}, \boldsymbol{x}|\overline{\BoldZ}) \,d\overline{\Bolda} \,d\boldsymbol{x} = 0, \numberthis \label{expectationstage2zero}
\end{align*}
where the first equality follows from Consistency Assumption \ref{asspConsistency}, the second equality follows from Conditional Exchangeability Assumption \ref{asspCondExcha}, and the third equality follows from Main Model Assumption \ref{mainmodel}, Outcome Errors Assumption \ref{epsilonassumption}, and Conditional Independence Between Centers Assumption \ref{asspcenterIndep}. 

It follows that $\E(U(\B^{*})\mid \overline{\BoldZ}) = 0$. 
\end{proof}

\subsection[Proof of Consistency]{Proof of Consistency of $\hat{\B}$ Theorem \ref{theoremConsistencyWithLearning}.}\label{ConsistencyProofAppendix}
\begin{proof}
To prove Consistency Theorem \ref{theoremConsistencyWithLearning}, we use Theorem 5.9 of \cite{van2000asymptotic}.
First,

\begin{align*}
    U(\B) &= U(\B^{*}) - U(\B^{*}) + U(\B)\\
    &= U(\B^{*}) - \frac{1}{n} \sum_{t=1}^{T^{(1)}} \sum_{j=1}^{J^{(1)}} \sum_{i=1}^{n_{jt}^{(1)}} \begin{pmatrix} \Aone\\ \BoldLone \end{pmatrix} \left [\Yone - \begin{pmatrix} \boldsymbol{\beta}^{*}_A \\ \B^{*}_{\ell} \end{pmatrix}^T \begin{pmatrix} \Aone\\ \BoldLone \end{pmatrix} \right ] \\
    &\phantom{= U(\B^{*})}- \frac{1}{n}\sum_{t=T^{(1)}+1}^{T^{(2)}} \sum_{j=1}^{J^{(2)}} \sum_{i=1}^{n_{jt}^{(2)}} \begin{pmatrix} \Atwolago\\ \BoldLtwolago \end{pmatrix} \left [\Ytwolago - \begin{pmatrix} \boldsymbol{\beta}^{*}_A \\ \B^{*}_{\ell} \end{pmatrix}^T \begin{pmatrix} \Atwolago\\ \BoldLtwolago \end{pmatrix} \right ] \\
    &\phantom{= U(\B^{*})}+ \frac{1}{n} \sum_{t=1}^{T^{(1)}} \sum_{j=1}^{J^{(1)}} \sum_{i=1}^{n_{jt}^{(1)}} \begin{pmatrix} \Aone\\ \BoldLone \end{pmatrix} \left [\Yone - \begin{pmatrix} \boldsymbol{\beta}_A \\ \B_{\ell} \end{pmatrix}^T \begin{pmatrix} \Aone\\ \BoldLone \end{pmatrix} \right ] \\
    &\phantom{= U(\B^{*})}+ \frac{1}{n}\sum_{t=T^{(1)}+1}^{T^{(2)}} \sum_{j=1}^{J^{(2)}} \sum_{i=1}^{n_{jt}^{(2)}} \begin{pmatrix} \Atwolago\\ \BoldLtwolago \end{pmatrix} \left [\Ytwolago - \begin{pmatrix} \boldsymbol{\beta}_A \\ \B_{\ell} \end{pmatrix}^T \begin{pmatrix} \Atwolago\\ \BoldLtwolago \end{pmatrix} \right ]\\
    &= U(\boldsymbol{\beta}^{*}) +  \sum_{t=1}^{T^{(1)}} \sum_{j=1}^{J^{(1)}} \frac{n_{jt}^{(1)}}{n} \begin{pmatrix} \Aone\\ \BoldLone \end{pmatrix} \left [ \begin{pmatrix} \boldsymbol{\beta}^{*}_A \\  \B^{*}_{\ell} \end{pmatrix}^T - \begin{pmatrix} \boldsymbol{\beta}_A \\  \B_{\ell} \end{pmatrix}^T \right ] \begin{pmatrix} \Aone\\ \BoldLone \end{pmatrix} \\
    &\phantom{= U(\boldsymbol{\beta}^{*})}+ \sum_{t=T^{(1)}+1}^{T^{(2)}} \sum_{j=1}^{J^{(2)}} \frac{n_{jt}^{(2)}}{n} \begin{pmatrix} \Atwolago\\ \BoldLtwolago \end{pmatrix} \left [ \begin{pmatrix} \boldsymbol{\beta}^{*}_A \\  \B^{*}_{\ell} \end{pmatrix}^T - \begin{pmatrix} \boldsymbol{\beta}_A \\  \B_{\ell} \end{pmatrix}^T \right ] \begin{pmatrix} \Atwolago\\ \BoldLtwolago\end{pmatrix}. \numberthis \label{U(B)}   
\end{align*}

Using the definition of $u(\B)$ of equation of (\ref{smallu}), it follows that
\begin{align*}
    &U(\B) - u(\B)\\
    &= U(\boldsymbol{\beta}^{*}) +  \sum_{t=1}^{T^{(1)}} \sum_{j=1}^{J^{(1)}} \frac{n_{jt}^{(1)}}{n} \begin{pmatrix} \Aone\\ \BoldLone \end{pmatrix} \left [ \begin{pmatrix} \boldsymbol{\beta}^{*}_A \\  \B^{*}_{\ell} \end{pmatrix}^T - \begin{pmatrix} \boldsymbol{\beta}_A \\  \B_{\ell} \end{pmatrix}^T \right ] \begin{pmatrix} \Aone\\ \BoldLone \end{pmatrix} \\
    &\phantom{U(\B^{*})} + \sum_{t=T^{(1)}+1}^{T^{(2)}} \sum_{j=1}^{J^{(2)}} \frac{n_{jt}^{(2)}}{n} \begin{pmatrix} \Atwolago\\ \BoldLtwolago \end{pmatrix} \left [ \begin{pmatrix} \boldsymbol{\beta}^{*}_A \\  \B^{*}_{\ell} \end{pmatrix}^T - \begin{pmatrix} \boldsymbol{\beta}_A \\  \B_{\ell} \end{pmatrix}^T \right ] \begin{pmatrix} \Atwolago\\ \BoldLtwolago\end{pmatrix} \\
    &\phantom{U(\B^{*})} - \sum_{t=1}^{T^{(1)}} \sum_{j=1}^{J^{(1)}} \alpha_{jt}^{(1)}  \begin{pmatrix} \A_{jt}^{(1)}\\ \BoldLone \end{pmatrix} \left [ \begin{pmatrix} \boldsymbol{\beta}^{*}_A \\  \B^{*}_{\ell} \end{pmatrix}^T - \begin{pmatrix} \boldsymbol{\beta}_A \\  \B_{\ell} \end{pmatrix}^T \right ] \begin{pmatrix} \A_{jt}^{(1)}\\ \BoldLone \end{pmatrix} \\
    &\phantom{U(\B^{*})}-  \sum_{t=T^{(1)}+1}^{T^{(2)}} \sum_{j=1}^{J^{(2)}} \alpha_{jt}^{(2)}  \begin{pmatrix} \Bolda_{jt}^{(2)}\\ \BoldLtwo \end{pmatrix} \left [ \begin{pmatrix} \boldsymbol{\beta}^{*}_A \\  \B^{*}_{\ell} \end{pmatrix}^T - \begin{pmatrix} \boldsymbol{\beta}_A \\  \B_{\ell} \end{pmatrix}^T \right ] \begin{pmatrix} \Bolda_{jt}^{(2)}\\ \BoldLtwo \end{pmatrix}\\
    &= U(\boldsymbol{\beta}^{*}) +  \sum_{t=1}^{T^{(1)}} \sum_{j=1}^{J^{(1)}} \left(\frac{n_{jt}^{(1)}}{n} - \alpha_{jt}^{(1)}\right)\begin{pmatrix} \Aone\\ \BoldLone \end{pmatrix} \left [ \begin{pmatrix} \boldsymbol{\beta}^{*}_A \\  \B^{*}_{\ell} \end{pmatrix}^T - \begin{pmatrix} \boldsymbol{\beta}_A \\  \B_{\ell} \end{pmatrix}^T \right ] \begin{pmatrix} \Aone\\ \BoldLone \end{pmatrix} \\
    &\phantom{= U(\B^{*})} +  \sum_{t=T^{(1)}+1}^{T^{(2)}} \sum_{j=1}^{J^{(2)}} \left(\frac{n_{jt}^{(2)}}{n} - \alpha_{jt}^{(2)}\right)\begin{pmatrix} \Atwo\\ \BoldLtwo\end{pmatrix} \left [ \begin{pmatrix} \boldsymbol{\beta}^{*}_A \\  \B^{*}_{\ell} \end{pmatrix}^T - \begin{pmatrix} \boldsymbol{\beta}_A \\  \B_{\ell} \end{pmatrix}^T \right ] \begin{pmatrix} \Atwo\\ \BoldLtwo \end{pmatrix} \\
    &\phantom{= U(\B^{*})} + \sum_{t=T^{(1)}+1}^{T^{(2)}} \sum_{j=1}^{J^{(2)}}  \alpha_{jt}^{(2)}\left [\begin{pmatrix} \Atwolago\\ \BoldLtwolago \end{pmatrix}\begin{pmatrix} \boldsymbol{\beta}^{*}_A \\ \B^{*}_{\ell} \end{pmatrix}^T\begin{pmatrix} \Atwolago\\ \BoldLtwolago \end{pmatrix} - \begin{pmatrix} \Bolda_{jt}^{(2)}\\ \BoldLtwo \end{pmatrix}\begin{pmatrix} \boldsymbol{\beta}^{*}_A \\ \B^{*}_{\ell} \end{pmatrix}^T\begin{pmatrix} \Bolda_{jt}^{(2)}\\ \BoldLtwo \end{pmatrix} \right ]\\
    &\phantom{= U(\B^{*})} + \sum_{t=T^{(1)}+1}^{T^{(2)}} \sum_{j=1}^{J^{(2)}} \alpha_{jt}^{(2)}\left [\begin{pmatrix} \Bolda_{jt}^{(2)}\\ \BoldLtwo \end{pmatrix}\begin{pmatrix} \boldsymbol{\beta}_A \\ \B_{\ell} \end{pmatrix}^T\begin{pmatrix} \Bolda_{jt}^{(2)}\\ \BoldLtwo \end{pmatrix} - \begin{pmatrix} \Atwolago\\ \BoldLtwolago \end{pmatrix}\begin{pmatrix} \boldsymbol{\beta}_A \\ \B_{\ell} \end{pmatrix}^T\begin{pmatrix} \Atwolago\\ \BoldLtwolago \end{pmatrix} \right ]
\end{align*}
\begin{align*}
    &= U(\boldsymbol{\beta}^{*}) +  \sum_{t=1}^{T^{(1)}} \sum_{j=1}^{J^{(1)}} \left(\frac{n_{jt}^{(1)}}{n} - \alpha_{j1 }\right)\begin{pmatrix} \Aone\\ \BoldLone \end{pmatrix} \left [ \begin{pmatrix} \boldsymbol{\beta}^{*}_A \\  \B^{*}_{\ell} \end{pmatrix}^T - \begin{pmatrix} \boldsymbol{\beta}_A \\  \B_{\ell} \end{pmatrix}^T \right ] \begin{pmatrix} \Aone\\ \BoldLone \end{pmatrix} \\
    &\phantom{= U(\B^{*})} + \sum_{t=T^{(1)}+1}^{T^{(2)}} \sum_{j=1}^{J^{(2)}} \left(\frac{n_{jt}^{(2)}}{n} - \alpha_{jt}^{(2)}\right)\begin{pmatrix} \Atwo\\ \BoldLtwo\end{pmatrix} \left [ \begin{pmatrix} \boldsymbol{\beta}^{*}_A \\  \B^{*}_{\ell} \end{pmatrix}^T - \begin{pmatrix} \boldsymbol{\beta}_A \\  \B_{\ell} \end{pmatrix}^T \right ] \begin{pmatrix} \Atwo\\ \BoldLtwo \end{pmatrix} \\
    &\phantom{= U(\B^{*})} + \sum_{t=T^{(1)}+1}^{T^{(2)}} \sum_{j=1}^{J^{(2)}}  \alpha_{jt}^{(2)}\left [\begin{pmatrix} \Atwolago\\ \BoldLtwolago \end{pmatrix}\begin{pmatrix} \Atwolago\\ \BoldLtwolago \end{pmatrix}^T - \begin{pmatrix} \Bolda_{jt}^{(2)}\\ \BoldLtwo \end{pmatrix}\begin{pmatrix} \Bolda_{jt}^{(2)}\\ \BoldLtwo \end{pmatrix}^T \right ] \begin{pmatrix} \boldsymbol{\beta}^{*}_A \\ \B^{*}_{\ell} \end{pmatrix}\\
    &\phantom{= U(\B^{*})} + \sum_{t=T^{(1)}+1}^{T^{(2)}} \sum_{j=1}^{J^{(2)}} \alpha_{jt}^{(2)}\left [\begin{pmatrix} \Bolda_{jt}^{(2)}\\ \BoldLtwo \end{pmatrix}\begin{pmatrix} \Bolda_{jt}^{(2)}\\ \BoldLtwo \end{pmatrix}^T - \begin{pmatrix} \Atwolago\\ \BoldLtwolago \end{pmatrix}\begin{pmatrix} \Atwolago\\ \BoldLtwolago \end{pmatrix}^T \right ]\begin{pmatrix} \boldsymbol{\beta}_A \\ \B_{\ell} \end{pmatrix}\\
    &= U(\B^{*}) + G_1 + G_2 + G_3 + G_4,
\end{align*}
with
\begin{align}
    G_1 &= \sum_{t=1}^{T^{(1)}} \sum_{j=1}^{J^{(1)}} \left(\frac{n_{jt}^{(1)}}{n} - \alpha_{j1 }\right)\begin{pmatrix} \Aone\\ \BoldLone \end{pmatrix} \left [ \begin{pmatrix} \boldsymbol{\beta}^{*}_A \\  \B^{*}_{\ell} \end{pmatrix}^T - \begin{pmatrix} \boldsymbol{\beta}_A \\  \B_{\ell} \end{pmatrix}^T \right ] \begin{pmatrix} \Aone\\ \BoldLone \end{pmatrix} \label{G_1} \\
    G_2 &= \sum_{t=T^{(1)}+1}^{T^{(2)}} \sum_{j=1}^{J^{(2)}} \left(\frac{n_{jt}^{(2)}}{n} - \alpha_{jt}^{(2)}\right)\begin{pmatrix} \Atwo\\ \BoldLtwo\end{pmatrix} \left [ \begin{pmatrix} \boldsymbol{\beta}^{*}_A \\  \B^{*}_{\ell} \end{pmatrix}^T - \begin{pmatrix} \boldsymbol{\beta}_A \\  \B_{\ell} \end{pmatrix}^T \right ] \begin{pmatrix} \Atwo\\ \BoldLtwo \end{pmatrix} \label{G_2} \\
    G_3 &= \sum_{t=T^{(1)}+1}^{T^{(2)}} \sum_{j=1}^{J^{(2)}}  \alpha_{jt}^{(2)}\left [\begin{pmatrix} \Atwolago\\ \BoldLtwolago \end{pmatrix}\begin{pmatrix} \Atwolago\\ \BoldLtwolago \end{pmatrix}^T - \begin{pmatrix} \Bolda_{jt}^{(2)}\\ \BoldLtwo \end{pmatrix}\begin{pmatrix} \Bolda_{jt}^{(2)}\\ \BoldLtwo \end{pmatrix}^T \right ] \begin{pmatrix} \boldsymbol{\beta}^{*}_A \\ \B^{*}_{\ell} \end{pmatrix} \label{G_3} \\
    G_4 &= \sum_{t=T^{(1)}+1}^{T^{(2)}} \sum_{j=1}^{J^{(2)}} \alpha_{jt}^{(2)}\left [\begin{pmatrix} \Bolda_{jt}^{(2)}\\ \BoldLtwo \end{pmatrix}\begin{pmatrix} \Bolda_{jt}^{(2)}\\ \BoldLtwo \end{pmatrix}^T - \begin{pmatrix} \Atwolago\\ \BoldLtwolago \end{pmatrix}\begin{pmatrix} \Atwolago\\ \BoldLtwolago \end{pmatrix}^T \right ]\begin{pmatrix} \boldsymbol{\beta}_A \\ \B_{\ell} \end{pmatrix}.  \label{G_4}
\end{align}   
Because of the triangle inequality for the supremum norm, it suffices to show convergence in probability to 0 for each term $U(\B^{*})$, $G_1$, $G_2, G_3$, and $G_4$ separately. 

Next, we assess the correlation between the two summands in $U(\B^{*})$. Observe that for any $i, \Tilde{i}, j, \Tilde{j}$,
\begin{align*}
&\E \left [ \begin{pmatrix} \Aone\\ \BoldLone \end{pmatrix} \left [\Yone - \begin{pmatrix} \boldsymbol{\beta}^{*}_A \\ \B^{*}_{\ell} \end{pmatrix}^T \begin{pmatrix} \Aone\\ \BoldLone \end{pmatrix} \right ] \begin{pmatrix} \Atwodifflago\\ \BoldLtwodifflago \end{pmatrix}^T \left [\Ytwodifflago - \begin{pmatrix} \boldsymbol{\beta}^{*}_A \\ \B^{*}_{\ell} \end{pmatrix}^T \begin{pmatrix} \Atwodifflago \\ \BoldLtwodifflago \end{pmatrix} \right ] \middle | \overline{\BoldZ} \right ]\\
&= \E \left[ \E \left[ 
    \begin{pmatrix} \Aone\\ \BoldLone \end{pmatrix} 
    \left[\Yone - 
    \begin{pmatrix} \boldsymbol{\beta}^{*}_A \\ 
    \B^{*}_{\ell} \end{pmatrix}^T 
    \begin{pmatrix} \Aone\\ \BoldLone \end{pmatrix} 
    \right] \right.\right. \nonumber \\
&\phantom{==} \left.\left. \cdot
    \begin{pmatrix} \Atwodifflago\\ 
    \BoldLtwodifflago \end{pmatrix}^T 
    \left[\Ytwodifflago - 
    \begin{pmatrix} \boldsymbol{\beta}^{*}_A \\ 
    \B^{*}_{\ell} \end{pmatrix}^T 
    \begin{pmatrix} \Atwodifflago \\ 
    \BoldLtwodifflago \end{pmatrix} 
    \right] 
    \,\middle|\, \X^{(2,n^{(1)})}, \overline{\BoldZ} 
    \right] 
    \,\middle|\, \overline{\BoldZ} 
\right]\\
&= \E \left . \left[ \E \left[ \begin{pmatrix} \Aone\\ \BoldLone \end{pmatrix} \left [\Yone - \begin{pmatrix} \boldsymbol{\beta}^{*}_A \\ \B^{*}_{\ell} \end{pmatrix}^T \begin{pmatrix} \Aone\\ \BoldLone \end{pmatrix} \right ] \right| \X^{(2,n^{(1)})}, \overline{\BoldZ}\right] \right.  \\
&\quad \cdot \E  \left .\left. \left[ \begin{pmatrix} \Atwodifflago\\ \BoldLtwodifflago \end{pmatrix} \left [\Ytwodifflago - \begin{pmatrix} \boldsymbol{\beta}^{*}_A \\ \B^{*}_{\ell} \end{pmatrix}^T \begin{pmatrix} \Atwodifflago\\ \BoldLtwodifflago \end{pmatrix} \right ] \right| \X^{(2,n^{(1)})}, \overline{\BoldZ} \right] \middle |\overline{\BoldZ}\right]\\
&= \E \left . \left[ \E \left[ \begin{pmatrix} \Aone\\ \BoldLone \end{pmatrix} \left(\Yone - \begin{pmatrix} \boldsymbol{\beta}^{*}_A \\ \B^{*}_{\ell} \end{pmatrix}^T \begin{pmatrix} \Aone\\ \BoldLone \end{pmatrix} \right) \right| \X^{(2,n^{(1)})}, \overline{\BoldZ} \right] \right. \\
&\quad \cdot  \left .  \E \left [\E \left . \left. \left[ \begin{pmatrix} \Atwodifflago\\ \BoldLtwodifflago \end{pmatrix} \left [\Ytwodifflago - \begin{pmatrix} \boldsymbol{\beta}^{*}_A \\ \B^{*}_{\ell} \end{pmatrix}^T \begin{pmatrix} \Atwodifflago\\ \BoldLtwodifflago \end{pmatrix} \right ] \right| \BarA^{(2,n^{(1)})}, \X^{(2,n^{(1)})}, \overline{\BoldZ} \right] \right | \X^{(2,n^{(1)})}, \overline{\BoldZ}\right ] \middle |\overline{\BoldZ}\right], \numberthis \label{correlationProof}
\end{align*}
where the second inequality follows from LAGO Assumption \ref{asspAnte}. Observe that

\begin{align*}
    & \E \left[ \begin{pmatrix} \Atwodifflago\\ \BoldLtwodifflago \end{pmatrix} \left [\Ytwodifflago(\Atwodifflago) - \begin{pmatrix} \boldsymbol{\beta}^{*}_A \\ \B^{*}_{\ell} \end{pmatrix}^T \begin{pmatrix} \Atwodifflago\\ \BoldLtwodifflago \end{pmatrix} \right ] \,\middle|\, \overline{\A}^{(2,n^{(1)})} = \overline{\Bolda}, \X^{(2,n^{(1)})} = \boldsymbol{x}, \overline{\BoldZ} \right] \\
    &= \E \left[ \begin{pmatrix} \Bolda_{\Tilde{j}}\\ \BoldLtwodifflago \end{pmatrix} \left [\Ytwodifflago(\Bolda_{\Tilde{j}}) - \begin{pmatrix} \boldsymbol{\beta}^{*}_A \\ \B^{*}_{\ell} \end{pmatrix}^T \begin{pmatrix} \Bolda_{\Tilde{j}} \\ \BoldLtwodifflago \end{pmatrix} \right ] \,\middle|\, \overline{\A}^{(2,n^{(1)})} = \overline{\Bolda}, \X^{(2,n^{(1)})} = \boldsymbol{x},  \overline{\BoldZ} \right] \\
    &= \E \left [\begin{pmatrix} \Bolda_{\Tilde{j}}\\ \BoldLtwodifflago \end{pmatrix}  \left [\Ytwodifflago(\Bolda_{\Tilde{j}}) - \begin{pmatrix} \boldsymbol{\beta}^{*}_A \\ \B^{*}_{\ell} \end{pmatrix}^T \begin{pmatrix} \Bolda_{\Tilde{j}} \\ \BoldLtwodifflago \end{pmatrix} \right ] \,\middle|\,\overline{\BoldZ} \right ]  = 0
\end{align*}
where the first equality follows from Consistency Assumption \ref{asspConsistency}, the second equality follows from Conditional Exchangeability Assumption \ref{asspCondExcha}, and the last equality follows from Main Model Assumption \ref{mainmodel}, Outcome Errors Assumption \ref{epsilonassumption}, and Conditional Independence Between centers Assumption \ref{asspcenterIndep}. Thus, the correlation between the two terms in equation (\ref{correlationProof}) equals 0, and each stage can be considered separately when evaluating $\Var \left(U(\B^{*}) \right)$.

Because each term has expectation 0 (see the proof of Theorem \ref{unbiasedEE}), the variance of the first entry $U_{2,1}(\B^{*})$ of the stage-2 summand of $U(\B^{*})$ equals 

\begin{align*}
    &\Var[U_{2,1}(\B^{*})\mid \overline{\BoldZ}] \\
    &= \E[U_{2,1}(\B^{*})^2 \mid \overline{\BoldZ}]\\
    &= \E  \left [\frac{1}{n^2} \sum_{t=T^{(1)}+1}^{T^{(2)}} \sum_{j=1}^{J^{(2)}} \sum_{i=1}^{n_{jt}^{(2)}} \sum_{\Tilde{t}=T^{(1)}+1}^{T^{(2)}} \sum_{\Tilde{j}=1}^{J^{(2)}} \sum_{\Tilde{i}=1}^{n_{\Tilde{j}\Tilde{t}}^{(2)}} \A_{jt,1}^{(2,n^{(1)})} \A_{\Tilde{j}\Tilde{t},1}^{(2,n^{(1)})} \left (\Ytwo - \begin{pmatrix} \boldsymbol{\beta}^{*}_A \\ \B^{*}_{\ell} \end{pmatrix}^T \begin{pmatrix} \A_{jt}^{(2,n^{(1)})}\\ \BoldLtwo \end{pmatrix} \right ) \right.\\
    &\phantom{============================} \left. \left (Y_{\Tilde{i}\Tilde{j}}^{(2,n^{(1)})} - \begin{pmatrix} \boldsymbol{\beta}^{*}_A \\ \B^{*}_{\ell} \end{pmatrix}^T \begin{pmatrix} \A_{\Tilde{j}\Tilde{t}}^{(2,n^{(1)})}\\ \BoldL_{\Tilde{j}}^{(2)} \end{pmatrix} \right) \middle | \overline{\BoldZ} \right ]\\
    &= \E \left [ \E \left [ \frac{1}{n^2} \sum_{t=T^{(1)}+1}^{T^{(2)}} \sum_{j=1}^{J^{(2)}} \sum_{i=1}^{n_{jt}^{(2)}} \sum_{\Tilde{t}=T^{(1)}+1}^{T^{(2)}} \sum_{\Tilde{j}=1}^{J^{(2)}} \sum_{\Tilde{i}=1}^{n_{\Tilde{j}\Tilde{t}}^{(2)}} \A_{jt}^{(2,n^{(1)})} \A_{\Tilde{j}\Tilde{t}}^{(2,n^{(1)})} \left(\Ytwo - \begin{pmatrix} \boldsymbol{\beta}^{*}_A \\ \B^{*}_{\ell} \end{pmatrix}^T \begin{pmatrix} \A_{jt}^{(2,n^{(1)})}\\ \BoldLtwo \end{pmatrix} \right) \right. \right.\\
     &\phantom{=====================}\left.\left.\left(Y_{\Tilde{i}\Tilde{j}\Tilde{t}}^{(2,n^{(1)})} - \begin{pmatrix} \boldsymbol{\beta}^{*}_A \\ \B^{*}_{\ell} \end{pmatrix}^T \begin{pmatrix} \A_{\Tilde{j}\Tilde{t},1}^{(2,n^{(1)})}\\ \BoldL_{\Tilde{j}}^{(2)} \end{pmatrix} \right) \middle |  \BarA^{(2,n^{(1)})}, \overline{\BoldZ} \right ] \middle | \overline{\BoldZ}\right ]\\
    &= \frac{1}{n^2}\sum_{t=T^{(1)}+1}^{T^{(2)}} \sum_{j=1}^{J^{(2)}} \sum_{i=1}^{n_{jt}^{(2)}}
    \int_{\overline{\Bolda}}  \E \left [ \left(\Bolda_{jt,1} \right)^2 
    \left(Y_{ijt}^{(2,n^{(1)})}(\Bolda_{jt}) - \begin{pmatrix} \boldsymbol{\beta}^{*}_A \\ \B^{*}_{\ell} \end{pmatrix}^T \begin{pmatrix} \Bolda_{jt} \\ \BoldLtwo \end{pmatrix} \right)^2 \middle | \BarA^{(2,n^{(1)})} = \overline{\Bolda}, \overline{\BoldZ}\right ] \nonumber \\
    &\phantom{==========================================} \cdot f_{\BarA^{(2,n^{(1)})}|\overline{\BoldZ}} (\overline{\Bolda}) \, d\overline{\Bolda}\\
    &= \frac{1}{n^2}\sum_{t=T^{(1)}+1}^{T^{(2)}} \sum_{j=1}^{J^{(2)}} \sum_{i=1}^{n_{jt}^{(2)}}\int_{\overline{\Bolda}}  \E \left [ \left(\Bolda_{jt,1} \right)^2 \left(Y_{ijt}^{(2,n^{(1)})}(\Bolda_{jt}) - \begin{pmatrix} \boldsymbol{\beta}^{*}_A \\ \B^{*}_{\ell} \end{pmatrix}^T \begin{pmatrix} \Bolda_{jt} \\ \BoldLtwo \end{pmatrix} \right)^2 \middle | \overline{\BoldZ}\right ] \cdot f_{\BarA^{(2,n^{(1)})}|\overline{\BoldZ}} (\overline{\Bolda}) \, d\overline{\Bolda}\\
    &=\sum_{t=T^{(1)}+1}^{T^{(2)}} \sum_{j=1}^{J^{(2)}} \sum_{i=1}^{n_{jt}^{(2)}} \int_{\overline{\Bolda}} \frac{1}{n^2}  \left(\Bolda_{j,1} \right)^2  \sigma^2 \left(\BoldZ_j\right) \cdot f_{\BarA^{(2,n^{(1)})}|\overline{\BoldZ}} (\overline{\Bolda}) \, d\overline{\Bolda}\\
    &\leq \frac{1}{n^2}\sum_{t=T^{(1)}+1}^{T^{(2)}} \sum_{j=1}^{J^{(2)}} \sum_{i=1}^{n_{jt}^{(2)}} \int_{\overline{\Bolda}} C_A^2 \, \sigma^2 \cdot f_{\BarA^{(2,n^{(1)})}|\overline{\BoldZ}} (\overline{\Bolda}) \, d\overline{\Bolda}\\
    &= \frac{n^{(2)}}{n^2} C_A^2 \, \sigma^2 \rightarrow 0 \; \text{as} \; n \rightarrow \infty.
\end{align*}
where the fourth equality holds because if $(i, j) \neq (\Tilde{i}, \Tilde{j})$, $\Ytwo$ and $ Y_{\Tilde{i}\Tilde{j}}^{(2,n^{(1)})}$ are independent conditionally on $\bigl(\BarA^{(2,n^{(1)})}, \overline{\BoldZ}\bigr)$ following Conditional Independence Between Outcome Errors Assumption \ref{asspMinh} and Consistency Assumption \ref{asspConsistency}, the fifth equality follows from Conditional Exchangeability Assumption \ref{asspCondExcha}, and the sixth equality follows
from Main Model Assumption \ref{mainmodel}, Outcome Errors Assumption \ref{epsilonassumption}, and Conditional Independence Between Centers Assumption \ref{asspcenterIndep}. The inequality holds because of Outcome Errors Assumption \ref{epsilonassumption} and Compact Space Assumption \ref{asspCompact}. Hence, $\Var(U_{2,1}(\B^{*}))$ converges to 0 as $n \rightarrow \infty$. 

Similarly, it follows that the variance of the other entries in the stage 2 term of $U(\B^{*})$ converges to 0, and that the variance of the stage-1 summand of $U(\B^{*})$ converges to 0. Thus, $\Var(U(\B^{*}))$ converges to 0. By applying Chebyshev's inequality to each entry in $U(\B^{*})$, it follows that each entry in $U(\B^{*})$ converges to 0 in probability, so that $U(\B^{*}) \ConvP 0$. Since $U(\B^{*})$ does not contain $\B$, its supremum over $\B$ converges to 0 in probability. 

Next, we prove that each of the finitely many $J^{(2)}$ summands of $G_2$ converges uniformly to 0. Consider any summand of $G_2$,
\begin{align}\label{summandG1WithLearning}
    \left(\frac{n_{jt}^{(2)}}{n} - \alpha_{jt}^{(2)}\right)\begin{pmatrix} \Atwo\\ \BoldLtwo \end{pmatrix} \left [ \begin{pmatrix} \boldsymbol{\beta}^{*}_A \\  \B^{*}_{\ell} \end{pmatrix}^T - \begin{pmatrix} \boldsymbol{\beta}_A \\  \B_{\ell} \end{pmatrix}^T \right ] \begin{pmatrix} \Atwo\\ \BoldLtwo \end{pmatrix}.
\end{align}
Because of Compact Space Assumption \ref{asspCompact}, $\lvert \A_{jt,p}^{(2,n^{(1)})} \rvert \leq \mathcal{C}_A \; \forall j$ and $\lvert \B_m \rvert \leq \mathcal{C}_{\B} \; \forall m$. Hence,
\begin{align*}
    \left |\left [\begin{pmatrix} \boldsymbol{\beta}^{*}_A \\ \B^{*}_{\ell} \end{pmatrix}^T- \begin{pmatrix} \boldsymbol{\beta}_A \\ \B_{\ell} \end{pmatrix}^T \right ] \begin{pmatrix} \Atwo\\ \BoldLtwo \end{pmatrix} \right | &\leq \left \Vert \left [\begin{pmatrix} \boldsymbol{\beta}^{*}_A \\ \B^{*}_{\ell} \end{pmatrix}^T- \begin{pmatrix} \boldsymbol{\beta}_A \\ \B_{\ell} \end{pmatrix}^T \right ] \right \Vert \left \Vert\begin{pmatrix} \Atwo\\ \BoldLtwo\end{pmatrix} \right \Vert \\
    &\leq 2\,\mathcal{C}_\beta \sqrt{P + J + 1} \left(\sqrt{\mathcal{C}_A^2 \cdot P + 2} \right),
\end{align*}
with $P + J + 1$ the number of components in the vectors $\B^{*}$ and $\B$. It follows that for the first entry of the summand in equation (\ref{summandG1WithLearning}),
\begin{align*}
&\underset{\B \in \mathcal{B}}{\sup} \left \Vert \left(\frac{n_{jt}^{(2)}}{n} - \alpha_{jt}^{(2)}\right) \A_{jt,1}^{(2,n^{(1)})} \left[ \begin{pmatrix} \boldsymbol{\beta}^{*}_A \\  \B^{*}_{\ell} \end{pmatrix}^T - \begin{pmatrix} \boldsymbol{\beta}_A \\  \B_{\ell} \end{pmatrix}^T \right]\begin{pmatrix} \Atwo\\ \BoldLtwo \end{pmatrix} \right \Vert \\
&\leq \left|\frac{n_{jt}^{(2)}}{n} - \alpha_{jt}^{(2)} \right|\mathcal{C}_A \cdot 2\,\mathcal{C}_\beta \, \sqrt{P + J + 1} \left(\sqrt{C_A^2 \cdot P + 2}  \right),     
\end{align*}
which converges to 0 as $n \rightarrow 0$ because of Assumption \ref{ratioConvergence}. Similarly, the supremum of the other entries of the summand of $G_2$ in equation (\ref{summandG1WithLearning}) converges uniformly to 0 as $n \rightarrow 0$. It follows that the supremum of every summand of $G_2$ converges uniformly to 0. Therefore, because $J^{(2)}$ is fixed, $\underset{\B \in \mathcal{B}}{\sup}\|G_2\| \rightarrow 0$. Similarly, $\underset{\B \in \mathcal{B}}{\sup}\|G_1\| \rightarrow 0$.

Next, consider any summand of $G_3$. Since $\Atwolago \ConvP \Bolda_{jt}^{(2)}$ (Remark \ref{convergenceofA}), by the Continuous Mapping Theorem, it follows that $\begin{pmatrix} \Atwolago \\ \BoldLtwolago \end{pmatrix}\begin{pmatrix} \Atwolago \\ \BoldLtwolago \end{pmatrix}^T  \ConvP \begin{pmatrix} \Bolda_{jt}^{(2)} \\ \BoldLtwo \end{pmatrix}\begin{pmatrix} \Bolda_{jt}^{(2)} \\ \BoldLtwo \end{pmatrix}^T$. Thus, \[\underset{\B \in \mathcal{B}}{\sup} \left \|\alpha_{jt}^{(2)} \left [ \begin{pmatrix} \Atwolago\\ \BoldLtwolago \end{pmatrix}\begin{pmatrix} \Atwolago\\ \BoldLtwolago \end{pmatrix}^T - \begin{pmatrix} \Bolda_{jt}^{(2)}\\ \BoldLtwo \end{pmatrix}\begin{pmatrix} \Bolda_{jt}^{(2)}\\ \BoldLtwo \end{pmatrix}^T \right ] \begin{pmatrix} \boldsymbol{\beta}^{*}_A \\ \B^{*}_{\ell} \end{pmatrix} \right \| \ConvP 0.\] Hence, because $J^{(2)}$ is fixed, $\underset{\B \in \mathcal{B}}{\sup}\|G_3\| \ConvP 0$. 

Next, consider $G_4$. Since $\mathcal{B}$ is compact (Compact Space Assumption \ref{asspCompact}), $\underset{\B \in \mathcal{B}}{\sup} \left\|\begin{pmatrix} \boldsymbol{\beta}_A \\ \B_{\ell} \end{pmatrix}\right\| \leq C_\beta \sqrt{P+J+1}$ with $P + J + 1$ the number of components in the vector $\B$. Thus, using similar reasoning as for $G_3$,
$$\underset{\B \in \mathcal{B}}{\sup} \left \|\alpha_{jt}^{(2)} \left [ \begin{pmatrix} \Bolda_{jt}^{(2)}\\ \BoldLtwo \end{pmatrix}\begin{pmatrix} \Bolda_{jt}^{(2)}\\ \BoldLtwo \end{pmatrix}^T - \begin{pmatrix} \Atwolago\\ \BoldLtwolago \end{pmatrix}\begin{pmatrix} \Atwolago\\ \BoldLtwolago \end{pmatrix}^T \right ] \begin{pmatrix} \boldsymbol{\beta}_A \\ \B_{\ell} \end{pmatrix} \right \| \ConvP 0.$$
Hence, because $J^{(2)}$ is fixed, $\underset{\B \in \mathcal{B}}{\sup}\|G_4\| \ConvP 0$. Hence, the first condition in Theorem 5.9 of \cite{van2000asymptotic} holds in our setting. 

The second condition of Theorem 5.9 in \cite{van2000asymptotic} states that for every $\epsilon > 0$,
\[ \underset{\boldsymbol{\beta}:||\boldsymbol{\beta} - \boldsymbol{\beta}^{*}|| > \epsilon}{\inf} ||u(\boldsymbol{\beta})|| > 0 = ||u(\boldsymbol{\beta}^{*})||.\]
First, notice that $u(\boldsymbol{\beta}^{*}) = 0$, implying that $||u(\boldsymbol{\beta}^{*})|| = 0$. Next, we see that the derivative of $u(\boldsymbol{\beta})$ at $\boldsymbol{\beta}$ is
\[ -\sum_{t=1}^{T^{(1)}} \sum_{j=1}^{J^{(1)}} \alpha_{jt}^{(1)} \begin{pmatrix} \Aone\\ \BoldLone \end{pmatrix} \begin{pmatrix} \Aone \\ \BoldLone \end{pmatrix}^T -\sum_{t=T^{(1)}+1}^{T^{(2)}} \sum_{j=1}^{J^{(2)}} \alpha_{jt}^{(2)} \begin{pmatrix} \Bolda_{jt}^{(2)}\\ \BoldLtwo \end{pmatrix} \begin{pmatrix} \Bolda_{jt}^{(2)}\\ \BoldLtwo \end{pmatrix}^T. \] 
In the absence of multicollinearity in the limit, this derivative is negative definite, implying that $u(\boldsymbol{\beta})$ is strictly concave. We conclude that $u(\boldsymbol{\beta})$ has a unique solution at $\boldsymbol{\beta}^{*}$ and the second condition is also satisfied.

Therefore, $\widehat{\B} \ConvP \B^{*}$.
\end{proof}
\subsection[Proof of Asymptotic Normality]{Proof of Asymptotic Normality of $\hat{\B}$ Theorem \ref{AsymptoticTheoremWithLearning}.}\label{AsymptoticProofAppendix}
\begin{proof}
Using the Mean Value Theorem for each of the components of $U(\boldsymbol{\beta})$, we have
$0 = U(\widehat{\boldsymbol{\beta}}) = U(\boldsymbol{\beta}^{*}) + \left( \left .\frac{\partial}{\partial \boldsymbol{\beta}} \right |_{\widetilde{\boldsymbol{\beta}}}U(\boldsymbol{\beta}) \right)(\widehat{\boldsymbol{\beta}} - \boldsymbol{\beta})$ for some $\widetilde{\B}$ between $\hat{\B}$ and $\B^{*}$, possibly different for each row of $\left .\frac{\partial}{\partial \boldsymbol{\beta}} \right |_{\widetilde{\boldsymbol{\beta}}}U(\boldsymbol{\beta})$. Hence,

\begin{align}\label{MVT_Asymptotic_Proof}
- \left[\left( \left .\frac{\partial}{\partial \boldsymbol{\beta}} \right |_{\widetilde{\B}}U(\boldsymbol{\beta}) \right)\right]^{-1} U(\boldsymbol{\beta}^{*}) = (\widehat{\boldsymbol{\beta}} -\boldsymbol{\beta}^{*}).     
\end{align}

It follows that the limiting distribution of $\sqrt{n}\,(\widehat{\boldsymbol{\beta}} -\boldsymbol{\beta}^{*})$ is the limiting distribution of 
$-\sqrt{n}\left[\left( \left .\frac{\partial}{\partial \boldsymbol{\beta}} \right |_{\widetilde{\B}}U(\boldsymbol{\beta}) \right)\right]^{-1} U(\boldsymbol{\beta}^{*})$. 

Define
\begin{align*}
    &\Tilde{J}(\B) = -\frac{\partial}{\partial \boldsymbol{\beta}} U(\boldsymbol{\beta}) \\
    &= -\frac{\partial}{\partial \boldsymbol{\beta}} \frac{1}{n} \left\{  \sum_{t=1}^{T^{(1)}} \sum_{j=1}^{J^{(1)}} \sum_{i=1}^{n_{jt}^{(1)}} \begin{pmatrix} \Aone\\ \BoldLone \end{pmatrix} \left[\Yone - \begin{pmatrix} \boldsymbol{\beta}_A \\ \B_{\ell} \end{pmatrix}^T \begin{pmatrix} \Aone\\ \BoldLone \end{pmatrix} \right] \right.\\
    &\phantom{= }+ \left.\sum_{t=T^{(1)}+1}^{T^{(2)}} \sum_{j=1}^{J^{(2)}} \sum_{i=1}^{n_{jt}^{(2)}} \begin{pmatrix} \Atwolago\\ \BoldLtwolago \end{pmatrix} \left[\Ytwolago - \begin{pmatrix} \boldsymbol{\beta}_A \\ \B_{\ell} \end{pmatrix}^T \begin{pmatrix} \Atwolago\\ \BoldLtwolago \end{pmatrix} \right] \right\} \\
    &= \frac{1}{n} \left\{\sum_{t=1}^{T^{(1)}} \sum_{j=1}^{J^{(1)}} \sum_{i=1}^{n_{jt}^{(1)}} \begin{pmatrix} \Aone\\ \BoldLone \end{pmatrix}\begin{pmatrix} \Aone\\ \BoldLone \end{pmatrix}^T +\sum_{t=T^{(1)}+1}^{T^{(2)}} \sum_{j=1}^{J^{(2)}} \sum_{i=1}^{n_{jt}^{(2)}} \begin{pmatrix} \Atwolago\\ \BoldLtwolago \end{pmatrix}\begin{pmatrix} \Atwolago\\ \BoldLtwolago \end{pmatrix}^T \right\}\\
    &= \sum_{t=1}^{T^{(1)}} \sum_{j=1}^{J^{(1)}} \frac{n_{jt}^{(1)}}{n} \begin{pmatrix} \Aone\\ \BoldLone \end{pmatrix}\begin{pmatrix} \Aone\\ \BoldLone \end{pmatrix}^T + \sum_{t=T^{(1)}+1}^{T^{(2)}} \sum_{j=1}^{J^{(2)}} \frac{n_{jt}^{(2)}}{n} \begin{pmatrix} \Atwolago\\ \BoldLtwolago \end{pmatrix}\begin{pmatrix} \Atwolago \\ \BoldLtwolago \end{pmatrix}^T \\
    &\ConvP J^{*} = \sum_{t=1}^{T^{(1)}} \sum_{j=1}^{J^{(1)}} \alpha_{jt}^{(1)}  \begin{pmatrix} \A_{jt}^{(1)}\\ \BoldLone \end{pmatrix} \begin{pmatrix} \A_{jt}^{(1)}\\ \BoldLone \end{pmatrix}^T + \sum_{t=T^{(1)}+1}^{T^{(2)}} \sum_{j=1}^{J^{(2)}} \alpha_{jt}^{(2)}  \begin{pmatrix} \Bolda_{jt}^{(2)}\\ \BoldLtwo \end{pmatrix} \begin{pmatrix} \Bolda_{jt}^{(2)}\\ \BoldLtwo \end{pmatrix}^T ,\numberthis \label{J(B)}
\end{align*}
because $\Atwolago \ConvP \Bolda_{jt}^{(2)}$ (Remark \ref{convergenceofA}) and using Assumption \ref{ratioConvergence}. From  Slutsky's Theorem, the asymptotic distribution of $\sqrt{n}(\widehat{\boldsymbol{\beta}} -\boldsymbol{\beta}^{*})$ is the same as the asymptotic distribution of $\sqrt{n} \;J^{*-1} \; U(\boldsymbol{\beta}^{*})$. 

Next, decompose $U(\B^{*})$ by adding and subtracting a term that contains $\Bolda_{jt}^{(2)}$:
\begin{align*}
    \sqrt{n}\,U(\boldsymbol{\beta}^{*}) &= \frac{1}{\sqrt{n}} \sum_{t=1}^{T^{(1)}} \sum_{j=1}^{J^{(1)}} \sum_{i=1}^{n_{jt}^{(1)}} \begin{pmatrix} \Aone\\ \BoldLone \end{pmatrix} \left [\Yone - \begin{pmatrix} \boldsymbol{\beta}^{*}_A \\ \B^{*}_{\ell} \end{pmatrix}^T \begin{pmatrix} \Aone\\ \BoldLone \end{pmatrix} \right ] \\
    &+ \frac{1}{\sqrt{n}}\sum_{t=T^{(1)}+1}^{T^{(2)}} \sum_{t=T^{(1)}+1}^{T^{(2)}} \sum_{j=1}^{J^{(2)}} \sum_{i=1}^{n_{jt}^{(2)}} \begin{pmatrix} \Atwolago\\ \BoldLtwolago \end{pmatrix} \left [\Ytwolago - \begin{pmatrix} \boldsymbol{\beta}^{*}_A \\ \B^{*}_{\ell} \end{pmatrix}^T \begin{pmatrix} \Atwolago\\ \BoldLtwolago \end{pmatrix} \right ]\\
    &= \frac{1}{\sqrt{n}} \sum_{t=1}^{T^{(1)}} \sum_{j=1}^{J^{(1)}} \sum_{i=1}^{n_{jt}^{(1)}} \begin{pmatrix} \Aone\\ \BoldLone \end{pmatrix} \epsilon_{ijt}^{(1)}+ \frac{1}{\sqrt{n}}\sum_{t=T^{(1)}+1}^{T^{(2)}} \sum_{t=T^{(1)}+1}^{T^{(2)}} \sum_{j=1}^{J^{(2)}} \sum_{i=1}^{n_{jt}^{(2)}} \begin{pmatrix} \Atwolago\\ \BoldLtwolago \end{pmatrix} \epsilon_{ijt}^{(2,n^{(1)})}\\
    &= \frac{1}{\sqrt{n}} \sum_{t=1}^{T^{(1)}} \sum_{j=1}^{J^{(1)}} \sum_{i=1}^{n_{jt}^{(1)}} \begin{pmatrix} \Aone\\ \BoldLone \end{pmatrix} \epsilon_{ijt}^{(1)} + \frac{1}{\sqrt{n}}\sum_{t=T^{(1)}+1}^{T^{(2)}} \sum_{j=1}^{J^{(2)}} \sum_{i=1}^{n_{jt}^{(2)}} \begin{pmatrix} \Bolda_{jt}^{(2)}\\ \BoldLtwo \end{pmatrix} \epsilon_{ijt}^{(2,n^{(1)})} \\
    &- \frac{1}{\sqrt{n}}\sum_{t=T^{(1)}+1}^{T^{(2)}} \sum_{j=1}^{J^{(2)}} \sum_{i=1}^{n_{jt}^{(2)}} \begin{pmatrix} \Bolda_{jt}^{(2)}\\ \BoldLtwo \end{pmatrix} \epsilon_{ijt}^{(2,n^{(1)})} +  \frac{1}{\sqrt{n}}\sum_{t=T^{(1)}+1}^{T^{(2)}} \sum_{j=1}^{J^{(2)}} \sum_{i=1}^{n_{jt}^{(2)}} \begin{pmatrix} \Atwolago\\ \BoldLtwolago \end{pmatrix} \epsilon_{ijt}^{(2,n^{(1)})}\\
    &= U_{1\_2,n} + U_{2\_2,n},
\end{align*}
with
\begin{align*}
    U_{1\_2,n} &= \frac{1}{\sqrt{n}} \left [\sum_{t=1}^{T^{(1)}} \sum_{j=1}^{J^{(1)}} \sum_{i=1}^{n_{jt}^{(1)}} \begin{pmatrix} \Aone\\ \BoldLone \end{pmatrix} \epsilon_{ijt}^{(1)} +\sum_{t=T^{(1)}+1}^{T^{(2)}}  \sum_{t=T^{(1)}+1}^{T^{(2)}} \sum_{j=1}^{J^{(2)}} \sum_{i=1}^{n_{jt}^{(2)}} \begin{pmatrix} \Bolda_{jt}^{(2)}\\ \BoldLtwo \end{pmatrix} \epsilon_{ijt}^{(2,n^{(1)})} \right ],\\
    U_{2\_2,n} &= \frac{1}{\sqrt{n}}\sum_{t=T^{(1)}+1}^{T^{(2)}} \sum_{j=1}^{J^{(2)}} \sum_{i=1}^{n_{jt}^{(2)}} \left [\begin{pmatrix} \Atwolago\\ \BoldLtwolago \end{pmatrix}  -  \begin{pmatrix} \Bolda_{jt}^{(2)}\\ \BoldLtwo \end{pmatrix} \right ]\epsilon_{ijt}^{(2,n^{(1)})}.\\
\end{align*}
To show that $\sqrt{n}\,U(\B^{*})$ converges in distribution to a normal distribution, we show that $U_{1\_2,n}$ converges to a normal distribution and that $U_{2\_2,n}$ converges to 0 in probability. 

Regarding $U_{1\_2,n}$, we first replace $\epsilon_{j}^{(2,n^{(1)})}$ with $\epsilon_{j}^{(2)}$ without changing the distribution of  $U_{1\_2,n}$ (Outcome Errors Assumption \ref{epsilonassumption}), 
\begin{align}\label{U_1,2}
\widetilde{U}_{1\_2, n} = \frac{1}{\sqrt{n}} \left [\sum_{t=1}^{T^{(1)}} \sum_{j=1}^{J^{(1)}} \sum_{i=1}^{n_{jt}^{(1)}} \begin{pmatrix} \Aone\\ \BoldLone \end{pmatrix} \epsilon_{ijt}^{(1)} + \sum_{t=T^{(1)}+1}^{T^{(2)}} \sum_{j=1}^{J^{(2)}} \sum_{i=1}^{n_{jt}^{(2)}} \begin{pmatrix} \Bolda_{jt}^{(2)}\\ \BoldLtwo \end{pmatrix} \epsilon_{ijt}^{(2)} \right ] \sim U_{1\_2,n}.    
\end{align}
Next, by the linearity of expectation, using the same argument as in equation (\ref{expectationstage2zero}), $\E (\widetilde{U}_{1\_2, n}) = 0$. Because the $\Bolda_{jt}^{(2)}$ are fixed, the two summands of stages 1 and 2 are uncorrelated. Thus, the variance of $\widetilde{U}_{1\_2, n}$ is
\begin{align*}
    &\Var \left [\widetilde{U}_{1\_2, n} \mid \overline{\BoldZ}\right] \\
    &= \Var \left [\frac{1}{\sqrt{n}} \left(\sum_{t=1}^{T^{(1)}} \sum_{j=1}^{J^{(1)}} \sum_{i=1}^{n_{jt}^{(1)}} \begin{pmatrix} \Aone\\ \BoldLone \end{pmatrix} \epsilon_{ijt}^{(1)} + \sum_{t=T^{(1)}+1}^{T^{(2)}} \sum_{j=1}^{J^{(2)}} \sum_{i=1}^{n_{jt}^{(2)}} \begin{pmatrix} \Bolda_{jt}^{(2)}\\ \BoldLtwo \end{pmatrix} \epsilon_{ijt}^{(2)} \right) \middle | \overline{\BoldZ}\right]\\
    &= \E \left [ \frac{1}{n}\sum_{t=1}^{T^{(1)}} \sum_{j=1}^{J^{(1)}} \sum_{i=1}^{n_{jt}^{(1)}} \begin{pmatrix} \Aone\\ \BoldLone \end{pmatrix} \begin{pmatrix} \Aone\\ \BoldLone \end{pmatrix}^T \left(\epsilon_{ijt}^{(1)} \right)^2 \middle | \overline{\BoldZ} \right] + \E \left [\frac{1}{n} \sum_{t=T^{(1)}+1}^{T^{(2)}} \sum_{j=1}^{J^{(2)}} \sum_{i=1}^{n_{jt}^{(2)}} \begin{pmatrix} \Bolda_{jt}^{(2)}\\ \BoldLtwo \end{pmatrix} \begin{pmatrix} \Bolda_{jt}^{(2)}\\ \BoldLtwo \end{pmatrix}^T \left(\epsilon_{ijt}^{(2)}\right)^2 \middle | \overline{\BoldZ}\right]\\
     &= \frac{1}{n}\sum_{t=1}^{T^{(1)}} \sum_{j=1}^{J^{(1)}} \sum_{i=1}^{n_{jt}^{(1)}} \begin{pmatrix} \Aone\\ \BoldLone \end{pmatrix} \begin{pmatrix} \Aone\\ \BoldLone \end{pmatrix}^T \E \left [ \left(\epsilon_{ijt}^{(1)} \right)^2 \middle | \overline{\BoldZ}\right] + \frac{1}{n} \sum_{t=T^{(1)}+1}^{T^{(2)}} \sum_{j=1}^{J^{(2)}} \sum_{i=1}^{n_{jt}^{(2)}} \begin{pmatrix} \Bolda_{jt}^{(2)}\\ \BoldLtwo \end{pmatrix} \begin{pmatrix} \Bolda_{jt}^{(2)}\\ \BoldLtwo \end{pmatrix}^T \E \left [\left(\epsilon_{ijt}^{(2)}\right)^2\middle | \overline{\BoldZ}\right]\\
      &= \sum_{t=1}^{T^{(1)}} \sum_{j=1}^{J^{(1)}} \frac{n_{jt}^{(1)}}{n} \begin{pmatrix} \Aone\\ \BoldLone \end{pmatrix} \begin{pmatrix} \Aone\\ \BoldLone \end{pmatrix}^T \sigma^2(\BoldZ_j) +  \sum_{t=T^{(1)}+1}^{T^{(2)}} \sum_{j=1}^{J^{(2)}}  \frac{n_{jt}^{(2)}}{n}\begin{pmatrix} \Bolda_{jt}^{(2)}\\ \BoldLtwo \end{pmatrix} \begin{pmatrix} \Bolda_{jt}^{(2)}\\ \BoldLtwo \end{pmatrix}^T \sigma^2(\BoldZ_j)\\
    &\rightarrow \sum_{t=1}^{T^{(1)}} \sum_{j=1}^{J^{(1)}} \alpha_{jt}^{(1)}\begin{pmatrix} \A_{jt}^{(1)}\\ \BoldLone \end{pmatrix} \begin{pmatrix} \A_{jt}^{(1)}\\ \BoldLone \end{pmatrix}^T \sigma^2 \left(\BoldZ_j \right) + \sum_{t=T^{(1)}+1}^{T^{(2)}} \sum_{j=1}^{J^{(2)}} \alpha_{jt}^{(2)}\begin{pmatrix} \Bolda_{jt}^{(2)}\\ \BoldLtwo \end{pmatrix} \begin{pmatrix} \Bolda_{jt}^{(2)}\\ \BoldLtwo \end{pmatrix}^T \sigma^2 \left(\BoldZ_j \right) \quad \text{as} \,\, n\rightarrow \infty \numberthis \label{varianceOfFirstTerm}.
\end{align*}
The fourth equality follows from Outcome Errors Assumption \ref{epsilonassumption} and Conditional Independence Between Centers Assumption \ref{asspcenterIndep}. The convergence in the last line follows from Assumption \ref{ratioConvergence}.

Since the $\epsilon_{ijt}^{(1)}$ and the $\epsilon_{ijt}^{(2)}$ are independent (Outcome Errors Assumption \ref{epsilonassumption}), $\widetilde{U}_{1\_2, n}$ converges to a normal distribution because of the Crámer-Wold device and Lyapunov's condition (see page 383 and 362, \cite{billingsley1995}). To see this, first, let a vector of constants $\boldsymbol{t} = \begin{pmatrix}\boldsymbol{t}_A \\ \boldsymbol{t}_\ell \end{pmatrix} \in \boldsymbol{\mathbb{R}}^{P+J+1}$, with $\boldsymbol{t}_A = \begin{pmatrix} t_1\\ \ldots\\ t_P \end{pmatrix}$ and $\boldsymbol{t}_\ell = \begin{pmatrix} t_{P+1} \\ \ldots \\ t_{P+J+1} \end{pmatrix}$. Consider the linear combination of $\widetilde{U}_{1\_2, n}$, $\widetilde{U}_{1\_2, n}^{\text{linear}}$, 
\begin{align*}
    \widetilde{U}_{1\_2, n}^{\text{linear}} &= \frac{1}{\sqrt{n}} \sum_{t=1}^{T^{(1)}} \sum_{j=1}^{J^{(1)}} \sum_{i=1}^{n_{jt}^{(1)}} \boldsymbol{t}^T \begin{pmatrix} \A_{jt}^{(1)} \\ \boldsymbol{\ell}_j^{(1)} \end{pmatrix} \epsilon_{ijt}^{(1)} + \frac{1}{\sqrt{n}}  \sum_{t=T^{(1)}+1}^{T^{(2)}} \sum_{j=1}^{J^{(2)}} \sum_{i=1}^{n_{jt}^{(2)}} \boldsymbol{t}^T  \begin{pmatrix} \Bolda_{jt}^{(2)} \\ \boldsymbol{\ell}_j^{(2)} \end{pmatrix} \epsilon_{ijt}^{(2)} \\
    &= \frac{1}{\sqrt{n}}\left [\sum_{t=1}^{T^{(1)}} \sum_{j=1}^{J^{(1)}} \sum_{i=1}^{n_{jt}^{(1)}} \left( \boldsymbol{t}_a^T \A_{jt}^{(1)} + \boldsymbol{t}_\ell^T \BoldLone \right) \epsilon_{ijt}^{(1)} + \sum_{t=T^{(1)}+1}^{T^{(2)}} \sum_{j=1}^{J^{(2)}} \sum_{i=1}^{n_{jt}^{(2)}} \left(\boldsymbol{t}_a^T\Bolda_{jt}^{(2)} + \boldsymbol{t}_\ell^T\BoldLtwo \right) \epsilon_{ijt}^{(2)} \right ].
\end{align*}
$\widetilde{U}_{1\_2, n}^{\text{linear}}$ converges to a normal distribution if it satisfies the condition of Lyapunov's condition (see page 362, \cite{billingsley1995}), which state that there exists $\delta > 0$ so that as $n \rightarrow \infty$, 
\begin{align*}
    &\frac{1}{\Var \left(\widetilde{U}_{1\_2, n}^{\text{linear}}\right)^{1+\delta/2}} \left(\sum_{t=1}^{T^{(1)}}\sum_{j=1}^{J^{(1)}}\sum_{i=1}^{n_{jt}^{(1)}} \E \left [\left |\frac{1}{\sqrt{n}} \left(\boldsymbol{t}_a^T \A_{jt}^{(1)} + \boldsymbol{t}_\ell^T \BoldLone \right)\epsilon_{ijt}^{(1)} \right |^{2+\delta}\right ] \right.\\
    &+ \left. \sum_{t=T^{(1)}+1}^{T^{(2)}}\sum_{j=1}^{J^{(2)}}\sum_{i=1}^{n_{jt}^{(2)}} \E \left [\left |\frac{1}{\sqrt{n}} \left(\boldsymbol{t}_a^T \Bolda_{jt}^{(2)} + \boldsymbol{t}_\ell^T \BoldLtwo \right)\epsilon_{ijt}^{(2)} \right |^{2+\delta}\right ] \right) \rightarrow 0. \numberthis \label{lyapunov}
\end{align*}
Similar to equation (\ref{varianceOfFirstTerm}),
\begin{align*}
Var \left(\widetilde{U}_{1\_2, n}^{\text{linear}} \right) \rightarrow &\sum_{t=1}^{T^{(1)}} \sum_{j=1}^{J^{(1)}} (\alpha_{jt}^{(1)}) \left (\boldsymbol{t}^T \begin{pmatrix} \A_{jt}^{(1)}\\ \BoldLone \end{pmatrix} \right)^2 \, \sigma^2 \left(\BoldZ_j \right) \\
&+ \sum_{t=T^{(1)}+1}^{T^{(2)}} \sum_{j=1}^{J^{(2)}} (\alpha_{jt}^{(2)}) \left ( \boldsymbol{t}^T \begin{pmatrix} \Bolda_{jt}^{(2)}\\ \BoldLtwo \end{pmatrix} \right)^2 \sigma^2 \left(\BoldZ_j \right) \, \text{as} \, n\rightarrow \infty. 
\end{align*}
Next, from equation (\ref{lyapunov}), using $\delta > 0$ from Outcome Errors Assumption~\ref{epsilonassumption},
\begin{align*}
    &\sum_{t=1}^{T^{(1)}}\sum_{j=1}^{J^{(1)}}\sum_{i=1}^{n_{jt}^{(1)}} \E \left [\left |\frac{1}{\sqrt{n}} \left(\boldsymbol{t}_a^T \A_{jt}^{(1)} + \boldsymbol{t}_\ell^T \BoldLone \right) \epsilon_{ijt}^{(1)}\right |^{2+\delta} \middle | \overline{\BoldZ} \right ]\\
    &\leq  \sum_{t=1}^{T^{(1)}}\sum_{j=1}^{J^{(1)}}\sum_{i=1}^{n_{jt}^{(1)}}\frac{1}{n^{1+\delta/2}} \cdot \max_{j=1,\ldots,J^{(1)}} \left(\left | \boldsymbol{t}_a^T \A_{jt}^{(1)} + \boldsymbol{t}_\ell^T \BoldL_{j}^{(1)} \right | \right)^{2+\delta} \cdot \E \left [ \left(\left |\epsilon_{ijt}^{(1)} \right| \right)^{2+\delta} \middle | \overline{\BoldZ} \right ]\\   
    &\leq  \frac{1}{n^{\delta/2}} \, \max_{j=1,\ldots,J^{(1)}} \, \left(\left | \boldsymbol{t}_a^T \A_{jt}^{(1)} \right | + \left | \boldsymbol{t}_\ell^T \BoldL_{j}^{(1)} \right |\right)^{2+\delta} \cdot \E \left [ \left(\left |\epsilon_{ijt}^{(1)} \right| \right)^{2+\delta} \middle | \overline{\BoldZ} \right ] \\
    &\leq  \frac{1}{n^{\delta/2}} \left(\left|\max_{p=1,\ldots,P}(\boldsymbol{t}_{a,p})\cdot \mathcal{C}_A \cdot P \right|+ \left |\max_{m=1,\ldots,J+1}(\boldsymbol{t}_{\ell,m})\cdot 2\right |\right)^{2+\delta} \cdot \mathcal{C}_\epsilon \rightarrow 0 \quad \text{as} \, \, n \rightarrow \infty, 
\end{align*}
where the third inequality follows from Outcome Errors Assumption \ref{epsilonassumption} and Compact Space Assumption \ref{asspCompact}. Similarly, it follows that 
\[   \sum_{t=T^{(1)}+1}^{T^{(2)}}\sum_{j=1}^{J^{(2)}}\sum_{i=1}^{n_{jt}^{(2)}} \E \left [\left |\frac{1}{\sqrt{n}} \left(\boldsymbol{t}_a^T \Bolda_{jt}^{(2)} + \boldsymbol{t}_\ell^T \BoldLtwo \right)\epsilon_{ijt}^{(2)} \right |^{2+\delta} \middle | \overline{\BoldZ} \right ] \rightarrow 0 \quad \text{as} \, \, n \rightarrow \infty. \]
Since the limit of $\Var(\widetilde{U}_{1\_2, n}^{\text{linear}})^{2+\delta}$ is finite and non-zero and the numerator in equation (\ref{lyapunov}) converges to 0 as $n \rightarrow \infty$, the Lyapunov's condition of equation (\ref{lyapunov}) is satisfied. Therefore, the Lyapunov Central Limit Theorem (CLT) holds for  $\widetilde{U}_{1\_2, n}^{\text{linear}}$. Therefore, by the Crámer-Wold device, $\widetilde{U}_{1\_2, n}$ converges to a normal distribution with mean 0 and variance

\[\sum_{t=1}^{T^{(1)}} \sum_{j=1}^{J^{(1)}} \alpha_{jt}^{(1)} \begin{pmatrix} \A_{jt}^{(1)}\\ \BoldLone \end{pmatrix}  \begin{pmatrix} \A_{jt}^{(1)}\\ \BoldLone \end{pmatrix}^T \sigma^2 \left(\BoldZ_j \right) + \sum_{t=T^{(1)}+1}^{T^{(2)}} \sum_{j=1}^{J^{(2)}} \alpha_{jt}^{(2)} \begin{pmatrix} \Bolda_{jt}^{(2)}\\ \BoldLtwo \end{pmatrix}  \begin{pmatrix} \Bolda_{jt}^{(2)}\\ \BoldLtwo \end{pmatrix}^T \sigma^2 \left(\BoldZ_j \right).\]

Next, we show that $U_{2\_2,n} \ConvP 0$ follows from $\E(U_{2\_2,n}) = 0$ and $\E(U_{2\_2,n}^2) \rightarrow 0$, so that $U_{2\_2,n} \ConvP 0$ by Chebyshev's inequality. By the Law of Total Expectation, 

\begin{align*}
    &\E(U_{2\_2,n} \mid \overline{\BoldZ}) \\
    &= \E \left [\frac{1}{\sqrt{n}}\sum_{t=T^{(1)}+1}^{T^{(2)}} \sum_{j=1}^{J^{(2)}} \sum_{i=1}^{n_{jt}^{(2)}} \left [ \begin{pmatrix} \Atwolago\\ \BoldLtwolago \end{pmatrix} - \begin{pmatrix} \Bolda_{jt}^{(2)}\\ \BoldLtwo \end{pmatrix}\right ] \epsilon_{ijt}^{(2,n^{(1)})}\middle | \overline{\BoldZ} \right ]\\
    &= \E \left [\E \left [\frac{1}{\sqrt{n}}\sum_{t=T^{(1)}+1}^{T^{(2)}} \sum_{j=1}^{J^{(2)}} \sum_{i=1}^{n_{jt}^{(2)}} \left [ \begin{pmatrix} \Atwolago\\ \BoldLtwolago \end{pmatrix} - \begin{pmatrix} \Bolda_{jt}^{(2)}\\ \BoldLtwo \end{pmatrix}\right ] \epsilon_{ijt}^{(2,n^{(1)})}\middle | \overline{\A}^{(2,n^{(1)})}, \overline{\BoldZ} \right ] \middle | \overline{\BoldZ} \right ]\\
    &= \int_{\Bara} \E \left [\frac{1}{\sqrt{n}}\sum_{t=T^{(1)}+1}^{T^{(2)}} \sum_{j=1}^{J^{(2)}} \sum_{i=1}^{n_{jt}^{(2)}} \left [ \begin{pmatrix} \Bolda_{jt} \\ \BoldLtwolago \end{pmatrix} - \begin{pmatrix} \Bolda_{jt}^{(2)}\\ \BoldLtwo \end{pmatrix}\right ] \epsilon_{ijt}^{(2,n^{(1)})}(\Bolda_{jt})\middle |  \overline{\A}^{(2,n^{(1)})} = \overline{\Bolda}, \overline{\BoldZ} \right ] \cdot f_{\BarA^{(2,n^{(1)})}|\overline{\BoldZ}} (\overline{\Bolda})\,d\overline{\Bolda}\\
    &= \int_{\Bara} \frac{1}{\sqrt{n}}\sum_{t=T^{(1)}+1}^{T^{(2)}} \sum_{j=1}^{J^{(2)}} \sum_{i=1}^{n_{jt}^{(2)}} \left [ \begin{pmatrix} \Bolda_{jt} \\ \BoldLtwolago \end{pmatrix} - \begin{pmatrix} \Bolda_{jt}^{(2)}\\ \BoldLtwo \end{pmatrix}\right ] \E \left [\epsilon_{ijt}^{(2,n^{(1)})}(\Bolda_{jt})\middle |  \overline{\BoldZ}\right ] \cdot f_{\overline{\A}^{(2,n^{(1)})}|\overline{\BoldZ}} (\Bolda) \, d\overline{\Bolda}\\
    &= \int_{\Bara} \frac{1}{\sqrt{n}}\sum_{t=T^{(1)}+1}^{T^{(2)}} \sum_{j=1}^{J^{(2)}} \sum_{i=1}^{n_{jt}^{(2)}} \left [ \begin{pmatrix} \Bolda_{jt} \\ \BoldLtwolago \end{pmatrix} - \begin{pmatrix} \Bolda_{jt}^{(2)}\\ \BoldLtwo \end{pmatrix}\right ] \cdot (0) \cdot f_{\overline{\A}^{(2,n^{(1)})}|\overline{\BoldZ}} (\Bolda) \, d\overline{\Bolda} = 0,
\end{align*}
where the third equality follows from Consistency Assumption \ref{asspConsistency}, the fourth equality follows from Outcome Errors Assumption \ref{epsilonassumption} and Conditional Exchangeability Assumption \ref{asspCondExcha}, and the fifth equality follows from Outcome Errors Assumption \ref{epsilonassumption}. Next, the variance of the first entry of $U_{2\_2,n}$, $U_{2\_1,n,1}$, is 
\begin{align}
    &\Var \left(U_{2\_1,n,1} \mid \overline{\BoldZ} \right) \nonumber= \E \left(U_{2\_1,n,1}^2 \mid \overline{\BoldZ} \right)\\ \nonumber
    &= \E \left [ \left(\frac{1}{\sqrt{n}} \sum_{t=T^{(1)}+1}^{T^{(2)}} \sum_{j=1}^{J^{(2)}} \sum_{i=1}^{n_{jt}^{(2)}} \left(\A_{jt,1}^{(2,n^{(1)})} - \Bolda_{jt,1}^{(2)} \right)\epsilon_{ijt}^{(2,n^{(1)})} \right)^2 \middle | \overline{\BoldZ} \right ]\\ \nonumber
    &= \E \left [\E \left [ \left(\frac{1}{\sqrt{n}} \sum_{t=T^{(1)}+1}^{T^{(2)}} \sum_{j=1}^{J^{(2)}} \sum_{i=1}^{n_{jt}^{(2)}} \left(\A_{jt,1}^{(2,n^{(1)})} - \Bolda_{jt,1}^{(2)} \right)\epsilon_{ijt}^{(2,n^{(1)})} \right)^2 \middle | \BarA^{(2,n^{(1)})}, \overline{\BoldZ}\right ] \middle | \overline{\BoldZ} \right]\\ \nonumber
    &= \int_{\Bara} \E \left [ \left(\frac{1}{\sqrt{n}} \sum_{t=T^{(1)}+1}^{T^{(2)}} \sum_{j=1}^{J^{(2)}} \sum_{i=1}^{n_{jt}^{(2)}} \left(\Bolda_{jt,1} - \Bolda_{jt,1}^{(2)} \right)\epsilon_{ijt}^{(2,n^{(1)})}(\Bolda_{jt}) \right)^2 \middle | \BarA^{(2,n^{(1)})} = \overline{\Bolda}, \overline{\BoldZ} \right ] \cdot f_{\BarA^{(2,n^{(1)})}|\overline{\BoldZ}} (\overline{\Bolda})\,d\overline{\Bolda}\\ \nonumber
    &= \int_{\Bara} \frac{1}{n} \sum_{t=T^{(1)}+1}^{T^{(2)}} \sum_{j=1}^{J^{(2)}} \sum_{i=1}^{n_{jt}^{(2)}} \left(\Bolda_{jt,1} - \Bolda_{jt,1}^{(2)} \right)^2 \E \left [ \left(\epsilon_{ijt}^{(2,n^{(1)})}(\Bolda_{jt}) \right)^2 \middle | \BarA^{(2,n^{(1)})} = \overline{\Bolda},\overline{\BoldZ}\right ] \cdot f_{\BarA^{(2,n^{(1)})}|\overline{\BoldZ}} (\overline{\Bolda})\,d\overline{\Bolda}\\ \nonumber
    &= \int_{\Bara} \frac{1}{n} \sum_{t=T^{(1)}+1}^{T^{(2)}} \sum_{j=1}^{J^{(2)}} \sum_{i=1}^{n_{jt}^{(2)}} \left(\Bolda_{jt,1} - \Bolda_{jt,1}^{(2)} \right)^2 \E \left [ \left(\epsilon_{ijt}^{(2,n^{(1)})}(\Bolda_{jt}) \right)^2 \middle | \overline{\BoldZ}\right ] \cdot f_{\BarA^{(2,n^{(1)})}|\overline{\BoldZ}} (\overline{\Bolda})\,d\overline{\Bolda}\\ \nonumber
    &= \int_{\Bara} \frac{1}{n} \sum_{t=T^{(1)}+1}^{T^{(2)}} \sum_{j=1}^{J^{(2)}} \sum_{i=1}^{n_{jt}^{(2)}} \left(\Bolda_{jt,1} - \Bolda_{jt,1}^{(2)} \right)^2 \sigma^2(\BoldZ_j) \cdot f_{\BarA^{(2,n^{(1)})}|\overline{\BoldZ}} (\overline{\Bolda})\,d\overline{\Bolda}\\ 
    &= \E \left [\frac{1}{n} \sum_{t=T^{(1)}+1}^{T^{(2)}} \sum_{j=1}^{J^{(2)}} \sum_{i=1}^{n_{jt}^{(2)}} \left(\A_{jt,1}^{(2,n^{(1)})} - \Bolda_{jt,1}^{(2)} \right)^2 \cdot \sigma^2 (\BoldZ_j) \middle |  \overline{\BoldZ}\right ],
\end{align}
where the fourth equality follows from Consistency Assumption \ref{asspConsistency}, the sixth equality follows from Conditional Exchangeability Assumption \ref{asspCondExcha}, and the seventh equality follows from Outcome Errors Assumption \ref{epsilonassumption} and Conditional Independence Between Centers Assumption \ref{asspcenterIndep}. Since $\Atwolago \ConvP \Bolda_{jt}^{(2)}$ (Remark \ref{convergenceofA}), because of Compact Space Assumption \ref{asspCompact}, the Continuous Mapping Theorem, and Lebesgue's Dominated Convergence Theorem imply that $\Var\left(U_{2\_1,n,1} \right) \rightarrow 0$. By the same argument, the variance of the other entries of $U_{2\_2,n}$ converges to 0. Thus, $\Var(U_{2\_2,n}) \rightarrow 0$. Hence, by Chebyshev's inequality, $U_{2\_2,n} \ConvP 0$.

In conclusion, it follows that $\sqrt{n}\, U(\B^{*})$ has the same asymptotic distribution as
\[U_{1\_2,n} = \frac{1}{\sqrt{n}} \left [ \sum_{t=1}^{T^{(1)}} \sum_{j=1}^{J^{(1)}} \sum_{i=1}^{n_{jt}^{(1)}} \begin{pmatrix} \Aone\\ \BoldLone \end{pmatrix} \epsilon_{ijt}^{(1)} + \sum_{t=T^{(1)}+1}^{T^{(2)}} \sum_{j=1}^{J^{(2)}} \sum_{i=1}^{n_{jt}^{(2)}} \begin{pmatrix} \Bolda_{jt}^{(2)}\\ \BoldLtwo \end{pmatrix} \epsilon_{ijt}^{(2)} \right ],\]
so that $\sqrt{n}\, U(\B^{*})$ converges to a normal distribution with mean 0 and variance 
\[\sum_{t=1}^{T^{(1)}}\sum_{j=1}^{J^{(1)}} \alpha_{jt}^{(1)}\begin{pmatrix} \A_{jt}^{(1)}\\ \BoldLone \end{pmatrix} \begin{pmatrix} \A_{jt}^{(1)}\\ \BoldLone \end{pmatrix}^T \sigma^2 \left(\BoldZ_j \right) + \sum_{t=T^{(1)}+1}^{T^{(2)}} \sum_{j=1}^{J^{(2)}} \alpha_{jt}^{(2)}\begin{pmatrix} \Bolda_{jt}^{(2)}\\ \BoldLtwo \end{pmatrix} \begin{pmatrix} \Bolda_{jt}^{(2)}\\ \BoldLtwo \end{pmatrix}^T \sigma^2 \left(\BoldZ_j \right).\]

Combining this with equation (\ref{J(B)}) implies that Theorem \ref{AsymptoticTheoremWithLearning} holds.
\end{proof}

\section{Proofs of Theorems \ref{unbiasedEE}, 
\ref{theoremConsistencyWithLearning}, and 
\ref{AsymptoticTheoremWithLearning} for Binary Outcomes}\label{proofs_binary}

From Main Model Assumption~\ref{mainmodel}, the counterfactual 
outcome of participant $i$ in center $j$ with center 
characteristics $\BoldZ_j$ under intervention 
package $\Bolda$ has the following success probability:
\begin{align*}
    \mathrm{logit}\!\bigl(p\!\left(\Bolda; 
    \boldsymbol{\beta}^*, \BoldZ_j\right)\bigr)
    = \begin{pmatrix} \boldsymbol{\beta}^*_A \\ 
    \B^*_{\ell} \end{pmatrix}^{\top} 
    \begin{pmatrix} \Bolda \\ \boldsymbol{\ell}_j^{(k)} 
    \end{pmatrix}.
\end{align*}

The estimating equations used to estimate $\boldsymbol{\beta}$ are (compare with equation~\eqref{eqn:OLS_ignore_ee}):
\begin{align}\label{eqn:binary_ee}
    0 &= U(\boldsymbol{\beta}) \nonumber\\
    &= \frac{1}{n} \left\{ 
        \sum_{t=1}^{T^{(1)}} \sum_{j=1}^{J^{(1)}} \sum_{i=1}^{n_{jt}^{(1)}} 
        \begin{pmatrix} \Aone\\ \BoldLone \end{pmatrix} 
        \left[ Y_{ijt}^{(1)} - 
        \text{expit}\!\left(
        \begin{pmatrix} \boldsymbol{\beta}_A \\ \B_{\ell} \end{pmatrix}^{\top} 
        \begin{pmatrix} \Aone\\ \BoldLone \end{pmatrix} \right) \right] 
        \right. \nonumber\\
    &\left. \quad + 
        \sum_{t=T^{(1)}+1}^{T^{(2)}} \sum_{j=1}^{J^{(2)}} \sum_{i=1}^{n_{jt}^{(2)}} 
        \begin{pmatrix} \Atwolago\\ \BoldLtwolago \end{pmatrix} 
        \left[ Y_{ijt}^{(2,n^{(1)})} - 
        \text{expit}\!\left(
        \begin{pmatrix} \boldsymbol{\beta}_A \\ \B_{\ell} \end{pmatrix}^{\top} 
        \begin{pmatrix} \Atwolago\\ \BoldLtwolago \end{pmatrix} \right) \right] 
    \right\},
\end{align}
where $\text{expit}(x) = e^x / (1 + e^x)$.

Define 
\begin{align}\label{smallu_binary}
    u(\boldsymbol{\beta}) 
    &= \sum_{t=1}^{T^{(1)}} \sum_{j=1}^{J^{(1)}} \alpha_{jt}^{(1)} 
    \begin{pmatrix} \Aone \\ \BoldLone \end{pmatrix}
    \left[
        \mathrm{expit}\!\left(
        \begin{pmatrix} \boldsymbol{\beta}^*_A \\ \B^*_{\ell} 
        \end{pmatrix}^{\top}
        \begin{pmatrix} \Aone \\ \BoldLone \end{pmatrix}
        \right)
        -
        \mathrm{expit}\!\left(
        \begin{pmatrix} \boldsymbol{\beta}_A \\ \B_{\ell} 
        \end{pmatrix}^{\top}
        \begin{pmatrix} \Aone \\ \BoldLone \end{pmatrix}
        \right)
    \right] \nonumber \\
    &+ \sum_{t=T^{(1)}+1}^{T^{(2)}} \sum_{j=1}^{J^{(2)}} \alpha_{jt}^{(2)} 
    \begin{pmatrix} \Bolda_{jt}^{(2)} \\ \BoldLtwo \end{pmatrix}
    \left[
        \mathrm{expit}\!\left(
        \begin{pmatrix} \boldsymbol{\beta}^*_A \\ \B^*_{\ell} 
        \end{pmatrix}^{\top}
        \begin{pmatrix} \Bolda_{jt}^{(2)} \\ \BoldLtwo \end{pmatrix}
        \right)
        -
        \mathrm{expit}\!\left(
        \begin{pmatrix} \boldsymbol{\beta}_A \\ \B_{\ell} 
        \end{pmatrix}^{\top}
        \begin{pmatrix} \Bolda_{jt}^{(2)} \\ \BoldLtwo \end{pmatrix}
        \right)
    \right].
\end{align}

While Unbiased Estimating Equations and Consistency Theorems~\ref{unbiasedEE} and \ref{theoremConsistencyWithLearning} for binary outcomes remain the same as those for continuous outcomes, Asymptotic Normality Theorem~\ref{AsymptoticTheoremWithLearning} needs to be updated as follows.

\begin{theorem}[Asymptotic Normality] \label{AsymptoticTheoremBinary}
Under Assumptions \ref{ratioConvergence}-\ref{asspCompact}, 
conditional on the fixed center characteristics 
$\overline{\BoldZ}$,
\[
    \sqrt{n}(\widehat{\boldsymbol{\beta}} - 
    \boldsymbol{\beta}^{*}) \overset{\mathcal{D}}{\rightarrow} 
    \mathcal{N}\!\left(0,\, J^{*-1}\right),
\]
where
\begin{align*}
    J^{*} 
    &= \sum_{t=1}^{T^{(1)}} \sum_{j=1}^{J^{(1)}} \alpha_{jt}^{(1)} 
    \begin{pmatrix} \A_{jt}^{(1)}\\ \BoldLone \end{pmatrix}
    \begin{pmatrix} \A_{jt}^{(1)}\\ \BoldLone \end{pmatrix}^{\top}
    p\!\left(\A_{jt}^{(1)}; \boldsymbol{\beta}^*, \right)
    \left(1 - p\!\left(\A_{jt}^{(1)}; 
    \boldsymbol{\beta}^*\right)\right) \\
    &+ \sum_{t=T^{(1)}+1}^{T^{(2)}} \sum_{j=1}^{J^{(2)}} \alpha_{jt}^{(2)} 
    \begin{pmatrix} \Bolda_{jt}^{(2)}\\ \BoldLtwo \end{pmatrix}
    \begin{pmatrix} \Bolda_{jt}^{(2)}\\ \BoldLtwo \end{pmatrix}^{\top}
    p\!\left(\Bolda_{jt}^{(2)}; \boldsymbol{\beta}^*\right)
    \left(1 - p\!\left(\Bolda_{jt}^{(2)}; 
    \boldsymbol{\beta}^*\right)\right),
\end{align*}
with $p\!\left(\Bolda; \boldsymbol{\beta}^*\right) 
= \mathrm{expit}\!\left(\begin{pmatrix} 
\boldsymbol{\beta}^*_A \\ \B^*_{\ell} \end{pmatrix}^{\top} 
\begin{pmatrix} \Bolda \\ \boldsymbol{\ell}_j \end{pmatrix}
\right)$. 
\end{theorem}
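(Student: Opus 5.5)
We plan to follow the structure of the proof of Theorem~\ref{AsymptoticTheoremWithLearning}, replacing the linear mean by the expit mean. By Theorem~\ref{theoremConsistencyWithLearning} (which carries over to the binary case), $\widehat{\B}\ConvP\B^*$. Applying the Mean Value Theorem row by row to $0=U(\widehat{\B})$ gives
\[
\sqrt{n}(\widehat{\B}-\B^*)=\Bigl[\tilde J(\widetilde{\B})\Bigr]^{-1}\sqrt{n}\,U(\B^*),
\]
where
\[
\tilde J(\B)=\frac1n\sum_{k,t,j,i}\begin{pmatrix}\A_{jt}^{(k)}\\ \boldsymbol{\ell}_j^{(k)}\end{pmatrix}\begin{pmatrix}\A_{jt}^{(k)}\\ \boldsymbol{\ell}_j^{(k)}\end{pmatrix}^{\top}\,p(1-p)\Bigl(\B^{\top}(\A_{jt}^{(k)},\boldsymbol{\ell}_j^{(k)})\Bigr),
\]
with stage-2 interventions $\A_{jt}^{(2,n^{(1)})}$. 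Note that $\tilde J$ depends on $\B$ only through the continuous, bounded function $p(1-p)$. We then use four ingredients. Assumption~\ref{ratioConvergence} gives $n_{jt}^{(k)}/n\to\alpha_{jt}^{(k)}$. Remark~\ref{convergenceofA} gives $\A_{jt}^{(2,n^{(1)})}\ConvP\Bolda_{jt}^{(2)}$. Consistency gives $\widetilde{\B}\ConvP\B^*$, and compactness (Assumption~\ref{asspCompact}) holds throughout. Together with the Continuous Mapping Theorem, these yield $\tilde J(\widetilde{\B})\ConvP J^*$ as displayed, since there are finitely many $(j,t)$ summands. $J^*$ is invertible in the absence of multicollinearity in the limit, because $p(1-p)>0$ on compacts.

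Next, we decompose $\sqrt{n}\,U(\B^*)=U_{1\_2,n}+U_{2\_2,n}$ exactly as in the linear proof, with $\epsilon^{(k)}_{ijt}=Y^{(k)}_{ijt}-p(\A_{jt}^{(k)};\B^*)$. The difference from the linear case is that the stage-2 residual distribution now depends on the actual intervention: given $\A_{jt}^{(2,n^{(1)})}=\Bolda$, the outcome is Bernoulli$(p(\Bolda;\B^*))$. So we cannot simply swap errors by Assumption~\ref{epsilonassumption}. Following \citet{nevo2021analysis}, we will use coupling \citep{lindvall2002lectures}. Conditional on $\overline{\BoldZ}$ and the stage-2 interventions, we generate $U_{ijt}\sim\mathrm{Unif}(0,1)$ independently, set $Y^{(2,n^{(1)})}_{ijt}=\mathbbm{1}[U_{ijt}\le p(\A_{jt}^{(2,n^{(1)})};\B^*)]$, and set the limiting-design outcome $Y^{(2)}_{ijt}=\mathbbm{1}[U_{ijt}\le p(\Bolda^{(2)}_{jt};\B^*)]$. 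This preserves the joint law of the data by Assumptions~\ref{asspMinh} and~\ref{asspAnte}.

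We write $\sqrt{n}\,U(\B^*)=\widetilde U_n+R_n$. Here $\widetilde U_n$ is the estimating function under the fixed limiting design: stage-1 terms plus $n^{-1/2}\sum(\Bolda^{(2)}_{jt},\boldsymbol{\ell}^{(2)}_j)(Y^{(2)}_{ijt}-p(\Bolda^{(2)}_{jt}))$. For $\widetilde U_n$, the summands are independent Bernoulli residuals with bounded moments. The Cramér–Wold device and the Lyapunov CLT, verified as in (\ref{lyapunov}) using $|\epsilon|\le1$, give $\widetilde U_n\ConvD\mathcal N(0,J^*)$. This holds because the variance $p(1-p)$ coincides with the derivative weight, so $V=J^*$ and the sandwich collapses to $J^{*-1}$.

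The main obstacle is showing $R_n\ConvP0$. The remainder $R_n$ is a sum of terms of the form $n^{-1/2}\sum[(\A,\boldsymbol{\ell})-(\Bolda^{(2)},\boldsymbol{\ell})]\epsilon^{(2,n^{(1)})}$ plus $n^{-1/2}\sum(\Bolda^{(2)},\boldsymbol{\ell})\{[Y^{(2,n^{(1)})}-p(\A)]-[Y^{(2)}-p(\Bolda^{(2)})]\}$. The first piece has conditional mean zero, and its second moment is bounded by $\E\max_{jt}\|\A_{jt}^{(2,n^{(1)})}-\Bolda_{jt}^{(2)}\|^2\to0$ via dominated convergence, as in the linear proof. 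For the second piece, conditional on the stage-2 interventions, each summand has mean zero, since $\E[Y^{(2,n^{(1)})}-Y^{(2)}\mid\A]=p(\A)-p(\Bolda^{(2)})$, and the summands are independent across $i$. Hence its conditional variance is at most
\[
\frac1n\sum\|(\Bolda^{(2)},\boldsymbol{\ell})\|^2\,\bigl|p(\A_{jt}^{(2,n^{(1)})})-p(\Bolda^{(2)}_{jt})\bigr|,
\]
using the fact that under the coupling $P(Y^{(2,n^{(1)})}\ne Y^{(2)}\mid\A)=|p(\A)-p(\Bolda^{(2)})|$. This bound tends to zero in expectation by continuity of expit, Remark~\ref{convergenceofA}, and dominated convergence. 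Chebyshev's inequality then gives $R_n\ConvP0$. Slutsky's theorem finally yields $\sqrt n(\widehat{\B}-\B^*)\ConvD\mathcal N(0,J^{*-1}J^*J^{*-1})=\mathcal N(0,J^{*-1})$.
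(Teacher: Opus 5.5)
Your proposal is correct and follows essentially the same route as the paper. Both arguments use a row-wise mean value expansion with $\tilde J(\widetilde{\B})\ConvP J^{*}$, a coupling of the stage~2 outcomes with outcomes under the limiting interventions $\Bolda^{(2)}_{jt}$, the Cram\'er--Wold device with the Lyapunov CLT (using $|\epsilon|\le 1$, and $V=J^{*}$ under the logit link), and Slutsky's theorem. You go further only at the coupling remainder: your conditional second-moment bound via $P(Y^{(2,n^{(1)})}_{ijt}\neq Y^{(2)}_{ijt}\mid \A)=|p(\A)-p(\Bolda^{(2)})|$ makes explicit the step the paper delegates to Nevo et al.
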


\subsection{Proof of Unbiased Estimating Equations Theorem \ref{unbiasedEE}}\label{UnbiasedEETheorem_Binary}

\begin{proof}
Following the same steps as for continuous outcomes modeled with linear regression in Online Appendix \ref{UnbiasedEEProofAppendix}, we show that $\E[U(\boldsymbol{\beta}^{*}) \mid \overline{\BoldZ}] 
= 0$ by showing that, conditional on $\overline{\BoldZ}$, the 
expectation of each term in equation~\eqref{eqn:binary_ee} 
equals zero at $\B = \B^{*}$.

For stage 1 in equation~\eqref{eqn:binary_ee}, by the Law of Total Expectations, conditioning first on $\BarA^{(1)}$ and 
$\overline{\BoldZ}$,
\begin{align*}
    &\E\left[
        \begin{pmatrix} \Aone \\ \BoldLone \end{pmatrix}
        \left(Y_{ijt}^{(1)} - 
        \mathrm{expit}\!\left(
        \begin{pmatrix} \boldsymbol{\beta}^*_A \\ \B^*_{\ell} 
        \end{pmatrix}^{\top}
        \begin{pmatrix} \Aone \\ \BoldLone \end{pmatrix}
        \right)\right)
        \;\middle|\; \overline{\BoldZ}
    \right] \nonumber \\
    = &\, \E\left[
        \begin{pmatrix} \Aone \\ \BoldLone \end{pmatrix}
        \E\!\left[
            Y_{ijt}^{(1)} - 
            \mathrm{expit}\!\left(
            \begin{pmatrix} \boldsymbol{\beta}^*_A \\ \B^*_{\ell} 
            \end{pmatrix}^{\top}
            \begin{pmatrix} \Aone \\ \BoldLone \end{pmatrix}
            \right)
        \;\middle|\; \BarA^{(1)}, \overline{\BoldZ}
        \right]
        \;\middle|\; \overline{\BoldZ}
    \right] = 0,
\end{align*}
where the inner conditional expectation equals zero because of Conditional Exchangeability Assumption~\ref{asspCondExcha} and Main Model Assumption~\ref{mainmodel}.

For stage 2 in equation~\eqref{eqn:binary_ee}, the argument is analogous, 
with the additional step of conditioning on 
$\X^{(2,n^{(1)})}$ to account for the fact that the stage 2 
recommended intervention $\X^{(2,n^(1))}$ depends on stage 1 outcomes. By the 
Law of Total Expectations,
\begin{align*}
    &\E\left[
        \begin{pmatrix} \Atwolago \\ \BoldLtwolago \end{pmatrix}
        \left(Y_{ijt}^{(2,n^{(1)})} - 
        \mathrm{expit}\!\left(
        \begin{pmatrix} \boldsymbol{\beta}^*_A \\ \B^*_{\ell} 
        \end{pmatrix}^{\top}
        \begin{pmatrix} \Atwolago \\ \BoldLtwolago \end{pmatrix}
        \right)\right)
        \;\middle|\; \overline{\BoldZ}
    \right] \nonumber \\
    &\quad= \E\left[
        \begin{pmatrix} \Atwolago \\ \BoldLtwolago \end{pmatrix}
        \E\!\left[
            Y_{ijt}^{(2,n^{(1)})} - 
            \mathrm{expit}\!\left(
            \begin{pmatrix} \boldsymbol{\beta}^*_A \\ \B^*_{\ell} 
            \end{pmatrix}^{\top}
            \begin{pmatrix} \Atwolago \\ \BoldLtwolago \end{pmatrix}
            \right)
        \;\middle|\; \BarA^{(2,n^{(1)})}, \X^{(2,n^{(1)})}, \overline{\BoldZ}
        \right]
        \;\middle|\; \overline{\BoldZ}
    \right] = 0,
\end{align*}
where the inner conditional expectation equals zero because of Conditional Exchangeability Assumption~\ref{asspCondExcha} and Main Model Assumption~\ref{mainmodel}. Since the expectation of each term for each stage equals zero, it follows that $\E[U(\boldsymbol{\beta}^{*}) \mid 
\overline{\BoldZ}] = 0$.
\end{proof}

\subsection{Proof of Consistency of \texorpdfstring{$\widehat{\B}$}{Beta-hat} Theorem \ref{theoremConsistencyWithLearning}}\label{ConsistencyTheorem_Binary}
\begin{proof}
We follow the same steps as for continuous outcomes modeled with linear regression in Online Appendix \ref{ConsistencyProofAppendix}. From equations~\eqref{eqn:binary_ee} and \eqref{smallu_binary}, 
it follows that $U(\boldsymbol{\beta}) - u(\boldsymbol{\beta})$ 
can be decomposed as
\begin{align}
    U(\boldsymbol{\beta}) - u(\boldsymbol{\beta}) 
    = U(\boldsymbol{\beta}^*) + G_1 + G_2 + G_3 + G_4,
\end{align}
where
\begin{align}
    G_1 &= \sum_{t=1}^{T^{(1)}} \sum_{j=1}^{J^{(1)}} 
    \left(\frac{n_{jt}^{(1)}}{n} - \alpha_{jt}^{(1)}\right)
    \begin{pmatrix} \Aone \\ \BoldLone \end{pmatrix} \nonumber \\
    &\times
    \left[
        \mathrm{expit}\!\left(
        \begin{pmatrix} \boldsymbol{\beta}^*_A \\ \B^*_{\ell} 
        \end{pmatrix}^{\top}
        \begin{pmatrix} \Aone \\ \BoldLone \end{pmatrix}
        \right)
        -
        \mathrm{expit}\!\left(
        \begin{pmatrix} \boldsymbol{\beta}_A \\ \B_{\ell} 
        \end{pmatrix}^{\top}
        \begin{pmatrix} \Aone \\ \BoldLone \end{pmatrix}
        \right)
    \right], \label{G1_binary} \\
    G_2 &= \sum_{t=T^{(1)}+1}^{T^{(2)}} \sum_{j=1}^{J^{(2)}} 
    \left(\frac{n_{jt}^{(2)}}{n} - \alpha_{jt}^{(2)}\right)
    \begin{pmatrix} \Atwolago \\ \BoldLtwolago \end{pmatrix} \nonumber \\
    &\times
    \left[
        \mathrm{expit}\!\left(
        \begin{pmatrix} \boldsymbol{\beta}^*_A \\ \B^*_{\ell} 
        \end{pmatrix}^{\top}
        \begin{pmatrix} \Atwolago \\ \BoldLtwolago \end{pmatrix}
        \right)
        -
        \mathrm{expit}\!\left(
        \begin{pmatrix} \boldsymbol{\beta}_A \\ \B_{\ell} 
        \end{pmatrix}^{\top}
        \begin{pmatrix} \Atwolago \\ \BoldLtwolago \end{pmatrix}
        \right)
    \right], \label{G2_binary} \\
    G_3 &= \sum_{t=T^{(1)}+1}^{T^{(2)}} \sum_{j=1}^{J^{(2)}} \alpha_{jt}^{(2)}
    \left[
        \begin{pmatrix} \Atwolago \\ \BoldLtwolago \end{pmatrix}
        \mathrm{expit}\!\left(
        \begin{pmatrix} \boldsymbol{\beta}^*_A \\ \B^*_{\ell} 
        \end{pmatrix}^{\top}
        \begin{pmatrix} \Atwolago \\ \BoldLtwolago \end{pmatrix}
        \right) \right. \nonumber \\
    &\left.
        -
        \begin{pmatrix} \Bolda_{jt}^{(2)} \\ \BoldLtwo \end{pmatrix}
        \mathrm{expit}\!\left(
        \begin{pmatrix} \boldsymbol{\beta}^*_A \\ \B^*_{\ell} 
        \end{pmatrix}^{\top}
        \begin{pmatrix} \Bolda_{jt}^{(2)} \\ \BoldLtwo \end{pmatrix}
        \right)
    \right], \label{G3_binary} \\
    G_4 &= \sum_{t=T^{(1)}+1}^{T^{(2)}} \sum_{j=1}^{J^{(2)}} \alpha_{jt}^{(2)}
    \left[
        \begin{pmatrix} \Bolda_{jt}^{(2)} \\ \BoldLtwo \end{pmatrix}
        \mathrm{expit}\!\left(
        \begin{pmatrix} \boldsymbol{\beta}_A \\ \B_{\ell} 
        \end{pmatrix}^{\top}
        \begin{pmatrix} \Bolda_{jt}^{(2)} \\ \BoldLtwo \end{pmatrix}
        \right) \right. \nonumber \\
    &\left.
        -
        \begin{pmatrix} \Atwolago \\ \BoldLtwolago \end{pmatrix}
        \mathrm{expit}\!\left(
        \begin{pmatrix} \boldsymbol{\beta}_A \\ \B_{\ell} 
        \end{pmatrix}^{\top}
        \begin{pmatrix} \Atwolago \\ \BoldLtwolago \end{pmatrix}
        \right)
    \right], \label{G4_binary} 
\end{align}
Regarding $U(\boldsymbol{\beta}^*)$, the proof that 
$U(\boldsymbol{\beta}^*) \ConvP 0$ follows identically 
to the continuous outcomes modeled with linear regression in 
Online Appendix~\ref{ConsistencyProofAppendix}, replacing 
the continuous error terms $\epsilon_{ijt}^{(k)}$ with 
the binary error terms $Y_{ijt}^{(k)} - 
\mathrm{expit}\!\left(\boldsymbol{\beta}^{*\top}
\begin{pmatrix} \A_{jt}^{(k)} \\ \ell_j^{(k)} 
\end{pmatrix}\right)$, which satisfy 
$\E \left [Y_{ijt}^{(k)} - \mathrm{expit}\!\left(\boldsymbol{\beta}^{*\top}
\begin{pmatrix} \A_{jt}^{(k)} \\ \ell_j^{(k)} 
\end{pmatrix} \right)\middle | 
\A_{jt}^{(k)}, \overline{\BoldZ} \right] = 0$. The variance of 
the binary error terms is
\[
    \mathrm{Var}\!\left(Y_{ijt}^{(k)} \mid \A_{jt}^{(k)}, 
    \overline{\BoldZ}\right) = 
    p_{ij}^{(k)}\left(1 - p_{ij}^{(k)}\right),
\]
where $p_{ij}^{(k)} = \mathrm{expit}\!\left(
\boldsymbol{\beta}^{*\top}\begin{pmatrix} \A_{jt}^{(k)} \\ 
\ell_j^{(k)} \end{pmatrix}\right)$. Since 
$p_{ij}^{(k)}\left(1 - p_{ij}^{(k)}\right) \leq 1$ 
for all $p_{ij}^{(k)} \in (0,1)$, the variance is 
bounded, and the Chebyshev argument goes through 
unchanged, so $U(\boldsymbol{\beta}^*) \ConvP 0$.

Similar to the proof of $G_1$ and $G_2$ in the continuous outcomes modeled with linear regression in Online Appendix \ref{ConsistencyProofAppendix}, we prove that each of the finitely many $J^{(2)}$ 
summands of $G_2$ converges uniformly to 0. Consider any 
summand of $G_2$,
\begin{align}\label{summandG2_binary}
    \left(\frac{n_{jt}^{(2)}}{n} - \alpha_{jt}^{(2)}\right)
    \begin{pmatrix} \Atwolago\\ \BoldLtwolago \end{pmatrix}
    \left[
        \mathrm{expit}\!\left(
        \begin{pmatrix} \boldsymbol{\beta}^*_A \\ \B^*_{\ell} 
        \end{pmatrix}^{\top}
        \begin{pmatrix} \Atwolago \\ \BoldLtwolago \end{pmatrix}
        \right)
        -
        \mathrm{expit}\!\left(
        \begin{pmatrix} \boldsymbol{\beta}_A \\ \B_{\ell} 
        \end{pmatrix}^{\top}
        \begin{pmatrix} \Atwolago \\ \BoldLtwolago \end{pmatrix}
        \right)
    \right].
\end{align}
Following from the Compact Space Assumption~\ref{asspCompact}, 
$|\A_{j,p}^{(2,n^{(1)})}| \leq \mathcal{C}_A \; \forall j$, the difference 
of two expit terms is bounded by 1 uniformly in 
$\boldsymbol{\beta} \in \mathcal{B}$ such that
\begin{align*}
    \left|
        \mathrm{expit}\!\left(
        \begin{pmatrix} \boldsymbol{\beta}^*_A \\ \B^*_{\ell} 
        \end{pmatrix}^{\top}
        \begin{pmatrix} \Atwolago \\ \BoldLtwolago \end{pmatrix}
        \right)
        -
        \mathrm{expit}\!\left(
        \begin{pmatrix} \boldsymbol{\beta}_A \\ \B_{\ell} 
        \end{pmatrix}^{\top}
        \begin{pmatrix} \Atwolago \\ \BoldLtwolago \end{pmatrix}
        \right)
    \right| \leq 1.
\end{align*}
It follows that for the first entry of the summand in 
equation~\eqref{summandG2_binary},
\begin{align*}
    &\sup_{\boldsymbol{\beta} \in \mathcal{B}} 
    \left\| 
    \left(\frac{n_{jt}^{(2)}}{n} - \alpha_{jt}^{(2)}\right)
    \A_{j,1}^{(2,n^{(1)})}
    \left[
        \mathrm{expit}\!\left(
        \begin{pmatrix} \boldsymbol{\beta}^*_A \\ \B^*_{\ell} 
        \end{pmatrix}^{\top}
        \begin{pmatrix} \Atwolago \\ \BoldLtwolago \end{pmatrix}
        \right)
        -
        \mathrm{expit}\!\left(
        \begin{pmatrix} \boldsymbol{\beta}_A \\ \B_{\ell} 
        \end{pmatrix}^{\top}
        \begin{pmatrix} \Atwolago \\ \BoldLtwolago \end{pmatrix}
        \right)
    \right]
    \right\| \\
    &\leq \left|\frac{n_{jt}^{(2)}}{n} - \alpha_{jt}^{(2)}\right| 
    \cdot \mathcal{C}_A \cdot 1,
\end{align*}
which converges to 0 as $n \rightarrow \infty$ because of Assumption~\ref{ratioConvergence}. Similarly, the supremum of 
the other entries of the summand of $G_2$ in 
equation~\eqref{summandG2_binary} converges uniformly to 0 
as $n \rightarrow \infty$. It follows that the supremum of 
every summand of $G_2$ converges uniformly to 0. Therefore, 
because $J^{(2)}$ is fixed, 
$\sup_{\boldsymbol{\beta} \in \mathcal{B}}\|G_2\| 
\rightarrow 0$. Similarly, 
$\sup_{\boldsymbol{\beta} \in \mathcal{B}}\|G_1\| 
\rightarrow 0$.

Next, consider any summand of $G_3$. Since 
$\Atwolago \ConvP \Bolda_{jt}^{(2)}$ 
(Remark~\ref{convergenceofA}), by the Continuous Mapping 
Theorem, it follows that
\begin{align*}
    \begin{pmatrix} \Atwolago \\ \BoldLtwolago \end{pmatrix}
    \mathrm{expit}\!\left(
    \begin{pmatrix} \boldsymbol{\beta}^*_A \\ \B^*_{\ell} 
    \end{pmatrix}^{\top}
    \begin{pmatrix} \Atwolago \\ \BoldLtwolago \end{pmatrix}
    \right)
    \ConvP
    \begin{pmatrix} \Bolda_{jt}^{(2)} \\ \BoldLtwo \end{pmatrix}
    \mathrm{expit}\!\left(
    \begin{pmatrix} \boldsymbol{\beta}^*_A \\ \B^*_{\ell} 
    \end{pmatrix}^{\top}
    \begin{pmatrix} \Bolda_{jt}^{(2)} \\ \BoldLtwo \end{pmatrix}
    \right).
\end{align*}
Thus,
\begin{align*}
    \sup_{\boldsymbol{\beta} \in \mathcal{B}}
    \left\|
    \alpha_{jt}^{(2)}
    \left[
        \begin{pmatrix} \Atwolago \\ \BoldLtwolago \end{pmatrix}
        \mathrm{expit}\!\left(
        \begin{pmatrix} \boldsymbol{\beta}^*_A \\ \B^*_{\ell} 
        \end{pmatrix}^{\top}
        \begin{pmatrix} \Atwolago \\ \BoldLtwolago \end{pmatrix}
        \right)
        -
        \begin{pmatrix} \Bolda_{jt}^{(2)} \\ \BoldLtwo \end{pmatrix}
        \mathrm{expit}\!\left(
        \begin{pmatrix} \boldsymbol{\beta}^*_A \\ \B^*_{\ell} 
        \end{pmatrix}^{\top}
        \begin{pmatrix} \Bolda_{jt}^{(2)} \\ \BoldLtwo \end{pmatrix}
        \right)
    \right]
    \right\| \ConvP 0.
\end{align*}
Hence, because $J^{(2)}$ is fixed, 
$\sup_{\boldsymbol{\beta} \in \mathcal{B}}\|G_3\| \ConvP 0$.

Next, consider any summand of $G_4$. By the Mean Value 
Theorem, there exists $\widetilde{\A}_j^{(2)}$ 
on the line segment between $\A_{jt}^{(2,n^{(1)})}$ and 
$\Bolda_{jt}^{(2)}$ such that
\begin{align*}
    &\mathrm{expit}\!\left(
    \begin{pmatrix} \boldsymbol{\beta}_A \\ \B_{\ell} 
    \end{pmatrix}^{\top}
    \begin{pmatrix} \Bolda_{jt}^{(2)} \\ \BoldLtwo \end{pmatrix}
    \right)
    -
    \mathrm{expit}\!\left(
    \begin{pmatrix} \boldsymbol{\beta}_A \\ \B_{\ell} 
    \end{pmatrix}^{\top}
    \begin{pmatrix} \Atwolago \\ \BoldLtwolago \end{pmatrix}
    \right) \\
    &= \left.\frac{\partial}{\partial \Bolda}\right|_{\widetilde{\A}_j^{(2)}}
    \mathrm{expit}\!\left(
    \begin{pmatrix} \boldsymbol{\beta}_A \\ \B_{\ell} 
    \end{pmatrix}^{\top}
    \begin{pmatrix} \Bolda \\ \BoldLtwo \end{pmatrix}
    \right)
    \left(\Bolda_{jt}^{(2)} - \A_{jt}^{(2,n^{(1)})}\right). \numberthis \label{deriveG4}
\end{align*}
Since $|\boldsymbol{\beta}_{A}| \leq 
\mathcal{C}_{\boldsymbol{\beta}}$ by Compact Space
Assumption~\ref{asspCompact}, the derivative in equation~\eqref{deriveG4} is 
bounded by some finite constant $M < \infty$ uniformly 
in $\boldsymbol{\beta} \in \mathcal{B}$. Therefore, for the first 
entry of the summand of $G_4$,
\begin{align*}
    &\sup_{\boldsymbol{\beta} \in \mathcal{B}}
    \left\|
    \alpha_{jt}^{(2)}
    \left[
        \Bolda_{j,1}^{(2)}
        \,\mathrm{expit}\!\left(
        \begin{pmatrix} \boldsymbol{\beta}_A \\ \B_{\ell} 
        \end{pmatrix}^{\top}
        \begin{pmatrix} \Bolda_{jt}^{(2)} \\ \BoldLtwo \end{pmatrix}
        \right)
        -
        \A_{j,1}^{(2,n^{(1)})}
        \,\mathrm{expit}\!\left(
        \begin{pmatrix} \boldsymbol{\beta}_A \\ \B_{\ell} 
        \end{pmatrix}^{\top}
        \begin{pmatrix} \Atwolago \\ \BoldLtwolago \end{pmatrix}
        \right)
    \right]
    \right\| \\
    &\leq \alpha_{jt}^{(2)} \cdot M \cdot 
    \left\|\Bolda_{jt}^{(2)} - 
    \A_{jt}^{(2,n^{(1)})}\right\| \ConvP 0,
\end{align*}
where the convergence to zero follows from $ \A_{jt}^{(2,n^{(1)})} \ConvP \Bolda_{jt}^{(2)}$ (Remark~\ref{convergenceofA}). Similarly, the supremum of the other entries 
of the summand of $G_4$ converges uniformly to 0. 
Therefore, because $J^{(2)}$ is fixed, 
$\sup_{\boldsymbol{\beta} \in \mathcal{B}}\|G_4\| 
\ConvP 0$.

Thus, that finishes the proof of Theorem~\ref{theoremConsistencyWithLearning} for binary outcomes modeled with logistic regression.
\end{proof}

\subsection{Proof of Asymptotic Normality of \texorpdfstring{$\widehat{\B}$}{Beta-hat} Theorem \ref{AsymptoticTheoremBinary}}\label{AsymptoticNormalityTheorem_Binary}

\begin{proof}
We follow the same steps as for continuous outcomes modeled with linear regression in Online Appendix \ref{AsymptoticProofAppendix}, combined with the proofs in \citet{nevo2021analysis} for binary outcomes in the absence of confounding by indication. By the Mean Value Theorem applied to each component of $U(\boldsymbol{\beta})$ as in equation~\eqref{MVT_Asymptotic_Proof}, there exists 
$\widetilde{\boldsymbol{\beta}}$ on the line segment 
between $\widehat{\boldsymbol{\beta}}$ and 
$\boldsymbol{\beta}^*$ such that equation~\eqref{MVT_Asymptotic_Proof} holds. Define
\begin{align*}
    J(\B) &= -\frac{\partial}{\partial \boldsymbol{\beta}} U(\B) \\
    &= \frac{1}{n} \left\{
        \sum_{t=1}^{T^{(1)}} \sum_{j=1}^{J^{(1)}} \sum_{i=1}^{n_{jt}^{(1)}} 
        \begin{pmatrix} \Aone\\ \BoldLone \end{pmatrix}
        \begin{pmatrix} \Aone\\ \BoldLone \end{pmatrix}^T \right. \\
    &\qquad \times \left.
        \mathrm{expit}\!\left(
        \begin{pmatrix} \B_A \\ \B_{\ell} 
        \end{pmatrix}^{\top}
        \begin{pmatrix} \Aone \\ \BoldLone \end{pmatrix}
        \right)
        \left(1 - \mathrm{expit}\!\left(
        \begin{pmatrix} \B_A \\ \B_{\ell} 
        \end{pmatrix}^{\top}
        \begin{pmatrix} \Aone \\ \BoldLone \end{pmatrix}
        \right)\right) \right.\\
    &\quad \left. + 
        \sum_{t=T^{(1)}+1}^{T^{(2)}} \sum_{j=1}^{J^{(2)}} \sum_{i=1}^{n_{jt}^{(2)}} 
        \begin{pmatrix} \Atwolago\\ \BoldLtwolago \end{pmatrix}
        \begin{pmatrix} \Atwolago\\ \BoldLtwolago \end{pmatrix}^T \right. \\
    &\qquad \times \left.
        \mathrm{expit}\!\left(
        \begin{pmatrix} \B_A \\ \B_{\ell} 
        \end{pmatrix}^{\top}
        \begin{pmatrix} \Atwolago \\ \BoldLtwolago \end{pmatrix}
        \right)
        \left(1 - \mathrm{expit}\!\left(
        \begin{pmatrix} \B_A \\ \B_{\ell} 
        \end{pmatrix}^{\top}
        \begin{pmatrix} \Atwolago \\ \BoldLtwolago \end{pmatrix}
        \right)\right)
    \right\}.
\end{align*}
Since $\widetilde{\B}$ is on the line segment between $\widehat{\B}$ and $\B^*$ and $\widehat{\B} \ConvP \B^*$ (Consistency Theorem \ref{theoremConsistencyWithLearning}), $\Atwolago \ConvP \Bolda_{jt}^{(2)}$ (Remark~\ref{convergenceofA}), and the Continuous Mapping Theorem, 
\[
    J(\widetilde{\boldsymbol{\beta}}) 
    \ConvP J^{*},
\]
where
\begin{align}
    J^{*} &= 
        \sum_{t=1}^{T^{(1)}} \sum_{j=1}^{J^{(1)}} \alpha_{jt}^{(1)}  
        \begin{pmatrix} \A_{jt}^{(1)}\\ \BoldLone \end{pmatrix} 
        \begin{pmatrix} \A_{jt}^{(1)}\\ \BoldLone \end{pmatrix}^T \nonumber \\
    &\qquad \times
        \mathrm{expit}\!\left(
        \begin{pmatrix} \boldsymbol{\beta}^*_A \\ \B^*_{\ell} 
        \end{pmatrix}^{\top}
        \begin{pmatrix} \A_{jt}^{(1)} \\ \BoldLone \end{pmatrix}
        \right)
        \left(1 - \mathrm{expit}\!\left(
        \begin{pmatrix} \boldsymbol{\beta}^*_A \\ \B^*_{\ell} 
        \end{pmatrix}^{\top}
        \begin{pmatrix} \A_{jt}^{(1)} \\ \BoldLone \end{pmatrix}
        \right)\right) \nonumber \\
    &\quad + \sum_{t=T^{(1)}+1}^{T^{(2)}} \sum_{j=1}^{J^{(2)}} \alpha_{jt}^{(2)}  
        \begin{pmatrix} \Bolda_{jt}^{(2)}\\ \BoldLtwo \end{pmatrix} 
        \begin{pmatrix} \Bolda_{jt}^{(2)}\\ \BoldLtwo \end{pmatrix}^T \nonumber \\
    &\qquad \times
        \mathrm{expit}\!\left(
        \begin{pmatrix} \boldsymbol{\beta}^*_A \\ \B^*_{\ell} 
        \end{pmatrix}^{\top}
        \begin{pmatrix} \Bolda_{jt}^{(2)} \\ \BoldLtwo \end{pmatrix}
        \right)
        \left(1 - \mathrm{expit}\!\left(
        \begin{pmatrix} \boldsymbol{\beta}^*_A \\ \B^*_{\ell} 
        \end{pmatrix}^{\top}
        \begin{pmatrix} \Bolda_{jt}^{(2)} \\ \BoldLtwo \end{pmatrix}
        \right)\right).
        \label{J_binary}
\end{align}
From Slutsky's Theorem applied to equation~\eqref{MVT_Asymptotic_Proof}, the asymptotic 
distribution of $\sqrt{n}(\widehat{\boldsymbol{\beta}} 
-\boldsymbol{\beta}^{*})$ is the same as the asymptotic 
distribution of $\sqrt{n} \;J^{*-1} \; U(\boldsymbol{\beta}^{*})$.

It remains to show that $\sqrt{n}\,U(\boldsymbol{\beta}^*)$ 
converges in distribution to a normal distribution. The 
proof follows the coupling argument detailed in \citet{lindvall2002lectures} that was also used in \citet{nevo2021analysis}. Observe that $\sqrt{n}\,U(\boldsymbol{\beta}^*)$ 
can be written as
\begin{align} \label{squarerootU}
    &\sqrt{n}\,U(\boldsymbol{\beta}^*) \nonumber\\
    &= \frac{1}{\sqrt{n}} \left\{
        \sum_{t=1}^{T^{(1)}} \sum_{j=1}^{J^{(1)}} \sum_{i=1}^{n_{jt}^{(1)}} 
        \begin{pmatrix} \Aone\\ \BoldLone \end{pmatrix} 
        \left[Y_{ijt}^{(1)} - 
        \mathrm{expit}\!\left(
        \begin{pmatrix} \boldsymbol{\beta}^*_A \\ \B^*_{\ell} 
        \end{pmatrix}^{\top}
        \begin{pmatrix} \Aone \\ \BoldLone \end{pmatrix}
        \right)\right] \right. \nonumber \\
    &\phantom{=}\left. + 
        \sum_{t=T^{(1)}+1}^{T^{(2)}} \sum_{j=1}^{J^{(2)}} \sum_{i=1}^{n_{jt}^{(2)}} 
        \begin{pmatrix} \Atwolago\\ \BoldLtwolago \end{pmatrix} 
        \left[Y_{ijt}^{(2,n^{(1)})} - 
        \mathrm{expit}\!\left(
        \begin{pmatrix} \boldsymbol{\beta}^*_A \\ \B^*_{\ell} 
        \end{pmatrix}^{\top}
        \begin{pmatrix} \Atwolago \\ \BoldLtwolago \end{pmatrix}
        \right)\right]
    \right\}.
\end{align}
The two summands are not independent because 
$\A_{jt}^{(2,n^{(1)})}$ depends on the stage 1 outcomes. 
To handle this dependence, we apply the coupling argument 
of \citet{nevo2021analysis}. For each $j = 1,\ldots,J^{(2)}$, 
let $Y_{ijt}^{(2)}$ be independent Bernoulli random 
variables, independent of all stage 1 data, with success 
probability $\mathrm{expit}\!\left(
    \begin{pmatrix} \boldsymbol{\beta}^*_A \\ \B^*_{\ell} 
    \end{pmatrix}^{\top}
    \begin{pmatrix} \Bolda_{jt}^{(2)} \\ \BoldLtwo \end{pmatrix}
    \right)$.
By the coupling construction of \citet{nevo2021analysis}, 
we construct variables $\widetilde{Y}_{ij}^{(2,n^{(1)})}$ 
which, given the stage 1 data and $\A_{jt}^{(2,n^{(1)})}$, 
have the same distribution as the original 
$Y_{ijt}^{(2,n^{(1)})}$, but are coupled with the $Y_{ijt}^{(2)}$, the outcomes under the limiting interventions $\Bolda_{jt}^{(2)}$. The key property of the coupling is that replacing $Y_{ijt}^{(2,n^{(1)})}$ with $\widetilde{Y}_{ij}^{(2,n^{(1)})}$ does not affect the distribution of $\sqrt{n}\,U(\boldsymbol{\beta}^*)$. After using the coupling argument, $\sqrt{n}\,U(\boldsymbol{\beta}^*)$ has the same asymptotic distribution as
\begin{align}\label{aftercoupling_squarerootU}
    \frac{1}{\sqrt{n}} \left\{
        \sum_{t=1}^{T^{(1)}} \sum_{j=1}^{J^{(1)}} \sum_{i=1}^{n_{jt}^{(1)}} 
        \begin{pmatrix} \Aone\\ \BoldLone \end{pmatrix} 
        \left[Y_{ijt}^{(1)} - 
        \mathrm{expit}\!\left(
        \begin{pmatrix} \boldsymbol{\beta}^*_A \\ \B^*_{\ell} 
        \end{pmatrix}^{\top}
        \begin{pmatrix} \Aone \\ \BoldLone \end{pmatrix}
        \right)\right] \right.\nonumber \\
    \left. + 
        \sum_{t=T^{(1)}+1}^{T^{(2)}} \sum_{j=1}^{J^{(2)}} \sum_{i=1}^{n_{jt}^{(2)}} 
        \begin{pmatrix} \Bolda_{jt}^{(2)}\\ \BoldLtwo \end{pmatrix} 
        \left[Y_{ijt}^{(2)} - 
        \mathrm{expit}\!\left(
        \begin{pmatrix} \boldsymbol{\beta}^*_A \\ \B^*_{\ell} 
        \end{pmatrix}^{\top}
        \begin{pmatrix} \Bolda_{jt}^{(2)} \\ \BoldLtwo \end{pmatrix}
        \right)\right] 
    \right\},
\end{align}
because as in \citet{nevo2021analysis}, the difference with equation~\eqref{squarerootU} converges to zero in probability. The two terms in equation~\eqref{aftercoupling_squarerootU} are independent, since the $\Bolda_{jt}^{(2)}$ are fixed and the $\A_{jt}^{(1)}$ are also fixed pre-trial.

We now show that each term converges to a normal 
distribution. For the stage 1 term, by the 
Cram\'{e}r-Wold device, it suffices to show that 
for any vector $\boldsymbol{c}$,
\[
    \frac{1}{\sqrt{n}}
    \sum_{t=1}^{T^{(1)}} \sum_{j=1}^{J^{(1)}} \sum_{i=1}^{n_{jt}^{(1)}} 
    \boldsymbol{c}^{\top}
    \begin{pmatrix} \Aone\\ \BoldLone \end{pmatrix} 
    \epsilon_{ijt}^{(1)}
\]
converges in distribution to a normal random variable, 
where $\epsilon_{ijt}^{(1)} = Y_{ijt}^{(1)} - 
\mathrm{expit}\!\left(\boldsymbol{\beta}^{*\top}
\begin{pmatrix} \Aone \\ \BoldLone \end{pmatrix}
\right)$ (Main Model Assumption \ref{mainmodel}). Since $Y_{ijt}^{(1)} \in \{0,1\}$, we have $|\epsilon_{ijt}^{(1)}| \leq 1$, so the Lyapunov condition is satisfied with 
$\delta = 1$:
\begin{align*}
    &\frac{1}{n^{1+\delta/2}}
\sum_{t=1}^{T^{(1)}} \sum_{j=1}^{J^{(1)}} \sum_{i=1}^{n_{jt}^{(1)}} 
\E\!\left[
\left|\boldsymbol{c}^{\top}
\begin{pmatrix} \Aone\\ \BoldLone \end{pmatrix} 
\epsilon_{ijt}^{(1)}\right|^{2+\delta}
\mid \overline{\BoldZ}
\right] \\
&\leq 
\frac{1}{n^{\delta/2}}
\left(
\left|\max_{p=1,\ldots,P}(c_p)\right| \cdot \mathcal{C}_A \cdot P 
+ 
\left|\max_{m=1,\ldots,J+1}(c_m)\right| \cdot 2
\right)^{2+\delta}
\rightarrow 0
\end{align*}
since $|\A_{jt,p}^{(1)}| \leq \mathcal{C}_A$ by 
Compact Space Assumption~\ref{asspCompact} and 
$|\epsilon_{ijt}^{(1)}| \leq 1$. By the Lyapunov 
Central Limit Theorem, the stage 1 term converges 
in distribution to
\[
    N\!\left(0, \sum_{t=1}^{T^{(1)}} \sum_{j=1}^{J^{(1)}} \alpha_{jt}^{(1)} 
    \begin{pmatrix} \A_{jt}^{(1)}\\ \BoldLone \end{pmatrix}
    \begin{pmatrix} \A_{jt}^{(1)}\\ \BoldLone \end{pmatrix}^{\top}
    p\!\left(\A_{jt}^{(1)}; \boldsymbol{\beta}^* \right)
    \left(1 - p\!\left(\A_{jt}^{(1)}; 
    \boldsymbol{\beta}^*\right)\right)
    \right).
\]
By the same argument, conditionally on 
$\overline{\BoldZ}$, the stage 2 term converges in 
distribution to
\[
    N\!\left(0, \sum_{t=T^{(1)}+1}^{T^{(2)}} \sum_{j=1}^{J^{(2)}} \alpha_{jt}^{(2)} 
    \begin{pmatrix} \Bolda_{jt}^{(2)}\\ \BoldLtwo \end{pmatrix}
    \begin{pmatrix} \Bolda_{jt}^{(2)}\\ \BoldLtwo \end{pmatrix}^{\top}
    p\!\left(\Bolda_{jt}^{(2)}; \boldsymbol{\beta}^* \right)
    \left(1 - p\!\left(\Bolda_{jt}^{(2)}; 
    \boldsymbol{\beta}^*\right)\right)
    \right).
\]
Since the stage 1 and stage 2 terms in equation~\eqref{aftercoupling_squarerootU} are independent, their sum converges in distribution 
to $N(0, J^*)$, where $J^*$ is as defined in 
Theorem~\ref{AsymptoticTheoremBinary}. Combining 
with $J(\widetilde{\boldsymbol{\beta}}) 
\ConvP J^*$ and Slutsky's Theorem give
\[
    \sqrt{n}\,(\widehat{\boldsymbol{\beta}} - 
    \boldsymbol{\beta}^*) \ConvD
    \mathcal{N}\!\left(0,\, J^{*-1}\right).
\]
Therefore, Asymptotic Normality Theorem~\ref{AsymptoticTheoremBinary} for binary outcomes follows.
\end{proof}

\section{Proofs of Theorems \ref{unbiasedEE}, \ref{theoremConsistencyWithLearning}, and \ref{AsymptoticTheoremWithLearning} with Generalized Linear Model (GLM) with a general link function}\label{proofs_glm_general_link}
\begin{assumption}[Main Model]\label{mainmodel_GLM}
The counterfactual outcome $Y_{ijt}^{(k)}(\Bolda)$ 
of participant $i$ in center $j$ at stage $k$ with 
center characteristics $\BoldZ_j$ under intervention 
package $\Bolda$ satisfies
\[
    Y_{ijt}^{(k)}{(\Bolda)} = g^{-1}\left (\begin{pmatrix} \B^{*}_A \\ 
    \B^{*}_{\ell} \end{pmatrix}^{\top} 
    \begin{pmatrix} \Bolda \\ \boldsymbol{\ell}_j^{(k)} 
    \end{pmatrix} \right) + \epsilon_{ijt}^{(k)}(\Bolda),
\]
where $g(\cdot)$ is a known twice continuously 
differentiable link function whose inverse $g^{-1}$ 
has bounded first and second derivatives. The linear model case 
corresponds to $g^{-1}(x) = x$, and the binary 
outcome modeled with logistic regression corresponds to~$g^{-1}(x) = \mathrm{expit}(x)$.
\end{assumption}

The estimating equations used to estimate $\boldsymbol{\beta}$ 
are
\begin{align}\label{eqn:GLM_ee}
    0 &= U(\boldsymbol{\beta}) \nonumber\\
    &= \frac{1}{n} \left\{ 
        \sum_{t=1}^{T^{(1)}} \sum_{j=1}^{J^{(1)}} \sum_{i=1}^{n_{jt}^{(1)}} 
        \frac{\partial}{\partial \boldsymbol{\beta}}
        g^{-1}\!\left(
        \begin{pmatrix} \boldsymbol{\beta}_A \\ \B_{\ell} 
        \end{pmatrix}^{\top}
        \begin{pmatrix} \Aone\\ \BoldLone \end{pmatrix} \right) \right. \nonumber\\
    &\qquad \times \left.
        \left[ Y_{ijt}^{(1)} - 
        g^{-1}\!\left(
        \begin{pmatrix} \boldsymbol{\beta}_A \\ \B_{\ell} 
        \end{pmatrix}^{\top}
        \begin{pmatrix} \Aone\\ \BoldLone \end{pmatrix} \right) \right] 
        \right. \nonumber\\
    &\quad \left. + 
        \sum_{t=T^{(1)}+1}^{T^{(2)}} \sum_{j=1}^{J^{(2)}} \sum_{i=1}^{n_{jt}^{(2)}} 
        \frac{\partial}{\partial \boldsymbol{\beta}}
        g^{-1}\!\left(
        \begin{pmatrix} \boldsymbol{\beta}_A \\ \B_{\ell} 
        \end{pmatrix}^{\top}
        \begin{pmatrix} \Atwolago\\ \BoldLtwolago \end{pmatrix} \right) \right. \nonumber\\
    &\qquad \times \left.
        \left[ Y_{ijt}^{(2,n^{(1)})} - 
        g^{-1}\!\left(
        \begin{pmatrix} \boldsymbol{\beta}_A \\ \B_{\ell} 
        \end{pmatrix}^{\top}
        \begin{pmatrix} \Atwolago\\ \BoldLtwolago \end{pmatrix} \right) \right] 
    \right\}.
\end{align}
Define
\begin{align}\label{smallu_GLM}
    u(\boldsymbol{\beta}) 
    &= \sum_{t=1}^{T^{(1)}} \sum_{j=1}^{J^{(1)}} \alpha_{jt}^{(1)} 
    \frac{\partial}{\partial \boldsymbol{\beta}}
    g^{-1}\!\left(
    \begin{pmatrix} \boldsymbol{\beta}_A \\ \B_{\ell} 
    \end{pmatrix}^{\top}
    \begin{pmatrix} \Aone \\ \BoldLone \end{pmatrix}
    \right) \nonumber \\
    &\qquad \times
    \left[
        g^{-1}\!\left(
        \begin{pmatrix} \boldsymbol{\beta}^*_A \\ \B^*_{\ell} 
        \end{pmatrix}^{\top}
        \begin{pmatrix} \Aone \\ \BoldLone \end{pmatrix}
        \right)
        -
        g^{-1}\!\left(
        \begin{pmatrix} \boldsymbol{\beta}_A \\ \B_{\ell} 
        \end{pmatrix}^{\top}
        \begin{pmatrix} \Aone \\ \BoldLone \end{pmatrix}
        \right)
    \right] \nonumber \\
    &+ \sum_{t=T^{(1)}+1}^{T^{(2)}} \sum_{j=1}^{J^{(2)}} \alpha_{jt}^{(2)} 
    \frac{\partial}{\partial \boldsymbol{\beta}}
    g^{-1}\!\left(
    \begin{pmatrix} \boldsymbol{\beta}_A \\ \B_{\ell} 
    \end{pmatrix}^{\top}
    \begin{pmatrix} \Bolda_{jt}^{(2)} \\ \BoldLtwo \end{pmatrix}
    \right) \nonumber \\
    &\qquad \times
    \left[
        g^{-1}\!\left(
        \begin{pmatrix} \boldsymbol{\beta}^*_A \\ \B^*_{\ell} 
        \end{pmatrix}^{\top}
        \begin{pmatrix} \Bolda_{jt}^{(2)} \\ \BoldLtwo \end{pmatrix}
        \right)
        -
        g^{-1}\!\left(
        \begin{pmatrix} \boldsymbol{\beta}_A \\ \B_{\ell} 
        \end{pmatrix}^{\top}
        \begin{pmatrix} \Bolda_{jt}^{(2)} \\ \BoldLtwo \end{pmatrix}
        \right)
    \right].
\end{align}

While Unbiased Estimating Equations and Consistency Theorems~\ref{unbiasedEE} and \ref{theoremConsistencyWithLearning} in the GLM setting with a general link function remain the same as those in the main text, Asymptotic Normality Theorem~\ref{AsymptoticTheoremWithLearning} in the main text needs to be updated as follows.

\begin{theorem}[Asymptotic Normality]\label{AsymptoticTheoremGLM} Under 
Assumptions~\ref{ratioConvergence}-\ref{mainmodel_GLM}, 
conditional on the fixed center characteristics 
$\overline{\BoldZ}$,
\[
    \sqrt{n}(\widehat{\boldsymbol{\beta}} - 
    \boldsymbol{\beta}^{*}) \overset{\mathcal{D}}{\rightarrow} 
    \mathcal{N}\!\left(0,\, J(\boldsymbol{\beta}^*)^{-1} 
    \cdot V(\boldsymbol{\beta}^*) \cdot 
    J(\boldsymbol{\beta}^*)^{-T}\right),
\]
where
\begin{align*}
    J(\boldsymbol{\beta}^*) 
    &= \sum_{t=1}^{T^{(1)}} \sum_{j=1}^{J^{(1)}} \alpha_{jt}^{(1)} 
    \left(
    \left.\frac{\partial}{\partial\boldsymbol{\beta}}\right|_{\boldsymbol{\beta}^*}
    g^{-1}\!\left(
    \begin{pmatrix} \boldsymbol{\beta}_A \\ \B_{\ell} 
    \end{pmatrix}^{\top}
    \begin{pmatrix} \A_{jt}^{(1)} \\ \BoldLone \end{pmatrix}
    \right)
    \right)^{\otimes 2} \\
    &+ \sum_{t=T^{(1)}+1}^{T^{(2)}} \sum_{j=1}^{J^{(2)}} \alpha_{jt}^{(2)} 
    \left(
    \left.\frac{\partial}{\partial\boldsymbol{\beta}}\right|_{\boldsymbol{\beta}^*}
    g^{-1}\!\left(
    \begin{pmatrix} \boldsymbol{\beta}_A \\ \B_{\ell} 
    \end{pmatrix}^{\top}
    \begin{pmatrix} \Bolda_{jt}^{(2)} \\ \BoldLtwo \end{pmatrix}
    \right)
    \right)^{\otimes 2}\\
    V(\boldsymbol{\beta}^*) 
    &= \sum_{t=1}^{T^{(1)}} \sum_{j=1}^{J^{(1)}} \alpha_{jt}^{(1)} 
    \left(
    \left.\frac{\partial}{\partial\boldsymbol{\beta}}\right|_{\boldsymbol{\beta}^*}
    g^{-1}\!\left(
    \begin{pmatrix} \boldsymbol{\beta}_A \\ \B_{\ell} 
    \end{pmatrix}^{\top}
    \begin{pmatrix} \A_{jt}^{(1)} \\ \BoldLone \end{pmatrix}
    \right)
    \right)^{\otimes 2}
    \sigma^2(\BoldZ_j)\\
    &+ \sum_{t=T^{(1)}+1}^{T^{(2)}} \sum_{j=1}^{J^{(2)}} \alpha_{jt}^{(2)} 
    \left(
    \left.\frac{\partial}{\partial\boldsymbol{\beta}}\right|_{\boldsymbol{\beta}^*}
    g^{-1}\!\left(
    \begin{pmatrix} \boldsymbol{\beta}_A \\ \B_{\ell} 
    \end{pmatrix}^{\top}
    \begin{pmatrix} \Bolda_{jt}^{(2)} \\ \BoldLtwo \end{pmatrix}
    \right)
    \right)^{\otimes 2}
    \sigma^2(\BoldZ_j).
\end{align*}
\end{theorem}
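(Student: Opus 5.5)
We follow the three-step template used for Theorem~\ref{AsymptoticTheoremWithLearning} in Online Appendix~\ref{AsymptoticProofAppendix}, now with the gradient weights $D(\B;\Bolda,\BoldL)=\frac{\partial}{\partial\B}g^{-1}\bigl((\B_A^{\top},\B_\ell^{\top})(\Bolda^{\top},\BoldL^{\top})^{\top}\bigr)=\dot g^{-1}(\cdot)\,(\Bolda^{\top},\BoldL^{\top})^{\top}$ in equation~\eqref{eqn:GLM_ee}.

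\emph{Step 1: expand $U$ around $\B^{*}$.} We apply the Mean Value Theorem row by row to get $0=U(\widehat{\B})=U(\B^{*})+\partial_{\B}U(\widetilde{\B})(\widehat{\B}-\B^{*})$, as in equation~\eqref{MVT_Asymptotic_Proof}. Differentiating equation~\eqref{eqn:GLM_ee} gives
\[
-\partial_{\B}U(\B)=\frac1n\sum D(\B)D(\B)^{\top}-\frac1n\sum \partial_{\B}D(\B)\,\bigl[Y-g^{-1}(\cdot)\bigr].
\]

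\emph{Step 2: the derivative converges to $J(\B^{*})$.} For the first sum, $\widetilde{\B}\ConvP\B^{*}$ by Theorem~\ref{theoremConsistencyWithLearning} and $\A_{jt}^{(2,n^{(1)})}\ConvP\Bolda^{(2)}_{jt}$ by Remark~\ref{convergenceofA}. Continuity of $\dot g^{-1}$, Assumption~\ref{ratioConvergence} and the Continuous Mapping Theorem then give convergence to $J(\B^{*})$.

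For the second sum, write $Y-g^{-1}(\B^{\top}\cdot)=\epsilon+\{g^{-1}(\B^{*\top}\cdot)-g^{-1}(\B^{\top}\cdot)\}$.
\begin{itemize}
\item The bracketed part is bounded by $\sup|\dot g^{-1}|\,\|\widetilde{\B}-\B^{*}\|\cdot$const, using Assumption~\ref{asspCompact}, so it is $o_P(1)$.
\item The part $\frac1n\sum\partial_{\B}D(\widetilde{\B})\,\epsilon$ needs uniformity in $\B$, because $\widetilde{\B}$ is random. We would bound it by $\sup_{\B}$ of an average of mean-zero terms. Since $\partial_{\B}D=\ddot g^{-1}(\cdot)\,(\Bolda,\BoldL)^{\otimes2}$ with $\ddot g^{-1}$ bounded and Lipschitz in $\B$ on the compact set, a bracketing/Donsker argument as in \citet{bing2023learn} gives uniform convergence. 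Alternatively we could use a direct bound $\frac1n\sum|\epsilon|\cdot\sup|\ddot g^{-1}(\widetilde{\B}\cdot)-\ddot g^{-1}(\B^{*}\cdot)|$ plus the Chebyshev argument at $\B^{*}$ as in Online Appendix~\ref{ConsistencyProofAppendix}. The latter only needs continuity of $\ddot g^{-1}$ and $\frac1n\sum|\epsilon|=O_P(1)$, which follows from Assumption~\ref{epsilonassumption}.
\end{itemize}
Slutsky's theorem then reduces the problem to $\sqrt n\,U(\B^{*})$.

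\emph{Step 3: CLT for $\sqrt n\,U(\B^{*})$.} We split $\sqrt n\,U(\B^{*})=U_{1\_2,n}+U_{2\_2,n}$ by replacing the stage~2 weight $D(\B^{*};\A^{(2,n^{(1)})}_{jt},\BoldLtwo)$ with $D(\B^{*};\Bolda^{(2)}_{jt},\BoldLtwo)$.

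\emph{The remainder $U_{2\_2,n}$.} The residual is $\epsilon^{(2,n^{(1)})}_{ijt}=Y-g^{-1}(\B^{*\top}\cdot)$. Its mean is zero given $(\BarA^{(2,n^{(1)})},\overline{\BoldZ})$, by Assumptions~\ref{asspConsistency}, \ref{asspCondExcha}, \ref{asspcenterIndep} and~\ref{mainmodel_GLM}, and the residuals are conditionally independent by Assumption~\ref{asspMinh}. Hence
\[
\E(U_{2\_2,n}^2\mid\overline{\BoldZ})\le\sigma^2\,\E\Bigl[\max_{jt}\|D(\B^{*};\A^{(2,n^{(1)})}_{jt})-D(\B^{*};\Bolda^{(2)}_{jt})\|^2\Bigr]\to0 .
\]
This follows by continuity of $\dot g^{-1}$, boundedness, and dominated convergence, and Chebyshev's inequality gives $U_{2\_2,n}\ConvP0$.

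\emph{The main term $U_{1\_2,n}$.} By Assumption~\ref{epsilonassumption} the error distribution does not depend on $\Bolda$. So, exactly as in equation~\eqref{U_1,2}, we replace $\epsilon^{(2,n^{(1)})}$ by $\epsilon^{(2)}$ without changing the law, and the summands become independent with fixed weights. By the Cram\'er–Wold device, each linear combination $\boldsymbol t^{\top}$ of the summands satisfies Lyapunov's condition: the weights are bounded by $\sup|\dot g^{-1}|(\mathcal C_AP+2)$ and $\E|\epsilon|^{2+\delta}\le\mathcal C_\epsilon$, so the Lyapunov ratio is $O(n^{-\delta/2})$. The variance converges to $V(\B^{*})$ by Assumption~\ref{ratioConvergence}. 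Combining with Step~2 yields the sandwich $J^{-1}VJ^{-T}$.

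The main obstacle is the second-derivative remainder in Step~2, which is absent in the linear case. The GLM assumption gives additive errors but not a bounded outcome, so we cannot rely on the boundedness used in the logistic proof. This is why uniform control over $\widetilde{\B}$ is needed; we would try the direct continuity bound first and fall back on the Donsker argument.
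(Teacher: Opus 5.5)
Your proposal is correct and follows essentially the same route as the paper: a row-wise mean value expansion, convergence of $-\partial_{\B}U(\widetilde{\B})$ to $J(\B^{*})$, and a Chebyshev-controlled remainder from replacing the stage-2 weights by their limits. The main term is handled the same way, by swapping in the limiting errors and applying Cram\'er--Wold with the Lyapunov CLT. The differences are minor: you merge the stage-1 and stage-2 main terms into one piece, as in the linear proof, where the paper keeps three separate terms. You also control the second-derivative term $\frac{1}{n}\sum \partial_{\B}\{\partial_{\B} g^{-1}\}(\widetilde{\B})\,[Y-g^{-1}]$ explicitly with a continuity bound plus Chebyshev at $\B^{*}$, whereas the paper simply defers that step to \citet{bing2023learn}.
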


\subsection{Proof of Unbiased Estimating Equations Theorem \ref{unbiasedEE}}\label{UnbiasedEE_GLM_General}

\begin{proof}
Following the same steps as for continuous outcomes modeled with linear regression in Online Appendix \ref{UnbiasedEEProofAppendix}, we show that $\E[U(\boldsymbol{\beta}^{*}) \mid \overline{\BoldZ}] = 0$ by showing that, conditional on $\overline{\BoldZ}$, the expectation of each term in equation~\eqref{eqn:GLM_ee} equals zero.

For stage 1, by the Law of Total Expectations, conditioning first on $\BarA^{(1)}$ and 
$\overline{\BoldZ}$,
\begin{align}
    &\E\left[
        \frac{\partial}{\partial \boldsymbol{\beta}}
        g^{-1}\!\left(
        \begin{pmatrix} \boldsymbol{\beta}^*_A \\ \B^*_{\ell} 
        \end{pmatrix}^{\top}
        \begin{pmatrix} \Aone \\ \BoldLone \end{pmatrix}
        \right)
        \left(Y_{ijt}^{(1)} - 
        g^{-1}\!\left(
        \begin{pmatrix} \boldsymbol{\beta}^*_A \\ \B^*_{\ell} 
        \end{pmatrix}^{\top}
        \begin{pmatrix} \Aone \\ \BoldLone \end{pmatrix}
        \right)\right)
        \;\middle|\; \overline{\BoldZ}
    \right] \nonumber \\
    = &\E\left[
        \frac{\partial}{\partial \boldsymbol{\beta}}
        g^{-1}\!\left(
        \begin{pmatrix} \boldsymbol{\beta}^*_A \\ \B^*_{\ell} 
        \end{pmatrix}^{\top}
        \begin{pmatrix} \Aone \\ \BoldLone \end{pmatrix}
        \right)
        \E\!\left[
            Y_{ijt}^{(1)} - 
            g^{-1}\!\left(
            \begin{pmatrix} \boldsymbol{\beta}^*_A \\ \B^*_{\ell} 
            \end{pmatrix}^{\top}
            \begin{pmatrix} \Aone \\ \BoldLone \end{pmatrix}
            \right)
        \;\middle|\; \BarA^{(1)}, \overline{\BoldZ}
        \right]
        \;\middle|\; \overline{\BoldZ}
    \right] = 0,
\end{align}
where we used the fact that 
$\frac{\partial}{\partial \boldsymbol{\beta}}
g^{-1}\!\left(\boldsymbol{\beta}^{*\top}
\begin{pmatrix} \Aone \\ \BoldLone \end{pmatrix}\right)
= (g^{-1})'\!\left(\boldsymbol{\beta}^{*\top}
\begin{pmatrix} \Aone \\ \BoldLone \end{pmatrix}\right)
\begin{pmatrix} \Aone \\ \BoldLone \end{pmatrix}$
is measurable with respect to $\BarA^{(1)}$ and 
$\overline{\BoldZ}$, and can therefore be pulled outside 
the inner conditional expectation. The inner expectation 
equals zero because, conditional on $\BarA^{(1)}$ and 
$\overline{\BoldZ}$, Conditional Exchangeability Assumption~\ref{asspCondExcha} and Main Model Assumption~\ref{mainmodel_GLM} imply that
\[
    \E\!\left[Y_{ijt}^{(1)} \mid \BarA^{(1)}, 
    \overline{\BoldZ}\right] 
    = g^{-1}\!\left(
    \boldsymbol{\beta}^{*\top} 
    \begin{pmatrix} \Aone \\ \BoldLone \end{pmatrix}\right).
\]
The outer expectation over the distribution of $\BarA^{(1)}$ 
given $\overline{\BoldZ}$ is then also zero.

For stage 2, the argument is 
analogous, but as for the linear link additionally conditions on the stage 2 recommended intervention $\X^{(2,n^{(1)})}$. By the Law of Total Expectations,
\begin{align}
    &\E\left[
        \frac{\partial}{\partial \boldsymbol{\beta}}
        g^{-1}\!\left(
        \begin{pmatrix} \boldsymbol{\beta}^*_A \\ \B^*_{\ell} 
        \end{pmatrix}^{\top}
        \begin{pmatrix} \Atwolago \\ \BoldLtwolago \end{pmatrix}
        \right)
        \left(Y_{ijt}^{(2,n^{(1)})} - 
        g^{-1}\!\left(
        \begin{pmatrix} \boldsymbol{\beta}^*_A \\ \B^*_{\ell} 
        \end{pmatrix}^{\top}
        \begin{pmatrix} \Atwolago \\ \BoldLtwolago \end{pmatrix}
        \right)\right)
        \;\middle|\; \overline{\BoldZ}
    \right] \nonumber \\
    &= \E\left[
        \frac{\partial}{\partial \boldsymbol{\beta}}
        g^{-1}\!\left(
        \begin{pmatrix} \boldsymbol{\beta}^*_A \\ \B^*_{\ell} 
        \end{pmatrix}^{\top}
        \begin{pmatrix} \Atwolago \\ \BoldLtwolago \end{pmatrix}
        \right) \cdot \right. \nonumber \\
    &\left.
        \E\!\left[
            Y_{ijt}^{(2,n^{(1)})} - 
            g^{-1}\!\left(
            \begin{pmatrix} \boldsymbol{\beta}^*_A \\ \B^*_{\ell} 
            \end{pmatrix}^{\top}
            \begin{pmatrix} \Atwolago \\ \BoldLtwolago \end{pmatrix}
            \right)
        \;\middle|\; \BarA^{(2,n^{(1)})}, \X^{(2,n^{(1)})}, \overline{\BoldZ}
        \right]
        \;\middle|\; \overline{\BoldZ}
    \right] = 0,
\end{align}
where the derivative 
$(g^{-1})'\!\left(\boldsymbol{\beta}^{*\top}
\begin{pmatrix} \Atwolago \\ \BoldLtwolago \end{pmatrix}\right)
\begin{pmatrix} \Atwolago \\ \BoldLtwolago \end{pmatrix}$
is measurable with respect to $\BarA^{(2,n^{(1)})}$ and 
$\overline{\BoldZ}$ and can therefore be pulled outside 
the inner conditional expectation. The inner expectation 
equals zero because, following from LAGO, Conditional Exchangeability and  Model Model Assumptions~\ref{asspAnte}, \ref{asspCondExcha}, \ref{mainmodel_GLM}, conditional on $\BarA^{(2,n^{(1)})}$, $\X^{(2,n^{(1)})}$, 
and $\overline{\BoldZ}$, the GLM model gives
\[
    \E\!\left[Y_{ijt}^{(2,n^{(1)})} \mid \BarA^{(2,n^{(1)})}, \X^{(2,n^{(1)})}, 
    \overline{\BoldZ}\right] 
    = g^{-1}\!\left(
    \boldsymbol{\beta}^{*\top} 
    \begin{pmatrix} \Atwolago \\ \BoldLtwolago \end{pmatrix}
    \right).
\]
Since the expectation for each term in each stage equals 
zero, we conclude that $\E[U(\boldsymbol{\beta}^{*}) \mid 
\overline{\BoldZ}] = 0$, establishing that 
$U(\boldsymbol{\beta})$ is an unbiased estimating equation 
for $\boldsymbol{\beta}^{*}$.
\end{proof}

\subsection{Proof of Consistency of \texorpdfstring{$\widehat{\B}$}{Beta-hat} Theorem \ref{theoremConsistencyWithLearning}}\label{ConsistencyTheorem_GLM_General}
\begin{proof}
We follow the same steps as for continuous outcomes modeled with linear regression in Online Appendix \ref{ConsistencyProofAppendix}. From equations~\eqref{eqn:GLM_ee} and \eqref{smallu_GLM}, 
it follows that $U(\boldsymbol{\beta}) - u(\boldsymbol{\beta})$ 
can be decomposed as
\begin{align}
    U(\boldsymbol{\beta}) - u(\boldsymbol{\beta}) 
    = G_{1,n}^{(g)} + G_{2,n}^{(g)} 
    + G_{3,n}^{(g)} + G_{4,n}^{(g)} + G_{5,n}^{(g)},
\end{align}
where
\begin{align}
    G_{1,n}^{(g)} &= \frac{1}{n}
    \sum_{t=1}^{T^{(1)}} \sum_{j=1}^{J^{(1)}}\sum_{i=1}^{n_{jt}^{(1)}}
    \frac{\partial}{\partial \boldsymbol{\beta}}
    g^{-1}\!\left(
    \begin{pmatrix} \boldsymbol{\beta}_A \\ \B_{\ell} 
    \end{pmatrix}^{\top}
    \begin{pmatrix} \Aone \\ \BoldLone \end{pmatrix}
    \right)
    \left[
        Y_{ijt}^{(1)} - 
        g^{-1}\!\left(
        \begin{pmatrix} \boldsymbol{\beta}^*_A \\ \B^*_{\ell} 
        \end{pmatrix}^{\top}
        \begin{pmatrix} \Aone \\ \BoldLone \end{pmatrix}
        \right)
    \right], \label{G1_GLM} 
\end{align}
\begin{align}
    G_{2,n}^{(g)} &= \frac{1}{n}
    \sum_{t=T^{(1)}+1}^{T^{(2)}} \sum_{j=1}^{J^{(2)}}\sum_{i=1}^{n_j^{(2)}}
    \frac{\partial}{\partial \boldsymbol{\beta}}
    g^{-1}\!\left(
    \begin{pmatrix} \boldsymbol{\beta}_A \\ \B_{\ell} 
    \end{pmatrix}^{\top}
    \begin{pmatrix} \Atwolago \\ \BoldLtwolago \end{pmatrix}
    \right)
    \left[
        Y_{ijt}^{(2,n^{(1)})} - 
        g^{-1}\!\left(
        \begin{pmatrix} \boldsymbol{\beta}^*_A \\ \B^*_{\ell} 
        \end{pmatrix}^{\top}
        \begin{pmatrix} \Atwolago \\ \BoldLtwolago \end{pmatrix}
        \right)
    \right], \label{G2_GLM} 
\end{align}
\begin{align}
    G_{3,n}^{(g)} &= \frac{1}{n}
    \sum_{t=T^{(1)}+1}^{T^{(2)}} \sum_{j=1}^{J^{(2)}}\sum_{i=1}^{n_j^{(2)}}
    \left[
    \frac{\partial}{\partial \boldsymbol{\beta}}
    g^{-1}\!\left(
    \begin{pmatrix} \boldsymbol{\beta}_A \\ \B_{\ell} 
    \end{pmatrix}^{\top}
    \begin{pmatrix} \Atwolago \\ \BoldLtwolago \end{pmatrix}
    \right)
    g^{-1}\!\left(
    \begin{pmatrix} \boldsymbol{\beta}^*_A \\ \B^*_{\ell} 
    \end{pmatrix}^{\top}
    \begin{pmatrix} \Atwolago \\ \BoldLtwolago \end{pmatrix}
    \right) \right. \nonumber \\
    &\phantom{=\frac{1}{n}\sum} \left.
    -
    \frac{\partial}{\partial \boldsymbol{\beta}}
    g^{-1}\!\left(
    \begin{pmatrix} \boldsymbol{\beta}_A \\ \B_{\ell} 
    \end{pmatrix}^{\top}
    \begin{pmatrix} \Bolda_{jt}^{(2)} \\ \BoldLtwo \end{pmatrix}
    \right)
    g^{-1}\!\left(
    \begin{pmatrix} \boldsymbol{\beta}^*_A \\ \B^*_{\ell} 
    \end{pmatrix}^{\top}
    \begin{pmatrix} \Bolda_{jt}^{(2)} \\ \BoldLtwo \end{pmatrix}
    \right)
    \right], \label{G3_GLM} 
\end{align}
\begin{align}
    G_{4,n}^{(g)} &= \frac{1}{n}
    \sum_{t=T^{(1)}+1}^{T^{(2)}} \sum_{j=1}^{J^{(2)}}\sum_{i=1}^{n_j^{(2)}}
    \left[
    \frac{\partial}{\partial \boldsymbol{\beta}}
    g^{-1}\!\left(
    \begin{pmatrix} \boldsymbol{\beta}_A \\ \B_{\ell} 
    \end{pmatrix}^{\top}
    \begin{pmatrix} \Bolda_{jt}^{(2)} \\ \BoldLtwo \end{pmatrix}
    \right)
    g^{-1}\!\left(
    \begin{pmatrix} \boldsymbol{\beta}_A \\ \B_{\ell} 
    \end{pmatrix}^{\top}
    \begin{pmatrix} \Bolda_{jt}^{(2)} \\ \BoldLtwo \end{pmatrix}
    \right) \right. \nonumber \\
    &\phantom{=\frac{1}{n}\sum} \left.
    -
    \frac{\partial}{\partial \boldsymbol{\beta}}
    g^{-1}\!\left(
    \begin{pmatrix} \boldsymbol{\beta}_A \\ \B_{\ell} 
    \end{pmatrix}^{\top}
    \begin{pmatrix} \Atwolago \\ \BoldLtwolago \end{pmatrix}
    \right)
    g^{-1}\!\left(
    \begin{pmatrix} \boldsymbol{\beta}_A \\ \B_{\ell} 
    \end{pmatrix}^{\top}
    \begin{pmatrix} \Atwolago \\ \BoldLtwolago \end{pmatrix}
    \right)
    \right], \label{G4_GLM} 
\end{align}
\begin{align}
    G_{5,n}^{(g)} &= 
    \sum_{t=1}^{T^{(1)}} \sum_{j=1}^{J^{(1)}} 
    \left(\alpha_{jt}^{(1)} - \frac{n_{jt}^{(1)}}{n}\right)
    \frac{\partial}{\partial \boldsymbol{\beta}}
    g^{-1}\!\left(
    \begin{pmatrix} \boldsymbol{\beta}_A \\ \B_{\ell} 
    \end{pmatrix}^{\top}
    \begin{pmatrix} \Aone \\ \BoldLone \end{pmatrix}
    \right)
    g^{-1}\!\left(
    \begin{pmatrix} \boldsymbol{\beta}_A \\ \B_{\ell} 
    \end{pmatrix}^{\top}
    \begin{pmatrix} \Aone \\ \BoldLone \end{pmatrix}
    \right) \nonumber \\
    &+ \sum_{t=1}^{T^{(1)}} \sum_{j=1}^{J^{(1)}} 
    \left(\frac{n_{jt}^{(1)}}{n} - \alpha_{jt}^{(1)}\right)
    \frac{\partial}{\partial \boldsymbol{\beta}}
    g^{-1}\!\left(
    \begin{pmatrix} \boldsymbol{\beta}_A \\ \B_{\ell} 
    \end{pmatrix}^{\top}
    \begin{pmatrix} \Aone \\ \BoldLone \end{pmatrix}
    \right)
    g^{-1}\!\left(
    \begin{pmatrix} \boldsymbol{\beta}^*_A \\ \B^*_{\ell} 
    \end{pmatrix}^{\top}
    \begin{pmatrix} \Aone \\ \BoldLone \end{pmatrix}
    \right) \nonumber \\
    &+ \sum_{t=T^{(1)}+1}^{T^{(2)}} \sum_{j=1}^{J^{(2)}} 
    \left(\alpha_{jt}^{(2)} - \frac{n_{jt}^{(2)}}{n}\right)
    \frac{\partial}{\partial \boldsymbol{\beta}}
    g^{-1}\!\left(
    \begin{pmatrix} \boldsymbol{\beta}_A \\ \B_{\ell} 
    \end{pmatrix}^{\top}
    \begin{pmatrix} \Bolda_{jt}^{(2)} \\ \BoldLtwolago \end{pmatrix}
    \right)
    g^{-1}\!\left(
    \begin{pmatrix} \boldsymbol{\beta}_A \\ \B_{\ell} 
    \end{pmatrix}^{\top}
    \begin{pmatrix} \Bolda_{jt}^{(2)} \\ \BoldLtwolago \end{pmatrix}
    \right) \nonumber \\
    &+\sum_{t=T^{(1)}+1}^{T^{(2)}}  \sum_{j=1}^{J^{(2)}} 
    \left(\frac{n_{jt}^{(2)}}{n} - \alpha_{jt}^{(2)}\right)
    \frac{\partial}{\partial \boldsymbol{\beta}}
    g^{-1}\!\left(
    \begin{pmatrix} \boldsymbol{\beta}_A \\ \B_{\ell} 
    \end{pmatrix}^{\top}
    \begin{pmatrix} \Bolda_{jt}^{(2)} \\ \BoldLtwolago \end{pmatrix}
    \right)
    g^{-1}\!\left(
    \begin{pmatrix} \boldsymbol{\beta}^*_A \\ \B^*_{\ell} 
    \end{pmatrix}^{\top}
    \begin{pmatrix} \Bolda_{jt}^{(2)} \\ \BoldLtwolago \end{pmatrix}
    \right). \label{G5_GLM}
\end{align}

We show that the supremum over $\B$ of $G_{1,n}^{(g)}, G_{2,n}^{(g)}, G_{3,n}^{(g)}, G_{4,n}^{(g)}, G_{5,n}^{(g)}$ converges to 0 in probability.

For $G_{1,n}^{(g)}$, the proof follows the Donsker 
class argument of \citet{bing2023learn}, with one 
modification to account for the random $\A_{jt}^{(1)}$ 
due to confounding by indication. Define
\[
    \Psi_{\boldsymbol{\beta}}(O_{ij}^{(1)}) = 
    \frac{\partial}{\partial \boldsymbol{\beta}}
    g^{-1}\!\left(
    \boldsymbol{\beta}^{\top}
    \begin{pmatrix} \Aone \\ \BoldLone \end{pmatrix}
    \right)
    \left[
        Y_{ijt}^{(1)} - 
        g^{-1}\!\left(
        \boldsymbol{\beta}^{*\top}
        \begin{pmatrix} \Aone \\ \BoldLone \end{pmatrix}
        \right)
    \right],
\]
where $O_{ij}^{(1)} = (Y_{ijt}^{(1)}, \A_{jt}^{(1)})$, 
so that $G_{1,n}^{(g)} = \frac{1}{n}
\sum_{t=1}^{T^{(1)}} \sum_{j=1}^{J^{(1)}}\sum_{i=1}^{n_{jt}^{(1)}} 
\Psi_{\boldsymbol{\beta}}(O_{ij}^{(1)})$.
Conditioning on $\BarA^{(1)}$ and $\overline{\BoldZ}$, 
the observations $O_{ij}^{(1)}$ are i.i.d. within 
each center $j$ and independent between centers by Conditional Independence Between 
Centers Assumption~\ref{asspcenterIndep}, and the 
function class $\mathcal{F} = 
\{\Psi_{\boldsymbol{\beta}} : \boldsymbol{\beta} 
\in \mathcal{B}\}$ satisfies the Lipschitz condition 
by the Mean Value Theorem, the $C^2$ property of 
$g^{-1}$ (Main Model Assumption~\ref{mainmodel_GLM}), and 
compactness of $\mathcal{B}$ 
(Compact Space Assumption~\ref{asspCompact}), same as in 
\citet{bing2023learn}. By Example 19.7 and Theorem 
19.4 of \citet{van2000asymptotic}, $\mathcal{F}$ 
is Donsker and Glivenko-Cantelli. Since 
$\E[\Psi_{\boldsymbol{\beta}}(O_{ij}^{(1)}) \mid 
\BarA^{(1)}, \overline{\BoldZ}] = 0$ by Unbiased Estimating Equations Theorem~\ref{unbiasedEE} and $n_j^{(1)}/n < 1$ 
with $J^{(1)}$ finite, 
$\sup_{\boldsymbol{\beta} \in \mathcal{B}} 
\|G_{1,n}^{(g)}\| \ConvP 0$ conditionally on 
$\BarA^{(1)}$ and $\overline{\BoldZ}$. The 
unconditional result follows for any $\epsilon > 0$ 
by the Dominated Convergence Theorem, since
\[
    P\!\left(\sup_{\boldsymbol{\beta} \in \mathcal{B}} 
    \|G_{1,n}^{(g)}\| > \epsilon\right) 
    = \E\!\left[P\!\left(
    \sup_{\boldsymbol{\beta} \in \mathcal{B}} 
    \|G_{1,n}^{(g)}\| > \epsilon 
    \;\middle|\; \BarA^{(1)}, \overline{\BoldZ}
    \right)\right] \rightarrow 0,
\]
as the inner conditional probability is a random 
variable bounded by 1 that converges to 0 for all 
values of $\BarA^{(1)}$ and $\overline{\BoldZ}$. 
The same argument can be used to adapt the proof 
from \citet{bing2023learn} for $G_{5,n}^{(g)}$, 
giving $\sup_{\boldsymbol{\beta} \in \mathcal{B}} 
\|G_{5,n}^{(g)}\| \ConvP 0$.

For the terms $G_{2,n}^{(g)}$, $G_{3,n}^{(g)}$, and $G_{4,n}^{(g)}$, the proofs follow exactly as in Appendix C in \citet{bing2023learn}, since these terms involve only stage 2 quantities where $\A_{jt}^{(2,n^{(1)})}$ converges in 
probability to $\Bolda_{jt}^{(2)}$ by Remark~\ref{convergenceofA}, and $g^{-1}$ is $C^2$ (Main Model Assumption~\ref{mainmodel_GLM}) with bounded 
derivatives on compact sets (Compact Set Assumption~\ref{asspCompact}). The first condition of Theorem 5.9 of \citet{van2000asymptotic} follows. 

The second condition of Theorem 5.9 of 
\citet{van2000asymptotic} requires that the
$u(\boldsymbol{\beta}^*) = 0$ and $u(\boldsymbol{\beta})$ 
has a unique zero at $\boldsymbol{\beta}^*$. The 
first requirement $u(\boldsymbol{\beta}^*) = 0$ follows from 
equation~\eqref{smallu_GLM}. The second requirement follows 
from the same argument as in \citet{bing2023learn}, 
using the strict monotonicity of $g^{-1}$ implied 
by Main Model Assumption~\ref{mainmodel_GLM}, the 
compactness of $\mathcal{B}$ from Compact Set 
Assumption~\ref{asspCompact}, and the variation in the intervention components to identify $\boldsymbol{\beta}^*$. Hence, the second condition of Theorem 5.9 of 
\citet{van2000asymptotic} is satisfied.

Consistency Theorem~\ref{theoremConsistencyWithLearning} in the 
GLM setting for continuous outcomes with a general link function follows.
\end{proof}

\subsection{Proof of Asymptotic Normality of \texorpdfstring{$\widehat{\B}$}{Beta-hat} Theorem \ref{AsymptoticTheoremGLM} }\label{AsymptoticNormalityTheorem_GLM_General}

\begin{proof}
We follow the same steps as for continuous outcomes modeled with linear regression in Online  Appendix \ref{AsymptoticProofAppendix}. By the mean value theorem applied to each component of $U(\boldsymbol{\beta})$, there exists $\widetilde{\boldsymbol{\beta}}$ on the line segment between $\widehat{\boldsymbol{\beta}}$ and $\boldsymbol{\beta}^*$ such that
\[
    0 = U(\widehat{\boldsymbol{\beta}}) = 
    U(\boldsymbol{\beta}^*) + 
    \left.\frac{\partial}{\partial\boldsymbol{\beta}}\right|_{\widetilde{\boldsymbol{\beta}}}
    U(\boldsymbol{\beta})
    (\widehat{\boldsymbol{\beta}} - \boldsymbol{\beta}^*),
\]
which gives
\[
    \sqrt{n}(\widehat{\boldsymbol{\beta}} - \boldsymbol{\beta}^*) 
    = -\left[\left.\frac{\partial}{\partial\boldsymbol{\beta}}\right|_{\widetilde{\boldsymbol{\beta}}}
    U(\boldsymbol{\beta})
    \right]^{-1} \sqrt{n}\, U(\boldsymbol{\beta}^*).
\]
We first show that 
$-\left.\frac{\partial}{\partial\boldsymbol{\beta}}\right|_{\widetilde{\boldsymbol{\beta}}}
U(\boldsymbol{\beta})
\ConvP J(\boldsymbol{\beta}^*)$, then that 
$\sqrt{n}\,U(\boldsymbol{\beta}^*)$ converges in 
distribution to $N(0, V(\boldsymbol{\beta}^*))$.

The proof for $-\left.\frac{\partial}{\partial\boldsymbol{\beta}}\right|_{\widetilde{\boldsymbol{\beta}}}
U(\boldsymbol{\beta})
\ConvP J(\boldsymbol{\beta}^*)$ follows 
\citet{bing2023learn} exactly, with 
one modification for the stage~1 terms. In 
\citet{bing2023learn}, $\Bolda_{jt}^{(1)}$ is fixed before 
the trial, so the convergence of the stage~1 terms 
follows directly. In our setting, $\A_{jt}^{(1)}$ is 
random due to potential confounding by indication. We therefore 
condition on $\overline{\BoldZ}$ 
throughout the stage 1 argument and use the 
Law of Total Expectations. 

Regarding $\sqrt{n}\,U(\boldsymbol{\beta}^*)$, the 
proof follows the same structure as for the case of continuous 
outcomes modeled with linear regression case in Online 
Appendix~\ref{AsymptoticProofAppendix}. Decompose 
$\sqrt{n}\,U(\boldsymbol{\beta}^*)$ into three terms 
$U_{1,n}$, $U_{2\_1,n}$, and $U_{2\_2,n}$. First, $U_{1,n}$ converges in 
distribution to
\[
    N\!\biggl(0, \sum_{t=1}^{T^{(1)}} \sum_{j=1}^{J^{(1)}} \alpha_{jt}^{(1)} 
    \left(
    \left.\frac{\partial}{\partial\boldsymbol{\beta}}\right|_{\boldsymbol{\beta}^*}
    g^{-1}\!\left(
    \begin{pmatrix} \boldsymbol{\beta}_A \\ \B_{\ell} 
    \end{pmatrix}^{\top}
    \begin{pmatrix} \A_{jt}^{(1)} \\ \BoldLone 
    \end{pmatrix}
    \right)\right)^{\otimes 2}
    \sigma^2(\BoldZ_j)\biggr)
\]
by the Cram\'{e}r-Wold device and the Lyapunov 
Central Limit Theorem, conditionally on $\overline{\BoldZ}$. Second, $U_{2\_1,n} 
\ConvP 0$ by Chebyshev's inequality, using that
$\max_j\|\A_{jt}^{(2,n^{(1)})} - \Bolda_{jt}^{(2)}\| 
\ConvP 0$ from Remark~\ref{convergenceofA} and the 
boundedness of $(g^{-1})'$ on compact sets 
(Main Model Assumption~\ref{mainmodel_GLM}), exactly as in 
\citet{bing2023learn}. Third, 
$U_{2\_2,n}$ converges in distribution to
\[
    N\!\biggl(0, \sum_{t=T^{(1)}+1}^{T^{(2)}} s\sum_{j=1}^{J^{(2)}} \alpha_{jt}^{(2)} 
    \left(
    \left.\frac{\partial}{\partial\boldsymbol{\beta}}\right|_{\boldsymbol{\beta}^*}
    g^{-1}\!\left(
    \begin{pmatrix} \boldsymbol{\beta}_A \\ \B_{\ell} 
    \end{pmatrix}^{\top}
    \begin{pmatrix} \Bolda_{jt}^{(2)} \\ \BoldLtwo 
    \end{pmatrix}
    \right)\right)^{\otimes 2}
    \sigma^2(\BoldZ_j)\biggr)
\]
by the same argument as $U_{1,n}$, after replacing 
$\epsilon_{ijt}^{(2,n^{(1)})}$ with $\epsilon_{ijt}^{(2)}$ under the limiting intervention $\Bolda_{jt}^{(2)}$ without changing the distribution of $U_{2\_2,n}$. 
Since $U_{1,n}$ and $U_{2\_2,n}$ are independent 
as $\Bolda_{jt}^{(2)}$ are constants, their 
sum converges in distribution to 
$N(0, V(\boldsymbol{\beta}^*))$. Combined with 
$U_{2\_1,n} \ConvP 0$ and 
$-\frac{\partial}{\partial\boldsymbol{\beta}}
U(\widetilde{\boldsymbol{\beta}}) \ConvP 
J(\boldsymbol{\beta}^*)$, Slutsky's Theorem leads to
\[
    \sqrt{n}\,(\widehat{\boldsymbol{\beta}} - 
    \boldsymbol{\beta}^*) \xrightarrow{D} 
    \mathcal{N}\!\left(0,\, J(\boldsymbol{\beta}^*)^{-1} 
    \cdot V(\boldsymbol{\beta}^*) \cdot 
    J(\boldsymbol{\beta}^*)^{-T}\right),
\]
where $J(\boldsymbol{\beta}^*)$ and 
$V(\boldsymbol{\beta}^*)$ are as defined in Asymptotic Normality
Theorem~\ref{AsymptoticTheoremGLM}.
\end{proof}

\section{Extension to the number of stages: \texorpdfstring{$K > 2$}{K > 2}}\label{extensionK2}

This section extends LAGO with confounding by indication to $K > 2$ stages, with 
notation consistent with the main text. Let $\X_j^{(k, n^{(k-1)})}$ and 
$\A_{jt}^{(k, n^{(k-1)})}$ denote the recommended and actual intervention 
package for center $j$ during visit $t$ at stage $k$, respectively, where 
$h_{jt}^{(k)}$ is a deterministic continuous function such that 
$\A_{jt}^{(k, n^{(k-1)})} = h_{jt}^{(k)}(\boldsymbol{x}_j^{(k, n^{(k-1)})})$. 
Let $Y_{ijt}^{(k, n^{(k-1)})}$ denote the outcome of participant $i$ in center $j$ 
at visit $t$ in stage $k$, and let 
$\boldsymbol{Y}_{jt}^{(k,n^{(k-1)})} = (Y_{1jt}^{(k,n^{(k-1)})}, \ldots, Y_{n_{jt}^{(k)}jt}^{(k,n^{(k-1)})})$ 
collect the outcomes across all participants in center $j$ at visit $t$. 
The stage-$k$ recommended interventions, actual interventions, outcomes, and errors 
are first aggregated across centers at each visit $t$ as follows:
$\overline{\X}_t^{(k,n^{(k-1)})} = (\X_{1t}^{(k,n^{(k-1)})}, \ldots, \X_{J^{(k)}t}^{(k,n^{(k-1)})})$,
$\overline{\A}_t^{(k,n^{(k-1)})} = (\A_{1t}^{(k,n^{(k-1)})}, \ldots, \A_{J^{(k)}t}^{(k,n^{(k-1)})})$,
$\overline{\boldsymbol{Y}}_t^{(k,n^{(k-1)})} = (\boldsymbol{Y}_{1t}^{(k,n^{(k-1)})}, \ldots, \boldsymbol{Y}_{J^{(k)}t}^{(k,n^{(k-1)})})$,
and $\overline{\boldsymbol{\epsilon}}_t^{(k,n^{(k-1)})} = (\boldsymbol{\epsilon}_{1t}^{(k,n^{(k-1)})}, \ldots, \boldsymbol{\epsilon}_{J^{(k)}t}^{(k,n^{(k-1)})})$.
Hence, we denote
$\overline{\X}^{(k,n^{(k-1)})} = \bigl(\overline{\X}_1^{(k,n^{(k-1)})}, \ldots, \overline{\X}_t^{(k,n^{(k-1)})}\bigr)$,
$\overline{\A}^{(k,n^{(k-1)})} = \bigl(\overline{\A}_1^{(k,n^{(k-1)})}, \ldots, \overline{\A}_t^{(k,n^{(k-1)})}\bigr)$,
$\overline{\boldsymbol{Y}}^{(k,n^{(k-1)})} = \bigl(\overline{\boldsymbol{Y}}_1^{(k,n^{(k-1)})}, \ldots, \overline{\boldsymbol{Y}}_t^{(k,n^{(k-1)})}\bigr)$,
and $\overline{\boldsymbol{\epsilon}}^{(k,n^{(k-1)})} = \bigl(\overline{\boldsymbol{\epsilon}}_1^{(k,n^{(k-1)})}, \ldots, \overline{\boldsymbol{\epsilon}}_t^{(k,n^{(k-1)})}\bigr)$. The superscript $(k, \PastN)$ indicates that $Y_{ijt}^{(k, \PastN)}$ and $\overline{\boldsymbol{Y}}^{(k, \PastN)}$ depend on the data of participants from all previous $k - 1$ stages through the recommended intervention for stage $k$. 

Let $\widetilde{\X}^{(k, \PastN)} = (\boldsymbol{x}^{(1)}, \overline{\X}^{(2, n^{(1)})}, \hdots, \overline{\X}^{(k, \PastN)})$, and $\widetilde{\A}^{(k, \PastN)} = (\overline{\A}^{(1)}, \hdots, \overline{\A}^{(k, \PastN)})$, and $\widetilde{\boldsymbol{Y}}^{(k, \PastN)} = (\overline{\boldsymbol{Y}}^{(1)}, \hdots, \overline{\boldsymbol{Y}}^{(k, \PastN)})$ be the recommended intervention package, the actual intervention package, and the outcomes in stages $1,2,\ldots,k$ combined.

The estimating equations used, similar to those of equation (\ref{eqn:OLS_ignore_ee}) in the main text, are
\begin{align*}
    0 &= U_K(\boldsymbol{\beta})\\
    &= \frac{1}{n} \left\{ 
        \sum_{t=1}^{T^{(1)}} \sum_{j=1}^{J^{(1)}} \sum_{i=1}^{n_{jt}^{(1)}} 
        \begin{pmatrix} \Aone\\ \BoldLone \end{pmatrix} 
        \left[Y_{ijt}^{(1)} - \begin{pmatrix} \boldsymbol{\beta}_A \\ \B_{\ell} \end{pmatrix}^T 
        \begin{pmatrix} \Aone\\ \BoldLone \end{pmatrix} \right] \right.\\
    &\phantom{= }+ \left.
        \sum_{k=2}^{K} \sum_{t=T^{(1)}+1}^{T^{(k)}} \sum_{j=1}^{J^{(k)}} \sum_{i=1}^{n_{j}^{(k)}} 
        \begin{pmatrix} \A_{jt}^{(k, \PastN)}\\ \boldsymbol{\ell}_j^{(k)} \end{pmatrix} 
        \left[Y_{ijt}^{(k,\PastN)} - \begin{pmatrix} \boldsymbol{\beta}_A \\ \B_{\ell} \end{pmatrix}^T 
        \begin{pmatrix} \A_{jt}^{(k, \PastN)}\\ \boldsymbol{\ell}_j^{(k)} \end{pmatrix}
        \right]
    \right\}, \numberthis \label{multiplestagesEE}
\end{align*}
where the second term includes a summation from $k = 2, \hdots, K$. Next, the following assumptions need to be adjusted for Theorems \ref{unbiasedEE}, \ref{theoremConsistencyWithLearning}, and \ref{AsymptoticTheoremWithLearning} to hold when $K > 2$.

\begin{assumption}\label{multipleasspAnte}
(LAGO Assumption). Conditional on $\overline{\X}^{(k, \PastN)}$ and $\overline{\BoldZ}$, $\bigl(\overline{\A}^{(k, \PastN)}, \overline{\boldsymbol{Y}}^{(k, \PastN)}\bigr)$ is independent of the data from the previous stages, $\bigl(\widetilde{\A}^{(k-1, \overline{n}_{(k-1)-})}, \widetilde{\boldsymbol{Y}}^{(k-1, \overline{n}_{(k-1)-})} \bigr)$. 
\end{assumption}

\begin{assumption}\label{multipleasspRecommendedIntervention}
(Limiting Recommendations). The stage $k$ recommended intervention $\widehat{\boldsymbol{x}}^{\text{opt}, (k,\PastN)}$ converges in probability to a limit $\boldsymbol{x}^{(k)}$.
\end{assumption}

For the case where $K = 2$, conditions for Limiting Recommendations Assumption \ref{multipleasspRecommendedIntervention} to be satisfied are established in the main text. At stage 3, when the recommended intervention equals the estimated optimal intervention based on the data of stage 1 and stage 2, $\hat{\boldsymbol{x}}^{\text{opt},(3, (n^{(1)}, n^{(2)}))} \rightarrow \hat{\boldsymbol{x}}^{\text{opt}, (3)}$ under the conditions for $K = 2$ of the main text. Through forward induction, this extends the Limiting Recommendations Assumption \ref{multipleasspRecommendedIntervention} to cases where $K > 2$. Under Limiting Recommendations Assumption \ref{multipleasspRecommendedIntervention}, the definition of $h_{jt}^{(k)}$, and the Continuous Mapping Theorem, $\A_{jt}^{(k, \PastN)} = h_{jt}^{(k)} \left(\X^{(k, \PastN)} \right)$ converges in probability to $\Bolda_{jt}^{(k)} = h_{jt}^{(k)}\left(\boldsymbol{x}^{(k)} \right)$.

Define
\begin{align*}
u_K(\boldsymbol{\beta}) &= \sum_{t=1}^{T^{(1)}} \sum_{j=1}^{J^{(1)}} \alpha_{jt}^{(1)}  \begin{pmatrix} \A_{jt}^{(1)}\\ \BoldLone \end{pmatrix} \left [ \begin{pmatrix} \boldsymbol{\beta}^{*}_A \\  \B^{*}_{\ell} \end{pmatrix}^T - \begin{pmatrix} \boldsymbol{\beta}_A \\  \B_{\ell} \end{pmatrix}^T \right ] \begin{pmatrix} \A_{jt}^{(1)}\\ \BoldLone \end{pmatrix} \\
&+  \sum_{k=2}^{K} \sum_{t=T^{(1)}+1}^{T^{(k)}} \sum_{j=1}^{J^{(k)}} \alpha_{jt}^{(k)}  \begin{pmatrix} \Bolda_{jt}^{(k)}\\ \boldsymbol{\ell}_j^{(k)} \end{pmatrix} \left [ \begin{pmatrix} \boldsymbol{\beta}^{*}_A \\  \B^{*}_{\ell} \end{pmatrix}^T - \begin{pmatrix} \boldsymbol{\beta}_A \\  \B_{\ell} \end{pmatrix}^T \right ] \begin{pmatrix} \Bolda_{jt}^{(k)}\\ \boldsymbol{\ell}_j^{(k)} \end{pmatrix}.
\end{align*}

As in the proof of Unbiased estimating equations Theorem \ref{unbiasedEE} for the terms for stage 2, the conditional expectation of the terms in equation (\ref{multiplestagesEE}) for each stage $k$ given the fixed center characteristics $\overline{\BoldZ}$ equals 0. It follows that $U(\B)$ are unbiased estimating equations when $K > 2$.

The proof of Consistency Theorem \ref{theoremConsistencyWithLearning} can be adapted in the following way. Consider $k = 3$ first. The term $U_3(\B) - u_3(\B)$ can be decomposed into five components, and their supremum over $\B$ can be proven to converge to 0 in probability separately. The most difficult term is $U_3(\B^{*})$, which does not contain $\B$. Using Consistency Assumption \ref{asspConsistency}, Main Model Assumption \ref{mainmodel}, Outcome Errors Assumption \ref{epsilonassumption}, and Conditional Independence Between Centers Assumption \ref{asspcenterIndep}, the two summands of stage 2 and stage 3 in $U_3(\B^{*})$ are uncorrelated by conditioning on the recommended interventions of stage 3, $\overline{\X}^{(3,n^{(2)})}$, and the center characteristics $\overline{\BoldZ}$. Then, following the same proof as in the main text, the variance of each entry of each stage in $U_3(\B^{*})$ converges to 0 as $n \rightarrow \infty$. Hence, each entry in $U_3(\B^{*})$ converges to 0 in probability by applying Chebyshev's inequality, so that $U_3(\B^{*}) \ConvP 0$. Hence, by forward induction, it follows that $\sup_{\beta}||U_{K}(\B^{*})|| \rightarrow 0$. The rest of the proof follows as before, and Consistency Theorem \ref{theoremConsistencyWithLearning} follows for settings with $K > 2$ stages.

The proof of  Asymptotic Normality Theorem \ref{AsymptoticTheoremGLM} with $K > 2$ stages extends by forward induction. The base case $K = 2$ is established in the main text, where we showed that $\sqrt{n}\,U(\B^{*})$ decomposes into stage 1 and stage 2 terms that are uncorrelated conditionally on $\overline{\BoldZ}$ that jointly converge to a normal distribution. For $K = 3$, we decompose $\sqrt{n}\,U(\B^{*})$ into terms for each stage. The stage 3 term is handled by first replacing the conditional errors $\epsilon_{ijt}^{(3,n^{(2)})}$ with unconditional errors $\epsilon_{ijt}^{(3)}$ without changing the distribution (Outcome Errors Assumption \ref{epsilonassumption}), then decomposing by adding and subtracting the limiting term $\Bolda_{jt}^{(3)}$ to separate the term into a main component involving $\Bolda_{jt}^{(3)}$ and $\epsilon_{ijt}^{(3)}$, plus a remainder term involving the difference between $\A_{jt}^{(3,n^{(2)})}$ and $\Bolda_{jt}^{(3)}$ that converges in probability to 0 as shown for the second term in the main text. For the stage-2 term, the conditional errors $\epsilon_{ijt}^{(2,n^{(1)})}$ is replaced by the unconditional errors $\epsilon_{ijt}^{(2)}$. Then, the terms for three stages are independent and using the same argument as in the main text, the sum converges to a normal distribution. The general case $K > 3$ follows by induction. Hence, Asymptotic Normality Theorem \ref{AsymptoticTheoremWithLearning} holds in the settings with $K > 2$.

Extension to $K > 2$ stages for binary outcomes modeled with logistic regression and for GLM with a general link function follow the same steps as for continuous outcomes modeled with linear regression outlined above. For further technical details, refer to Section 3 of the Supplementary Materials of \citet{nevo2021analysis} for the binary outcomes with logistic regression and to Appendix D of \citet{bing2023learn} for the GLM setting with a general link function.

\section{Additional Simulation Results}
\subsection{Parameter values used in simulation}\label{Parameter_reporting}
Twenty centers were simulated with characteristics drawn from a standard normal distribution $\BoldZ_j \sim N(0,1)$ and used across all scenarios. Random variations around the recommended intervention, $\eta_{1}$ and $\eta_{2}$, were also fixed across scenarios.

The 20 center characteristics $\BoldZ_j$ were:
-1.238, -0.456, -0.830, 0.340, 1.066, 1.216, 0.736, -0.481, 0.563, -1.246, 0.381, -1.430, -1.048, -0.219, -1.490, 1.173, -1.480, -0.430, -1.052, 1.523.

The corresponding $\eta_{1}$ values were:
-0.428, -1.569, -0.485, -1.521, -0.908, -0.268, -0.494, -0.626, -0.073, -0.182, -1.331, -0.056, 0.695, 0.133, -0.991, -1.031, 0.260, -1.263, -0.233, -0.468.

The corresponding $\eta_{2}$ values were:
0.315, -0.221, -0.240, 0.242, -0.280, -0.008, 0.067, -0.087, -0.196, 0.190, -0.116, -0.097, 0.396, 0.619, 0.146, 0.228, -0.261, 0.154, -0.187, 0.084.

\subsection{\texorpdfstring{$\eta_1$ and $\eta_2$}{eta1 and eta2} calculations}\label{eta_calculations}
For $m = 1, 2$ component, since $\xi_{ij,m}^{(1)} \sim N(0,1)$ and $\BoldZ_j \sim N(0,1)$ independently, the correlation between $\A_{ij,m}^{(1)}$ and $\BoldZ_j$ is:

\begin{align*}
\rho_k = \text{Cor}(\A_{ij,m}^{(1)}, \BoldZ_j)
&= \frac{\text{Cov}(x_m^{(1)} + \eta_m \BoldZ_j + \xi_{ij,m}^{(1)},\ \BoldZ_j)}
        {\sqrt{\text{Var}(x_m^{(1)} + \eta_m \BoldZ_j + \xi_{ij,m}^{(1)})} 
         \cdot \sqrt{\text{Var}(\BoldZ_j)}} \\
&= \frac{\eta_m \text{Var}(\BoldZ_j)}
        {\sqrt{\eta_m^2 \text{Var}(\BoldZ_j) + \text{Var}(\xi_{ij,m}^{(1)})} 
         \cdot \sqrt{\text{Var}(\BoldZ_j)}} \\
&= \frac{\eta_m}{\sqrt{\eta_m^2 + 1}},
\end{align*}

where the last equality uses $\text{Var}(\BoldZ_j) = \text{Var}(\xi_{ij,m}^{(1)}) = 1$. 
Solving for $\eta_m$ gives:
\begin{align*}
\eta_m = \frac{\rho_m}{\sqrt{1 - \rho_m^2}}, \quad m = 1, 2.
\end{align*}
\subsection{Simulation results for Scenario 0 under the null hypothesis and \texorpdfstring{$\rho_1 = 0.1, \rho_2 = 0.2$}{rho1 = 0.1, rho2 = 0.2}}\label{sim_null}
Tables \ref{sim_null_table_varying_highconf} and \ref{sim_null_table_fixed_highconf} present simulation results under a scenario where the null hypothesis holds: the effect of both intervention components is 0 ($H_0: \beta_1 = \beta_2 = 0$). Tables report bias (times 1000), the ratio of mean estimated standard errors to empirical standard deviations, and Type I error rates for individual tests of $H_0:\beta_1 = 0$ and $H_0: \beta_2 = 0$, as well as the joint Type I error rate for simultaneously testing both hypotheses $H_0: \beta_1 = \beta_2 = 0$. The individual error rates $\alpha_1$ and $\alpha_2$ represent the proportions of simulations incorrectly rejecting each null hypothesis, calculated based on the test statistic $\hat{\beta_1} / \text{SE}(\hat{\beta_1})$ and $\hat{\beta_2} / \text{SE}(\hat{\beta_2})$ using sandwich standard errors based on the sandwich estimation of the variance. The combined error rate ($\alpha_{\text{combined}}$) represents the proportion of simulations where the p-values for testing $H_0: \beta_1 = \beta_2 = 0$ fall below 0.05, where the Wald statistic follows a chi-squared distribution with 2 degrees of freedom under the null hypothesis.

Table \ref{sim_null_table_varying_highconf} reports the Type I errors for high confounding case in which $\rho_1 = 0.1$ and $\rho_2 = 0.2$ and $\BoldZ_j$ varies per simulated dataset. Table \ref{sim_null_table_fixed_highconf} reports the Type I errors for high confounding case in which $\rho_1 = 0.1$ and $\rho_2 = 0.2$ and $\BoldZ_j$ were simulated from the beginning and remained fixed for the all simulations.

Overall, for the individual tests, $\alpha_1$ and $\alpha_2$ were generally close to the desired 0.05 level (between 0.04 and 0.06 as expected given the number of simulation iterations (2000)). For the combined test, $\alpha_{\text{combined}}$ was between 0.04 to 0.06 across all scenarios, confirming that Type I errors were under control under the null hypothesis.

\FloatBarrier
\begin{table}[htbp]
\caption{Simulation results under the null $\boldsymbol{\beta} = \mathbf{0}$ for Scenario 0 ($\rho_1 = 0.1$ and $\rho_2 = 0.2$, varying $Z_j$)}
\label{sim_null_table_varying_highconf}
\fontsize{8pt}{10pt}\selectfont
\centering
\begin{tabular}{llllllllllll}
$n_j^{(1)}$ & $n_j^{(2)}$ & $J$ & & \multicolumn{3}{c}{$\hat{\beta}_{1}$} & & \multicolumn{3}{c}{$\hat{\beta}_{2}$} \\
 &  &  & & \makecell{Bias\\($\times 1000$)} & SE/EMP.SD & $\alpha_1$ & & \makecell{Bias\\($\times 1000$)} & SE/EMP.SD & $\alpha_2$ & $\alpha_{\text{combined}}$\\ \hline
 \\
\multicolumn{11}{l}{Scenario 0 ($J^{(1)} = J^{(2)} = J$)} \\
50 & 100 & 6 & & -0.26 & 98.62 & 0.049 & & -0.14 & 98.62 & 0.059 & 0.049\\
 & 100 & 10 & & -0.65 & 99.95 & 0.046 & & 1.05 & 98.75 & 0.055 & 0.053\\
 & 100 & 20 & & -0.03 & 98.70 & 0.056 & & 0.40 & 99.41 & 0.056 & 0.059\\
50 & 200 & 6 & & -0.77 & 99.63 & 0.056 & & 0.89 & 104.19 & 0.041 & 0.048\\
 & 200 & 10 & & 0.36 & 98.80 & 0.052 & & -1.18 & 102.36 & 0.046 & 0.046\\
 & 200 & 20 & & -0.20 & 101.63 & 0.047 & & 0.30 & 99.44 & 0.049 & 0.040\\
100 & 100 & 6 & & -0.42 & 97.33 & 0.051 & & -0.37 & 97.13 & 0.056 & 0.060\\
 & 100 & 10 & & -0.73 & 100.05 & 0.062 & & 0.58 & 97.34 & 0.061 & 0.063\\
 & 100 & 20 & & -0.76 & 99.70 & 0.053 & & 0.81 & 100.88 & 0.048 & 0.047\\
100 & 200 & 6 & & 0.70 & 105.35 & 0.040 & & -0.30 & 102.20 & 0.048 & 0.049\\
 & 200 & 10 & & 0.01 & 100.73 & 0.041 & & -0.60 & 100.12 & 0.049 & 0.051\\
 & 200 & 20 & & -0.00 & 95.95 & 0.065 & & 0.21 & 98.83 & 0.057 & 0.065\\
\end{tabular}
\footnotesize{
    \fontsize{8pt}{10pt}\selectfont
    \raggedright{
    \medskip \\
    $n_j^{(1)}$: number of participants per center $j$ at stage 1.\\
    $n_j^{(2)}$: number of participants per center $j$ at stage 2.\\
    SE/EMP.SD: mean estimated standard error to empirical standard deviation ratio, multiplied by 100.\\
    $J$: number of centers for each stage. \\
    Bias of $\hat{\beta}_{1}$: mean bias $\widehat{\beta}_1 - \beta_1^{*}$.\\
    Bias of $\hat{\beta}_{2}$: mean bias $\widehat{\beta}_2 - \beta_2^{*}$.\\
    $\alpha_1$: proportion of tests incorrectly rejecting the null hypothesis $H_0: \beta_1 = 0$ at $\alpha = 0.05$.\\
    $\alpha_2$: proportion of tests incorrectly rejecting the null hypothesis $H_0: \beta_2 = 0$ at $\alpha = 0.05$.\\
    $\alpha_{\text{combined}}$: proportion of tests incorrectly rejecting the null hypothesis $H_0: \beta_1 = \beta_2 = 0$ at $\alpha = 0.05$.\\
    }
}
\end{table}

\begin{table}[htbp]
\caption{Simulation results under the null $\boldsymbol{\beta} = \mathbf{0}$ for Scenario 0 ($\rho_1 = 0.1$ and $\rho_2 = 0.2$, fixed $Z_j$)}
\label{sim_null_table_fixed_highconf}
\fontsize{8pt}{10pt}\selectfont
\centering
\begin{tabular}{llllllllllll}
$n_j^{(1)}$ & $n_j^{(2)}$ & $J$ & & \multicolumn{3}{c}{$\hat{\beta}_{1}$} & & \multicolumn{3}{c}{$\hat{\beta}_{2}$} \\
 &  &  & & \makecell{Bias\\($\times 1000$)} & SE/EMP.SD & $\alpha_1$ & & \makecell{Bias\\($\times 1000$)} & SE/EMP.SD & $\alpha_2$ & $\alpha_{\text{combined}}$\\ \hline
 \\
\multicolumn{11}{l}{Scenario 0 ($J^{(1)} = J^{(2)} = J$)} \\
50 & 100 & 6 & & -0.99 & 99.87 & 0.057 & & 0.29 & 98.00 & 0.059 & 0.054\\
 & 100 & 10 & & -1.02 & 98.78 & 0.055 & & 0.88 & 101.18 & 0.051 & 0.054\\
 & 100 & 20 & & 0.96 & 97.86 & 0.051 & & -0.17 & 99.34 & 0.055 & 0.049\\
50 & 200 & 6 & & -1.08 & 100.52 & 0.056 & & 1.11 & 103.89 & 0.049 & 0.050\\
 & 200 & 10 & & -0.08 & 100.43 & 0.056 & & -0.64 & 100.94 & 0.047 & 0.057\\
 & 200 & 20 & & 0.01 & 102.51 & 0.043 & & 0.59 & 101.58 & 0.051 & 0.042\\
100 & 100 & 6 & & -0.45 & 97.59 & 0.058 & & -0.23 & 96.88 & 0.058 & 0.054\\
 & 100 & 10 & & -1.19 & 99.46 & 0.055 & & 0.65 & 98.81 & 0.052 & 0.060\\
 & 100 & 20 & & -0.02 & 99.60 & 0.058 & & 0.13 & 98.34 & 0.062 & 0.056\\
100 & 200 & 6 & & 0.78 & 103.71 & 0.043 & & -0.04 & 102.48 & 0.041 & 0.043\\
 & 200 & 10 & & -0.11 & 100.85 & 0.043 & & -0.04 & 99.52 & 0.042 & 0.047\\
 & 200 & 20 & & 0.13 & 98.53 & 0.057 & & -0.03 & 98.10 & 0.057 & 0.054\\
\end{tabular}
\footnotesize{
    \fontsize{8pt}{10pt}\selectfont
    \raggedright{
    \medskip \\
    $n_j^{(1)}$: number of participants per center $j$ at stage 1.\\
    $n_j^{(2)}$: number of participants per center $j$ at stage 2.\\
    SE/EMP.SD: mean estimated standard error to empirical standard deviation ratio, multiplied by 100.\\
    $J$: number of centers for each stage. \\
    Bias of $\hat{\beta}_{1}$: mean bias $\widehat{\beta}_1 - \beta_1^{*}$.\\
    Bias of $\hat{\beta}_{2}$: mean bias $\widehat{\beta}_2 - \beta_2^{*}$.\\
    $\alpha_1$: proportion of tests incorrectly rejecting the null hypothesis $H_0: \beta_1 = 0$ at $\alpha = 0.05$.\\
    $\alpha_2$: proportion of tests incorrectly rejecting the null hypothesis $H_0: \beta_2 = 0$ at $\alpha = 0.05$.\\
    $\alpha_{\text{combined}}$: proportion of tests incorrectly rejecting the null hypothesis $H_0: \beta_1 = \beta_2 = 0$ at $\alpha = 0.05$.\\
    }
}
\end{table}
\FloatBarrier

\subsection{Extension of Table \ref{simulationresults_linear_varying_Zj_lowconf}b in the main text} \label{bias_table_linear_varying_lowconf}
\FloatBarrier
\begin{table}[htbp]
\caption{Simulation study results for estimated optimal intervention with a linear cost function ($\rho_1 = 0.05$ and $\rho_2 = 0.07$, varying $\BoldZ_j$)}
\centering
\fontsize{8pt}{10pt}\selectfont
\begin{tabular}{lllllllllllll}
\makecell{$\boldsymbol{\beta}^*=(\beta_1^*,\beta_2^*)$\\(-1.70, -0.70)} & \makecell{$\boldsymbol{x}^{\text{opt}}$\\(2.94, 0)} & $n_j^{(1)}$ & $n_j^{(2)}$ & $J$ & & 
\multicolumn{3}{c}{Stage 1} & & \multicolumn{3}{c}{Stage 2/LAGO optimized} \\
 &  &  &  &  & & \begin{tabular}[c]{@{}l@{}}Bias of\\ $\hat{x}_1^{\text{opt},(1)}$\\ \end{tabular} & \begin{tabular}[c]{@{}l@{}}Bias of\\ $\hat{x}_2^{\text{opt},(1)}$\\ \end{tabular} & \begin{tabular}[c]{@{}l@{}}rMSE\\ \end{tabular} & & \begin{tabular}[c]{@{}l@{}}Bias of\\ $\hat{x}_1^{\text{opt},(2,n^{(1)})}$\\ \end{tabular} & \begin{tabular}[c]{@{}l@{}}Bias of\\ $\hat{x}_2^{\text{opt},(2,n^{(1)})}$\\ \end{tabular} & \begin{tabular}[c]{@{}l@{}}rMSE\\ \end{tabular} \\ \hline
\multicolumn{13}{l}{Scenario 1 ($J^{(1)} = J^{(2)} = J$)} \\
 &  & 50 & 100 & 6 & & -0.074 & 0.207 & 0.603 & & -0.005 & 0.076 & 0.484 \\
 &  &  & 100 & 10 & & -0.024 & 0.102 & 0.426 & & 0.002 & 0.054 & 0.380 \\
 &  &  & 100 & 20 & & 0.000 & 0.067 & 0.294 & & 0.008 & 0.054 & 0.281 \\
 &  & 50 & 200 & 6 & & -0.070 & 0.194 & 0.587 & & -0.003 & 0.077 & 0.477 \\
 &  &  & 200 & 10 & & 0.006 & 0.100 & 0.431 & & 0.022 & 0.065 & 0.396 \\
 &  &  & 200 & 20 & & -0.006 & 0.062 & 0.290 & & 0.000 & 0.050 & 0.277 \\
 &  & 100 & 100 & 6 & & -0.018 & 0.100 & 0.505 & & -0.001 & 0.072 & 0.477 \\
 &  & & 100 & 10 & & -0.006 & 0.064 & 0.385 & & -0.003 & 0.057 & 0.373 \\
 &  & & 100 & 20 & & -0.001 & 0.054 & 0.279 & & 0.000 & 0.052 & 0.277 \\
 &  & 100 & 200 & 6 & & -0.002 & 0.103 & 0.515 & & 0.011 & 0.078 & 0.490 \\
 &  & & 200 & 10 & & -0.006 & 0.069 & 0.387 & & -0.002 & 0.059 & 0.377 \\
 &  & & 200 & 20 & & 0.005 & 0.056 & 0.283 & & 0.006 & 0.050 & 0.275 \\
\end{tabular}
\footnotesize{
    \fontsize{8pt}{10pt}\selectfont
    \raggedright{
    \medskip \\
    $n_j^{(1)}$: number of participants per center $j$ at stage 1, 
    $n_j^{(2)}$: number of participants per center $j$ at stage 2.\\
    $J$: number of centers for each stage. \\
    Bias of $\hat{x}_{1}^{\text{opt}}$, $\hat{x}_{2}^{\text{opt}}$: bias of the first and second components of the estimated optimal intervention.\\
    rMSE: root mean squared error, $\{\operatorname{mean}(\|\hat{\boldsymbol{x}}^{o p t}-\boldsymbol{x}^{o p t}\|^{2})\}^{1 / 2}$, estimated by average over simulation iterations.\\
    }
}
\label{tab:optimization_performance_lowconf}
\end{table}
\FloatBarrier
\subsection{Extension of Table \ref{simulationresults_linear_varying_Zj_lowconf}c in the main text}\label{cp_table_linear_varying_lowconf}
\FloatBarrier
\begin{table}[htbp]
\caption{Simulation study results for estimated optimal intervention, confidence set, and confidence band with a linear cost function ($\rho_1 = 0.05$ and $\rho_2 = 0.07$, varying $\BoldZ_j$)}
\centering
\fontsize{8pt}{10pt}\selectfont
\begin{tabular}{llllllllll}
\makecell{$\boldsymbol{\beta}^*=(\beta_1^*,\beta_2^*)$\\(-1.70, -0.70)} & \makecell{$\boldsymbol{x}^{\text{opt}}$\\(2.94, 0)} & $n_j^{(1)}$ & $n_j^{(2)}$ & $J$ & \begin{tabular}[c]{@{}l@{}}TrueOpt1\\ (Q2.5,Q97.5)\end{tabular} & \begin{tabular}[c]{@{}l@{}}TrueOpt2\\ (Q2.5,Q97.5)\end{tabular} & \begin{tabular}[c]{@{}l@{}}SetCP95\\ $\%$\end{tabular} & \begin{tabular}[c]{@{}l@{}}SetPerc\\ $\%$\end{tabular} & \begin{tabular}[c]{@{}l@{}}BandsCP95\\ $\%$\end{tabular} \\ \hline
\multicolumn{10}{l}{Scenario 1 ($J^{(1)} = J^{(2)} = J$)} \\
 &  & 50 & 100 & 6 & (-5.331, -4.657) & (-5.189, -4.915) & 95.30 & 4.07 & 96.25 \\
 &  & & 100 & 10 & (-5.297, -4.760) & (-5.145, -4.942) & 95.00 & 3.17 & 95.70 \\
 &  & & 100 & 20 & (-5.225, -4.863) & (-5.125, -4.966) & 95.30 & 2.24 & 96.60 \\
 &  & 50 & 200 & 6 & (-5.323, -4.673) & (-5.152, -4.940) & 94.70 & 3.49 & 96.40 \\
 &  &  & 200 & 10 & (-5.295, -4.771) & (-5.130, -4.960) & 94.20 & 2.72 & 96.50 \\
 &  &  & 200 & 20 & (-5.247, -4.857) & (-5.110, -4.977) & 94.60 & 1.94 & 96.85 \\
 &  & 100 & 100 & 6 & (-5.266, -4.786) & (-5.167, -4.924) & 95.05 & 3.28 & 96.35 \\
 &  &  & 100 & 10 & (-5.231, -4.861) & (-5.136, -4.943) & 95.35 & 2.54 & 96.25 \\
 &  &  & 100 & 20 & (-5.174, -4.908) & (-5.121, -4.967) & 94.40 & 1.80 & 96.40 \\
 &  & 100 & 200 & 6 & (-5.272, -4.810) & (-5.142, -4.952) & 95.55 & 2.89 & 96.40 \\
 &  &  & 200 & 10 & (-5.237, -4.861) & (-5.121, -4.966) & 94.70 & 2.25 & 96.10 \\
 &  &  & 200 & 20 & (-5.189, -4.904) & (-5.111, -4.976) & 93.60 & 1.60 & 96.25 \\
\end{tabular}
\footnotesize{
    \fontsize{8pt}{10pt}\selectfont
    \raggedright{
    \medskip \\
    $n_j^{(1)}$: number of participants per center $j$ at stage 1, 
    $n_j^{(2)}$: number of participants per center $j$ at stage 2.\\
    $J$: number of centers for each stage. \\
    TrueOpt1: mean treatment effect under the stage 2 recommended intervention.\\
    TrueOpt2: mean treatment effect under the final estimated optimal intervention based on all data. \\
    $Q$2.5 and $Q$97.5: 2.5$\%$ and 97.5$\%$ quantiles.\\ 
    SetCP95$\%$: empirical coverage percentage of confidence set for the optimal intervention. \\
    SetPerc$\%$: mean percentage of the size of the confidence set as a percent of the total intervention space. \\
    BandsCP95$\%$: empirical coverage of 95$\%$ confidence band. \\
    }
}
\label{tab:confidence_metrics_lowconf}
\end{table}
\FloatBarrier

\normalsize

\subsection{Simulation results for Scenario 1 with varying \texorpdfstring{$\BoldZ_j$}{Z\_j} and \texorpdfstring{$\rho_1 = 0.1, \rho_2 = 0.2$}{rho1 = 0.1, rho2 = 0.2}}\label{simulationresults_linear_varying_Zj_highconf}
\FloatBarrier
\begin{table}[htbp]
\caption{Simulation study results for individual package component effects with a linear cost function ($\rho_1 = 0.1, \rho_2 = 0.2$, varying $Z_j$)}
\centering
\fontsize{8pt}{10pt}\selectfont
\begin{tabular}{llllllllll}
\makecell{$\boldsymbol{\beta}^*=(\beta_{1}^*, \beta_{2}^*)$ \\ (-1.70, -0.70)} & $n_j^{(1)}$ & $n_j^{(2)}$ & $J$ &  & $\widehat{\boldsymbol{\beta}}_{1}$ &  &  & $\widehat{\boldsymbol{\beta}}_{2}$ &  \\
  &  &  &  & $\%$RelBias & \begin{tabular}[c]{@{}l@{}}$\frac{SE}{EMP.SD}$ \\ ($\times 100$)\end{tabular} & CP95 & $\%$RelBias & \begin{tabular}[c]{@{}l@{}}$\frac{SE}{EMP.SD}$\\ ($\times 100$)\end{tabular} & CP95 \\ \hline
\multicolumn{4}{l}{Scenario 1 ($J^{(1)} = J^{(2)} = J$)} &  &  &  &  &  &  \\
 & 50 & 100 & 6 & 0.018 & 95.16 & 94.80 & 0.224 & 101.20 & 94.90 \\
 &  &  & 10 & -0.015 & 96.69 & 94.65 & 0.234 & 100.38 & 94.70 \\
 &  &  & 20 & 0.012 & 98.51 & 95.15 & 0.018 & 99.46 & 94.80 \\
 &  & 200 & 6 & 0.030 & 95.03 & 94.40 & 0.165 & 105.12 & 95.75 \\
 &  &  & 10 & 0.006 & 94.24 & 94.05 & -0.097 & 100.23 & 95.15 \\
 &  &  & 20 & -0.009 & 98.34 & 95.00 & 0.086 & 98.56 & 95.15 \\
 & 100 & 100 & 6 & -0.007 & 96.19 & 94.55 & -0.051 & 98.14 & 94.80 \\
 &  &  & 10 & -0.015 & 101.57 & 95.45 & 0.011 & 97.42 & 94.45 \\
 &  &  & 20 & -0.024 & 97.06 & 94.05 & 0.087 & 100.84 & 95.25 \\
 &  & 200 & 6 & 0.028 & 99.55 & 95.65 & 0.032 & 100.58 & 95.10 \\
 &  &  & 10 & 0.005 & 98.67 & 94.55 & -0.086 & 99.67 & 94.25 \\
 &  &  & 20 & 0.001 & 95.20 & 93.60 & 0.024 & 98.00 & 94.55 \\
\end{tabular}
\footnotesize{
    \fontsize{8pt}{10pt}\selectfont
    \raggedright{
    \medskip \\
    $n_j^{(1)}$: number of participants per center $j$ at stage 1, 
    $n_j^{(2)}$: number of participants per center $j$ at stage 2.\\
    $J$: number of centers for each stage. \\
    \%RelBias: percent relative bias $100(\hat{\beta}-\beta^\star)/\beta^\star$.\\
    SE: mean estimated standard error, 
    EMP.SD: empirical standard deviation.\\
    CP95: empirical coverage rate of 95\% confidence intervals.\\
    }
}
\label{tab:intervention_effects_highconf}
\end{table}
\FloatBarrier
\begin{table}[H]
\singlespacing
\caption{Simulation study results for estimated optimal intervention with a linear cost function ($\rho_1 = 0.1, \rho_2 = 0.2$, varying $Z_j$)}
\centering
\fontsize{8pt}{10pt}\selectfont
\begin{tabular}{lllllllllllll}
\makecell{$\boldsymbol{\beta}^*=(\beta_1^*,\beta_2^*)$\\(-1.70, -0.70)} & \makecell{$\boldsymbol{x}^{\text{opt}}$\\(2.94, 0)} & $n_j^{(1)}$ & $n_j^{(2)}$ & $J$ & & 
\multicolumn{3}{c}{Stage 1} & & \multicolumn{3}{c}{Stage 2/LAGO optimized} \\
 &  &  &  &  & & \begin{tabular}[c]{@{}l@{}}Bias of\\ $\hat{x}_1^{\text{opt},(1)}$\\ \end{tabular} & \begin{tabular}[c]{@{}l@{}}Bias of\\ $\hat{x}_2^{\text{opt},(1)}$\\ \end{tabular} & \begin{tabular}[c]{@{}l@{}}rMSE\\ \end{tabular} & & \begin{tabular}[c]{@{}l@{}}Bias of\\ $\hat{x}_1^{\text{opt},(2,n^{(1)})}$\\ \end{tabular} & \begin{tabular}[c]{@{}l@{}}Bias of\\ $\hat{x}_2^{\text{opt},(2,n^{(1)})}$\\ \end{tabular} & \begin{tabular}[c]{@{}l@{}}rMSE\\ \end{tabular} \\ \hline
\multicolumn{13}{l}{Scenario 1 ($J^{(1)} = J^{(2)} = J$)} \\
 &  & 50 & 100 & 6 & & -0.075 & 0.210 & 0.608 & & -0.006 & 0.077 & 0.485 \\
 &  &  & 100 & 10 & & -0.021 & 0.097 & 0.422 & & 0.002 & 0.054 & 0.380 \\
 &  &  & 100 & 20 & & -0.000 & 0.068 & 0.296 & & 0.008 & 0.052 & 0.280 \\
 &  & 50 & 200 & 6 & & -0.068 & 0.191 & 0.586 & & -0.003 & 0.077 & 0.477 \\
 &  &  & 200 & 10 & & 0.004 & 0.101 & 0.432 & & 0.022 & 0.065 & 0.395 \\
 &  &  & 200 & 20 & & -0.006 & 0.063 & 0.291 & & 0.000 & 0.049 & 0.276 \\
 &  & 100 & 100 & 6 & & -0.017 & 0.100 & 0.504 & & -0.002 & 0.072 & 0.478 \\
 &  & & 100 & 10 & & -0.005 & 0.062 & 0.382 & & -0.003 & 0.058 & 0.374 \\
 &  & & 100 & 20 & & -0.001 & 0.055 & 0.280 & & 0.000 & 0.051 & 0.277 \\
 &  & 100 & 200 & 6 & & -0.002 & 0.104 & 0.514 & & 0.011 & 0.078 & 0.490 \\
 &  & & 200 & 10 & & -0.005 & 0.068 & 0.386 & & -0.001 & 0.058 & 0.377 \\
 &  & & 200 & 20 & & 0.005 & 0.058 & 0.283 & & 0.006 & 0.050 & 0.275 \\
\end{tabular}
\footnotesize{
    \fontsize{8pt}{10pt}\selectfont
    \raggedright{
    \medskip \\
    $n_j^{(1)}$: number of participants per center $j$ at stage 1, 
    $n_j^{(2)}$: number of participants per center $j$ at stage 2.\\
    $J$: number of centers for each stage. \\
    Bias of $\hat{x}_{1}^{\text{opt}}$, $\hat{x}_{2}^{\text{opt}}$: bias of the first and second components of the estimated optimal intervention.\\
    rMSE: root mean squared error, $\{\operatorname{mean}(\|\hat{\boldsymbol{x}}^{o p t}-\boldsymbol{x}^{o p t}\|^{2})\}^{1 / 2}$, estimated by average over simulation iterations.\\
    }
}
\label{tab:optimization_performance_highconf}
\end{table}

\begin{table}[H]
\singlespacing
\caption{Simulation study results for estimated optimal intervention, confidence set, and confidence band with a linear cost function ($\rho_1 = 0.1, \rho_2 = 0.2$, varying $Z_j$)}
\centering
\fontsize{8pt}{10pt}\selectfont
\begin{tabular}{llllllllll}
\makecell{$\boldsymbol{\beta}^*=(\beta_1^*,\beta_2^*)$\\(-1.70, -0.70)} & \makecell{$\boldsymbol{x}^{\text{opt}}$\\(2.94, 0)} & $n_j^{(1)}$ & $n_j^{(2)}$ & $J$ & \begin{tabular}[c]{@{}l@{}}TrueOpt1\\ (Q2.5,Q97.5)\end{tabular} & \begin{tabular}[c]{@{}l@{}}TrueOpt2\\ (Q2.5,Q97.5)\end{tabular} & \begin{tabular}[c]{@{}l@{}}SetCP95\\ $\%$\end{tabular} & \begin{tabular}[c]{@{}l@{}}SetPerc\\ $\%$\end{tabular} & \begin{tabular}[c]{@{}l@{}}BandsCP95\\ $\%$\end{tabular} \\ \hline
\multicolumn{10}{l}{Scenario 1 ($J^{(1)} = J^{(2)} = J$)} \\
 &  & 50 & 100 & 6 & (-5.331, -4.657) & (-5.188, -4.915) & 94.90 & 4.07 & 96.10 \\
 &  & & 100 & 10 & (-5.303, -4.761) & (-5.144, -4.942) & 95.05 & 3.16 & 95.80 \\
 &  & & 100 & 20 & (-5.224, -4.860) & (-5.125, -4.966) & 95.05 & 2.24 & 96.55 \\
 &  & 50 & 200 & 6 & (-5.325, -4.666) & (-5.154, -4.941) & 94.95 & 3.49 & 96.45 \\
 &  &  & 200 & 10 & (-5.291, -4.775) & (-5.131, -4.959) & 94.15 & 2.71 & 96.40 \\
 &  &  & 200 & 20 & (-5.246, -4.855) & (-5.110, -4.977) & 95.00 & 1.93 & 96.80 \\
 &  & 100 & 100 & 6 & (-5.266, -4.780) & (-5.166, -4.924) & 94.95 & 3.28 & 96.30 \\
 &  &  & 100 & 10 & (-5.234, -4.861) & (-5.136, -4.942) & 95.25 & 2.54 & 96.35 \\
 &  &  & 100 & 20 & (-5.176, -4.907) & (-5.121, -4.967) & 94.50 & 1.80 & 96.25 \\
 &  & 100 & 200 & 6 & (-5.272, -4.809) & (-5.144, -4.949) & 95.45 & 2.89 & 96.40 \\
 &  &  & 200 & 10 & (-5.237, -4.863) & (-5.121, -4.966) & 94.75 & 2.25 & 96.10 \\
 &  &  & 200 & 20 & (-5.189, -4.902) & (-5.112, -4.976) & 93.50 & 1.59 & 96.15 \\
\end{tabular}
\footnotesize{
    \fontsize{8pt}{10pt}\selectfont
    \raggedright{
    \medskip \\
    $n_j^{(1)}$: number of participants per center $j$ at stage 1, 
    $n_j^{(2)}$: number of participants per center $j$ at stage 2.\\
    $J$: number of centers for each stage. \\
    TrueOpt1: mean treatment effect under the stage 2 recommended intervention.\\
    TrueOpt2: mean treatment effect under the final estimated optimal intervention based on all data. \\
    $Q$2.5 and $Q$97.5: 2.5$\%$ and 97.5$\%$ quantiles.\\ 
    SetCP95$\%$: empirical coverage percentage of confidence set for the optimal intervention. \\
    SetPerc$\%$: mean percentage of the size of the confidence set as a percent of the total intervention space.\\
    BandsCP95$\%$: empirical coverage of 95$\%$ confidence band. \\
    }
}
\label{tab:confidence_metrics_highconf}
\end{table}

\subsection{Simulation results for Scenario 1 with fixed \texorpdfstring{$\BoldZ_j$}{Z\_j} and \texorpdfstring{$\rho_1 = 0.05, \rho_2 = 0.07$}{rho1 = 0.05, rho2 = 0.07}}\label{simulationresults_linear_fixed_Zj_lowconf}
\FloatBarrier
\begin{table}[t]
\caption{Simulation study results for individual package component effects with a linear cost function ($\rho_1 = 0.05, \rho_2 = 0.07$, fixed $Z_j$)}
\centering
\fontsize{8pt}{10pt}\selectfont
\begin{tabular}{llllllllll}
\makecell{$\boldsymbol{\beta}^*=(\beta_{1}^*, \beta_{2}^*)$ \\ (-1.70, -0.70)} & $n_j^{(1)}$ & $n_j^{(2)}$ & $J$ &  & $\widehat{\boldsymbol{\beta}}_{1}$ &  &  & $\widehat{\boldsymbol{\beta}}_{2}$ &  \\
  &  &  &  & $\%$RelBias & \begin{tabular}[c]{@{}l@{}}$\frac{SE}{EMP.SD}$ \\ ($\times 100$)\end{tabular} & CP95 & $\%$RelBias & \begin{tabular}[c]{@{}l@{}}$\frac{SE}{EMP.SD}$\\ ($\times 100$)\end{tabular} & CP95 \\ \hline
\multicolumn{4}{l}{Scenario 1 ($J^{(1)} = J^{(2)} = J$)} &  &  &  &  &  &  \\
 & 50 & 100 & 6 & -0.039 & 98.54 & 94.75 & 0.323 & 101.61 & 94.15 \\
 &  &  & 10 & -0.010 & 96.95 & 94.35 & 0.155 & 101.56 & 95.40 \\
 &  &  & 20 & 0.022 & 95.29 & 94.80 & 0.047 & 99.41 & 94.95 \\
 &  & 200 & 6 & 0.005 & 97.43 & 94.65 & 0.179 & 104.33 & 95.80 \\
 &  &  & 10 & 0.000 & 98.12 & 94.85 & 0.059 & 99.31 & 94.20 \\
 &  &  & 20 & -0.010 & 98.43 & 94.75 & 0.189 & 98.18 & 94.20 \\
 & 100 & 100 & 6 & -0.009 & 95.87 & 94.30 & -0.029 & 97.35 & 94.30 \\
 &  &  & 10 & -0.047 & 101.58 & 95.05 & 0.101 & 99.47 & 94.60 \\
 &  &  & 20 & 0.008 & 98.30 & 94.70 & -0.024 & 97.72 & 95.10 \\
 &  & 200 & 6 & 0.024 & 99.88 & 95.85 & 0.101 & 101.25 & 94.75 \\
 &  &  & 10 & 0.005 & 101.65 & 95.90 & 0.005 & 100.44 & 94.65 \\
 &  &  & 20 & 0.004 & 99.55 & 95.20 & -0.014 & 97.69 & 94.30 \\
\end{tabular}
\footnotesize{
    \fontsize{8pt}{10pt}\selectfont
    \raggedright{
    \medskip \\
    $n_j^{(1)}$: number of participants per center $j$ at stage 1, 
    $n_j^{(2)}$: number of participants per center $j$ at stage 2.\\
    $J$: number of centers for each stage. \\
    \%RelBias: percent relative bias $100(\hat{\beta}-\beta^\star)/\beta^\star$.\\
    SE: mean estimated standard error, 
    EMP.SD: empirical standard deviation.\\
    CP95: empirical coverage rate of 95\% confidence intervals.\\
    }
}
\label{tab:intervention_effects_fixed_lowconf}
\end{table}
\FloatBarrier
\begin{table}[H]
\singlespacing
\caption{Simulation study results for estimated optimal intervention with a linear cost function ($\rho_1 = 0.05, \rho_2 = 0.07$, fixed $Z_j$)}
\centering
\fontsize{8pt}{10pt}\selectfont
\begin{tabular}{lllllllllllll}
\makecell{$\boldsymbol{\beta}^*=(\beta_1^*,\beta_2^*)$\\(-1.70, -0.70)} & \makecell{$\boldsymbol{x}^{\text{opt}}$\\(2.94, 0)} & $n_j^{(1)}$ & $n_j^{(2)}$ & $J$ & & 
\multicolumn{3}{c}{Stage 1} & & \multicolumn{3}{c}{Stage 2/LAGO optimized} \\
 &  &  &  &  & & \begin{tabular}[c]{@{}l@{}}Bias of\\ $\hat{x}_1^{\text{opt},(1)}$\\ \end{tabular} & \begin{tabular}[c]{@{}l@{}}Bias of\\ $\hat{x}_2^{\text{opt},(1)}$\\ \end{tabular} & \begin{tabular}[c]{@{}l@{}}rMSE\\ \end{tabular} & & \begin{tabular}[c]{@{}l@{}}Bias of\\ $\hat{x}_1^{\text{opt},(2,n^{(1)})}$\\ \end{tabular} & \begin{tabular}[c]{@{}l@{}}Bias of\\ $\hat{x}_2^{\text{opt},(2,n^{(1)})}$\\ \end{tabular} & \begin{tabular}[c]{@{}l@{}}rMSE\\ \end{tabular} \\ \hline
\multicolumn{13}{l}{Scenario 1 ($J^{(1)} = J^{(2)} = J$)} \\
 &  & 50 & 100 & 6 & & -0.397 & 0.157 & 0.468 & & -0.340 & 0.045 & 0.348 \\
 &  &  & 100 & 10 & & -0.745 & 0.146 & 0.788 & & -0.736 & 0.138 & 0.766 \\
 &  &  & 100 & 20 & & -0.081 & 0.051 & 0.112 & & -0.064 & 0.025 & 0.073 \\
 &  & 50 & 200 & 6 & & -0.393 & 0.142 & 0.457 & & -0.338 & 0.042 & 0.344 \\
 &  &  & 200 & 10 & & -0.743 & 0.148 & 0.786 & & -0.744 & 0.156 & 0.779 \\
 &  &  & 200 & 20 & & -0.079 & 0.047 & 0.107 & & -0.057 & 0.016 & 0.062 \\
 &  & 100 & 100 & 6 & & -0.352 & 0.073 & 0.375 & & -0.338 & 0.045 & 0.345 \\
 &  & & 100 & 10 & & -0.724 & 0.110 & 0.748 & & -0.736 & 0.138 & 0.766 \\
 &  & & 100 & 20 & & -0.073 & 0.036 & 0.090 & & -0.062 & 0.022 & 0.070 \\
 &  & 100 & 200 & 6 & & -0.348 & 0.068 & 0.369 & & -0.338 & 0.043 & 0.345 \\
 &  & & 200 & 10 & & -0.721 & 0.106 & 0.743 & & -0.742 & 0.154 & 0.777 \\
 &  & & 200 & 20 & & -0.073 & 0.038 & 0.091 & & -0.056 & 0.014 & 0.060 \\
\end{tabular}
\footnotesize{
    \fontsize{8pt}{10pt}\selectfont
    \raggedright{
    \medskip \\
    $n_j^{(1)}$: number of participants per center $j$ at stage 1, 
    $n_j^{(2)}$: number of participants per center $j$ at stage 2.\\
    $J$: number of centers for each stage. \\
    Bias of $\hat{x}_{1}^{\text{opt}}$, $\hat{x}_{2}^{\text{opt}}$: bias of the first and second components of the estimated optimal intervention.\\
    rMSE: root mean squared error, $\{\operatorname{mean}(\|\hat{\boldsymbol{x}}^{o p t}-\boldsymbol{x}^{o p t}\|^{2})\}^{1 / 2}$, estimated by average over simulation iterations.\\
    }
}
\label{tab:optimization_performance_fixed_lowconf}
\end{table}

\begin{table}[H]
\singlespacing
\caption{Simulation study results for estimated optimal intervention, confidence set, and confidence band with a linear cost function ($\rho_1 = 0.05, \rho_2 = 0.07$, fixed $Z_j$)}
\centering
\fontsize{8pt}{10pt}\selectfont
\begin{tabular}{llllllllll}
\makecell{$\boldsymbol{\beta}^*=(\beta_1^*,\beta_2^*)$\\(-1.70, -0.70)} & \makecell{$\boldsymbol{x}^{\text{opt}}$\\(2.94, 0)} & $n_j^{(1)}$ & $n_j^{(2)}$ & $J$ & \begin{tabular}[c]{@{}l@{}}TrueOpt1\\ (Q2.5,Q97.5)\end{tabular} & \begin{tabular}[c]{@{}l@{}}TrueOpt2\\ (Q2.5,Q97.5)\end{tabular} & \begin{tabular}[c]{@{}l@{}}SetCP95\\ $\%$\end{tabular} & \begin{tabular}[c]{@{}l@{}}SetPerc\\ $\%$\end{tabular} & \begin{tabular}[c]{@{}l@{}}BandsCP95\\ $\%$\end{tabular} \\ \hline
\multicolumn{10}{l}{Scenario 1 ($J^{(1)} = J^{(2)} = J$)} \\
 &  & 50 & 100 & 6 & (-5.246, -4.726) & (-5.176, -4.906) & 94.40 & 4.15 & 96.10 \\
 &  & & 100 & 10 & (-5.167, -4.827) & (-5.097, -4.927) & 94.65 & 3.33 & 96.00 \\
 &  & & 100 & 20 & (-5.246, -4.906) & (-5.076, -4.976) & 94.70 & 2.26 & 96.55 \\
 &  & 50 & 200 & 6 & (-5.246, -4.716) & (-5.176, -5.006) & 94.20 & 3.57 & 95.70 \\
 &  &  & 200 & 10 & (-5.167, -4.827) & (-5.097, -4.997) & 94.25 & 2.83 & 95.90 \\
 &  &  & 200 & 20 & (-5.246, -4.906) & (-5.076, -4.976) & 95.00 & 1.95 & 96.15 \\
 &  & 100 & 100 & 6 & (-5.246, -4.836) & (-5.176, -4.906) & 93.90 & 3.35 & 95.95 \\
 &  &  & 100 & 10 & (-5.167, -4.927) & (-5.097, -4.927) & 94.65 & 2.68 & 96.10 \\
 &  &  & 100 & 20 & (-5.146, -4.906) & (-5.076, -4.976) & 94.70 & 1.81 & 96.10 \\
 &  & 100 & 200 & 6 & (-5.246, -4.836) & (-5.176, -5.006) & 95.45 & 2.96 & 96.85 \\
 &  &  & 200 & 10 & (-5.167, -4.927) & (-5.097, -4.997) & 95.50 & 2.36 & 96.10 \\
 &  &  & 200 & 20 & (-5.146, -4.906) & (-5.076, -4.976) & 94.65 & 1.61 & 96.45 \\
\end{tabular}
\footnotesize{
    \fontsize{8pt}{10pt}\selectfont
    \raggedright{
    \medskip \\
    $n_j^{(1)}$: number of participants per center $j$ at stage 1, 
    $n_j^{(2)}$: number of participants per center $j$ at stage 2.\\
    $J$: number of centers for each stage. \\
    TrueOpt1: mean treatment effect under the stage 2 recommended intervention.\\
    TrueOpt2: mean treatment effect under the final estimated optimal intervention based on all data. \\
    $Q$2.5 and $Q$97.5: 2.5$\%$ and 97.5$\%$ quantiles.\\ 
    SetCP95$\%$: empirical coverage percentage of confidence set for the optimal intervention. \\
    SetPerc$\%$: mean percentage of the size of the confidence set as a percent of the total intervention space.\\
    BandsCP95$\%$: empirical coverage of 95$\%$ confidence band. \\
    }
}
\label{tab:confidence_metrics_fixed_lowconf}
\end{table}

\subsection{Simulation results for Scenario 1 with fixed \texorpdfstring{$\BoldZ_j$}{Z\_j} and \texorpdfstring{$\rho_1 = 0.1, \rho_2 = 0.2$}{rho1 = 0.1, rho2 = 0.2}}\label{simulationresults_linear_fixed_Zj_highconf}
\FloatBarrier
\begin{table}[t]
\caption{Simulation study results for individual package component effects with a linear cost function ($\rho_1 = 0.1, \rho_2 = 0.2$, fixed $Z_j$)}
\centering
\fontsize{8pt}{10pt}\selectfont
\begin{tabular}{llllllllll}
\makecell{$\boldsymbol{\beta}^*=(\beta_{1}^*, \beta_{2}^*)$ \\ (-1.70, -0.70)} & $n_j^{(1)}$ & $n_j^{(2)}$ & $J$ &  & $\widehat{\boldsymbol{\beta}}_{1}$ &  &  & $\widehat{\boldsymbol{\beta}}_{2}$ &  \\
  &  &  &  & $\%$RelBias & \begin{tabular}[c]{@{}l@{}}$\frac{SE}{EMP.SD}$ \\ ($\times 100$)\end{tabular} & CP95 & $\%$RelBias & \begin{tabular}[c]{@{}l@{}}$\frac{SE}{EMP.SD}$\\ ($\times 100$)\end{tabular} & CP95 \\ \hline
\multicolumn{4}{l}{Scenario 1 ($J^{(1)} = J^{(2)} = J$)} &  &  &  &  &  &  \\
 & 50 & 100 & 6 & -0.039 & 98.83 & 94.90 & 0.360 & 101.91 & 94.55 \\
 &  &  & 10 & -0.015 & 97.54 & 94.55 & 0.196 & 102.04 & 95.40 \\
 &  &  & 20 & 0.021 & 95.28 & 94.85 & 0.053 & 99.24 & 94.90 \\
 &  & 200 & 6 & 0.003 & 97.76 & 94.95 & 0.193 & 104.38 & 95.70 \\
 &  &  & 10 & -0.002 & 98.26 & 94.50 & 0.078 & 99.16 & 94.15 \\
 &  &  & 20 & -0.011 & 98.62 & 94.90 & 0.194 & 98.28 & 94.20 \\
 & 100 & 100 & 6 & -0.009 & 95.90 & 94.40 & -0.029 & 97.34 & 94.60 \\
 &  &  & 10 & -0.046 & 101.57 & 95.25 & 0.096 & 99.29 & 94.80 \\
 &  &  & 20 & 0.008 & 98.20 & 94.80 & -0.022 & 97.76 & 95.15 \\
 &  & 200 & 6 & 0.024 & 99.65 & 95.70 & 0.106 & 101.30 & 94.65 \\
 &  &  & 10 & 0.005 & 101.79 & 95.80 & 0.015 & 100.26 & 94.85 \\
 &  &  & 20 & 0.003 & 99.61 & 95.15 & -0.012 & 97.68 & 94.35 \\
\end{tabular}
\footnotesize{
    \fontsize{8pt}{10pt}\selectfont
    \raggedright{
    \medskip \\
    $n_j^{(1)}$: number of participants per center $j$ at stage 1, 
    $n_j^{(2)}$: number of participants per center $j$ at stage 2.\\
    $J$: number of centers for each stage. \\
    \%RelBias: percent relative bias $100(\hat{\beta}-\beta^\star)/\beta^\star$.\\
    SE: mean estimated standard error, 
    EMP.SD: empirical standard deviation.\\
    CP95: empirical coverage rate of 95\% confidence intervals.\\
    }
}
\label{tab:intervention_effects_fixed_highconf}
\end{table}
\FloatBarrier
\begin{table}[H]
\singlespacing
\caption{Simulation study results for estimated optimal intervention with a linear cost function ($\rho_1 = 0.1, \rho_2 = 0.2$, fixed $Z_j$)}
\centering
\fontsize{8pt}{10pt}\selectfont
\begin{tabular}{lllllllllllll}
\makecell{$\boldsymbol{\beta}^*=(\beta_1^*,\beta_2^*)$\\(-1.70, -0.70)} & \makecell{$\boldsymbol{x}^{\text{opt}}$\\(2.94, 0)} & $n_j^{(1)}$ & $n_j^{(2)}$ & $J$ & & 
\multicolumn{3}{c}{Stage 1} & & \multicolumn{3}{c}{Stage 2/LAGO optimized} \\
 &  &  &  &  & & \begin{tabular}[c]{@{}l@{}}Bias of\\ $\hat{x}_1^{\text{opt},(1)}$\\ \end{tabular} & \begin{tabular}[c]{@{}l@{}}Bias of\\ $\hat{x}_2^{\text{opt},(1)}$\\ \end{tabular} & \begin{tabular}[c]{@{}l@{}}rMSE\\ \end{tabular} & & \begin{tabular}[c]{@{}l@{}}Bias of\\ $\hat{x}_1^{\text{opt},(2,n^{(1)})}$\\ \end{tabular} & \begin{tabular}[c]{@{}l@{}}Bias of\\ $\hat{x}_2^{\text{opt},(2,n^{(1)})}$\\ \end{tabular} & \begin{tabular}[c]{@{}l@{}}rMSE\\ \end{tabular} \\ \hline
\multicolumn{13}{l}{Scenario 1 ($J^{(1)} = J^{(2)} = J$)} \\
 &  & 50 & 100 & 6 & & -0.399 & 0.161 & 0.473 & & -0.340 & 0.046 & 0.349 \\
 &  &  & 100 & 10 & & -0.744 & 0.145 & 0.787 & & -0.736 & 0.137 & 0.766 \\
 &  &  & 100 & 20 & & -0.081 & 0.051 & 0.112 & & -0.065 & 0.025 & 0.073 \\
 &  & 50 & 200 & 6 & & -0.396 & 0.147 & 0.462 & & -0.338 & 0.042 & 0.344 \\
 &  &  & 200 & 10 & & -0.741 & 0.143 & 0.781 & & -0.744 & 0.158 & 0.780 \\
 &  &  & 200 & 20 & & -0.079 & 0.046 & 0.106 & & -0.057 & 0.016 & 0.062 \\
 &  & 100 & 100 & 6 & & -0.352 & 0.074 & 0.376 & & -0.338 & 0.045 & 0.345 \\
 &  & & 100 & 10 & & -0.722 & 0.106 & 0.745 & & -0.737 & 0.140 & 0.767 \\
 &  & & 100 & 20 & & -0.073 & 0.037 & 0.091 & & -0.062 & 0.022 & 0.070 \\
 &  & 100 & 200 & 6 & & -0.347 & 0.067 & 0.368 & & -0.339 & 0.043 & 0.345 \\
 &  & & 200 & 10 & & -0.722 & 0.108 & 0.745 & & -0.743 & 0.156 & 0.778 \\
 &  & & 200 & 20 & & -0.074 & 0.039 & 0.091 & & -0.056 & 0.015 & 0.060 \\
\end{tabular}
\footnotesize{
    \fontsize{8pt}{10pt}\selectfont
    \raggedright{
    \medskip \\
    $n_j^{(1)}$: number of participants per center $j$ at stage 1, 
    $n_j^{(2)}$: number of participants per center $j$ at stage 2.\\
    $J$: number of centers for each stage. \\
    Bias of $\hat{x}_{1}^{\text{opt}}$, $\hat{x}_{2}^{\text{opt}}$: bias of the first and second components of the estimated optimal intervention.\\
    rMSE: root mean squared error, $\{\operatorname{mean}(\|\hat{\boldsymbol{x}}^{o p t}-\boldsymbol{x}^{o p t}\|^{2})\}^{1 / 2}$, estimated by average over simulation iterations.\\
    }
}
\label{tab:optimization_performance_fixed_highconf}
\end{table}

\begin{table}[H]
\singlespacing
\caption{Simulation study results for estimated optimal intervention, confidence set, and confidence band with a linear cost function ($\rho_1 = 0.1, \rho_2 = 0.2$, fixed $Z_j$)}
\centering
\fontsize{8pt}{10pt}\selectfont
\begin{tabular}{llllllllll}
\makecell{$\boldsymbol{\beta}^*=(\beta_1^*,\beta_2^*)$\\(-1.70, -0.70)} & \makecell{$\boldsymbol{x}^{\text{opt}}$\\(2.94, 0)} & $n_j^{(1)}$ & $n_j^{(2)}$ & $J$ & \begin{tabular}[c]{@{}l@{}}TrueOpt1\\ (Q2.5,Q97.5)\end{tabular} & \begin{tabular}[c]{@{}l@{}}TrueOpt2\\ (Q2.5,Q97.5)\end{tabular} & \begin{tabular}[c]{@{}l@{}}SetCP95\\ $\%$\end{tabular} & \begin{tabular}[c]{@{}l@{}}SetPerc\\ $\%$\end{tabular} & \begin{tabular}[c]{@{}l@{}}BandsCP95\\ $\%$\end{tabular} \\ \hline
\multicolumn{10}{l}{Scenario 1 ($J^{(1)} = J^{(2)} = J$)} \\
 &  & 50 & 100 & 6 & (-5.246, -4.725) & (-5.176, -4.906) & 94.55 & 4.20 & 96.00 \\
 &  & & 100 & 10 & (-5.167, -4.827) & (-5.097, -4.927) & 94.75 & 3.41 & 96.10 \\
 &  & & 100 & 20 & (-5.246, -4.906) & (-5.076, -4.976) & 94.65 & 2.26 & 96.60 \\
 &  & 50 & 200 & 6 & (-5.246, -4.686) & (-5.176, -5.006) & 94.30 & 3.61 & 95.85 \\
 &  &  & 200 & 10 & (-5.167, -4.827) & (-5.097, -4.997) & 94.10 & 2.90 & 96.00 \\
 &  &  & 200 & 20 & (-5.246, -4.906) & (-5.076, -4.976) & 94.95 & 1.96 & 96.15 \\
 &  & 100 & 100 & 6 & (-5.246, -4.836) & (-5.176, -4.906) & 93.75 & 3.38 & 96.10 \\
 &  &  & 100 & 10 & (-5.167, -4.927) & (-5.097, -4.927) & 95.15 & 2.74 & 96.20 \\
 &  &  & 100 & 20 & (-5.146, -4.906) & (-5.076, -4.976) & 94.70 & 1.82 & 96.05 \\
 &  & 100 & 200 & 6 & (-5.246, -4.836) & (-5.176, -5.006) & 95.35 & 2.99 & 97.00 \\
 &  &  & 200 & 10 & (-5.167, -4.927) & (-5.097, -4.997) & 95.40 & 2.42 & 96.10 \\
 &  &  & 200 & 20 & (-5.146, -4.906) & (-5.076, -4.976) & 94.85 & 1.61 & 96.40 \\
\end{tabular}
\footnotesize{
    \fontsize{8pt}{10pt}\selectfont
    \raggedright{
    \medskip \\
    $n_j^{(1)}$: number of participants per center $j$ at stage 1, 
    $n_j^{(2)}$: number of participants per center $j$ at stage 2.\\
    $J$: number of centers for each stage. \\
    TrueOpt1: mean treatment effect under the stage 2 recommended intervention.\\
    TrueOpt2: mean treatment effect under the final estimated optimal intervention based on all data. \\
    $Q$2.5 and $Q$97.5: 2.5$\%$ and 97.5$\%$ quantiles.\\ 
    SetCP95$\%$: empirical coverage percentage of confidence set for the optimal intervention. \\
    SetPerc$\%$: mean percentage of the size of the confidence set as a percent of the total intervention space.\\
    BandsCP95$\%$: empirical coverage of 95$\%$ confidence band. 
    }
}
\label{tab:confidence_metrics_fixed_highconf}
\end{table}

\subsection{Simulation results for Scenario 2 with varying \texorpdfstring{$\BoldZ_j$}{Z\_j} and \texorpdfstring{$\rho_1 = 0.05, \rho_2 = 0.07$}{rho1 = 0.05, rho2 = 0.07}}\label{simulationresults_cubic_varying_Zj_lowconf}
\FloatBarrier
\begin{table}[t]
\caption{Simulation study results for individual package component effects with a cubic cost function ($\rho_1 = 0.05, \rho_2 = 0.07$, varying $Z_j$)}
\centering
\fontsize{8pt}{10pt}\selectfont
\begin{tabular}{llllllllll}
\makecell{$\boldsymbol{\beta}^*=(\beta_{1}^*, \beta_{2}^*)$ \\ (-1.70, -0.70)} & $n_j^{(1)}$ & $n_j^{(2)}$ & $J$ &  & $\widehat{\boldsymbol{\beta}}_{1}$ &  &  & $\widehat{\boldsymbol{\beta}}_{2}$ &  \\
  &  &  &  & $\%$RelBias & \begin{tabular}[c]{@{}l@{}}$\frac{SE}{EMP.SD}$ \\ ($\times 100$)\end{tabular} & CP95 & $\%$RelBias & \begin{tabular}[c]{@{}l@{}}$\frac{SE}{EMP.SD}$\\ ($\times 100$)\end{tabular} & CP95 \\ \hline
\multicolumn{4}{l}{Scenario 1 ($J^{(1)} = J^{(2)} = J$)} &  &  &  &  &  &  \\
 & 50 & 100 & 6 & -0.025 & 96.01 & 94.95 & 0.304 & 100.78 & 95.35 \\
 &  &  & 10 & -0.038 & 98.40 & 95.10 & 0.285 & 101.27 & 95.55 \\
 &  &  & 20 & 0.009 & 99.28 & 95.40 & 0.016 & 98.83 & 94.45 \\
 &  & 200 & 6 & -0.021 & 96.74 & 94.60 & 0.276 & 105.31 & 95.60 \\
 &  &  & 10 & -0.000 & 95.38 & 94.45 & -0.084 & 99.66 & 95.20 \\
 &  &  & 20 & -0.011 & 98.77 & 94.80 & 0.093 & 98.79 & 94.90 \\
 & 100 & 100 & 6 & -0.017 & 96.56 & 94.40 & 0.029 & 98.41 & 94.50 \\
 &  &  & 10 & -0.018 & 101.90 & 95.40 & 0.026 & 97.76 & 94.40 \\
 &  &  & 20 & -0.026 & 96.91 & 94.05 & 0.089 & 100.71 & 94.90 \\
 &  & 200 & 6 & 0.009 & 99.67 & 95.70 & 0.150 & 101.75 & 95.50 \\
 &  &  & 10 & -0.003 & 99.05 & 94.75 & -0.051 & 100.28 & 95.15 \\
 &  &  & 20 & -0.002 & 95.52 & 93.75 & 0.027 & 97.92 & 94.65 \\
\end{tabular}
\footnotesize{
    \fontsize{8pt}{10pt}\selectfont
    \raggedright{
    \medskip \\
    $n_j^{(1)}$: number of participants per center $j$ at stage 1, 
    $n_j^{(2)}$: number of participants per center $j$ at stage 2.\\
    $J$: number of centers for each stage. \\
    \%RelBias: percent relative bias $100(\hat{\beta}-\beta^\star)/\beta^\star$.\\
    SE: mean estimated standard error, 
    EMP.SD: empirical standard deviation.\\
    CP95: empirical coverage rate of 95\% confidence intervals.\\
    }
}
\label{tab:intervention_effects_cubic_lowconf}
\end{table}
\FloatBarrier
\begin{table}[H]
\singlespacing
\caption{Simulation study results for estimated optimal intervention with a cubic cost function ($\rho_1 = 0.05, \rho_2 = 0.07$, varying $Z_j$)}
\centering
\fontsize{8pt}{10pt}\selectfont
\begin{tabular}{lllllllllllll}
\makecell{$\boldsymbol{\beta}^*=(\beta_1^*,\beta_2^*)$\\(-1.70, -0.70)} & \makecell{$\boldsymbol{x}^{\text{opt}}$\\(2.94, 0.01)} & $n_j^{(1)}$ & $n_j^{(2)}$ & $J$ & & 
\multicolumn{3}{c}{Stage 1} & & \multicolumn{3}{c}{Stage 2/LAGO optimized} \\
 &  &  &  &  & & \begin{tabular}[c]{@{}l@{}}Bias of\\ $\hat{x}_1^{\text{opt},(1)}$\\ \end{tabular} & \begin{tabular}[c]{@{}l@{}}Bias of\\ $\hat{x}_2^{\text{opt},(1)}$\\ \end{tabular} & \begin{tabular}[c]{@{}l@{}}rMSE\\ \end{tabular} & & \begin{tabular}[c]{@{}l@{}}Bias of\\ $\hat{x}_1^{\text{opt},(2,n^{(1)})}$\\ \end{tabular} & \begin{tabular}[c]{@{}l@{}}Bias of\\ $\hat{x}_2^{\text{opt},(2,n^{(1)})}$\\ \end{tabular} & \begin{tabular}[c]{@{}l@{}}rMSE\\ \end{tabular} \\ \hline
\multicolumn{13}{l}{Scenario 1 ($J^{(1)} = J^{(2)} = J$)} \\
 &  & 50 & 100 & 6 & & -0.065 & 0.193 & 0.616 & & -0.026 & 0.114 & 0.529 \\
 &  &  & 100 & 10 & & -0.021 & 0.093 & 0.433 & & 0.003 & 0.043 & 0.380 \\
 &  &  & 100 & 20 & & 0.008 & 0.044 & 0.286 & & 0.014 & 0.034 & 0.273 \\
 &  & 50 & 200 & 6 & & -0.063 & 0.182 & 0.599 & & -0.014 & 0.092 & 0.502 \\
 &  &  & 200 & 10 & & 0.005 & 0.098 & 0.445 & & 0.027 & 0.047 & 0.390 \\
 &  &  & 200 & 20 & & 0.002 & 0.040 & 0.280 & & 0.006 & 0.031 & 0.267 \\
 &  & 100 & 100 & 6 & & -0.040 & 0.146 & 0.564 & & -0.015 & 0.096 & 0.509 \\
 &  & & 100 & 10 & & -0.004 & 0.055 & 0.387 & & -0.001 & 0.047 & 0.375 \\
 &  & & 100 & 20 & & 0.005 & 0.035 & 0.273 & & 0.005 & 0.033 & 0.269 \\
 &  & 100 & 200 & 6 & & -0.029 & 0.158 & 0.582 & & 0.003 & 0.089 & 0.510 \\
 &  & & 200 & 10 & & -0.005 & 0.063 & 0.394 & & -0.001 & 0.051 & 0.380 \\
 &  & & 200 & 20 & & 0.013 & 0.033 & 0.270 & & 0.011 & 0.032 & 0.268 \\
\end{tabular}
\footnotesize{
    \fontsize{8pt}{10pt}\selectfont
    \raggedright{
    \medskip \\
    $n_j^{(1)}$: number of participants per center $j$ at stage 1, 
    $n_j^{(2)}$: number of participants per center $j$ at stage 2.\\
    $J$: number of centers for each stage. \\
    Bias of $\hat{x}_{1}^{\text{opt}}$, $\hat{x}_{2}^{\text{opt}}$: bias of the first and second components of the estimated optimal intervention.\\
    rMSE: root mean squared error, $\{\operatorname{mean}(\|\hat{\boldsymbol{x}}^{o p t}-\boldsymbol{x}^{o p t}\|^{2})\}^{1 / 2}$, estimated by average over simulation iterations.\\
    }
}
\label{tab:optimization_performance_cubic_lowconf}
\end{table}

\begin{table}[H]
\singlespacing 
\caption{Simulation study results for estimated optimal intervention, confidence set, and confidence band with a cubic cost function ($\rho_1 = 0.05, \rho_2 = 0.07$, varying $Z_j$)}
\centering
\fontsize{8pt}{10pt}\selectfont
\begin{tabular}{llllllllll}
\makecell{$\boldsymbol{\beta}^*=(\beta_1^*,\beta_2^*)$\\(-1.70, -0.70)} & \makecell{$\boldsymbol{x}^{\text{opt}}$\\(2.94, 0.01)} & $n_j^{(1)}$ & $n_j^{(2)}$ & $J$ & \begin{tabular}[c]{@{}l@{}}TrueOpt1\\ (Q2.5,Q97.5)\end{tabular} & \begin{tabular}[c]{@{}l@{}}TrueOpt2\\ (Q2.5,Q97.5)\end{tabular} & \begin{tabular}[c]{@{}l@{}}SetCP95\\ $\%$\end{tabular} & \begin{tabular}[c]{@{}l@{}}SetPerc\\ $\%$\end{tabular} & \begin{tabular}[c]{@{}l@{}}BandsCP95\\ $\%$\end{tabular} \\ \hline
\multicolumn{10}{l}{Scenario 1 ($J^{(1)} = J^{(2)} = J$)} \\
 &  & 50 & 100 & 6 & (-5.399, -4.671) & (-5.173, -4.906) & 95.35 & 4.08 & 96.15 \\
 &  & & 100 & 10 & (-5.329, -4.767) & (-5.144, -4.941) & 95.10 & 3.17 & 96.00 \\
 &  & & 100 & 20 & (-5.243, -4.861) & (-5.129, -4.966) & 95.30 & 2.25 & 96.65 \\
 &  & 50 & 200 & 6 & (-5.388, -4.680) & (-5.146, -4.931) & 94.90 & 3.51 & 96.80 \\
 &  &  & 200 & 10 & (-5.316, -4.776) & (-5.136, -4.959) & 94.40 & 2.73 & 96.40 \\
 &  &  & 200 & 20 & (-5.252, -4.857) & (-5.112, -4.977) & 94.70 & 1.95 & 96.80 \\
 &  & 100 & 100 & 6 & (-5.296, -4.786) & (-5.166, -4.920) & 94.85 & 3.28 & 96.40 \\
 &  &  & 100 & 10 & (-5.239, -4.861) & (-5.139, -4.942) & 95.50 & 2.54 & 96.25 \\
 &  &  & 100 & 20 & (-5.180, -4.907) & (-5.124, -4.967) & 94.45 & 1.81 & 96.45 \\
 &  & 100 & 200 & 6 & (-5.286, -4.798) & (-5.141, -4.946) & 95.80 & 2.89 & 96.50 \\
 &  &  & 200 & 10 & (-5.242, -4.859) & (-5.122, -4.964) & 95.05 & 2.25 & 95.95 \\
 &  &  & 200 & 20 & (-5.198, -4.904) & (-5.115, -4.976) & 93.50 & 1.60 & 96.25 \\
\end{tabular}
\footnotesize{
    \fontsize{8pt}{10pt}\selectfont
    \raggedright{
    \medskip \\
    $n_j^{(1)}$: number of participants per center $j$ at stage 1, 
    $n_j^{(2)}$: number of participants per center $j$ at stage 2.\\
    $J$: number of centers for each stage. \\
    TrueOpt1: mean treatment effect under the stage 2 recommended intervention.\\
    TrueOpt2: mean treatment effect under the final estimated optimal intervention based on all data. \\
    $Q$2.5 and $Q$97.5: 2.5$\%$ and 97.5$\%$ quantiles.\\ 
    SetCP95$\%$: empirical coverage percentage of confidence set for the optimal intervention. \\
    SetPerc$\%$: mean percentage of the size of the confidence set as a percent of the total intervention space.\\
    BandsCP95$\%$: empirical coverage of 95$\%$ confidence band. 
    }
}
\label{tab:confidence_metrics_cubic_lowconf}
\end{table}

\subsection{Simulation results for Scenario 2 with varying \texorpdfstring{$\BoldZ_j$}{Z\_j} and \texorpdfstring{$\rho_1 = 0.1, \rho_2 = 0.2$}{rho1 = 0.1, rho2 = 0.2}}\label{simulationresults_cubic_varying_Zj_highconf}
\FloatBarrier
\begin{table}[t]
\caption{Simulation study results for individual package component effects with a cubic cost function ($\rho_1 = 0.1, \rho_2 = 0.2$, varying $Z_j$)}
\centering
\fontsize{8pt}{10pt}\selectfont
\begin{tabular}{llllllllll}
\makecell{$\boldsymbol{\beta}^*=(\beta_{1}^*, \beta_{2}^*)$ \\ (-1.70, -0.70)} & $n_j^{(1)}$ & $n_j^{(2)}$ & $J$ &  & $\widehat{\boldsymbol{\beta}}_{1}$ &  &  & $\widehat{\boldsymbol{\beta}}_{2}$ &  \\
  &  &  &  & $\%$RelBias & \begin{tabular}[c]{@{}l@{}}$\frac{SE}{EMP.SD}$ \\ ($\times 100$)\end{tabular} & CP95 & $\%$RelBias & \begin{tabular}[c]{@{}l@{}}$\frac{SE}{EMP.SD}$\\ ($\times 100$)\end{tabular} & CP95 \\ \hline
\multicolumn{4}{l}{Scenario 1 ($J^{(1)} = J^{(2)} = J$)} &  &  &  &  &  &  \\
 & 50 & 100 & 6 & -0.025 & 96.32 & 94.95 & 0.310 & 100.99 & 95.10 \\
 &  &  & 10 & -0.038 & 98.65 & 94.95 & 0.289 & 101.28 & 95.45 \\
 &  &  & 20 & 0.008 & 99.43 & 95.15 & 0.007 & 99.03 & 94.65 \\
 &  & 200 & 6 & -0.023 & 96.42 & 94.40 & 0.294 & 105.18 & 95.75 \\
 &  &  & 10 & 0.001 & 95.56 & 94.30 & -0.084 & 99.68 & 95.20 \\
 &  &  & 20 & -0.012 & 98.81 & 94.95 & 0.087 & 98.74 & 95.15 \\
 & 100 & 100 & 6 & -0.018 & 96.60 & 94.35 & 0.049 & 98.60 & 94.95 \\
 &  &  & 10 & -0.018 & 101.85 & 95.40 & 0.030 & 97.73 & 94.55 \\
 &  &  & 20 & -0.027 & 97.12 & 94.05 & 0.095 & 101.00 & 95.20 \\
 &  & 200 & 6 & 0.011 & 99.59 & 95.80 & 0.146 & 101.70 & 95.45 \\
 &  &  & 10 & -0.003 & 99.01 & 94.65 & -0.050 & 100.37 & 94.70 \\
 &  &  & 20 & -0.002 & 95.50 & 93.55 & 0.030 & 97.97 & 94.50 \\
\end{tabular}
\footnotesize{
    \fontsize{8pt}{10pt}\selectfont
    \raggedright{
    \medskip \\
    $n_j^{(1)}$: number of participants per center $j$ at stage 1, 
    $n_j^{(2)}$: number of participants per center $j$ at stage 2.\\
    $J$: number of centers for each stage. \\
    \%RelBias: percent relative bias $100(\hat{\beta}-\beta^\star)/\beta^\star$.\\
    SE: mean estimated standard error, 
    EMP.SD: empirical standard deviation.\\
    CP95: empirical coverage rate of 95\% confidence intervals.\\
    }
}
\label{tab:intervention_effects_cubic_highconf}
\end{table}
\FloatBarrier
\begin{table}[H]
\singlespacing
\caption{Simulation study results for estimated optimal intervention with a cubic cost function ($\rho_1 = 0.1, \rho_2 = 0.2$, varying $Z_j$)}
\centering
\fontsize{8pt}{10pt}\selectfont
\begin{tabular}{lllllllllllll}
\makecell{$\boldsymbol{\beta}^*=(\beta_1^*,\beta_2^*)$\\(-1.70, -0.70)} & \makecell{$\boldsymbol{x}^{\text{opt}}$\\(2.94, 0.01)} & $n_j^{(1)}$ & $n_j^{(2)}$ & $J$ & & 
\multicolumn{3}{c}{Stage 1} & & \multicolumn{3}{c}{Stage 2/LAGO optimized} \\
 &  &  &  &  & & \begin{tabular}[c]{@{}l@{}}Bias of\\ $\hat{x}_1^{\text{opt},(1)}$\\ \end{tabular} & \begin{tabular}[c]{@{}l@{}}Bias of\\ $\hat{x}_2^{\text{opt},(1)}$\\ \end{tabular} & \begin{tabular}[c]{@{}l@{}}rMSE\\ \end{tabular} & & \begin{tabular}[c]{@{}l@{}}Bias of\\ $\hat{x}_1^{\text{opt},(2,n^{(1)})}$\\ \end{tabular} & \begin{tabular}[c]{@{}l@{}}Bias of\\ $\hat{x}_2^{\text{opt},(2,n^{(1)})}$\\ \end{tabular} & \begin{tabular}[c]{@{}l@{}}rMSE\\ \end{tabular} \\ \hline
\multicolumn{13}{l}{Scenario 1 ($J^{(1)} = J^{(2)} = J$)} \\
 &  & 50 & 100 & 6 & & -0.067 & 0.198 & 0.621 & & -0.027 & 0.114 & 0.530 \\
 &  &  & 100 & 10 & & -0.022 & 0.096 & 0.437 & & 0.003 & 0.043 & 0.380 \\
 &  &  & 100 & 20 & & 0.010 & 0.040 & 0.283 & & 0.014 & 0.034 & 0.273 \\
 &  & 50 & 200 & 6 & & -0.067 & 0.190 & 0.606 & & -0.013 & 0.090 & 0.500 \\
 &  &  & 200 & 10 & & 0.005 & 0.096 & 0.442 & & 0.027 & 0.047 & 0.390 \\
 &  &  & 200 & 20 & & 0.002 & 0.041 & 0.282 & & 0.006 & 0.031 & 0.267 \\
 &  & 100 & 100 & 6 & & -0.040 & 0.144 & 0.563 & & -0.015 & 0.095 & 0.509 \\
 &  & & 100 & 10 & & -0.005 & 0.056 & 0.388 & & -0.001 & 0.047 & 0.375 \\
 &  & & 100 & 20 & & 0.004 & 0.037 & 0.273 & & 0.005 & 0.033 & 0.269 \\
 &  & 100 & 200 & 6 & & -0.029 & 0.159 & 0.581 & & 0.002 & 0.090 & 0.512 \\
 &  & & 200 & 10 & & -0.007 & 0.067 & 0.397 & & -0.000 & 0.049 & 0.379 \\
 &  & & 200 & 20 & & 0.013 & 0.033 & 0.270 & & 0.011 & 0.032 & 0.268 \\
\end{tabular}
\footnotesize{
    \fontsize{8pt}{10pt}\selectfont
    \raggedright{
    \medskip \\
    $n_j^{(1)}$: number of participants per center $j$ at stage 1, 
    $n_j^{(2)}$: number of participants per center $j$ at stage 2.\\
    $J$: number of centers for each stage. \\
    Bias of $\hat{x}_{1}^{\text{opt}}$, $\hat{x}_{2}^{\text{opt}}$: bias of the first and second components of the estimated optimal intervention.\\
    rMSE: root mean squared error, $\{\operatorname{mean}(\|\hat{\boldsymbol{x}}^{o p t}-\boldsymbol{x}^{o p t}\|^{2})\}^{1 / 2}$, estimated by average over simulation iterations.\\
    }
}
\label{tab:optimization_performance_cubic_highconf}
\end{table}

\begin{table}[H]
\singlespacing
\caption{Simulation study results for estimated optimal intervention, confidence set, and confidence band with a cubic cost function ($\rho_1 = 0.1, \rho_2 = 0.2$, varying $Z_j$)}
\centering
\fontsize{8pt}{10pt}\selectfont
\begin{tabular}{llllllllll}
\makecell{$\boldsymbol{\beta}^*=(\beta_1^*,\beta_2^*)$\\(-1.70, -0.70)} & \makecell{$\boldsymbol{x}^{\text{opt}}$\\(2.94, 0.01)} & $n_j^{(1)}$ & $n_j^{(2)}$ & $J$ & \begin{tabular}[c]{@{}l@{}}TrueOpt1\\ (Q2.5,Q97.5)\end{tabular} & \begin{tabular}[c]{@{}l@{}}TrueOpt2\\ (Q2.5,Q97.5)\end{tabular} & \begin{tabular}[c]{@{}l@{}}SetCP95\\ $\%$\end{tabular} & \begin{tabular}[c]{@{}l@{}}SetPerc\\ $\%$\end{tabular} & \begin{tabular}[c]{@{}l@{}}BandsCP95\\ $\%$\end{tabular} \\ \hline
\multicolumn{10}{l}{Scenario 1 ($J^{(1)} = J^{(2)} = J$)} \\
 &  & 50 & 100 & 6 & (-5.403, -4.671) & (-5.173, -4.906) & 94.90 & 4.07 & 95.90 \\
 &  & & 100 & 10 & (-5.332, -4.763) & (-5.144, -4.941) & 95.25 & 3.17 & 96.10 \\
 &  & & 100 & 20 & (-5.244, -4.863) & (-5.129, -4.966) & 95.05 & 2.24 & 96.60 \\
 &  & 50 & 200 & 6 & (-5.392, -4.673) & (-5.144, -4.932) & 94.95 & 3.50 & 96.80 \\
 &  &  & 200 & 10 & (-5.315, -4.775) & (-5.136, -4.958) & 94.40 & 2.72 & 96.20 \\
 &  &  & 200 & 20 & (-5.252, -4.856) & (-5.112, -4.977) & 95.05 & 1.94 & 96.80 \\
 &  & 100 & 100 & 6 & (-5.292, -4.783) & (-5.166, -4.920) & 94.70 & 3.27 & 96.35 \\
 &  &  & 100 & 10 & (-5.240, -4.860) & (-5.139, -4.941) & 95.40 & 2.54 & 96.50 \\
 &  &  & 100 & 20 & (-5.182, -4.904) & (-5.123, -4.967) & 94.50 & 1.80 & 96.30 \\
 &  & 100 & 200 & 6 & (-5.288, -4.798) & (-5.144, -4.943) & 95.70 & 2.89 & 96.50 \\
 &  &  & 200 & 10 & (-5.242, -4.859) & (-5.122, -4.964) & 95.05 & 2.25 & 95.95 \\
 &  &  & 200 & 20 & (-5.198, -4.902) & (-5.115, -4.976) & 93.50 & 1.60 & 96.25 \\
\end{tabular}
\footnotesize{
    \fontsize{8pt}{10pt}\selectfont
    \raggedright{
    \medskip \\
    $n_j^{(1)}$: number of participants per center $j$ at stage 1, 
    $n_j^{(2)}$: number of participants per center $j$ at stage 2.\\
    $J$: number of centers for each stage. \\
    TrueOpt1: mean treatment effect under the stage 2 recommended intervention.\\
    TrueOpt2: mean treatment effect under the final estimated optimal intervention based on all data. \\
    $Q$2.5 and $Q$97.5: 2.5$\%$ and 97.5$\%$ quantiles.\\ 
    SetCP95$\%$: empirical coverage percentage of confidence set for the optimal intervention. \\
    SetPerc$\%$: mean percentage of the size of the confidence set as a percent of the total intervention space.\\
    BandsCP95$\%$: empirical coverage of 95$\%$ confidence band. 
    }
}
\label{tab:confidence_metrics_cubic_highconf}
\end{table}

\subsection{Simulation results for Scenario 2 with fixed \texorpdfstring{$\BoldZ_j$}{Z\_j} and \texorpdfstring{$\rho_1 = 0.05, \rho_2 = 0.07$}{rho1 = 0.05, rho2 = 0.07}}\label{simulationresults_cubic_fixed_Zj_lowconf}
\FloatBarrier
\begin{table}[t]
\caption{Simulation study results for individual package component effects with a cubic cost function ($\rho_1 = 0.05, \rho_2 = 0.07$, fixed $Z_j$)}
\centering
\fontsize{8pt}{10pt}\selectfont
\begin{tabular}{llllllllll}
\makecell{$\boldsymbol{\beta}^*=(\beta_{1}^*, \beta_{2}^*)$ \\ (-1.70, -0.70)} & $n_j^{(1)}$ & $n_j^{(2)}$ & $J$ &  & $\widehat{\boldsymbol{\beta}}_{1}$ &  &  & $\widehat{\boldsymbol{\beta}}_{2}$ &  \\
  &  &  &  & $\%$RelBias & \begin{tabular}[c]{@{}l@{}}$\frac{SE}{EMP.SD}$ \\ ($\times 100$)\end{tabular} & CP95 & $\%$RelBias & \begin{tabular}[c]{@{}l@{}}$\frac{SE}{EMP.SD}$\\ ($\times 100$)\end{tabular} & CP95 \\ \hline
\multicolumn{4}{l}{Scenario 1 ($J^{(1)} = J^{(2)} = J$)} &  &  &  &  &  &  \\
 & 50 & 100 & 6 & -0.042 & 98.11 & 94.70 & 0.405 & 102.17 & 94.45 \\
 &  &  & 10 & -0.007 & 96.28 & 94.70 & 0.438 & 104.47 & 95.10 \\
 &  &  & 20 & 0.017 & 97.17 & 94.85 & 0.044 & 99.40 & 94.95 \\
 &  & 200 & 6 & -0.006 & 97.02 & 94.70 & 0.277 & 105.43 & 96.00 \\
 &  &  & 10 & 0.010 & 95.03 & 94.75 & 0.262 & 100.21 & 95.00 \\
 &  &  & 20 & -0.012 & 99.04 & 94.80 & 0.185 & 98.08 & 94.25 \\
 & 100 & 100 & 6 & -0.011 & 95.81 & 94.55 & 0.000 & 97.72 & 94.45 \\
 &  &  & 10 & -0.029 & 98.58 & 94.90 & 0.175 & 101.17 & 94.60 \\
 &  &  & 20 & 0.008 & 98.26 & 94.70 & -0.024 & 97.69 & 95.10 \\
 &  & 200 & 6 & 0.023 & 99.75 & 95.80 & 0.116 & 101.47 & 94.80 \\
 &  &  & 10 & 0.013 & 100.48 & 96.05 & 0.082 & 102.00 & 94.95 \\
 &  &  & 20 & 0.003 & 99.54 & 95.20 & -0.016 & 97.51 & 94.20 \\
\end{tabular}
\footnotesize{
    \fontsize{8pt}{10pt}\selectfont
    \raggedright{
    \medskip \\
    $n_j^{(1)}$: number of participants per center $j$ at stage 1, 
    $n_j^{(2)}$: number of participants per center $j$ at stage 2.\\
    $J$: number of centers for each stage. \\
    \%RelBias: percent relative bias $100(\hat{\beta}-\beta^\star)/\beta^\star$.\\
    SE: mean estimated standard error, 
    EMP.SD: empirical standard deviation.\\
    CP95: empirical coverage rate of 95\% confidence intervals.\\
    }
}
\label{tab:intervention_effects_cubic_fixed_lowconf}
\end{table}
\FloatBarrier
\begin{table}[H]
\singlespacing
\caption{Simulation study results for estimated optimal intervention with a cubic cost function ($\rho_1 = 0.05, \rho_2 = 0.07$, fixed $Z_j$)}
\centering
\fontsize{8pt}{10pt}\selectfont
\begin{tabular}{lllllllllllll}
\makecell{$\boldsymbol{\beta}^*=(\beta_1^*,\beta_2^*)$\\(-1.70, -0.70)} & \makecell{$\boldsymbol{x}^{\text{opt}}$\\(2.94, 0.01)} & $n_j^{(1)}$ & $n_j^{(2)}$ & $J$ & & 
\multicolumn{3}{c}{Stage 1} & & \multicolumn{3}{c}{Stage 2/LAGO optimized} \\
 &  &  &  &  & & \begin{tabular}[c]{@{}l@{}}Bias of\\ $\hat{x}_1^{\text{opt},(1)}$\\ \end{tabular} & \begin{tabular}[c]{@{}l@{}}Bias of\\ $\hat{x}_2^{\text{opt},(1)}$\\ \end{tabular} & \begin{tabular}[c]{@{}l@{}}rMSE\\ \end{tabular} & & \begin{tabular}[c]{@{}l@{}}Bias of\\ $\hat{x}_1^{\text{opt},(2,n^{(1)})}$\\ \end{tabular} & \begin{tabular}[c]{@{}l@{}}Bias of\\ $\hat{x}_2^{\text{opt},(2,n^{(1)})}$\\ \end{tabular} & \begin{tabular}[c]{@{}l@{}}rMSE\\ \end{tabular} \\ \hline
\multicolumn{13}{l}{Scenario 1 ($J^{(1)} = J^{(2)} = J$)} \\
 &  & 50 & 100 & 6 & & -0.398 & 0.152 & 0.472 & & -0.340 & 0.038 & 0.349 \\
 &  &  & 100 & 10 & & -0.823 & 0.301 & 0.964 & & -0.705 & 0.069 & 0.724 \\
 &  &  & 100 & 20 & & -0.076 & 0.034 & 0.101 & & -0.063 & 0.014 & 0.070 \\
 &  & 50 & 200 & 6 & & -0.399 & 0.146 & 0.471 & & -0.337 & 0.031 & 0.342 \\
 &  &  & 200 & 10 & & -0.833 & 0.326 & 0.987 & & -0.694 & 0.049 & 0.701 \\
 &  &  & 200 & 20 & & -0.075 & 0.031 & 0.098 & & -0.055 & 0.005 & 0.060 \\
 &  & 100 & 100 & 6 & & -0.353 & 0.068 & 0.378 & & -0.337 & 0.035 & 0.343 \\
 &  & & 100 & 10 & & -0.747 & 0.150 & 0.809 & & -0.693 & 0.046 & 0.700 \\
 &  & & 100 & 20 & & -0.071 & 0.025 & 0.087 & & -0.061 & 0.012 & 0.068 \\
 &  & 100 & 200 & 6 & & -0.348 & 0.061 & 0.370 & & -0.337 & 0.033 & 0.342 \\
 &  & & 200 & 10 & & -0.741 & 0.142 & 0.800 & & -0.687 & 0.037 & 0.691 \\
 &  & & 200 & 20 & & -0.071 & 0.027 & 0.087 & & -0.054 & 0.004 & 0.059 \\
\end{tabular}
\footnotesize{
    \fontsize{8pt}{10pt}\selectfont
    \raggedright{
    \medskip \\
    $n_j^{(1)}$: number of participants per center $j$ at stage 1, 
    $n_j^{(2)}$: number of participants per center $j$ at stage 2.\\
    $J$: number of centers for each stage. \\
    Bias of $\hat{x}_{1}^{\text{opt}}$, $\hat{x}_{2}^{\text{opt}}$: bias of the first and second components of the estimated optimal intervention.\\
    rMSE: root mean squared error, $\{\operatorname{mean}(\|\hat{\boldsymbol{x}}^{o p t}-\boldsymbol{x}^{o p t}\|^{2})\}^{1 / 2}$, estimated by average over simulation iterations.\\
    }
}
\label{tab:optimization_performance_cubic_fixed_lowconf}
\end{table}

\begin{table}[H]
\singlespacing
\caption{Simulation study results for estimated optimal intervention, confidence set, and confidence band with a cubic cost function ($\rho_1 = 0.05, \rho_2 = 0.07$, fixed $Z_j$)}
\centering
\fontsize{8pt}{10pt}\selectfont
\begin{tabular}{llllllllll}
\makecell{$\boldsymbol{\beta}^*=(\beta_1^*,\beta_2^*)$\\(-1.70, -0.70)} & \makecell{$\boldsymbol{x}^{\text{opt}}$\\(2.94, 0.01)} & $n_j^{(1)}$ & $n_j^{(2)}$ & $J$ & \begin{tabular}[c]{@{}l@{}}TrueOpt1\\ (Q2.5,Q97.5)\end{tabular} & \begin{tabular}[c]{@{}l@{}}TrueOpt2\\ (Q2.5,Q97.5)\end{tabular} & \begin{tabular}[c]{@{}l@{}}SetCP95\\ $\%$\end{tabular} & \begin{tabular}[c]{@{}l@{}}SetPerc\\ $\%$\end{tabular} & \begin{tabular}[c]{@{}l@{}}BandsCP95\\ $\%$\end{tabular} \\ \hline
\multicolumn{10}{l}{Scenario 1 ($J^{(1)} = J^{(2)} = J$)} \\
 &  & 50 & 100 & 6 & (-5.246, -4.735) & (-5.176, -4.906) & 94.40 & 4.16 & 96.25 \\
 &  & & 100 & 10 & (-5.167, -4.807) & (-5.167, -4.927) & 94.85 & 3.33 & 95.95 \\
 &  & & 100 & 20 & (-5.246, -4.906) & (-5.076, -4.976) & 94.75 & 2.26 & 96.60 \\
 &  & 50 & 200 & 6 & (-5.246, -4.716) & (-5.176, -5.006) & 94.35 & 3.58 & 95.95 \\
 &  &  & 200 & 10 & (-5.167, -4.807) & (-5.097, -4.997) & 94.35 & 2.83 & 96.25 \\
 &  &  & 200 & 20 & (-5.246, -4.906) & (-5.076, -4.976) & 95.05 & 1.95 & 96.15 \\
 &  & 100 & 100 & 6 & (-5.246, -4.836) & (-5.176, -4.906) & 94.00 & 3.35 & 95.95 \\
 &  &  & 100 & 10 & (-5.167, -4.907) & (-5.097, -4.927) & 95.00 & 2.68 & 96.00 \\
 &  &  & 100 & 20 & (-5.146, -4.906) & (-5.076, -4.976) & 94.70 & 1.81 & 96.10 \\
 &  & 100 & 200 & 6 & (-5.246, -4.836) & (-5.176, -5.006) & 95.50 & 2.96 & 96.90 \\
 &  &  & 200 & 10 & (-5.167, -4.907) & (-5.097, -4.997) & 95.50 & 2.37 & 96.20 \\
 &  &  & 200 & 20 & (-5.146, -4.906) & (-5.076, -4.976) & 94.65 & 1.61 & 96.40 \\
\end{tabular}
\footnotesize{
    \fontsize{8pt}{10pt}\selectfont
    \raggedright{
    \medskip \\
    $n_j^{(1)}$: number of participants per center $j$ at stage 1, 
    $n_j^{(2)}$: number of participants per center $j$ at stage 2.\\
    $J$: number of centers for each stage. \\
    TrueOpt1: mean treatment effect under the stage 2 recommended intervention.\\
    TrueOpt2: mean treatment effect under the final estimated optimal intervention based on all data. \\
    $Q$2.5 and $Q$97.5: 2.5$\%$ and 97.5$\%$ quantiles.\\ 
    SetCP95$\%$: empirical coverage percentage of confidence set for the optimal intervention. \\
    SetPerc$\%$: mean percentage of the size of the confidence set as a percent of the total intervention space.\\
    BandsCP95$\%$: empirical coverage of 95$\%$ confidence band. 
    }
}
\label{tab:confidence_metrics_cubic_fixed_lowconf}
\end{table}

\subsection{Simulation results for Scenario 2 with fixed \texorpdfstring{$\BoldZ_j$}{Z\_j} and \texorpdfstring{$\rho_1 = 0.1, \rho_2 = 0.2$}{rho1 = 0.1, rho2 = 0.2}}\label{simulationresults_cubic_fixed_Zj_highconf}
\FloatBarrier
\begin{table}[t]
\caption{Simulation study results for individual package component effects with a cubic cost function ($\rho_1 = 0.1, \rho_2 = 0.2$, fixed $Z_j$)}
\centering
\fontsize{8pt}{10pt}\selectfont
\begin{tabular}{llllllllll}
\makecell{$\boldsymbol{\beta}^*=(\beta_{1}^*, \beta_{2}^*)$ \\ (-1.70, -0.70)} & $n_j^{(1)}$ & $n_j^{(2)}$ & $J$ &  & $\widehat{\boldsymbol{\beta}}_{1}$ &  &  & $\widehat{\boldsymbol{\beta}}_{2}$ &  \\
  &  &  &  & $\%$RelBias & \begin{tabular}[c]{@{}l@{}}$\frac{SE}{EMP.SD}$ \\ ($\times 100$)\end{tabular} & CP95 & $\%$RelBias & \begin{tabular}[c]{@{}l@{}}$\frac{SE}{EMP.SD}$\\ ($\times 100$)\end{tabular} & CP95 \\ \hline
\multicolumn{4}{l}{Scenario 1 ($J^{(1)} = J^{(2)} = J$)} &  &  &  &  &  &  \\
 & 50 & 100 & 6 & -0.046 & 98.20 & 94.80 & 0.439 & 102.30 & 94.60 \\
 &  &  & 10 & -0.012 & 96.67 & 94.75 & 0.512 & 105.01 & 95.25 \\
 &  &  & 20 & 0.016 & 97.05 & 94.90 & 0.051 & 99.26 & 94.90 \\
 &  & 200 & 6 & -0.006 & 97.26 & 95.05 & 0.289 & 105.34 & 95.95 \\
 &  &  & 10 & 0.007 & 95.02 & 94.55 & 0.282 & 100.13 & 94.75 \\
 &  &  & 20 & -0.013 & 99.05 & 94.95 & 0.190 & 98.20 & 94.20 \\
 & 100 & 100 & 6 & -0.012 & 95.91 & 94.65 & 0.004 & 97.80 & 94.70 \\
 &  &  & 10 & -0.030 & 98.52 & 95.15 & 0.186 & 101.34 & 94.75 \\
 &  &  & 20 & 0.008 & 98.16 & 94.80 & -0.023 & 97.74 & 95.15 \\
 &  & 200 & 6 & 0.021 & 99.79 & 95.65 & 0.130 & 101.76 & 94.80 \\
 &  &  & 10 & 0.013 & 100.38 & 96.05 & 0.094 & 101.97 & 94.95 \\
 &  &  & 20 & 0.003 & 99.59 & 95.15 & -0.015 & 97.50 & 94.30 \\
\end{tabular}
\footnotesize{
    \fontsize{8pt}{10pt}\selectfont
    \raggedright{
    \medskip \\
    $n_j^{(1)}$: number of participants per center $j$ at stage 1, 
    $n_j^{(2)}$: number of participants per center $j$ at stage 2.\\
    $J$: number of centers for each stage. \\
    \%RelBias: percent relative bias $100(\hat{\beta}-\beta^\star)/\beta^\star$.\\
    SE: mean estimated standard error, 
    EMP.SD: empirical standard deviation.\\
    CP95: empirical coverage rate of 95\% confidence intervals.\\
    }
}
\label{tab:intervention_effects_cubic_fixed_highconf}
\end{table}
\FloatBarrier
\begin{table}[H]
\singlespacing
\caption{Simulation study results for estimated optimal intervention with a cubic cost function ($\rho_1 = 0.1, \rho_2 = 0.2$, fixed $Z_j$)}
\centering
\fontsize{8pt}{10pt}\selectfont
\begin{tabular}{lllllllllllll}
\makecell{$\boldsymbol{\beta}^*=(\beta_1^*,\beta_2^*)$\\(-1.70, -0.70)} & \makecell{$\boldsymbol{x}^{\text{opt}}$\\(2.94, 0.01)} & $n_j^{(1)}$ & $n_j^{(2)}$ & $J$ & & 
\multicolumn{3}{c}{Stage 1} & & \multicolumn{3}{c}{Stage 2/LAGO optimized} \\
 &  &  &  &  & & \begin{tabular}[c]{@{}l@{}}Bias of\\ $\hat{x}_1^{\text{opt},(1)}$\\ \end{tabular} & \begin{tabular}[c]{@{}l@{}}Bias of\\ $\hat{x}_2^{\text{opt},(1)}$\\ \end{tabular} & \begin{tabular}[c]{@{}l@{}}rMSE\\ \end{tabular} & & \begin{tabular}[c]{@{}l@{}}Bias of\\ $\hat{x}_1^{\text{opt},(2,n^{(1)})}$\\ \end{tabular} & \begin{tabular}[c]{@{}l@{}}Bias of\\ $\hat{x}_2^{\text{opt},(2,n^{(1)})}$\\ \end{tabular} & \begin{tabular}[c]{@{}l@{}}rMSE\\ \end{tabular} \\ \hline
\multicolumn{13}{l}{Scenario 1 ($J^{(1)} = J^{(2)} = J$)} \\
 &  & 50 & 100 & 6 & & -0.400 & 0.156 & 0.476 & & -0.340 & 0.038 & 0.349 \\
 &  &  & 100 & 10 & & -0.829 & 0.314 & 0.977 & & -0.705 & 0.070 & 0.725 \\
 &  &  & 100 & 20 & & -0.076 & 0.034 & 0.101 & & -0.063 & 0.014 & 0.071 \\
 &  & 50 & 200 & 6 & & -0.401 & 0.149 & 0.474 & & -0.337 & 0.033 & 0.343 \\
 &  &  & 200 & 10 & & -0.831 & 0.322 & 0.983 & & -0.694 & 0.050 & 0.703 \\
 &  &  & 200 & 20 & & -0.075 & 0.031 & 0.097 & & -0.055 & 0.005 & 0.060 \\
 &  & 100 & 100 & 6 & & -0.353 & 0.068 & 0.379 & & -0.338 & 0.036 & 0.345 \\
 &  & & 100 & 10 & & -0.749 & 0.156 & 0.814 & & -0.693 & 0.046 & 0.700 \\
 &  & & 100 & 20 & & -0.071 & 0.026 & 0.087 & & -0.061 & 0.012 & 0.068 \\
 &  & 100 & 200 & 6 & & -0.348 & 0.061 & 0.370 & & -0.337 & 0.033 & 0.343 \\
 &  & & 200 & 10 & & -0.742 & 0.145 & 0.803 & & -0.687 & 0.037 & 0.690 \\
 &  & & 200 & 20 & & -0.072 & 0.027 & 0.087 & & -0.055 & 0.005 & 0.060 \\
\end{tabular}
\footnotesize{
    \fontsize{8pt}{10pt}\selectfont
    \raggedright{
    \medskip \\
    $n_j^{(1)}$: number of participants per center $j$ at stage 1, 
    $n_j^{(2)}$: number of participants per center $j$ at stage 2.\\
    $J$: number of centers for each stage. \\
    Bias of $\hat{x}_{1}^{\text{opt}}$, $\hat{x}_{2}^{\text{opt}}$: bias of the first and second components of the estimated optimal intervention.\\
    rMSE: root mean squared error, $\{\operatorname{mean}(\|\hat{\boldsymbol{x}}^{o p t}-\boldsymbol{x}^{o p t}\|^{2})\}^{1 / 2}$, estimated by average over simulation iterations.\\
    }
}
\label{tab:optimization_performance_cubic_fixed_highconf}
\end{table}

\begin{table}[H]
\singlespacing
\caption{Simulation study results for estimated optimal intervention, confidence set, and confidence band with a cubic cost function ($\rho_1 = 0.1, \rho_2 = 0.2$, fixed $Z_j$)}
\centering
\fontsize{8pt}{10pt}\selectfont
\begin{tabular}{llllllllll}
\makecell{$\boldsymbol{\beta}^*=(\beta_1^*,\beta_2^*)$\\(-1.70, -0.70)} & \makecell{$\boldsymbol{x}^{\text{opt}}$\\(2.94, 0.01)} & $n_j^{(1)}$ & $n_j^{(2)}$ & $J$ & \begin{tabular}[c]{@{}l@{}}TrueOpt1\\ (Q2.5,Q97.5)\end{tabular} & \begin{tabular}[c]{@{}l@{}}TrueOpt2\\ (Q2.5,Q97.5)\end{tabular} & \begin{tabular}[c]{@{}l@{}}SetCP95\\ $\%$\end{tabular} & \begin{tabular}[c]{@{}l@{}}SetPerc\\ $\%$\end{tabular} & \begin{tabular}[c]{@{}l@{}}BandsCP95\\ $\%$\end{tabular} \\ \hline
\multicolumn{10}{l}{Scenario 1 ($J^{(1)} = J^{(2)} = J$)} \\
 &  & 50 & 100 & 6 & (-5.246, -4.716) & (-5.176, -4.906) & 94.40 & 4.20 & 96.10 \\
 &  & & 100 & 10 & (-5.167, -4.807) & (-5.167, -4.927) & 94.85 & 3.41 & 95.95 \\
 &  & & 100 & 20 & (-5.246, -4.906) & (-5.076, -4.976) & 94.70 & 2.26 & 96.65 \\
 &  & 50 & 200 & 6 & (-5.246, -4.716) & (-5.176, -5.006) & 94.30 & 3.62 & 95.95 \\
 &  &  & 200 & 10 & (-5.167, -4.807) & (-5.097, -4.997) & 94.20 & 2.91 & 96.30 \\
 &  &  & 200 & 20 & (-5.246, -4.906) & (-5.076, -4.976) & 94.95 & 1.96 & 96.15 \\
 &  & 100 & 100 & 6 & (-5.246, -4.836) & (-5.176, -4.906) & 93.95 & 3.38 & 96.15 \\
 &  &  & 100 & 10 & (-5.167, -4.907) & (-5.097, -4.927) & 95.40 & 2.74 & 96.00 \\
 &  &  & 100 & 20 & (-5.146, -4.906) & (-5.076, -4.976) & 94.70 & 1.82 & 96.05 \\
 &  & 100 & 200 & 6 & (-5.246, -4.836) & (-5.176, -5.006) & 95.40 & 2.99 & 97.05 \\
 &  &  & 200 & 10 & (-5.167, -4.907) & (-5.097, -4.997) & 95.55 & 2.42 & 96.25 \\
 &  &  & 200 & 20 & (-5.146, -4.906) & (-5.076, -4.976) & 94.85 & 1.61 & 96.35 \\
\end{tabular}
\footnotesize{
    \raggedright{
    \fontsize{8pt}{10pt}\selectfont
    \medskip \\
    $n_j^{(1)}$: number of participants per center $j$ at stage 1, 
    $n_j^{(2)}$: number of participants per center $j$ at stage 2.\\
    $J$: number of centers for each stage. \\
    TrueOpt1: mean treatment effect under the stage 2 recommended intervention.\\
    TrueOpt2: mean treatment effect under the final estimated optimal intervention based on all data. \\
    $Q$2.5 and $Q$97.5: 2.5$\%$ and 97.5$\%$ quantiles.\\ 
    SetCP95$\%$: empirical coverage percentage of confidence set for the optimal intervention. \\
    SetPerc$\%$: mean percentage of the size of the confidence set as a percent of the total intervention space.\\
    BandsCP95$\%$: empirical coverage of 95$\%$ confidence band. \\
    }
}
\label{tab:confidence_metrics_cubic_fixed_highconf}
\end{table}

\subsection{Visualizations of the cubic cost function in Section 5}\label{sim_visualizations}
Figure \ref{fig:cubic_cost} shows the total and marginal cost plots for each intervention component. The intervention comprised two components, $\boldsymbol{x} = (x_1, x_2)$ with bounds $[L_1, U_1] = [0, 4]$ and $[L_2, U_2] = [0, 3]$. The dashed horizontal lines represent the marginal linear cost function based on the linear cost function used in Scenario 1: 1.0 for the nurse counseling component and 0.5 for the home BP measurement component. The cubic cost function was specified as $C_2(\boldsymbol{x}) = 1.25x_{1} - 0.04x_{1}^{3} + 0.0055x_{1}^{4} + 0.63x_{2} - 0.09x_{2}^{3} + 0.026x_{2}^{4}$. The cubic cost function was calibrated such that its average marginal cost over the feasible intervention range approximately equals the constant marginal cost of the linear function.  

\FloatBarrier
\begin{figure}[t]
    \centering
    \includegraphics[width=1\linewidth]{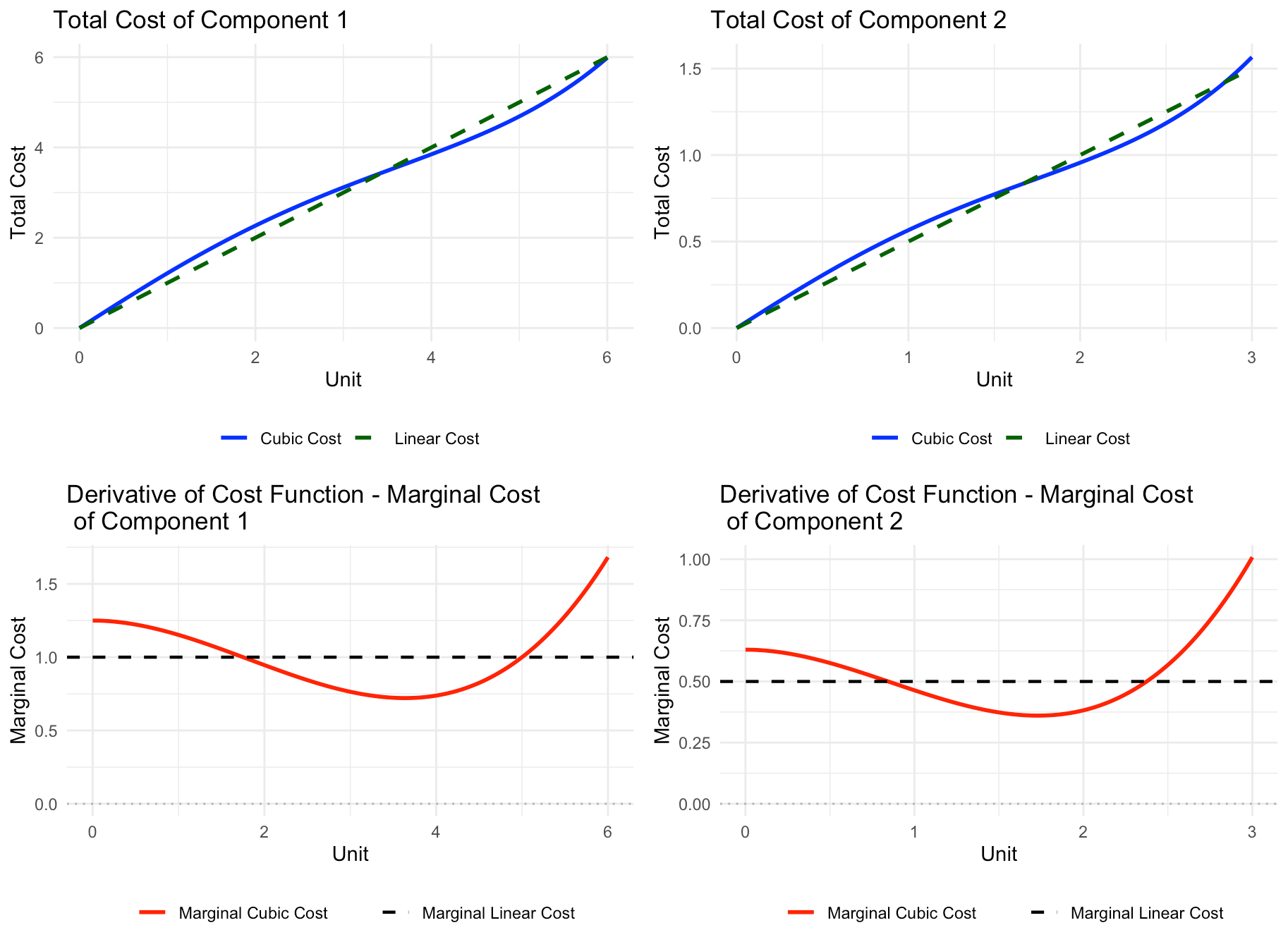}
    \caption{Total (top) and marginal (bottom) costs of the cubic cost function for each intervention component, component 1 (left) and component 2 (right).}
    \label{fig:cubic_cost}
\end{figure}
\FloatBarrier

\section{Additional results from the EXTRA-CVD study}
\subsection{Additional information on EXTRA-CVD: Variability in \texorpdfstring{$\Delta\text{SBP}$}{Delta SBP} from baseline to month 12}\label{variability_in_outcomes}
\begin{figure}[H]
    \centering
    \begin{minipage}{0.49\textwidth}
        \centering
        \includegraphics[width=\textwidth]{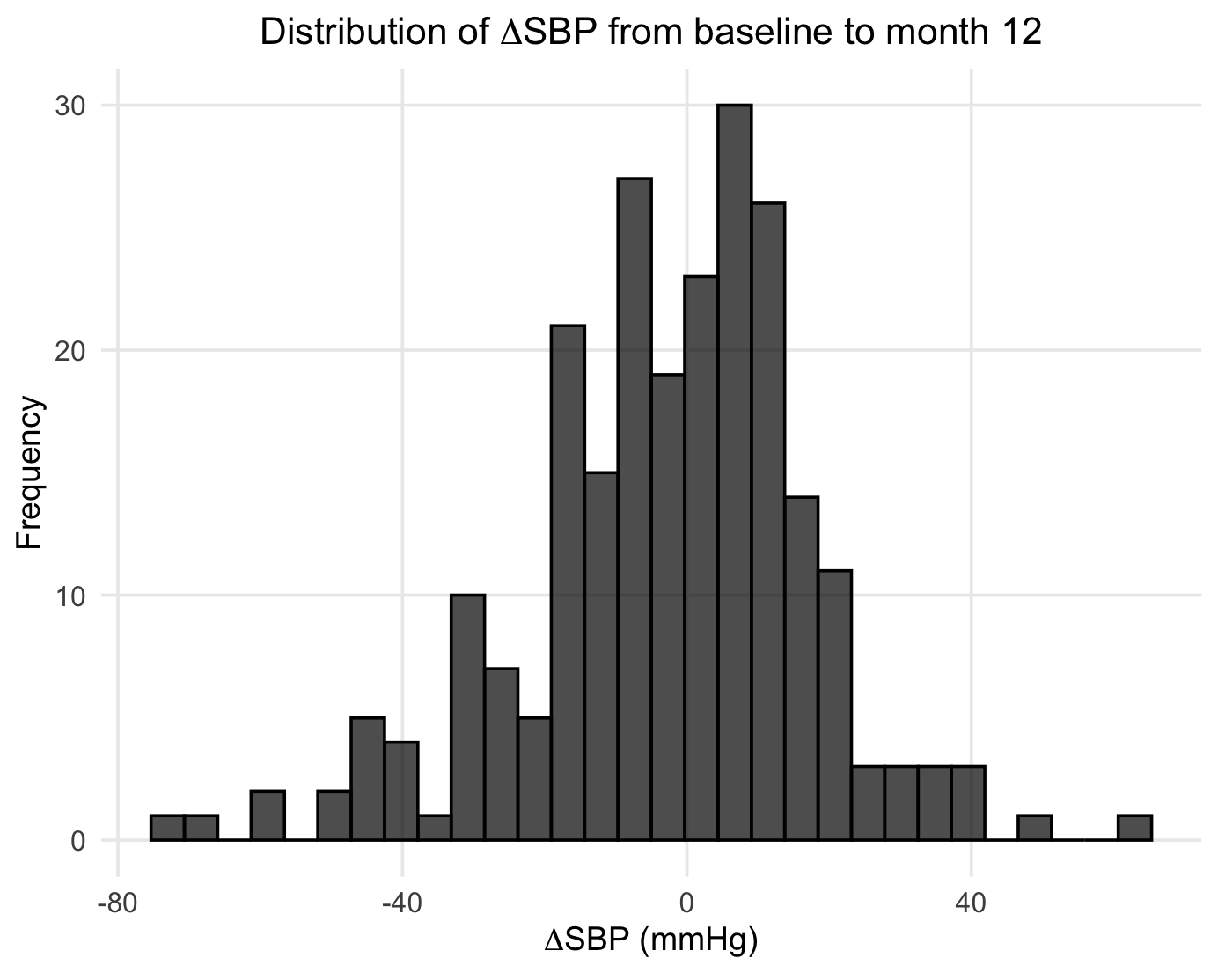}
    \end{minipage}
    \hfill
    \begin{minipage}{0.49\textwidth}
        \centering
        \includegraphics[width=\textwidth]{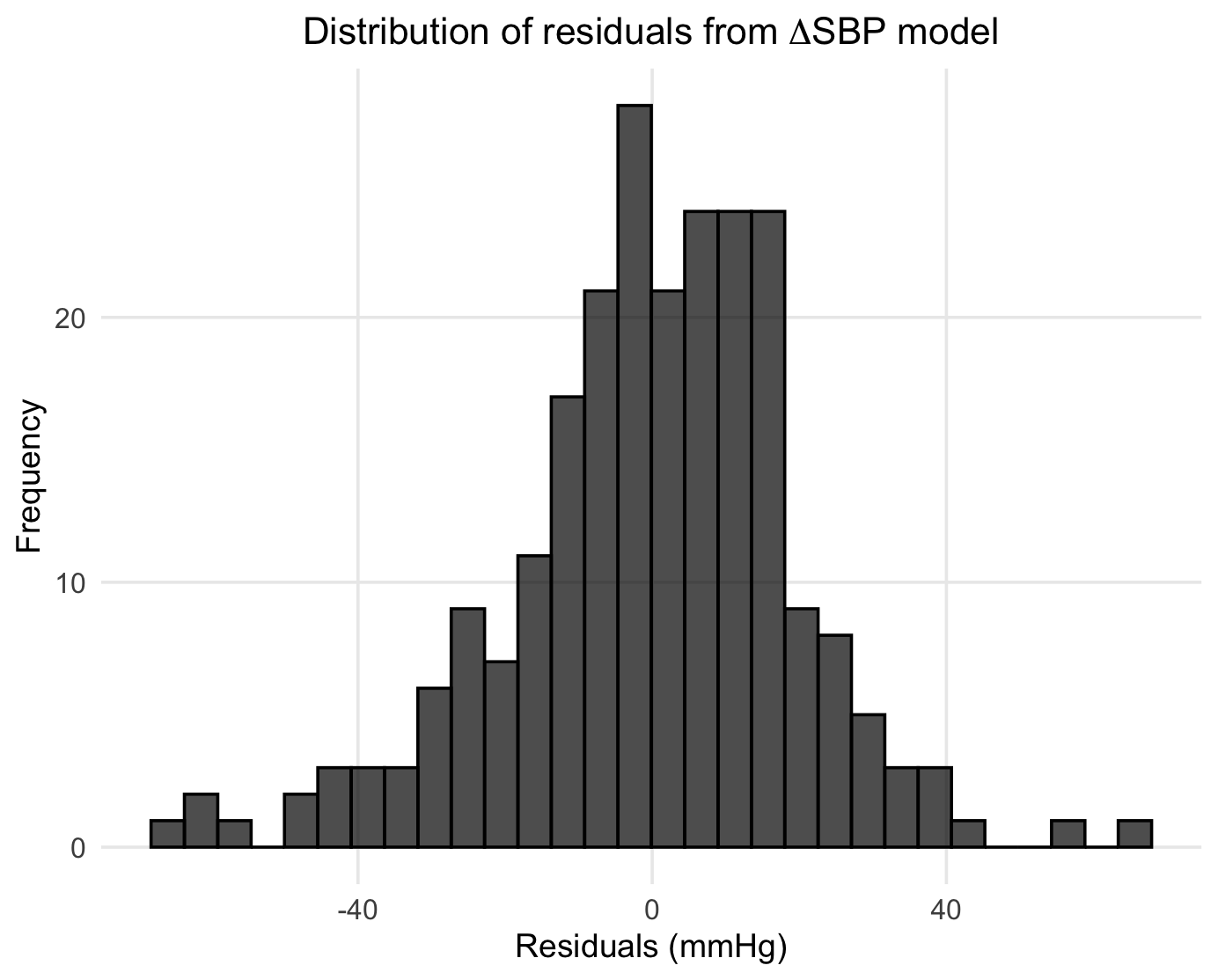}
    \end{minipage}
    \small
    \caption{\centering Distribution of $\Delta\text{SBP}$ (left) and residuals $\epsilon_{ijt}$ (right) in the month-12 EXTRA-CVD data across intervention and control groups.}
    \normalsize
    \label{fig:sbp_combined}
\end{figure}
\subsection{Additional information on EXTRA-CVD}\label{AdditionalInfoEXTRA-CVD}
\begin{table}[H]
\singlespacing
\caption{EXTRA-CVD: Average $\Delta$SBP by center and study arm by months 4, 8, and 12}
\fontsize{8pt}{10pt}\selectfont
\label{tab:bp-outcomes}
\centering
\begin{tabular}{llrrrr}
Center& Study Arm & SBP Baseline & \multicolumn{3}{c}{$\Delta$SBP } \\
& & & Month 4 & Month 8 & Month 12 \\
\hline
\\
\multirow{2}{*}{A} & Control & 134.72 & 1.46 (-5.95, 8.88) & -2.87 (-8.11, 2.37)& -3.15 (-9.66, 3.36) \\
                   & Intervention & 137.59 & -6.43 (-11.63, -1.23) & -5.91 (-11.04, -0.77)& -5.61 (-11.35, 0.14)\\
                   \hline
                   \\
\multirow{2}{*}{B} & Control & 133.29 & 0.64 (-5.10, 6.39) & 0.93 (-4.21, 6.07)& 0.39 (-4.80, 5.58) \\
                   & Intervention & 138.02 & -4.86 (-10.65, 0.92) & -8.35 (-16.10, -0.60) & -8.25 (-16.33, -0.17) \\
                   \hline
                   \\
\multirow{2}{*}{C} & Control & 133.66 & 6.11 (-0.25, 12.47)& -1.37 (-6.85, 4.12) & 3.67 (-2.54, 9.88)\\
                   & Intervention & 131.87 & -2.95 (-6.82, 0.92) & -4.61 (-9.58, 0.36)& -4.28 (-9.95, 1.38)\\
\end{tabular}
\vspace*{0.3cm} 
\\
\raggedright\footnotesize\fontsize{8pt}{10pt}\selectfont Negative $\Delta$SBP values indicate a reduction in blood pressure from baseline.
\end{table}

\subsection{Visualizations of the cubic cost function used to analyze in the EXTRA-CVD study}\label{visualizations}

Figure \ref{fig:cubic_cost} shows the total and marginal cost plots for each intervention component. The intervention consisted of two components, $\boldsymbol{x} = (x_1, x_2)$ with bounds $[L_1, U_1] = [0, 6.5]$ and $[L_2, U_2] = [0, 3]$, as shown in Table 2 in Section 6 for the EXTRA-CVD study. The dashed horizontal lines are the marginal linear cost function from the EXTRA-CVD analysis (Table 2 in Section 6): 1.0 for the nurse counseling component and 0.5 for the home BP measurement component. The cubic cost function was specified as $C_2(\boldsymbol{x}_t) = 1.25x_{1t} - 0.043x_{1t}^{3} + 0.0055x_{1t}^4 + 0.63x_{2t} - 0.09x_{2t}^3 + 0.026x_{2t}^4$, with $t = 4, 8, 12$ for each visit. The cubic cost function was calibrated such that its average marginal cost over the feasible intervention range approximately equals the constant marginal cost of the linear function.  
\begin{figure}[H]
    \centering
    \includegraphics[width=1\linewidth]{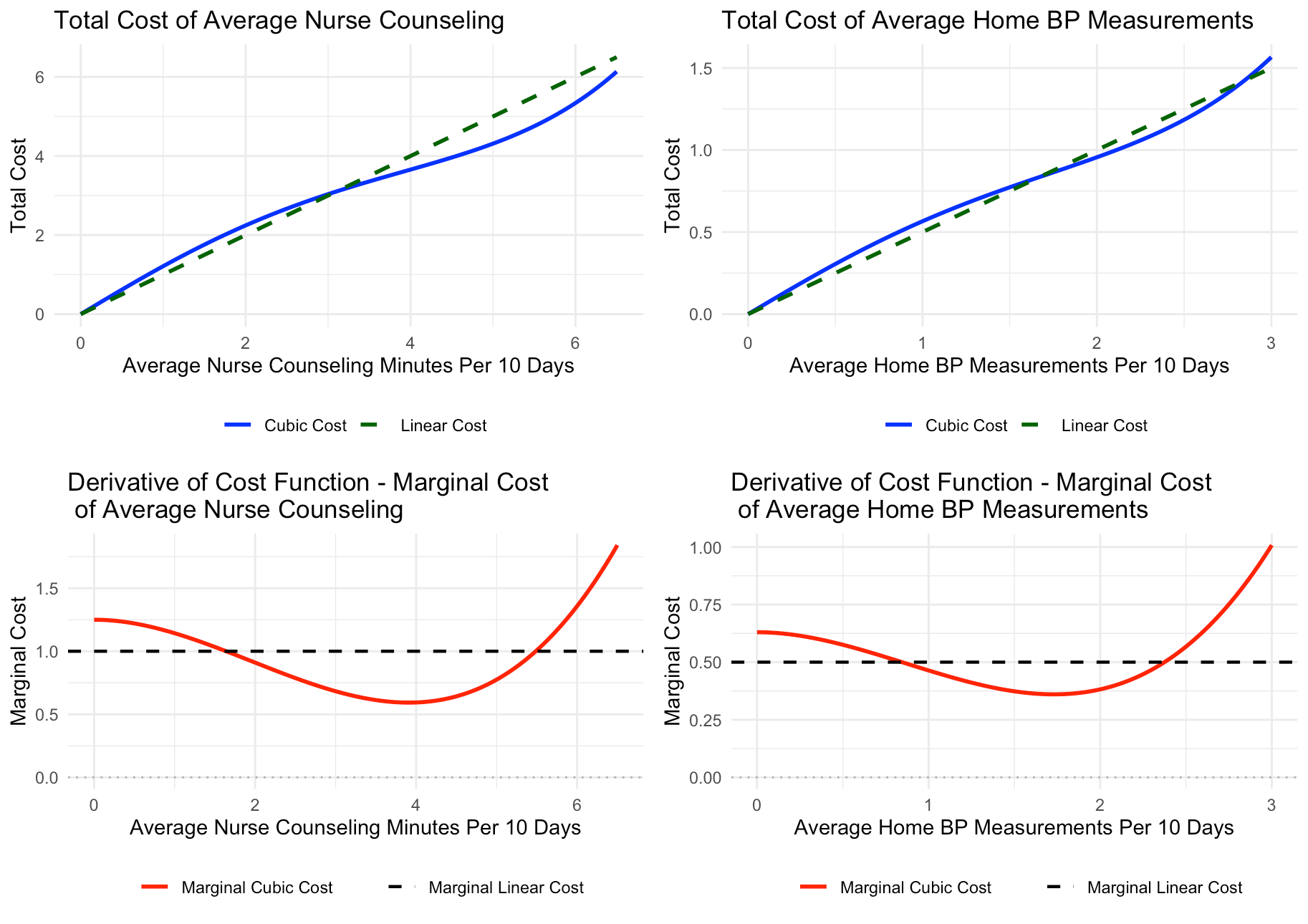}
    \caption{Total (top) and marginal (bottom) costs of the cubic cost function for each intervention component, nurse counseling (left) and home BP measurement (right).}
    \label{fig:cvd_cubic_cost}
\end{figure}

\subsection{Additional simulation results on EXTRA-CVD}\label{sim_outcome_goal_5}
Table~\ref{combined_simulations_goal5} shows the 
simulation results for scenarios 1 and 2 with a linear cost function and an outcome goal of \mbox{$-5$~mmHg}.
\begin{table}[htbp]
\caption{Simulation study results comparing scenarios with and without LAGO optimization for intervention components and optimal intervention package with a linear cost function and an outcome goal of -5 mmHg.}
\label{combined_simulations_goal5}
\fontsize{8pt}{10pt}\selectfont
\begin{tabular}{llllllllll}
\multicolumn{10}{l}{\textbf{(a) Individual intervention components}} \\
 & & & & \multicolumn{3}{c}{Scenario 1 (With LAGO)} & \multicolumn{3}{c}{Scenario 2 (Without LAGO)} \\
$\boldsymbol{\beta}^{*} = (\beta^{*}_1, \beta^{*}_2)$ & $n^{(1)}$ & $n^{(2)}$ & $\hat{\boldsymbol{\beta}}$ & \%RelBias & SE/EMP.SD & CP95 & \%RelBias & SE/EMP.SD & CP95 \\ \hline
\multirow{2}{*}{$(-1.59, -0.59)$} & \multirow{2}{*}{238} & \multirow{2}{*}{300} & $\hat{\beta}_1$ & -1.28 & 93.2 & 94.5 & 1.86 & 100.0 & 95.2 \\
 & &  & $\hat{\beta}_2$ & 19.40 & 91.1 & 94.4 & -8.04 & 101.0 & 95.3 \\
\end{tabular}
\fontsize{8pt}{10pt}\selectfont
\begin{tabular}{lllll}
\multicolumn{5}{l}{\textbf{(b) Estimated optimal intervention}} \\[0.5ex]
$\boldsymbol{x}^{\text{opt}}$ & Metrics &
\multicolumn{1}{c}{\shortstack[c]{Stage 1\\($n^{(1)} = 238$)}} &
\multicolumn{2}{c}{\shortstack[c]{Stage 2\\($n^{(1)} + n^{(2)} = 528$)}} \\
\cmidrule(lr){3-5}
 &  &  & \shortstack[c]{Scenario 1\\(With LAGO)} & \shortstack[c]{Scenario 2\\(Without LAGO)} \\
\midrule
\multirow{3}{*}{(3.15, 0.00)} 
 & Bias of $\hat{\boldsymbol{x}}^{\text{opt}}_1$ & -1.75 & -1.16 & -1.57 \\
 & Bias of $\hat{\boldsymbol{x}}^{\text{opt}}_2$ & 1.04 & 1.09 & 1.12 \\
 & rMSE & 2.71 & 2.44 & 2.61 \\
\end{tabular}
\\
\begin{tabular}{llll}
\multicolumn{4}{l}{\textbf{(c) Estimated optimal intervention, confidence set, and confidence band}} \\[1ex]
Metrics &
\multicolumn{1}{c}{\shortstack[c]{Stage 1}} &
\multicolumn{2}{c}{\shortstack[c]{Stage 2}} \\
\cline{2-4}\\[-0.8ex]
 &  & \shortstack[c]{Scenario 1 \\ (With LAGO)} & \shortstack[c]{Scenario 2\\(Without LAGO)} \\[0.5ex]
\hline\\[-0.8ex]
ExpectedOutActInt & -5.69 & -3.37 & -5.71 \\
ExpectedOutRecInt & -5.69 & -2.81 & -5.69 \\
ExpectedOutEstOptInt & --$^1$ & -3.78 & -3.15 \\
MeanOpt (Q2.5, Q97.5) & (-7.93,\,-0.45) & (-7.77,\,-0.86)$^*$ & (-6.34,\,-0.75)$^*$ \\
AvgObsOut & -5.83 & -3.35 & -5.68 \\
MeanCostAct & --$^2$ & 2.28 & 3.81 \\
MeanCostRec & 1.91 & 2.53 & 1.91 \\
SetCP95 (\%) & --$^3$ & 94.9 & 95.4 \\
SetPerc (\%) & --$^3$ & 70.7 & 70.4 \\
BandsCP95 (\%) & --$^3$ & 98.7 & 98.8 \\
\hline
\end{tabular}
\begin{tablenotes}[flushleft]
\footnotesize{
\fontsize{8pt}{10pt}\selectfont
\item See Table \ref{simulationresults_linear_varying_Zj_lowconf} and \ref{combined_simulations_goal9_adjust} footnotes for definitions.
}
\end{tablenotes}
\end{table}

Table \ref{combined_simulations_goal5}a presents the relative bias percentages, ratios of standard errors to empirical standard deviations, and coverage probabilities for the confidence intervals for $\beta_1$ and $\beta_2$ based on the combined data from stages 1 and 2. All parameter estimates demonstrated relative bias across all scenarios, but the 95\% coverage probabilities closely approximated the nominal 95\% level regardless of whether LAGO optimization was used to determine the stage 2 recommended~interventions.

Table \ref{combined_simulations_goal5}b summarizes the accuracy of the estimated optimal intervention components. The true optimal intervention was $(3.15, 0.00)$. Overall, LAGO achieved smaller bias and root mean squared error (rMSE) than non-LAGO. For stage 2, LAGO reduced the rMSE of the estimated optimal intervention from 2.71 to 2.44, while non-LAGO achieved 2.61. Both methods reduced bias between stages, with greater improvement for LAGO.

Table \ref{combined_simulations_goal5}c compares the expected and actual outcomes and summarizes confidence set results across scenarios. The expected outcome under the actual stage 2 interventions was -3.37 mmHg for LAGO compared to -5.71 mmHg for non-LAGO, with actual interventions under LAGO had lower total cost (mean cost: 2.28 minutes with LAGO and 3.81 minutes without LAGO). The expected outcome under the recommended interventions was -2.81 mmHg for LAGO versus -5.69 mmHg for non-LAGO. The expected outcome under the final estimated optimal interventions was -3.78 mmHg for LAGO compared to -3.15 mmHg for non-LAGO. Both methods maintained good confidence set coverage (LAGO: 94.9\%; non-LAGO: 95.4\%) and confidence band coverage (LAGO: 98.7\%; non-LAGO: 98.8\%), with similar confidence set sizes (approximately 70\% of the intervention space). These results reflect the challenges of adaptive optimization in settings with limited stage 1 sample size ($n^{(1)} = 238$) and high outcome variability (SD $\approx$ 20 mmHg). In this high-noise setting, stage 1 coefficient estimates exhibited substantial variability, leading to LAGO recommending suboptimal stage 2 intervention packages. This limitation is evident in the performance of traditional LAGO, which substaintially worsened outcomes to -3.37 mmHg compared to -5.69 mmHg for non-LAGO, highlighting the vulnerability of LAGO to stage 1 noise when outcome goals are already within reach.
\subsection{Additional simulation results on EXTRA-CVD}\label{sim_outcome_5_lowerbound}

\begin{table}[H]
\singlespacing
\caption{Simulation study results comparing scenarios with and without LAGO optimization for intervention components and optimal intervention package with a linear cost function and an outcome goal of -5 mmHg, with lower bound equal to stage 1 recommended interventions.}
\label{combined_simulations_goal5_adjust}
\fontsize{8pt}{10pt}\selectfont
\begin{tabular}{llllllllll}
\multicolumn{10}{l}{\textbf{(a) Individual intervention components}} \\
 & & & & \multicolumn{3}{c}{Scenario 1 (With LAGO)} & \multicolumn{3}{c}{Scenario 2 (Without LAGO)} \\
$\boldsymbol{\beta}^{*} = (\beta^{*}_1, \beta^{*}_2)$ & $n^{(1)}$ & $n^{(2)}$ & $\hat{\boldsymbol{\beta}}$ & \%RelBias & SE/EMP.SD & CP95 & \%RelBias & SE/EMP.SD & CP95 \\ \hline
\multirow{2}{*}{$(-1.59, -0.59)$} & \multirow{2}{*}{238} & \multirow{2}{*}{300} & $\hat{\beta}_1$ & 0.75 & 96.0 & 94.3 & 1.86 & 100.0 & 95.2 \\
 & &  & $\hat{\beta}_2$ & 1.30 & 96.0 & 94.4 & -8.04 & 101.0 & 95.3 \\
\end{tabular}
\fontsize{8pt}{10pt}\selectfont
\begin{tabular}{lllll}
\multicolumn{5}{l}{\textbf{(b) Estimated optimal intervention}} \\[0.5ex]
$\boldsymbol{x}^{\text{opt}}$ & Metrics &
\multicolumn{1}{c}{\shortstack[c]{Stage 1\\($n^{(1)} = 238$)}} &
\multicolumn{2}{c}{\shortstack[c]{Stage 2\\($n^{(1)} + n^{(2)} = 528$)}} \\
\cmidrule(lr){3-5}
 &  &  & \shortstack[c]{Scenario 1\\(With LAGO)} & \shortstack[c]{Scenario 2\\(Without LAGO)} \\
\midrule
\multirow{3}{*}{(3.15, 0.00)} 
 & Bias of $\hat{\boldsymbol{x}}^{\text{opt}}_1$ & 0.04 & -0.06 & -0.07 \\
 & Bias of $\hat{\boldsymbol{x}}^{\text{opt}}_2$ & 1.73 & 1.70 & 1.71 \\
 & rMSE & 1.82 & 1.73 & 1.74 \\
\end{tabular}
\fontsize{8pt}{10pt}\selectfont
\begin{tabular}{llll}
\multicolumn{4}{l}{\textbf{(c) Estimated optimal intervention, confidence set, and confidence band}} \\[1ex]
Metrics &
\multicolumn{1}{c}{\shortstack[c]{Stage 1}} &
\multicolumn{2}{c}{\shortstack[c]{Stage 2}} \\
\cline{2-4}\\
 &  & \shortstack[c]{Scenario 1 \\ (With LAGO)} & \shortstack[c]{Scenario 2\\(Without LAGO)} \\[0.5ex]
\hline\\[-0.8ex]
ExpectedOutActInt & -5.69 & -6.07 & -5.71 \\
ExpectedOutRecInt & -5.69 & -6.06 & -5.69 \\
ExpectedOutEstOptInt & --$^1$ & -5.89 & -5.87 \\
MeanOpt (Q2.5, Q97.5) & (-8.87,\,-5.69) & (-7.28,\,-5.69)$^*$ & (-7.25,\,-5.69)$^*$ \\
AvgObsOut & -5.83 & -6.05 & -5.68 \\
MeanCostAct & --$^2$ & 4.05 & 3.81 \\
MeanCostRec & 4.05 & 3.94 & 4.05 \\
SetCP95 (\%) & --$^3$ & 94.7 & 95.4 \\
SetPerc (\%) & --$^3$ & 68.9 & 70.4 \\
BandsCP95 (\%) & --$^3$ & 98.6 & 98.8 \\
\hline
\end{tabular}
\begin{tablenotes}[flushleft]
\fontsize{8pt}{10pt}\selectfont
\item See Table \ref{simulationresults_linear_varying_Zj_lowconf} and \ref{combined_simulations_goal9_adjust} footnotes for definitions.
\end{tablenotes}
\end{table}

\subsection{Additional simulation results on EXTRA-CVD}\label{sim_outcome_goal_9}
\begin{table}[htbp]
\singlespacing
\caption{Simulation study results comparing scenarios with and without LAGO optimization for intervention components and optimal intervention package with a linear cost function and an outcome goal of -9 mmHg.}
\label{combined_simulations_goal9}
\fontsize{8pt}{10pt}\selectfont
\begin{tabular}{llllllllll}
\multicolumn{10}{l}{\textbf{(a) Individual intervention components}} \\
 & & & & \multicolumn{3}{c}{Scenario 1 (With LAGO)} & \multicolumn{3}{c}{Scenario 2 (Without LAGO)} \\
$\boldsymbol{\beta}^{*} = (\beta^{*}_1, \beta^{*}_2)$ & $n^{(1)}$ & $n^{(2)}$ & $\hat{\boldsymbol{\beta}}$ & \%RelBias & SE/EMP.SD & CP95 & \%RelBias & SE/EMP.SD & CP95 \\ \hline
\multirow{2}{*}{$(-1.59, -0.59)$} & \multirow{2}{*}{238} & \multirow{2}{*}{300} & $\hat{\beta}_1$ & -0.77 & 96.6 & 94.9 & 1.86 & 100.0 & 95.2 \\
 & &  & $\hat{\beta}_2$ & 17.7 & 101.8 & 95.7 & -8.04 & 101.0 & 95.3 \\
\end{tabular}
\\[0.5em]
\fontsize{8pt}{10pt}\selectfont
\begin{tabular}{lllll}
\multicolumn{5}{l}{\textbf{(b) Estimated optimal intervention}} \\[0.5ex]
$\boldsymbol{x}^{\text{opt}}$ & Metrics &
\multicolumn{1}{c}{\shortstack[c]{Stage 1\\($n^{(1)} = 238$)}} &
\multicolumn{2}{c}{\shortstack[c]{Stage 2\\($n^{(1)} + n^{(2)} = 528$)}} \\
\cmidrule(lr){3-5}
 &  &  & \shortstack[c]{Scenario 1\\(With LAGO)} & \shortstack[c]{Scenario 2\\(Without LAGO)} \\
\midrule
\multirow{3}{*}{(5.67, 0.00)} 
 & Bias of $\hat{\boldsymbol{x}}^{\text{opt}}_1$ & -2.59 & -1.28 & -1.89 \\
 & Bias of $\hat{\boldsymbol{x}}^{\text{opt}}_2$ & 1.35 & 1.25 & 1.35 \\
 & rMSE & 3.69 & 2.62 & 3.18 \\
\end{tabular}
\\[0.5em]
\fontsize{8pt}{10pt}\selectfont
\begin{tabular}{llll}
\multicolumn{4}{l}{\textbf{(c) Estimated optimal intervention, confidence set, and confidence band}} \\[1ex]
Metrics &
\multicolumn{1}{c}{\shortstack[c]{Stage 1}} &
\multicolumn{2}{c}{\shortstack[c]{Stage 2}} \\[0.5ex]
\cline{2-4}\\[-0.8ex]
 &  & \shortstack[c]{Scenario 1 \\ (With LAGO)} & \shortstack[c]{Scenario 2\\(Without LAGO)} \\[0.5ex]
\hline\\[-0.8ex]
ExpectedOutActInt & -5.69 & -5.93 & -5.71 \\[0.8ex]
ExpectedOutRecInt & -5.69 & -5.66 & -5.69 \\[0.8ex]
ExpectedOutEstOptInt & --$^1$ & -7.69 & -6.77 \\[0.8ex]
MeanOpt (Q2.5, Q97.5) & (-11.1,\,-0.98) & (-11.5,\,-3.47)$^*$ & (-11.2,\,-1.39)$^*$ \\[0.8ex]
AvgObsOut & -5.83 & -5.91 & -5.68\\[0.8ex]
MeanCostAct & --$^2$ & 3.93 & 3.81\\[0.8ex]
MeanCostRec & 3.75 & 5.01 & 3.75 \\[0.8ex]
SetCP95 (\%) & --$^3$ & 95.2 & 95.4 \\[0.8ex]
SetPerc (\%) & --$^3$ & 53.3 & 58.1 \\[0.8ex]
BandsCP95 (\%) & --$^1$ & 98.9 & 98.8 \\[0.8ex]
\hline
\end{tabular}
\vspace{0.5em}
\begin{tablenotes}[flushleft]
\fontsize{8pt}{10pt}\selectfont
\item See Table \ref{simulationresults_linear_varying_Zj_lowconf} and \ref{combined_simulations_goal9_adjust} footnotes for definitions.
\end{tablenotes}
\end{table}
Table \ref{combined_simulations_goal9} examines the performance of LAGO with a more ambitious goal of -9 mmHg (true optimal intervention: $(5.67, 0.00)$). This is a scenario where LAGO can be expected to help avoid a failed trial. LAGO exhibited percent coefficient bias of -0.77\% for $\beta_1$ and 17.7\% for $\beta_2$, somewhat lower than with the -5 mmHg outcome goal. The average outcome under the actual stage 2 interventions was -5.93 mmHg for LAGO versus -5.71 mmHg for non-LAGO, showing minimal outcome differences between the two approaches. Stage 2 optimization accuracy (rMSE: 2.62 for LAGO versus 3.18 for non-LAGO) favored LAGO. 

\subsection{Additional information on EXTRA-CVD: Complete results using the cubic cost function}\label{extracvdsimulations_cubic}
\FloatBarrier
\begin{table}[htbp]
\caption{Simulation study results comparing with and without LAGO optimization for intervention components and optimal intervention package with a cubic cost function and an outcome goal of -5 mmHg.}
\label{combined_simulations_goal5_cubic}
\fontsize{8pt}{10pt}\selectfont
\begin{tabular}{llllllllll}
\multicolumn{10}{l}{\textbf{(a) Individual intervention component}} \\
 & & & & \multicolumn{3}{c}{Scenario 1 (With LAGO)} & \multicolumn{3}{c}{Scenario 2 (Without LAGO)} \\
$\boldsymbol{\beta}^{*} = (\beta^{*}_1, \beta^{*}_2)$ & $n^{(1)}$ & $n^{(2)}$ & $\hat{\boldsymbol{\beta}}$ & \%RelBias & SE/EMP.SD & CP95 & \%RelBias & SE/EMP.SD & CP95 \\ \hline
\multirow{2}{*}{$(-1.59, -0.59)$} & \multirow{2}{*}{238} & \multirow{2}{*}{300} & $\hat{\beta}_1$ & -1.53 & 92.8 & 94.4 & 1.86 & 100.0 & 95.2 \\
 & &  & $\hat{\beta}_2$ & 20.72 & 91.6 & 94.3 & -8.04 & 101.0 & 95.3 \\
\end{tabular}
\\[1em]
\fontsize{8pt}{10pt}\selectfont
\begin{tabular}{lllll}
\multicolumn{5}{l}{\textbf{(b) Estimated optimal intervention}} \\[0.5ex]
$\boldsymbol{x}^{\text{opt}}$ & Metrics &
\multicolumn{1}{c}{\shortstack[c]{Stage 1\\($n^{(1)} = 238$)}} &
\multicolumn{2}{c}{\shortstack[c]{Stage 2\\($n^{(1)} + n^{(2)} = 528$)}} \\
\cmidrule(lr){3-5}
 &  &  & \shortstack[c]{Scenario 1\\(With LAGO)} & \shortstack[c]{Scenario 2\\(Without LAGO)} \\
\midrule
\multirow{3}{*}{(3.20, 0.00)} 
 & Bias of $\hat{\boldsymbol{x}}^{\text{opt}}_1$ & -1.78 & -1.14 & -1.57 \\
 & Bias of $\hat{\boldsymbol{x}}^{\text{opt}}_2$ & 1.00 & 0.96 & 1.04 \\
 & rMSE & 2.71 & 2.37 & 2.59 \\
\end{tabular}
\\[0.7em]
\fontsize{8pt}{10pt}\selectfont
\begin{tabular}{llll}
\multicolumn{4}{l}{\textbf{(c) Estimated optimal intervention, confidence set, and confidence band}} \\[1ex]
Metrics &
\multicolumn{1}{c}{\shortstack[c]{Stage 1}} &
\multicolumn{2}{c}{\shortstack[c]{Stage 2}} \\[0.5ex]
\cline{2-4}\\[-0.8ex]
 &  & \shortstack[c]{Scenario 1 \\ (With LAGO)} & \shortstack[c]{Scenario 2\\(Without LAGO)} \\[0.5ex]
\hline\\[-0.8ex]
ExpectedOutActInt & -5.69 & -3.39 & -5.71 \\[0.8ex]
ExpectedOutRecInt & -5.69 & -2.83 & -5.69 \\[0.8ex]
ExpectedOutEstOptInt & --$^1$ & -3.83 & -3.19 \\[0.8ex]
MeanOpt (Q2.5, Q97.5) & (-7.93,\,-0.45) & (-7.89,\,-0.87)$^*$ & (-6.62,\,-0.75)$^*$ \\[1.0ex]
AvgObsOut & -5.83 & -3.36 & -5.68\\[0.8ex]
MeanCostAct & --$^2$ & 2.29 & 3.81 \\[0.8ex]
MeanCostRec & 2.00 & 2.61 & 2.00 \\[0.8ex]
SetCP95 (\%) & --$^3$ & 94.6 & 95.4 \\[0.8ex]
SetPerc (\%) & --$^3$ & 70.6 & 70.4 \\[0.8ex]
BandsCP95 (\%) & --$^3$ & 98.7 & 98.8 \\[0.8ex]
\hline
\end{tabular}
\vspace{0.3em}
\begin{tablenotes}[flushleft]
\footnotesize{
\fontsize{8pt}{10pt}\selectfont
\item See Table \ref{simulationresults_linear_varying_Zj_lowconf} and \ref{combined_simulations_goal9_adjust} footnotes for definitions. The cubic cost function used here was $C(\boldsymbol{x}) = 1.25x_1 - 0.043x_1^3 + 0.0055x_1^4 + 0.63x_2 - 0.09x_2^3 + 0.026x_2^4$.
}
\end{tablenotes}
\end{table}
\begin{table}[htbp]
\caption{Simulation study results comparing with and without LAGO optimization for intervention components and optimal intervention package with a cubic cost function and an outcome goal of -5 mmHg, with lower bound adjustment}
\label{combined_simulations_goal5_cubic_adjust}
\fontsize{8pt}{10pt}\selectfont
\begin{tabular}{llllllllll}
\multicolumn{10}{l}{\textbf{(a) Individual intervention component}} \\
 & & & & \multicolumn{3}{c}{Scenario 1 (With LAGO)} & \multicolumn{3}{c}{Scenario 2 (Without LAGO)} \\
$\boldsymbol{\beta}^{*} = (\beta^{*}_1, \beta^{*}_2)$ & $n^{(1)}$ & $n^{(2)}$ & $\hat{\boldsymbol{\beta}}$ & \%RelBias & SE/EMP.SD & CP95 & \%RelBias & SE/EMP.SD & CP95 \\ \hline
\multirow{2}{*}{$(-1.59, -0.59)$} & \multirow{2}{*}{238} & \multirow{2}{*}{300} & $\hat{\beta}_1$ & 0.67 & 95.9 & 94.3 & 1.86 & 100.0 & 95.2 \\
 & &  & $\hat{\beta}_2$ & 1.66 & 96.1 & 94.4 & -8.04 & 101.0 & 95.3 \\
\end{tabular}
\\[0.5em]
\fontsize{8pt}{10pt}\selectfont
\begin{tabular}{lllll}
\multicolumn{5}{l}{\textbf{(b) Estimated optimal intervention}} \\[0.5ex]
$\boldsymbol{x}^{\text{opt}}$ & Metrics &
\multicolumn{1}{c}{\shortstack[c]{Stage 1\\($n^{(1)} = 238$)}} &
\multicolumn{2}{c}{\shortstack[c]{Stage 2\\($n^{(1)} + n^{(2)} = 528$)}} \\
\cmidrule(lr){3-5}
 &  &  & \shortstack[c]{Scenario 1\\(With LAGO)} & \shortstack[c]{Scenario 2\\(Without LAGO)} \\
\midrule
\multirow{3}{*}{(3.20, 0.00)} 
 & Bias of $\hat{\boldsymbol{x}}^{\text{opt}}_1$ & -0.01 & -0.11 & -0.12 \\
 & Bias of $\hat{\boldsymbol{x}}^{\text{opt}}_2$ & 1.72 & 1.70 & 1.69 \\
 & rMSE & 1.82 & 1.73 & 1.73 \\
\end{tabular}
\\[0.5em]
\fontsize{8pt}{10pt}\selectfont
\begin{tabular}{llll}
\multicolumn{4}{l}{\textbf{(c) Estimated optimal intervention, confidence set, and confidence band}} \\[1ex]
Metrics &
\multicolumn{1}{c}{\shortstack[c]{Stage 1}} &
\multicolumn{2}{c}{\shortstack[c]{Stage 2}} \\[0.5ex]
\cline{2-4}\\[-0.8ex]
 &  & \shortstack[c]{Scenario 1 \\ (With LAGO)} & \shortstack[c]{Scenario 2\\(Without LAGO)} \\[0.5ex]
\hline\\[-0.8ex]
ExpectedOutActInt & -5.69 & -6.07 & -5.71 \\[0.8ex]
ExpectedOutRecInt & -5.69 & -6.06 & -5.69 \\[0.8ex]
ExpectedOutEstOptInt & --$^1$ & -5.89 & -5.88 \\[0.8ex]
MeanOpt (Q2.5, Q97.5) & (-8.93,\,-5.69) & (-7.44,\,-5.69)$^*$ & (-7.28,\,-5.69)$^*$ \\[0.8ex]
AvgObsOut & -5.83 & -6.05 & -5.68 \\[0.8ex]
MeanCostAct & --$^2$ & 4.06 & 3.81\\[0.8ex]
MeanCostRec & 4.02 & 3.94 & 4.02 \\[0.8ex]
SetCP95 (\%) & --$^3$ & 94.7 & 95.4 \\[0.8ex]
SetPerc (\%) & --$^3$ & 68.9 & 70.4 \\[0.8ex]
BandsCP95 (\%) & --$^3$ & 98.6 & 98.8 \\[0.8ex]
\hline
\end{tabular}
\vspace{0.3em}
\begin{tablenotes}[flushleft]
\fontsize{8pt}{10pt}\selectfont
\item See Table \ref{simulationresults_linear_varying_Zj_lowconf} and \ref{combined_simulations_goal9_adjust} footnotes for definitions. The cubic cost function used here was $C(\boldsymbol{x}) = 1.25x_1 - 0.043x_1^3 + 0.0055x_1^4 + 0.63x_2 - 0.09x_2^3 + 0.026x_2^4$.
\end{tablenotes}
\end{table}

\begin{table}[htbp]
\caption{Simulation study results comparing scenarios with and without LAGO optimization for intervention components and optimal intervention package with a cubic cost function and an outcome goal of -9 mmHg.}
\label{combined_simulations_goal9_cubic}
\fontsize{8pt}{10pt}\selectfont
\begin{tabular}{llllllllll}
\multicolumn{10}{l}{\textbf{(a) Individual intervention component}} \\
 & & & & \multicolumn{3}{c}{Scenario 1 (With LAGO)} & \multicolumn{3}{c}{Scenario 2 (Without LAGO)} \\
$\boldsymbol{\beta}^{*} = (\beta^{*}_1, \beta^{*}_2)$ & $n^{(1)}$ & $n^{(2)}$ & $\hat{\boldsymbol{\beta}}$ & \%RelBias & SE/EMP.SD & CP95 & \%RelBias & SE/EMP.SD & CP95 \\ \hline
\multirow{2}{*}{$(-1.59, -0.59)$} & \multirow{2}{*}{238} & \multirow{2}{*}{300} & $\hat{\beta}_1$ & -0.38 & 96.3 & 95.0 & 1.86 & 100.0 & 95.2 \\
 & &  & $\hat{\beta}_2$ & 13.47 & 101.8 & 95.7 & -8.04 & 101.0 & 95.3 \\
\end{tabular}
\\[0.5em]
\fontsize{8pt}{10pt}\selectfont
\begin{tabular}{lllll}
\multicolumn{5}{l}{\textbf{(b) Estimated optimal intervention}} \\[0.5ex]
$\boldsymbol{x}^{\text{opt}}$ & Metrics &
\multicolumn{1}{c}{\shortstack[c]{Stage 1\\($n^{(1)} = 238$)}} &
\multicolumn{2}{c}{\shortstack[c]{Stage 2\\($n^{(1)} + n^{(2)} = 528$)}} \\
\cmidrule(lr){3-5}
 &  &  & \shortstack[c]{Scenario 1\\(With LAGO)} & \shortstack[c]{Scenario 2\\(Without LAGO)} \\
\midrule
\multirow{3}{*}{(5.67, 0.00)} 
 & Bias of $\hat{\boldsymbol{x}}^{\text{opt}}_1$ & -2.58 & -1.28 & -1.88 \\
 & Bias of $\hat{\boldsymbol{x}}^{\text{opt}}_2$ & 1.28 & 1.16 & 1.28 \\
 & rMSE & 3.66 & 2.54 & 3.12 \\
\end{tabular}
\\[0.5em]
\fontsize{8pt}{10pt}\selectfont
\begin{tabular}{llll}
\multicolumn{4}{l}{\textbf{(c) Estimated optimal intervention, confidence set, and confidence band}} \\[1ex]
Metrics &
\multicolumn{1}{c}{\shortstack[c]{Stage 1}} &
\multicolumn{2}{c}{\shortstack[c]{Stage 2}} \\[0.5ex]
\cline{2-4}\\[-0.8ex]
 &  & \shortstack[c]{Scenario 1 \\ (With LAGO)} & \shortstack[c]{Scenario 2\\(Without LAGO)} \\[0.5ex]
\hline\\[-0.8ex]
ExpectedOutActInt & -5.69 & -5.97 & -5.71 \\[0.8ex]
ExpectedOutRecInt & -5.69 & -5.70 & -5.69 \\[0.8ex]
ExpectedOutEstOptInt & --$^1$ & -7.69 & -6.81 \\[0.8ex]
MeanOpt (Q2.5, Q97.5) & (-11.1,\,-0.98) & (-11.4,\,-3.18)$^*$ & (-11.2,\,-1.39)$^*$ \\[0.8ex]
AvgObsOut & -5.83 & -5.95 & -5.68\\[0.8ex]
MeanCostAct & --$^2$ & 3.95 & 3.81\\[0.8ex]
MeanCostRec & 3.57 & 4.56 & 3.57 \\[0.8ex]
SetCP95 (\%) & --$^3$ & 95.4 & 95.4 \\[0.8ex]
SetPerc (\%) & --$^3$ & 53.6 & 58.1 \\[0.8ex]
BandsCP95 (\%) & --$^3$ & 98.9 & 98.8 \\[0.8ex]
\hline
\end{tabular}
\vspace{0.3em}
\begin{tablenotes}[flushleft]
\fontsize{8pt}{10pt}\selectfont
\item See Table \ref{simulationresults_linear_varying_Zj_lowconf} and \ref{combined_simulations_goal9_adjust} footnotes for definitions. The cubic cost function used here was $C(\boldsymbol{x}) = 1.25x_1 - 0.043x_1^3 + 0.0055x_1^4 + 0.63x_2 - 0.09x_2^3 + 0.026x_2^4$.
\end{tablenotes}
\end{table}

\begin{table}[htbp]
\caption{Simulation study results comparing with and without LAGO optimization for intervention components and optimal intervention package with a cubic cost function and an outcome goal of -9 mmHg, with lower bound adjustment}
\label{combined_simulations_goal9_cubic_adjust}
\fontsize{8pt}{10pt}\selectfont
\begin{tabular}{llllllllll}
\multicolumn{10}{l}{\textbf{(a) Individual intervention component}} \\
 & & & & \multicolumn{3}{c}{Scenario 1 (With LAGO)} & \multicolumn{3}{c}{Scenario 2 (Without LAGO)} \\
$\boldsymbol{\beta}^{*} = (\beta^{*}_1, \beta^{*}_2)$ & $n^{(1)}$ & $n^{(2)}$ & $\hat{\boldsymbol{\beta}}$ & \%RelBias & SE/EMP.SD & CP95 & \%RelBias & SE/EMP.SD & CP95 \\ \hline
\multirow{2}{*}{$(-1.59, -0.59)$} & \multirow{2}{*}{238} & \multirow{2}{*}{300} & $\hat{\beta}_1$ & -0.77 & 96.5 & 94.7 & 1.86 & 100.0 & 95.2 \\
 & &  & $\hat{\beta}_2$ & 30.3 & 99.2 & 94.3 & -8.04 & 101.0 & 95.3 \\
\end{tabular}
\\[0.5em]
\fontsize{8pt}{10pt}\selectfont
\begin{tabular}{lllll}
\multicolumn{5}{l}{\textbf{(b) Estimated optimal intervention}} \\[0.5ex]
$\boldsymbol{x}^{\text{opt}}$ & Metrics &
\multicolumn{1}{c}{\shortstack[c]{Stage 1\\($n^{(1)} = 238$)}} &
\multicolumn{2}{c}{\shortstack[c]{Stage 2\\($n^{(1)} + n^{(2)} = 528$)}} \\
\cmidrule(lr){3-5}
 &  &  & \shortstack[c]{Scenario 1\\(With LAGO)} & \shortstack[c]{Scenario 2\\(Without LAGO)} \\
\midrule
\multirow{3}{*}{(5.67, 0.00)} 
 & Bias of $\hat{\boldsymbol{x}}^{\text{opt}}_1$ & -1.64 & -1.14 & -1.33 \\
 & Bias of $\hat{\boldsymbol{x}}^{\text{opt}}_2$ & 2.02 & 2.07 & 2.09 \\
 & rMSE & 2.84 & 2.61 & 2.72 \\
\end{tabular}
\\[0.5em]
\fontsize{8pt}{10pt}\selectfont
\begin{tabular}{llll}
\multicolumn{4}{l}{\textbf{(c) Estimated optimal intervention, confidence set, and confidence band}} \\[1ex]
Metrics &
\multicolumn{1}{c}{\shortstack[c]{Stage 1}} &
\multicolumn{2}{c}{\shortstack[c]{Stage 2}} \\[0.5ex]
\cline{2-4}\\[-0.8ex]
 &  & \shortstack[c]{Scenario 1 \\ (With LAGO)} & \shortstack[c]{Scenario 2\\(Without LAGO)} \\[0.5ex]
\hline\\[-0.8ex]
ExpectedOutActInt & -5.69 & -7.63 & -5.71 \\[0.8ex]
ExpectedOutRecInt & -5.69 & -7.63 & -5.69 \\[0.8ex]
ExpectedOutEstOptInt & -- & -8.45 & -8.17 \\[0.8ex]
MeanOpt (Q2.5, Q97.5) & (-11.3,\,-5.69) & (-11.4,\,-5.99)$^*$ & (-11.3,\,-5.93)$^*$ \\[1.0ex]
AvgObsOut & -5.83 & -7.61 & -5.68\\[0.8ex]
MeanCostAct & -- & 5.08 & 3.81 \\[0.8ex]
MeanCostRec & 4.76 & 5.13 & 4.76 \\[0.8ex]
SetCP95 (\%) & -- & 94.7 & 95.4 \\[0.8ex]
SetPerc (\%) & -- & 54.2 & 58.1 \\[0.8ex]
BandsCP95 (\%) & -- & 98.9 & 98.8 \\[0.8ex]
\hline
\end{tabular}
\vspace{0.3em}
\begin{tablenotes}[flushleft]
\fontsize{8pt}{10pt}\selectfont
\item See Table \ref{simulationresults_linear_varying_Zj_lowconf} and \ref{combined_simulations_goal9_adjust} footnotes for definitions. The cubic cost function used here was~$C(\boldsymbol{x}) = 1.25x_1 - 0.043x_1^3 + 0.0055x_1^4 + 0.63x_2 - 0.09x_2^3 + 0.026x_2^4$.
\end{tablenotes}
\end{table}
\FloatBarrier

\end{document}